\newtheorem{theorem}{Theorem}
\newtheorem{conjecture}[theorem]{Conjecture}
\newtheorem{heuristic}[theorem]{Heuristic}
\newtheorem{corollary}[theorem]{Corollary}
\newtheorem{proposition}[theorem]{Proposition}
\def\L{{\mathcal L}}
\def\id{{\rm id}}
\numberwithin{theorem}{section}
\theoremstyle{remark}
\newtheorem{example}[theorem]{Example}
\newcommand{\be}{\begin{eqnarray}}
\newcommand{\ee}{\end{eqnarray}}
\newcommand{\M}{\mathcal{M}}
\newcommand{\B}{\mathcal{B}}
\newcommand{\C}{\mathbb{C}}
\newcommand{\Q}{\mathbb{Q}}
\def\E{\mathcal{E}}
\def\Fl{{\rm Fl}}
\def\isom{\stackrel{\sim}{\longrightarrow}}
\newcommand\defn[1]{{\em  #1}}
\def\Cone{{\rm Cone}}
\def\Vol{{\rm Vol\;}}
\def\deff{:=}
\def\Image{{\rm Image}}
\def\Conv{{\rm Conv}}
\def\sp{{\rm span}}
\def\tPhi{\tilde \Phi}
\def\aOmega{\underline{\Omega}}
\def\ttheta{{\tilde \theta}}
\def\ta{{\tilde a}}
\def\oPi{{\mathring \Pi}}
\def\oX{{\mathring X}}
\def\k{\underline{k}}
\def\tgeq{\tilde \geq}
\newcommand{\lb}{\left <}
\newcommand{\rb}{\right >}
\newcommand\ip[1]{\langle #1 \rangle}
\newcommand\remind[1]{{\color{red} #1}}
\def\ZZ{{\mathbb Z}}
\def\M{{\mathcal{M}}}
\def\G{{\mathcal{G}}}
\def\R{{\mathbb R}}
\def\A{{\mathcal A}}
\def\P{{\mathbb P}}
\def\Res{{\rm Res}}
\def\Int{{\rm Int}}
\def\GL{{\rm GL}}
\def\dlog{d\log}
\def\ZZ{\mathbb{Z}}
\def\Spec{{\rm Spec}}
\def \tPi{{\tilde \Pi}}
\begin{document}

\title{Positive Geometries and Canonical Forms}
\author[a]{Nima Arkani-Hamed,}
\author[b]{Yuntao Bai,}
\author[c]{Thomas Lam}
\affiliation[a]{School of Natural Sciences, Institute for Advanced Study, Princeton, NJ 08540, USA} 
\affiliation[b]{Department of Physics, Princeton University, Princeton, NJ 08544, USA}
\affiliation[c]{Department of Mathematics, University of Michigan, 530 Church St, Ann Arbor, MI 48109, USA} 

\abstract{
Recent years have seen a surprising connection between the physics of
scattering amplitudes and a class of mathematical objects--the
positive Grassmannian, positive loop Grassmannians, tree and loop
Amplituhedra--which have been loosely referred to as ``positive
geometries". The connection between the geometry and physics is
provided by a unique differential form canonically determined by the
property of having logarithmic singularities (only) on all the
boundaries of the space, with residues on each boundary given by the
canonical form on that boundary. The structures seen in the physical
setting of the Amplituhedron are both rigid and rich enough to
motivate an investigation of the notions of ``positive geometries" and
their associated ``canonical forms" as objects of study in their own
right, in a more general mathematical setting. In this paper we take
the first steps in this direction. We begin by giving a precise
definition of positive geometries and canonical forms, and introduce
two general methods for finding forms for more complicated positive
geometries from simpler ones--via ``triangulation" on the one hand,
and ``push-forward" maps between geometries on the other. We present
numerous examples of positive geometries in projective spaces,
Grassmannians, and toric, cluster and flag varieties, both for the
simplest ``simplex-like" geometries and the richer ``polytope-like"
ones. We also illustrate a number of strategies for computing
canonical forms for large classes of positive geometries, ranging from
a direct determination exploiting knowledge of zeros and poles, to the
use of the general triangulation and push-forward methods, to the
representation of the form as volume integrals over dual geometries
and contour integrals over auxiliary spaces. These methods yield
interesting representations for the canonical forms of wide classes of
positive geometries, ranging from the simplest Amplituhedra to new
expressions for the volume of arbitrary convex polytopes.}

\maketitle

\section{Introduction}

\begin{figure}
\centering

\subfloat[\Large $\frac{dx dy}{xy(1-x-y)}$]{\label{fig:intro_a}\includegraphics[width=4.5cm]{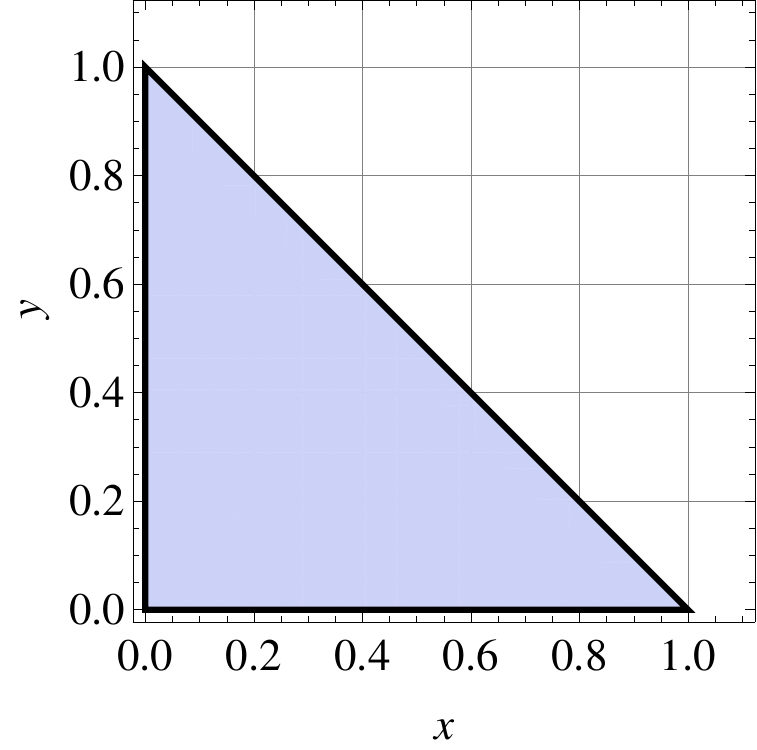}}\;\;\;\;\;\;\;\;
\subfloat[\Large $\frac{dxdy(12-x-4y)}{x(2y-x)(3-x-y)(2-y)}$]{\label{fig:intro_b}\includegraphics[width=4.5cm]{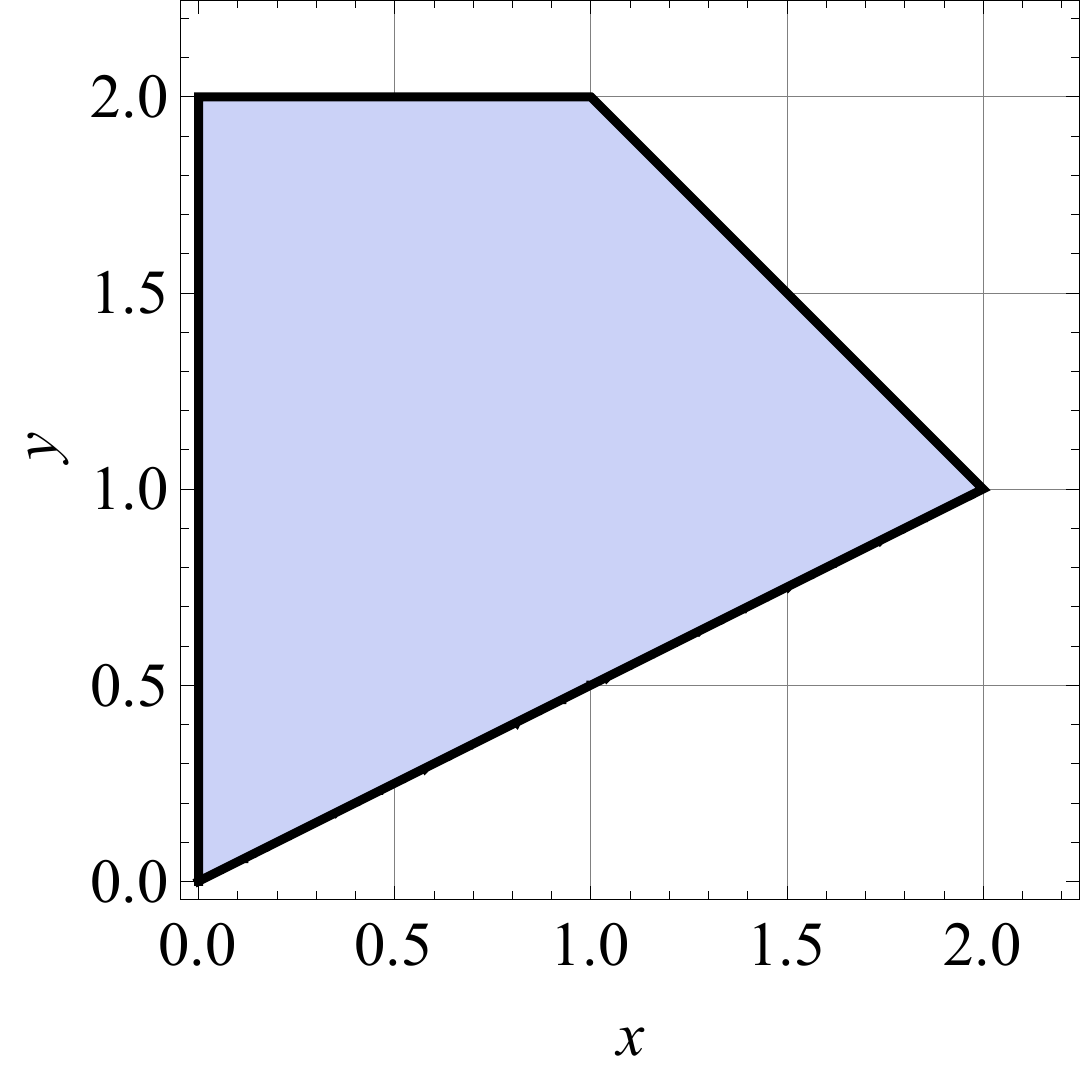}}
\\

\subfloat[\Large $\frac{(3\sqrt{11}/5)dxdy}{(1-x^2-y^2)(y-(1/10))}$]{\label{fig:intro_c}\includegraphics[width=6cm]{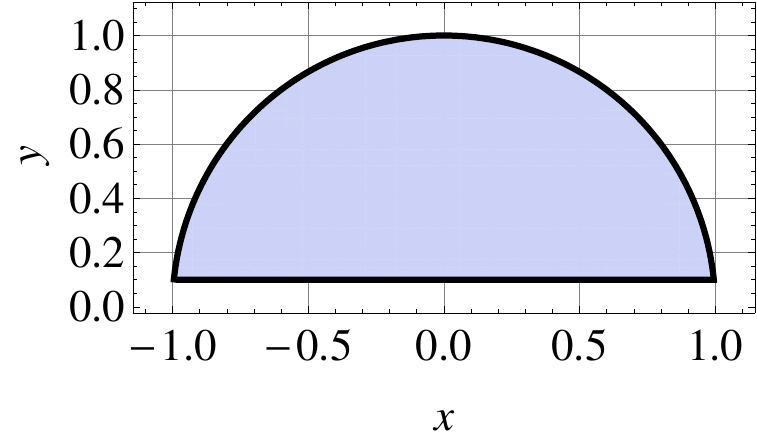}}\;\;\;\;
\subfloat[\Large $\frac{2\sqrt{3}(1+2y)dxdy}{(1-x^2-y^2)(\sqrt{3}y+x)(\sqrt{3}y-x)}$]{\label{fig:intro_d}\includegraphics[width=6cm]{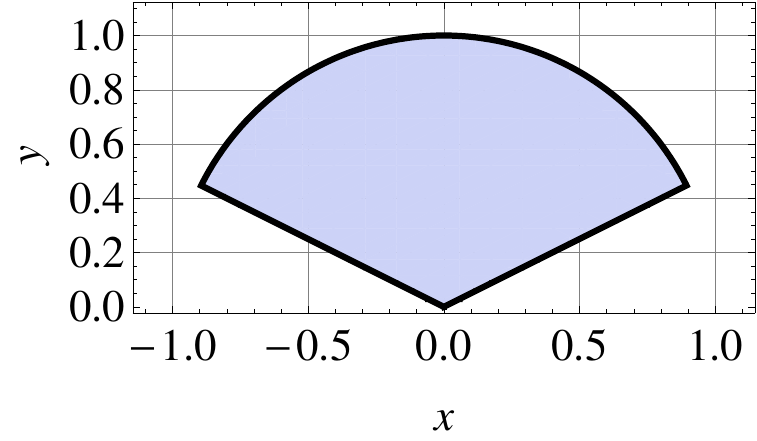}}\\
\subfloat[$0\, dxdy$]{\label{fig:intro_e}\includegraphics[width=4.5cm]{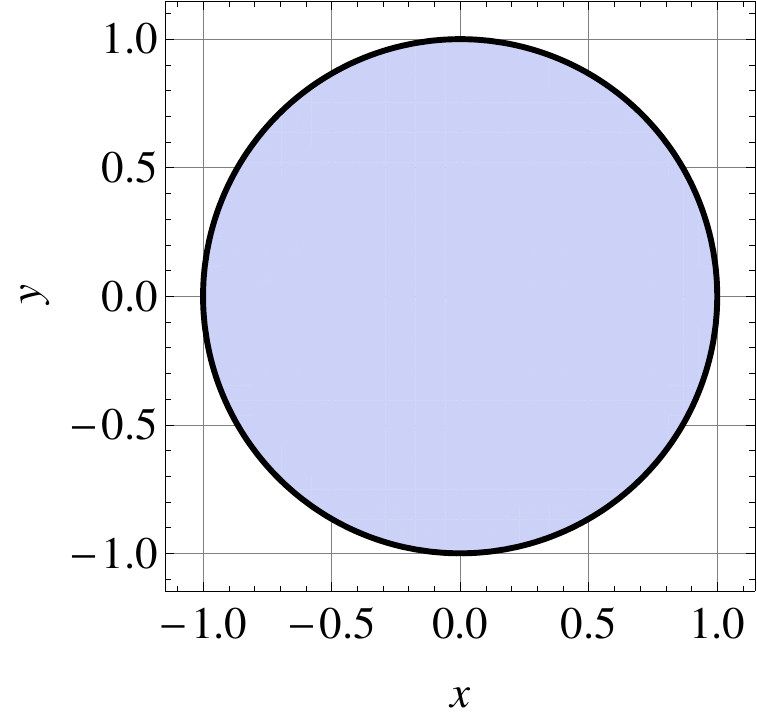}}

\caption{Canonical forms of (a) a triangle, (b) a quadrilateral, (c) a segment of the unit disk with $y\geq 1/10$, (d) a sector of the unit disk with central angle $2\pi/3$ symmetric about the $y$-axis, and (e) the unit disk. The form is identically zero for the unit disk because there are no zero-dimensional boundaries. For each of the other figures, the form has simple poles along each boundary component, all leading residues are $\pm 1$ at zero-dimensional boundaries and zero elsewhere, and the form is positively oriented on the interior.}
\label{fig:intro}
\end{figure}

Recent years have revealed an unexpected and fascinating new interplay
between physics and geometry in the study of gauge theory scattering amplitudes. In the context of planar $\mathcal{N}= 4$
super Yang-Mills theory, we now have a complete formulation of the
physics of scattering amplitudes in terms of the geometry of the
``Amplituhedron"~\cite{Grassmannian, hodges, posGrassmannian, Amplituhedron},
which is a Grassmannian generalization of polygons
and polytopes.  Neither space-time nor Hilbert space make any appearance
in this formulation -- the associated physics of locality and unitarity
arise as consequences of the geometry.

This new connection between physics and mathematics involves a number
of interesting mathematical ideas. The notion of ``positivity" plays a
crucial role. In its simplest form we consider the interior of a
simplex in projective space $\P^{n-1}$ as points with homogeneous
co-ordinates $(x_0,\ldots,x_{n-1})$ with all $x_a>0$. We can call this
the ``positive part" $\P^{n-1}_{>0}$ of projective space; thinking
of projective space as the space of 1-planes in
$n$ dimensions, we can also call this the ``positive part" of the
Grassmannian of 1-planes in $n$ dimensions, $G_{>0}(1,n)$.

This notion generalizes from $G_{>0}(1,n)$ to the ``positive part" of the
Grassmannian of $k$-planes in $n$ dimensions, $G_{>0}(k,n)$~\cite{Postnikov:2006kva, Lusztig}. The
Amplituhedron is a further extension of this idea, roughly generalizing the
positive Grassmannian in the same way that convex plane polygons
generalize triangles. These spaces have loosely been referred to as
``positive geometries" in the physics literature; like polygons and polytopes
they have an ``interior", with boundaries or facets of all
dimensionalities. Another crucial idea, which gives a dictionary for
converting these geometries into physical scattering amplitudes, is a
certain (complex) meromorphic differential form that is canonically associated
with the positive geometry. This form is fixed by the requirement of
having simple poles on (and only on) all the boundaries
of the geometry, with its residue along each being given by the canonical form of the boundary. The calculation of scattering amplitudes is
then reduced to the natural mathematical question of determining this
canonical form.

While the ideas of positive geometries and their associated canonical
forms have arisen in the Grassmannian/Amplituhedron context related to
scattering amplitudes, they seem to be deeper ideas worthy of being
understood on their own terms in their most natural and general
mathematical setting. Our aim in this paper is to take the first steps in
this direction.

To begin with, it is important to make the notion of a positive geometry precise. For instance, it is clear that the
interior of a triangle or a quadrilateral are positive geometries: we
have a two-dimensional space bounded by 1 and 0 dimensional
boundaries, and there is a unique 2-form with logarithmic
singularities on these boundaries. But clearly the interior of a
circle should {\sl not} be a positive geometry in the same sense, for the
simple reason that there are no 0-dimensional boundaries! See Figures~\ref{fig:intro_a},~\ref{fig:intro_b} \&~\ref{fig:intro_e} for an illustration.

We will formalize these intuitive notions in Section \ref{sec:positive}, and give a
precise definition of a
``positive geometry": the rough idea is to {\it define} a positive
geometry by the (recursive) {\it requirement} of the existence of a {\it unique}
form with logarithmic singularities on its boundaries. As we will see
in subsequent sections, in the plane this definition allows the
interior of polygons
but not the inside of a circle, and will also allow more general
positive geometries than polygons,
for instance bounded by segments of lines and conics.  In Figure \ref{fig:intro} we show some simple examples of positive geometries in the
plane together with their associated canonical forms.


In Sections \ref{sec:triangulations} and \ref{sec:morphism} we introduce two general methods for relating more complicated
positive geometries to simpler ones. The first method is
``triangulation". If a positive geometry $X_{\geq 0}$ can be ``tiled" by a collection of
positive geometries $X_{i,\geq 0}$ with mutually non-overlapping interiors, then the canonical form $\Omega(X_{\geq 0})$ of $X_{\geq 0}$ is given by the
sum of the forms for the pieces $\Omega(X_{\geq 0}) = \sum_i \Omega(X_{i,\geq 0})$. We say therefore that the canonical form is ``triangulation independent", a property that has played a central role in the physics literature, whose derivation we present in Section~\ref{sec:triangulations}. The
second is the ``push-forward" method. If we have a positive geometry
$X_{\geq 0}$, and if we have a \defn{morphism} (a special type of map defined in Section~\ref{sec:morphism}) that maps $X_{\geq 0}$ into another positive geometry
$Y_{\geq 0}$, then the canonical form on
$Y_{\geq 0}$ is the \defn{push-forward} of the canonical form on $X_{\geq 0}$. While both these
statements are simple and natural,
they are interesting and non-trivial. The ``triangulation" method has
been widely discussed in
the physics literature on Grassmannians and Amplituhedra. 
The
push-forward method is new, and will be applied in interesting ways in
later sections.

Sections \ref{sec:gensimplices} and \ref{sec:genpolytopes} are devoted to giving many examples of positive
geometries, which naturally divide into the simplest ``simplex-like"
geometries, and more complicated ``polytope-like" geometries.
A nice way of characterizing the distinction between the two can
already be seen by looking at the
simple examples in Figure \ref{fig:intro}. Note that the ``simplest looking" positive
geometries--the triangle and the half-disk, also have the simplest
canonical forms, with the important property of having only poles but
no zeros, while the ``quadrilateral" and ``pizza slice" have zeros
associated with non-trivial numerator factors. Generalizing these
examples, we define ``simplex-like" positive geometries to be ones for
which the canonical form has no zeros, while ``polytope-like"
positive geometries are ones for which the canonical form may have zeros.

We will provide several illustrative examples of generalized simplices (i.e. simplex-like positive geometries) in
Section \ref{sec:gensimplices}. The positive Grassmannian is an example, but we present a number of
other examples as well, including generalized simplices bounded by higher-degree
surfaces in projective spaces, as well as the positive parts of toric,
cluster and flag varieties.

In Section \ref{sec:genpolytopes} we discuss a number of examples of generalized polytopes (i.e. polytope-like positive geometries):
the familiar projective polytopes, Grassmann polytopes~\cite{Lam:2015uma}, and polytopes in partial flag varieties and 
loop Grassmannians. The tree
Amplituhedron is an important special case of a Grassmann polytope;
just as cyclic polytopes are an important special class of polytopes
in projective space. We study in detail the simplest Grassmann polytope that is
not an Amplituhedron, and determine
its canonical form by triangulation.  We also discuss loop and flag
polytopes, which
generalize the all-loop-order Amplituhedron.

In Section \ref{sec:forms} we take up the all-important question of determining the
canonical form associated with positive geometries. It is
fair to say that no ``obviously morally correct" way of finding the
canonical form for general Amplituhedra has yet emerged; instead
several interesting strategies have been found to be useful. We will
review some of these ideas and present a number of new ways of determining
the form in various special cases. We first discuss the most
direct and brute-force
construction of the form following from a detailed understanding of
its poles and zeros along the lines of~\cite{Arkani-Hamed:2014dca}. Next we illustrate the two general ideas of
``triangulation" and ``push-forward" in action. We give several
examples of triangulating more complicated positive geometries with
generalized simplices and summing the canonical form for the simplices
to determine the canonical form of the whole space. We also give
examples of morphisms from simplices into positive geometries.
Typically the morphisms involve solving coupled polynomial equation
with many solutions, and the push-forward of the form instructs us to
sum over these solutions. Even for the familiar case of polytopes in projective
space, this gives a beautiful new representation of the canonical form, which is formulated most naturally in the setting of toric varieties.  Indeed, there is a striking parallel between the polytope canonical form and the Duistermaat-Heckman measure of a toric variety.  We also
give two simple examples of the push-forward map from a simplex into
the Amplituhedron. We finally introduce a new set of ideas for
determining the canonical form associated with integral
representations. These are inspired by the Grassmannian contour
integral formula for scattering amplitudes, as well as the (still
conjectural) notion of integration
over a ``dual Amplituhedron". In addition to giving new
representations of forms for
polytopes, we will obtain new representations for classes of Amplituhedron
forms as contour integrals over the Grassmannian (for ``$\overline{NMHV}$
amplitudes" in
physics language), as well as dual-Amplituhedron-like integrals over a
``Wilson-loop"
to determine the canonical forms for all tree Amplituhedra with $m=2$.

The Amplituhedron involves a number of independent
notions of positivity. It generalizes the notion of an
``interior" in the Grassmannian, but also has a notion of
convexity, associated with demanding ``positive external data". Thus
the Amplituhedron generalizes the notion of the interior of {\it
convex} polygons, while the interior of non-convex polygons also
qualify as positive geometries by our definitions. In Section \ref{sec:convex} we
define what extra features a positive geometry should have to give a
good generalization of ``convexity", which we will call ``positive
convexity". Briefly the requirement is that the canonical form should have
no poles and no zeros inside the positive geometry. This is a rather miraculous (and largely numerically verified)
feature of the canonical form for Amplituhedra with even $m$~\cite{Arkani-Hamed:2014dca}, which is very likely ``positively
convex", while the simplest new example of a Grassmann polytope is not.

Furthermore, it is likely that our notion of
positive geometry needs to be
extended in an interesting way. Returning to the simple examples of Figure \ref{fig:intro}, it
may appear odd that
the interior of a circle is not a positive geometry while any convex
polygon is one,
given that we can approximate a circle arbitrarily well as a polygon with the
number of vertices going to infinity. The resolution of this little
puzzle is that while the canonical form
for a polygon with any fixed number of sides is a rational function
with logarithmic singularities, in the infinite
limit it is {\it not} a rational function--the poles condense to give
a function with
branch cuts instead. The notion of positive geometry we have described
in this paper is likely the special case of a ``rational" positive
geometry, which needs to be extended in some
way to cover cases where the canonical form is not rational. This is discussed in more detail in Section~\ref{sec:beyond}.

Our investigations in this paper are clearly only scratching the surface of what appears to be a deep and rich set of ideas, and in Section~\ref{sec:conclusion} we provide an outlook on immediate avenues for further exploration.
%
%
%

\section{Positive geometries}\label{sec:positive}
\subsection{Positive geometries and their canonical forms}
We let $\P^N$ denote $N$-dimensional complex projective space with the usual projection map $\C^{N{+}1}\setminus \{0\}\rightarrow \P^N$, and we let $\P^N(\R)$ denote the image of $\R^{N{+}1}\setminus\{0\}$ in $\P^N$.

Let $X$ be a \defn{complex projective algebraic variety}, which is the solution set in $\P^N$ of a finite set of homogeneous polynomial equations. We will assume that the polynomials have real coefficients.  We then denote by $X(\R)$ the \defn{real part} of $X$, which is the solution set in $\P^N(\R)$ of the same set of equations.

A \defn{semialgebraic set} in $\P^N(\R)$ is a finite union of subsets, each of which is cut out by finitely many homogeneous real polynomial equations $\{x \in \P^N(\R) \mid p(x) = 0\}$ and homogeneous real polynomial inequalities $\{x \in \P^N(\R) \mid q(x) > 0\}$.  To make sense of the inequality $q(x) > 0$ , we first find solutions in $\R^{N+1} \setminus \{0\}$, and then take the image of the solution set in $\P^N(\R)$.  

We define a $D$-dimensional \defn{positive geometry} to be a pair $(X, X_{\geq 0})$, where $X$ is an irreducible complex projective variety of complex dimension $D$ and $X_{\geq 0} \subset X(\R)$ is a nonempty oriented closed semialgebraic set of real dimension $D$ satisfying some technical assumptions discussed in Appendix~\ref{app:assumptions} where the \defn{boundary components} of $X_{\geq 0}$ are defined, together with the following recursive axioms:

\begin{itemize}
\item
For $D=0$: $X$ is a single point and we must have $X_{\geq 0} = X$. We define the $0$-form $\Omega{(X,X_{\geq 0})}$ on $X$ to be $\pm 1$ depending on the orientation of $X_{\geq 0}$. 
\item
For $D>0$: we must have
\begin{enumerate}
\item[(P1)]
Every boundary component $(C, C_{\geq 0})$ of $(X,X_{\geq 0})$ is a positive geometry of dimension $D{-}1$.
\item[(P2)]
There exists a unique nonzero rational $D$-form $\Omega{(X,X_{\geq 0})}$ on $X$ constrained by the residue relation $\Res_C \Omega{(X,X_{\geq 0})} = \Omega{(C,C_{\geq 0})}$ along every boundary component $C$, and no singularities elsewhere.
\end{enumerate}
\end{itemize}

See Appendix~\ref{app:residue} for the definition of the residue operator $\Res$. In particular, all \defn{leading residues} (i.e. $\Res$ applied $D$ times on various boundary components) of $\Omega(X,X_{\geq 0})$ must be $\pm 1$.  We refer to $X$ as the \defn{embedding space} and $D$ as the \defn{dimension} of the positive geometry.   The form $\Omega{(X,X_{\geq 0})} $ is called the \defn{canonical form} of the positive geometry $(X, X_{\geq 0})$.  As a shorthand, we will often write $X_{\geq 0}$ to denote a positive geometry $(X,X_{\geq 0})$, and write $\Omega(X_{\geq 0})$ for the canonical form.  We note however that $X$ usually contains infinitely many positive geometries, so the notation $X_{\geq 0}$ is slightly misleading.  Sometimes we distinguish the interior $X_{>0}$ of $X_{\geq 0}$ from $X_{\geq 0}$ itself, in which case we call $X_{\geq 0}$ the nonnegative part and $X_{>0}$ the positive part.  We will also refer to the {\it codimension $d$ boundary components} of a positive geometry $(X,X_{\geq 0})$, which are the positive geometries obtained by recursively taking boundary components $d$ times.

We stress that the existence of the canonical form is a highly non-trivial phenomenon.  The first four geometries in Figure \ref{fig:intro} are all positive geometries.  
\subsection{Pseudo-positive geometries}
A slightly more general variant of positive geometries will be useful for some of our arguments.  We define a $D$-dimensional \defn{pseudo-positive geometry} to be a pair $(X, X_{\geq 0})$ of the same type as a positive geometry, but now $X_{\geq 0}$ is allowed to be empty, and the recursive axioms are:
\begin{itemize}
\item
For $D=0$: $X$ is a single point. If $X_{\geq 0} = X$, we define the $0$-form $\Omega{(X,X_{\geq 0})}$ on $X$ to be $\pm 1$ depending on the orientation of $X_{\geq 0}$.  If $X_{\geq 0} = \emptyset$, we set $\Omega(X,X_{\geq 0}) = 0$.
\item
For $D>0$: if $X_{\geq 0}$ is empty, we set $\Omega(X,X_{\geq 0}) =0$.
Otherwise, we must have:
\begin{enumerate}
\item[(P1*)]
Every boundary component $(C, C_{\geq 0})$ of $(X,X_{\geq 0})$ is a pseudo-positive geometry of dimension $D{-}1$.
\item[(P2*)]
There exists a unique rational $D$-form $\Omega{(X,X_{\geq 0})}$ on $X$ constrained by the residue relation $\Res_C \Omega{(X,X_{\geq 0})} = \Omega{(C,C_{\geq 0})}$ along every boundary component $C$ and no singularities elsewhere.
\end{enumerate}
\end{itemize}
We use the same nomenclature for $X, X_{\geq 0}, \Omega$ as in the case of positive geometries.  The key differences are that we allow $X_{\geq 0} = \emptyset$, and we allow $\Omega(X,X_{\geq 0}) = 0$.  Note that there are pseudo-positive geometries with $\Omega(X,X_{\geq 0}) \neq 0$ that are not positive geometries.  When $\Omega(X,X_{\geq 0}) = 0$, we declare that $X_{\geq 0}$ is a \defn{null geometry}.  A basic example of a null geometry is the disk of Figure \ref{fig:intro_e}.

\subsection{Reversing orientation, disjoint unions and direct products}
We indicate the simplest ways that one can form new positive geometries from old ones.

First, if $(X,X_{\geq 0})$ is a positive geometry (resp. pseudo-positive geometry), then so is $(X,X_{\geq 0}^-)$, where $X_{\geq 0}^-$ denotes the same space $X_{\geq 0}$ with reversed orientation.  Naturally, its boundary components $C_i^-$ also acquire the reversed orientation, and we have $\Omega(X,X_{\geq 0}) = - \Omega(X,X_{\geq 0}^-)$.

Second, suppose $(X,X^1_{\geq 0})$ and $(X,X^2_{\geq 0})$ are pseudo-positive geometries, and suppose that they are disjoint: $X^1_{\geq 0} \cap X^2_{\geq 0} = \emptyset$.  Then the disjoint union $(X, X^1_{\geq 0} \cup X^2_{\geq 0})$ is itself a pseudo-positive geometry, and we have $\Omega(X^1_{\geq 0} \cup X^2_{\geq 0}) = \Omega(X^1_{\geq 0} ) + \Omega(X^2_{\geq 0})$.  This is easily proven by an induction on dimension.  The boundary components of $X^1_{\geq 0} \cap X^2_{\geq 0}$ are either boundary components of one of the two original geometries, or a disjoint union of such boundary components.  For example, a closed interval in $\P^1(\R)$ is a one-dimensional positive geometry (see Section \ref{sec:one}), and thus any disjoint union of intervals is again a positive geometry.  This is a special case of the notion of a \defn{triangulation} of positive geometries explored in Section \ref{sec:triangulations}.

Third, suppose $(X,X_{\geq 0})$ and $(Y,Y_{\geq 0})$ are positive geometries (resp. pseudo-positive geometries).  Then the direct product $X \times Y$ is naturally an irreducible projective variety via the Segre embedding (see \cite[I.Ex.2.14]{Hartshorne}), and $X_{\geq 0} \times Y_{\geq 0} \subset X \times Y$ acquires a natural orientation.  We have that $(Z, Z_{\geq 0}) := (X \times Y,X_{\geq 0} \times Y_{\geq 0})$ is again a positive geometry (resp. pseudo-positive geometry).  The boundary components of $(Z,Z_{\geq 0})$ are of the form $(C \times Y, C_{\geq 0} \times Y_{\geq 0})$ or $(X \times D , X_{\geq 0} \times D_{\geq 0})$, where $(C,C_{\geq 0})$ and $(D,D_{\geq 0})$ are boundary components of $(X,X_{\geq 0})$ and $(Y,Y_{\geq 0})$ respectively.  The canonical form is $\Omega(Z,Z_{\geq 0}) = \Omega(X,X_{\geq 0}) \wedge \Omega(Y,Y_{\geq 0})$.  


\subsection{One-dimensional positive geometries}\label{sec:one}
If $(X,X_{\geq 0})$ is a zero-dimensional positive geometry, then both $X$ and $X_{\geq 0}$ are points, and we have $\Omega(X,X_{\geq 0}) = \pm 1$.  If $(X,X_{\geq 0})$ is a pesudo-positive geometry instead, then in addition we are allowed to have $X_{\geq 0} = \emptyset$ and $\Omega(X,X_{\geq 0}) = 0$.

Suppose that $(X,X_{\geq 0})$ is a one-dimensional pseudo-positive geometry.  A genus $g$ projective smooth curve has $g$ independent holomorphic differentials.   Since we have assumed that $X$ is projective and normal but with no nonzero holomorphic forms (see Appendix~\ref{app:assumptions}), $X$ must have genus $0$ and is thus isomorphic to the projective line $\P^1$.  Thus $X_{\geq 0}$ is a closed subset of $\P^1(\R) \cong S^1$.  
If $X_{\geq 0} = \P^1(\R)$ or $X = \emptyset$ then $\Omega(X_{\geq 0}) = 0$ and $X_{\geq 0}$ is a pseudo-positive geometry but not a positive geometry.  Otherwise, $X_{\geq 0}$ is a union of closed intervals, and any union of closed intervals is a positive geometry. A generic closed interval is given by the following:

\begin{example}
We define the closed interval (or line segment) $[a,b]\subset\P^1(\R)$ to be the set of points $\{(1,x) \mid x \in [a,b]\} \subset \P^1(\R)$, where $a < b$.  Then the canonical form is given by
\be\label{ex:segment}
\Omega{\left([a,b]\right)} = \frac{dx}{x- a} - \frac{dx}{x-b}= \frac{(b-a)}{(b-x)(x-a)}dx.
\ee
where $x$ is the coordinate on the chart $(1,x) \in \P^1$, and the segment is oriented along the increasing direction of $x$. The canonical form of a disjoint union of line segments is the sum of the canonical forms of those line segments.
\end{example}


\section{Triangulations of positive geometries}\label{sec:triangulations}
\subsection{Triangulations of pseudo-positive geometries}
Let $X$ be an irreducible projective variety, and $X_{\geq 0} \subset X$ be a closed semialgebraic subset of the type considered in Appendix \ref{app:assumptions1}.  Let $(X,X_{i,\geq 0})$ for $i=1,\ldots,t$ be a finite collection of pseudo-positive geometries all of which live in $X$.  For brevity, we will write $X_{i,\geq 0}$ for $(X,X_{i,\geq 0})$ in this section.  We say that the collection $\{X_{i,\geq 0}\}$ \defn{triangulates} $X_{\geq 0}$ if the following properties hold:
\begin{itemize}
\item
Each $X_{i,>0}$ is contained in $X_{>0}$ and the orientations agree.
\item
The interiors $X_{i,>0}$ of $X_{i,\geq 0}$ are mutually disjoint.
\item
The union of all $X_{i,\geq 0}$ gives $X_{\geq 0}$.
\end{itemize}
Naively, a triangulation of $X_{\geq 0}$ is a collection of pseudo-positive geometries that tiles $X_{\geq 0}$.  The purpose of this section is to establish the following crucial property of the canonical form: 
\begin{align}
\label{eq:triangulationclaim}
&\mbox{If $\{X_{i,\geq 0}\}$ triangulates $X_{\geq 0}$ then $X_{\geq 0}$ is a pseudo-positive geometry and} \nonumber \\ &\mbox{$\Omega(X_{\geq 0})=\sum_{i=1}^t\Omega(X_{i,\geq 0})$.}
\end{align}
Note that even if all the $\{X_{i,\geq 0}\}$ are positive geometries, it may be the case that $X_{\geq 0}$ is not a positive geometry. A simple example is a unit disk triangulated by two half disks (see the discussion below Example~\ref{ex:circSegment}).


If all the positive geometries involved are polytopes, our notion of triangulation reduces to the usual notion of polytopal subdivision.  If furthermore $\{X_{i,\geq 0}\}$ are all simplices, then we recover the usual notion of a triangulation of a polytope.

Note that the word ``triangulation" does not imply that the  geometries $X_{i,\geq 0}$ are ``triangular" or ``simplicial" in any sense.

\subsection{Signed triangulations}\label{sec:signed}
We now define three signed variations of the notion of triangulation.  We loosely call any of these notions ``signed triangulations".

We say that a collection $\{X_{i,\geq 0}\}$ \defn{interior triangulates} the empty set if for every point $x \in \bigcup_i X_{i,\geq 0}$ that does not lie in any of the boundary components $C$ of the $X_{i,\geq 0}$ we have
\begin{align}\label{eq:signtriang}
&\#\{i \mid x \in X_{i,>0} \text{ and $X_{i,>0}$ is positively oriented at $x$}\} \nonumber\\
=& 
\#\{i \mid x \in X_{i,>0} \text{ and $X_{i,>0}$ is negatively oriented at $x$}\}
\end{align}
where we arbitrarily make a choice of orientation of $X(\R)$ near $x$.  Since all the $X_{i,>0}$ are open subsets of $X(\R)$, it suffices to check the \eqref{eq:signtriang} for a dense subset of $\bigcup_i X_{i,\geq 0} - \bigcup C$, where $\bigcup C$ denotes the (finite) union of the boundary components.  If $\{X_{1,\geq 0}, \ldots, X_{t,\geq 0}\}$ interior triangulates the empty set, we may also say that $\{X_{2,\geq 0}, \ldots, X_{t,\geq 0}\}$ interior triangulate $X^-_{1,\geq 0}$.  Thus an interior triangulation $\{X_{1,\geq 0}, \ldots, X_{t,\geq 0}\}$ of $X_{\geq 0}$ is a (genuine) triangulation exactly when a generic point $x \in X_{\geq 0}$ is contained in exactly one of the $X_{i, \geq 0}$.

We now define the notion of \defn{boundary triangulation}, which is inductive on the dimension.  Suppose $X$ has dimension $D$.  Then we say that $\{X_{i,\geq 0}\}$ is a boundary triangulation of the empty set if:
\begin{itemize}
\item
For $D=0$, we have $\sum_{i=1}^t\Omega(X_{i,\geq 0}) = 0$.
\item
For $D>0$, suppose $C$ is an irreducible subvariety of $X$ of dimension $D{-}1$.  Let $(C,C_{i,\geq 0})$ be the boundary component of $(X,X_{i,\geq 0})$ along $C$, where we set $C_{i,\geq 0} = \emptyset$ if such a boundary component does not exist.  We require that for every $C$ the collection $\{C_{i,\geq 0}\}$ form a boundary triangulation of the empty set.
\end{itemize}
As before, if $\{X_{1,\geq 0}, \ldots, X_{t,\geq 0}\}$ boundary triangulates the empty set, we may also say that $\{X_{2,\geq 0}, \ldots, X_{t,\geq 0}\}$ boundary triangulates $X^-_{1,\geq 0}$.  

We finally define the notion of \defn{canonical form triangulation}: we say that $\{X_{i,\geq 0}\}$ is a canonical form triangulation of the empty set if $\sum_{i=1}^t\Omega(X_{i,\geq 0}) =0$.  Again, we may also say that $\{X_{2,\geq 0}, \ldots, X_{t,\geq 0}\}$ canonical form triangulates $X^-_{1,\geq 0}$.

We now make the following claim, whose proof is given in Appendix~\ref{app:proofTri}:
\begin{equation}
\label{eq:signed}
\mbox{interior triangulation $\implies$ boundary triangulation $\iff$ canonical form triangulation}
\end{equation}

Note that the reverse of the first implication in \eqref{eq:signed} does not hold: $(\P^1,\P^1(\R))$ is a null geometry that boundary triangulates the empty set, but it does not interior triangulate the empty set.

We also make the observation that if $\{X_{i, \geq 0}\}$ boundary triangulates the empty set, and all the $X_{i, \geq 0}$ except $X_{1,\geq 0}$ are known to be pseudo-positive geometries, then we may conclude that $X_{1,\geq 0}$ is a pseudo-positive geometry with $\Omega(X_{1,\geq 0}) = -\sum_{i=2}^t \Omega(X_{i,\geq 0})$.  (Here, it suffices to know that $X_{1,\geq 0}$, and recursively all its boundaries, are closed semialgebraic sets of the type discussed in Appendix \ref{app:assumptions1}.)

We note that \eqref{eq:triangulationclaim} follows from \eqref{eq:signed}.  If $\{X_{i,\geq 0}\}$ triangulate $X_{\geq 0}$, we also have that $\{X_{i,\geq 0}\}$ interior triangulate $X_{\geq 0}$. Unless we explicitly refer to a {\sl signed} triangulation, the word {\sl triangulation} refers to~\eqref{eq:triangulationclaim}.  Finally, to summarize~\eqref{eq:triangulationclaim} and~\eqref{eq:signed} in words, we say that:
\be\label{eq:independence}
\mbox{The canonical form is \defn{triangulation independent}.}
\ee
We will return to this remark at multiple points in this paper.

\subsection{The Grothendieck group of pseudo-positive geometries in $X$}
\label{sec:Gro}
We define the {\sl Grothendieck group of pseudo-positive geometries in $X$}, denoted $\mathcal{P}(X)$, which is the free abelian group generated by all the pseudo-positive geometries in $X$, modded out by elements of the form
\be
\sum_{i=1}^t X_{i,\geq 0}
\ee
whenever the collection $\{X_{i,\geq 0}\}$ boundary triangulates (or equivalently by  \eqref{eq:signed}, canonical form triangulates) the empty set.  Note that in $\mathcal{P}(X)$, we have 
$X_{\geq 0}=-X_{\geq 0}^-$.  Also by \eqref{eq:signed}, $X_{\geq 0}=\sum_{i}X_{i,\geq 0}$ if $X_{i,\geq 0}$ forms an interior triangulation of $X_{\geq 0}$.

We may thus extend $\Omega$ to an additive homomorphism from $\mathcal{P}(X)$ to the space of meromorphic top forms on $X$ via:
\be
\Omega\left(\sum_{i=1}^t X_{i,\geq 0}\right)\deff \sum_{i=1}^t\Omega(X_{i,\geq 0})
\ee
Note that this homomorphism is injective, precisely because boundary triangulations and canonical form triangulations are equivalent.


\subsection{Physical versus spurious poles}
\label{sec:poles}

\begin{figure}
\centering
\includegraphics[width=6cm]{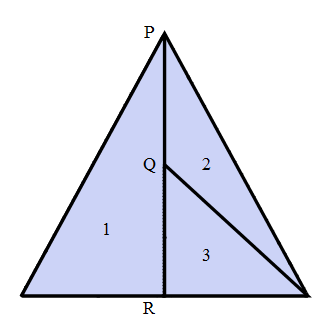}
\caption{A triangle $X_{\geq 0}$ triangulated by three smaller triangles $X_{i,\geq 0}$ for $i=1,2,3$. The vertices along the vertical mid-line are denoted $P,Q$ and $R$.}
\label{fig:triangle}
\end{figure}

In this subsection, we use ``signed triangulation" to refer to any of the three notions in Section \ref{sec:signed}.  
Let $\{X_{i,\geq 0}\}$ be a signed triangulation of $X_{\geq 0}$.  The boundary components of $X_{i,\geq 0}$ that are also a subset of boundary components of $X_{\geq 0}$ are called \defn{physical boundaries}; otherwise they are called \defn{spurious boundaries}. Poles of $\Omega(X,X_{i,\geq 0})$ at physical (resp. spurious) boundaries are called \defn{physical (resp. spurious) poles}. We sometimes refer to the triangulation independence of the canonical form \eqref{eq:triangulationclaim} as {\sl cancellation of spurious poles}, since spurious poles do not appear in the sum $\sum_i\Omega(X, X_i)$.

We now give an example which illustrates a subtle point regarding cancellation of spurious poles. It may be tempting to think that spurious poles cancel {\sl in pairs} along spurious boundaries, but this is false in general. Rather, the correct intuition is that:

\begin{quote}
Spurious poles cancel among collections of boundary components that boundary triangulate the empty set.
\end{quote}
Consider a triangle $X_{\geq 0}$ triangulated by three smaller pieces $X_{i,\geq 0}$ for $i=1,2,3$ as in Figure~\ref{fig:triangle}, but instead of adding all three terms in~\eqref{eq:triangulationclaim}, we only add the $i=1,2$ terms. Since the triangles 1 and 2 have adjacent boundaries along line $PQ$, it may be tempting to think that $\Omega(X,X_{1,\geq 0})+\Omega(X,X_{2,\geq 0})$ has no pole there. This is, however, false because the boundary components of 1 and 2 along line $PR$ forms a signed triangulation of the segment $QR$, which has a non-zero canonical form, so in particular $\Omega(X,X_{1,\geq 0})+\Omega(X,X_{2,\geq 0})$ has a non-zero residue along that line. However, had all three terms been included, then the residue would be zero, since the boundary components of the three pieces along $PR$ form a signed triangulation of the empty set. 

\section{Morphisms of positive geometries}\label{sec:morphism}

The canonical forms of different pseudo-positive geometries can be related by certain maps between them.  We begin by defining the \defn{push-forward} (often also called the \defn{trace map}) for differential forms, see \cite[II(b)]{Gri}. Consider a surjective meromorphic map $\phi:M\rightarrow N$ between complex manifolds of the same dimension. Let $\omega$ be a meromorphic top form on $M$, $b$ a point in $N$, and $V$ an open subset containing $b$. If the map $\phi$ is of degree $\deg{\phi}$, then the pre-image $\phi^{-1}(V)$ is the union of disconnected open subsets $U_i$ for $i=1,\ldots,\deg{\phi}$, with $a_i\in U_i$ and $\phi(a_i)=b$. We define the push-forward as a meromorphic top form on $N$ in the following way:
\be\label{eq:pushforward}
\phi_*(\omega)(b) \deff \sum_i \psi_i^*(\omega(a_i))
\ee
where $\psi_i\deff \phi|_{U_i}^{-1}: V \to U_i$.

Let $(X,X_{\geq 0})$ and $(Y,Y_{\geq 0})$ be two pseudo-positive geometries of dimension $D$.  A \defn{morphism} $\Phi: (X, X_{\geq 0}) \to (Y, Y_{\geq 0})$ consists of a rational (that is, meromorphic) map $\Phi: X \to Y$ with the property that the restriction $\Phi|_{X_{> 0}}: X_{> 0} \to Y_{>0}$ is an orientation-preserving diffeomorphism. A morphism where $(X, X_{\geq 0}) = (\P^D, \Delta^D)$ is also called a \defn{rational parametrization}.   If in addition, $\Phi: X \to Y$ is an isomorphism of varieties, then we call $\Phi: (X, X_{\geq 0}) \to (Y, Y_{\geq 0})$ an isomorphism of pseudo-positive geometries.  Two pseudo-positive geometries are \defn{isomorphic} if an isomorphism exists between them.

Note that if $\Phi: (X,X_{\geq 0}) \to (Y,Y_{\geq 0})$ and $\Psi: (Y,Y_{\geq 0}) \to (Z,Z_{\geq 0})$ are morphisms, then so is $\Pi = \Psi \circ \Phi$.  Pushforwards are \defn{functorial}, that is, we have the equality $\Pi_*=\Psi_*\circ \Phi_*$.  Therefore, pseudo-positive geometries with morphisms form a category. 
%


Finally, we state an important heuristic:
\begin{heuristic}\label{heuristic}
Given a morphism $\Phi: (X, X_{\geq 0}) \to (Y, Y_{\geq 0})$ of pseudo-positive geometries, we have
\be
\Phi_* (\Omega{(X,X_{\geq 0})}) = \Omega{(Y,Y_{\geq 0})}
\ee
where $\Phi_*$ is defined by \eqref{eq:pushforward}.
\end{heuristic}
We say that:
\be
\mbox{The push-forward preserves the canonical form.}
\ee

We do not prove Heuristic~\ref{heuristic} in complete generality, but we will prove it in a number of non-trivial examples (see Section \ref{sec:pushforward}).  For now we simply sketch an argument using some notation from Appendix \ref{app:assumptions}. 


The idea is to use induction on dimension and the fact that ``push-forward commutes with taking residue", formulated precisely in Proposition \ref{prop:KR}.  Let $(C,C_{\geq 0})$ be a boundary component of $(X,X_{\geq 0})$.  In general, the rational map $\Phi$ may not be defined as a rational map on $C$, but let us assume that (perhaps after a blowup of $\Phi$ \cite[I.4]{Hartshorne}, replacing $(X,X_{\geq 0})$ by another pseudo-positive geometry) this is the case, and in addition that $\Phi$ is well-defined on all of $C_{\geq 0}$.  In this case, by continuity $C_{\geq 0}$ will be mapped to the boundary $\partial Y_{\geq 0}$.  Since $C$ is irreducible, and $C_{\geq 0}$ is Zariski-dense in $C$, we see that $\Phi$ either collapses the dimension of $C$ (and thus $C$ will not contribute to the poles of $\Phi_*(\Omega(X,X_{\geq 0}))$), or it maps $C$ surjectively onto one of the boundary components $D$ of $Y$.  In the latter case, we assume that $C_{> 0}$ is mapped diffeomorphically to $D_{>0}$, so $(C,C_{\geq 0}) \to (D,D_{\geq 0})$ is again a morphism of pseudo-positive geometries.  The key calculation is then 
$$
\Res_D \Phi_*(\Omega(X,X_{\geq 0})) = \Phi_*(\Res_C \Omega(X,X_{\geq 0})) = \Phi_*(\Omega(C,C_{\geq 0})) = \Omega(D,D_{\geq 0})
$$
where the first equality is by Proposition \ref{prop:KR} and the third equality is by the inductive assumption.  Thus $\Phi_*(\Omega(X,X_{\geq 0}))$ satisfies the recursive definition and must be equal to $\Omega(Y,Y_{\geq 0})$.




\section{Generalized simplices}
\label{sec:gensimplices}

Let $(X,X_{\geq 0})$ be a positive geometry.  We say that $(X,X_{\geq 0})$ is a \defn{generalized simplex} or that it is \defn{simplex-like} if the canonical form $\Omega{(X,X_{\geq 0})}$ has no zeros on $X$.  The residues of a meromorphic top form with no zeros is again a meromorphic top form with no zeros, so all the boundary components of $(X,X_{\geq 0})$ are again simplex-like. While simplex-like positive geometries are simpler than general positive geometries, there is already a rich zoo of such objects.  

Let us first note that if $(X,X_{\geq 0})$ is simplex-like, then $\Omega(X,X_{\geq 0})$ is uniquely determined up to a scalar simply by its poles (without any condition on the residues).  Indeed, suppose $\Omega_1$ and $\Omega_2$ are two rational top forms on $X$ with the same simple poles, both of which have no zeros.  Then the ratio $\Omega_1/\Omega_2$ is a holomorphic function on $X$, and since $X$ is assumed to be projective and irreducible, this ratio must be a constant.  This makes the determination of the canonical form of a generalized simplex significantly simpler than in general.

\subsection{The standard simplex}\label{sec:standardsimplex}


The prototypical example of a generalized simplex is the positive geometry $(\P^m, \Delta^m)$, where $\Delta^m \deff \P^m_{\geq 0}$ is the set of points in $\P^m(\R)$ representable by nonnegative coordinates, which can be thought of as a projective simplex (see Section~\ref{sec:proj_simplex}) whose vertices are the standard basis vectors. We will refer to $\Delta^m$ as the \defn{standard simplex}. The canonical form is given by
\be\label{eq:dlog_Delta}
\Omega(\Delta^m) = \prod_{i=1}^m\frac{d\alpha_i}{\alpha_i}=\prod_{i=1}^m \dlog{\alpha_i}
\ee
for points $(\alpha_0,\alpha_1,\ldots,\alpha_m)\in \P^m$ with $\alpha_0=1$. Here we can identify the interior of $\Delta^m$ with $\R_{>0}^m$. Note that the pole corresponding to the facet at $\alpha_0\rightarrow 0$ has ``disappeared" due to the ``gauge choice" (i.e. choice of chart) $\alpha_0=1$, which can be cured by changing the gauge. As we will see in many examples, boundary components do not necessarily appear manifestly as poles in every chart, and different choices of chart can make manifest different sets of boundary components.

A gauge-invariant way of writing the same form is the following (see Appendix~\ref{app:projform}):
\be
\Omega(\Delta^m) = \frac{1}{m!}\frac{\left<\alpha\; d^m\alpha\right>}{\alpha_0\cdots\alpha_{m}}
\ee
There are $(m{+}1)$ codimension 1 boundary components (i.e. facets of the simplex) corresponding to the limits $\alpha_i\rightarrow 0$ for $i=0,\ldots,m$.

We say that a positive geometry $(X,X_{\geq 0})$ of dimension $m$ is \defn{$\Delta$-like} if there exists a {\sl degree one} morphism $\Phi:(\P^m,\Delta^m)\rightarrow (X,X_{\geq 0})$. The projective coordinates on $\Delta^m$ are called \defn{$\Delta-$like coordinates} of $X_{\geq 0}$. 

We point out that $\Delta$-like positive geometries are not necessarily simplex-like. Examples include BCFW cells discussed in Section~\ref{sec:BCFW}.  For now, we will content ourselves by giving an example of how a new zeros can develop under pushforwards.

\begin{example}
Consider the rational top-form on $\P^2$, given by $\omega = \dfrac{1}{(x+1)(y+1)}dx dy$ in the chart $\{(1,x,y)\} \subset \P^2$.  The form $\omega$ has three poles (along $x = 1$, $y=1$, and the line at infinity), and no zeros.  Consider the rational map $\Phi: \P^2 \to \P^2$ given by $(1,x,y) \mapsto (1,x,y/x) =: (1,u,v)$.  The map $\Phi$ has degree one, and using $dy = u dv + v du$ we compute that $\Phi_*\omega = \dfrac{u}{(u+1)(uv+1)} du dv$.  So a new zero along $u = 0$ has appeared.
\end{example}

\subsection{Projective simplices}
\label{sec:proj_simplex}

A \defn{projective $m$-simplex} $(\P^m, \Delta)$ is a positive geometry in $\P^m$ cut out by exactly ${m{+}1}$ linear inequalities. We will use $Y\in \P^m$ to denote a point in projective space with homogeneous components $Y^I$ indexed by $I=0,1,\ldots,m$. A linear inequality is of the form $Y^IW_I\ge 0$ for some vector $W\in\mathbb{R}^{m{+}1}$ with components $W_I$, and the repeated index $I$ is implicitly summed as usual. We define $Y\cdot W\deff Y^IW_I$. The vector $W$ is also called a \defn{dual vector}. The projective simplex is therefore of the form
\be
\Delta=\{Y\in\P^m(\R)\mid Y\cdot W_i \geq 0\text{ for $i=1,\ldots,m{+}1$}\}
\ee
where the inequality is evaluated for $Y$ in Euclidean space before mapping to projective space. Here the $W_{i}$'s are projective dual vectors corresponding to the \defn{facets} of the simplex. Every boundary of a projective simplex is again a projective simplex, so it is easy to see that projective simplices satisfy the axioms of a positive geometry. For notational purposes, we may sometimes write $Y^I=(1,x,y,\ldots)$ or $Y^I=(x_0,x_1,\ldots, x_m)$ or something similar.

We now give formulae for the canonical form $\Omega(\Delta)$ in terms of both the vertices and the facets of $\Delta$. Let $Z_i\in\R^{m{+}1}$ denote the vertices for $i=1,\ldots,m{+}1$, which carry upper indices like $Z_i^I$. We will allow the indices $i$ to be represented mod $m{+}1$. We have
\be \label{eq:simplexZ}
\Omega(\Delta) = \frac{s_m\langle Z_1 Z_2\cdots Z_{m+1} \rangle^m \;  \langle Y d^m Y \rangle}{m!\lb YZ_1\cdots Z_{m}\rb \lb YZ_2 \cdots Z_{m{+}1}\rb \cdots\lb YZ_{m{+}1}\cdots Z_{m-1}\rb}.
\ee
where the angle brackets $\lb\cdots\rb$ denote the determinant of column vectors $\cdots$, which is $SL(m{+}1)$-invariant, and $s_m={-}1$ for $m=1,5,9,\ldots$, and $s_m={+}1$ otherwise. See Appendix~\ref{app:projform} for the notation $\lb Yd^mY\rb$. Recall also from Appendix~\ref{app:projform} that the quantity 
\be\label{eq:simplexRationalFunc}
\aOmega(\A)\deff\Omega(\A) / \lb Yd^mY\rb
\ee
is called the \defn{canonical rational function}.

Now suppose the facet $W_i$ is adjacent to vertices $Z_{i{+}1},\ldots,Z_{i{+}m}$, then $W_i\cdot Z_j=0$ for $j=i{+}1,\ldots,i{+}m$. It follows that
\be\label{eq:W_simplex}
W_{iI}=({-}1)^{(i{-}1)(m{-}i)}\epsilon_{II_1\cdots I_m}Z_{i{+}1}^{I_1}\cdots Z_{i{+}m}^{I_m}
\ee
where the sign is chosen so that $Y\cdot W_i>0$ for $Y\in\Int(\A)$. See Section~\ref{sec:dual} for the reasoning behind the sign choice.

We can thus rewrite the canonical form in $W$ space as follows:
\be\label{eq:simplex}
\Omega(\Delta) = \frac{\langle W_1 W_2 \cdots W_{m+1} \rangle \;  \langle Y d^m Y \rangle}{m!(Y \cdot W_1) ( Y\cdot W_2) \cdots (Y\cdot W_{m+1})}.
\ee

A few comments on notation: We will often write $i$ for $Z_i$ inside an angle bracket, so for example $\lb i_0i_1\cdots i_{m}\rb \deff \lb Z_{i_0}Z_{i_1}\cdots Z_{i_{m}}\rb$ and $\lb Yi_1\cdots i_m \rb \deff \lb YZ_{i_1}\cdots Z_{i_{m}} \rb$. Furthermore, the square bracket $[1,2,\ldots,m{+}1]$ is defined to be the coefficient of $\langle Y d^m Y \rangle$ in~\eqref{eq:simplexZ}.  Thus,
\be\label{eq:RInv}
[1,2,\ldots,m{+}1] = \aOmega(\Delta) 
\ee
Note that the square bracket is antisymmetric in exchange of any pair of indices. These conventions are used only in $Z$ space.

The simplest simplices are one-dimensional line segments $\P^1(\R)$ discussed in Example~\ref{ex:segment}. We can think of a segment $[a,b]\in\P^1(\R)$ as a simplex with vertices
\be
Z_1^I=(1,a),\;\;\;Z_2^I=(1,b)
\ee
where $a<b$.

In Section~\ref{sec:polytope}, we provide an extensive discussion on convex projective polytopes as positive geometries, which can be triangulated by projective simplices.



%


\subsection{Generalized simplices on the projective plane}
\label{sec:simplexplane}
Suppose $(\P^2,\A)$ is a positive geometry embedded in the projective plane. We now argue that $\A$ can only have linear and quadratic boundary components. Let $(C,C_{\geq 0})$ be one of the boundary components of $\A$.  Then $C$ is an irreducible projective plane curve.  By our assumption (see Appendix~\ref{app:assumptions}) that $C$ is normal, we must have that $C \subset \P^2$ is a {\sl smooth} plane curve of degree $d$.  The genus of a smooth degree $d$ plane curve is equal to 
\be
g = \frac{(d-1)(d-2)}{2}.
\ee
According to the argument in Section \ref{sec:one}, for $(C,C_{\geq 0})$ to be a positive geometry (Axiom (P1)), we must have $C \cong \P^1$.  Therefore $g = 0$, which gives $d = 1$ or $d=2$.  Thus all boundaries of the positive geometry $\A$ are linear or quadratic. In Section~\ref{sec:degenerate}, we will relax the requirement of normality and give an example of a ``degenerate" positive geometry in $\P^2$. We leave detailed investigation of non-normal positive geometries for future work.

\begin{example}\label{ex:circSegment}
Consider a region $\mathcal{S}(a)\in \P^2(\R)$ bounded by one linear function $q(x,y)$ and one quadratic function $f(x,y)$, where $q=y-a \geq 0$ for some constant ${-}1<a<1$, and $f=1-x^2-y^2 \geq 0$. This is a ``segment" of the unit disk. A picture for $a=1/10$ is given in Figure~\ref{fig:intro_c}.

We claim that $\mathcal{S}(a)$ is a positive geometry with the following canonical form
\be
\Omega(\mathcal{S}(a))=\frac{2\sqrt{1-a^2}dx dy}{(1-x^2-y^2)(y-a)}
\ee

Note that for the special case of $a=0$, we get the canonical form for the ``northern half disk".
\be
\Omega(\mathcal{S}(0))=\frac{2dx dy}{(1-x^2-y^2)y}
\ee

We now check that the form for general $a$ has the correct residues on both boundaries. On the flat boundary we have
\be
\Res_{y{=}a}\Omega(\mathcal{S}(a))=\frac{2\sqrt{1-a^2}dx}{1-a^2-x^2}=\frac{2\sqrt{1-a^2}dx}{(\sqrt{1-a^2}-x)(x+\sqrt{1-a^2})}
\ee
Recall from Section~\ref{sec:one} that this is simply the canonical form on the line segment $x\in [-\sqrt{1-a^2},\sqrt{1-a^2}]$, with positive orientation since the boundary component inherits the counter-clockwise orientation from the interior.

The residue on the arc is more subtle. We first rewrite our form as follows:
\be
\Omega(\mathcal{S}(a))=\left(\frac{\sqrt{1-a^2} dy }{ x(y-a)}\right)\frac{df}{f}
\ee
which is shown by applying $df={-}2(xdx+ydy)$. The residue along the arc is therefore
\be
\Res_{f=0}\Omega(\mathcal{S}(a))=\frac{\sqrt{1-a^2} dy }{ x(y-a)}
\ee
Substituting $x=\sqrt{1-y^2}$ for the right-half of the arc gives residue $+1$ at the boundary $y=a$, and substituting $x=-\sqrt{1-y^2}$ for the left-half of the arc gives residue $-1$.

We give an alternative calculation of these residues. Let us parametrize $(x,y)$ by a parameter $t$ as follows.
\be
(x,y)=\left(\frac{(t+t^{-1})}{2},\frac{(t-t^{-1})}{2i}\right)
\ee
which of course satisfies the arc constraint $f(x,y)=0$ for all $t$.

Rewriting the form on the arc in terms of $t$ gives us
\be
\Res_{f{=}0}\Omega(\mathcal{S}(a))=\frac{2\sqrt{1-a^2}dt}{t^2-2iat-1}=\frac{(t_+-t_-)dt}{(t-t_+)(t-t_-)}
\ee
where $t_\pm =ia\pm\sqrt{1-a^2}$ are the two roots of the quadratic expression in the denominator satisfying
\be
t_++t_- = 2ia,\;\;\; t_+t_- = -1
\ee

The corresponding roots $(x_\pm,y_\pm)$ are
\be
(x_\pm,y_\pm)=(\pm \sqrt{1-a^2},a)
\ee
which of course correspond to the boundary points of the arc.

The residues at $t_\pm$ and hence $(x_\pm,y_\pm)$ are $\pm 1$, as expected.

\end{example}

By substituting $a=-1$ in Example \ref{ex:circSegment} we find that the unit disk $\mathcal{D}^2\deff\mathcal{S}(-1)$ has vanishing canonical form, or equivalently, is a null geometry. Alternatively, one can derive this by triangulating (see Section \ref{sec:triangulations}) the unit disk into the northern half disk and the southern half disk, whose canonical forms must add up to $\Omega(\mathcal{D}^2)$. A quick computation shows that the canonical forms of the two half disks are negatives of each other, so they sum to zero. A third argument goes as follows: the only pole of $\Omega(\mathcal{D}^2)$, if any, appears along the unit circle, which has a vanishing canonical form since it has no boundary components. So in fact $\Omega(\mathcal{D}^2)$ has no poles, and must therefore vanish by the non-existence of nonzero holomorphic top forms on the projective plane. More generally, a pseudo-positive geometry is a null geometry if and only if all its boundary components are null geometries.

However, not all conic sections are null geometries. Hyperbolas are notable exceptions. From our point of view, the distinction between hyperbolas and circles as positive geometries is that the former intersects a line at infinity. So a hyperbola has two boundary components, while a circle only has one. We show this as a special case of the next example.

\begin{example}\label{ex:quadratic}
Let us consider a generic region in $\P^2(\R)$ bounded by one quadratic and one linear polynomial. Let us denote the linear polynomial by $q=Y\cdot W\geq 0$ with $Y^I=(1,x,y)\in\P^2(\R)$ and the quadratic polynomial by $f=YY\cdot Q\deff Y^IY^JQ_{IJ}$ for some real symmetric bilinear form $Q_{IJ}$. We denote our region as $\mathcal{U}(Q,W)$. 

The canonical form is given by
\be
\Omega(\mathcal{U}(Q,W)) = \frac{\sqrt{QQWW} \ip{Y dY dY}}{(YY\cdot Q) (Y \cdot W)}
\ee
where $QQWW\deff -\frac{1}{2}\epsilon^{IJK}\epsilon^{I'J'K'}Q_{II'}Q_{JJ'}W_KW_{K'}$ and $\epsilon^{IJK}$ is the Levi-Civita symbol with $\epsilon^{012}=1$, and $\lb\cdots \rb$ denotes the determinant. The appearance of $\sqrt{QQWW}$ ensures that the result is invariant under rescaling $Q_{IJ}$ and $W_I$ independently, which is necessary. It also ensures the correct overall normalization as we will show in examples.

It will prove useful to look at this example by putting the line $W$ at infinity $W_I = (1,0,0)$ and setting $Y^I = (1,x,y)$, with $YY\cdot Q = y^2 - (x-a)(x-b)$ for $a\neq b$, which describes a hyperbola. The canonical form becomes
\be
\Omega(\mathcal{U}(Q,W))=\frac{2dx dy}{y^2-(x-a)(x-b)}
\ee
Note that taking the residue on the quadric gives us the 1-form on $y^2-(x-a)(x-b)=0$:
\be
\Res_{Q}\Omega(\mathcal{U}(Q,W))=dx/y=2dy/((x-a)+(x-b))
\ee
Suppose $a\neq b$, then this form is smooth as $y \to 0$ where $x \to a$ or $x \to b$, which is evident in the second expression above. The only singularities of this 1-form are on the line $W$, which can be seen by reparametrizing the projective space as $(z,w,1)\sim (1,x,y)$ so that $z=1/y,w=x/y$, which gives the 1-form on $1-(w-az)(w-bz)=0$:
\be\label{eq:hyperbola}
\Res_{Q}\Omega(\mathcal{U}(Q,W))=dw-\frac{dz}{z}=\frac{[(w-az)(-1+bz)+(w-bz)(-1+az)]dz}{z((w-az)+(w-bz))}
\ee
Evidently, there are only two poles $(z,w)=(0,\pm 1)$, which of course are the intersection points of the quadric $Q$ with the line $W$. The other ``pole" in~\eqref{eq:hyperbola} is not a real singularity since the residue vanishes.

Note however that as the two roots collide $a \to b$, the quadric degenerates to the product of two lines $(y + x - a)(y - x + a)$ and we get a third singularity at the intersection of the two lines $(x,y)=(a,0)$.

Note also another degenerate limit here, where the line $W$ is taken to be tangent to the quadric $Q$. We can take the form in this case to be $(dx dy)/(y^2 - x)$. Taking the residue on the parabola gives us the 1-form $dy$, that has a double-pole at infinity, which violates our assumptions. This corresponds to the two intersection points of the line $W$ with $Q$ colliding to make $W$ tangent to $Q$. In fact, we can get rid of the line $W$ all together and find that the parabolic boundary is completely smooth and hence only a null geometry.

Moreover, we can consider the form
$(dx dy)/(x^2 + y^2 - 1)$, i.e. associated with the interior of a circle. But for the same reason as for the parabola, the circle is actually a null geometry. Despite this, it is of course possible by analytic continuation of the coefficients of a general quadric to go from a circle to a hyperbola which {\it is} a positive geometry. 

Now let us return to the simpler example of the segment $\mathcal{U}(Q,W)\deff\mathcal{S}(a)$, where
\be
Q_{IJ}=\begin{pmatrix}
1 & 0 & 0\\
0 & -1 & 0\\
0 & 0 & -1
\end{pmatrix},\;\;\;\;
W_I=(-a,0,1)
\ee

Substituting these into the canonical form we find
\be
QQWW=(1-a^2)>0,\;\;\;
YY\cdot Q=1-x^2-y^2,\;\;\;
Y\cdot W=a-y,\;\;\;
\ee
and therefore
\be
\Omega(\mathcal{U}(Q,W))=\Omega(\mathcal{S}(a))
\ee
as expected.

\end{example}


\subsubsection{An example of a non-normal geometry}\label{sec:degenerate}
We would now like to give an example of a {\sl non-normal} positive geometry. Consider the geometry $\mathcal{U}(C)\subset \P^2(\R)$ defined by a cubic polynomial $YYY\cdot C\geq 0$, where $YYY\cdot C \deff C_{IJK} Y^I Y^J Y^K$ for some real symmetric tensor $C_{IJK}$. The canonical form must have the following form:
\begin{equation}
\Omega (\mathcal{U}(C))= \frac{C_0\langle Y dY dY \rangle}{YYY\cdot C }
\end{equation}
where $C_0$ is a constant needed to ensure that all leading residues are $\pm 1$. Of course, $C_0$ must scale linearly as $C_{IJK}$ and must be dependent on the Aronhold invariants for cubic forms. For our purposes we will work out $C_0$ only in specific examples.

\begin{figure}
\centering

\subfloat[Curve: $y^2-x(x-1/2)(x+1)=0$]{\includegraphics[width=6cm]{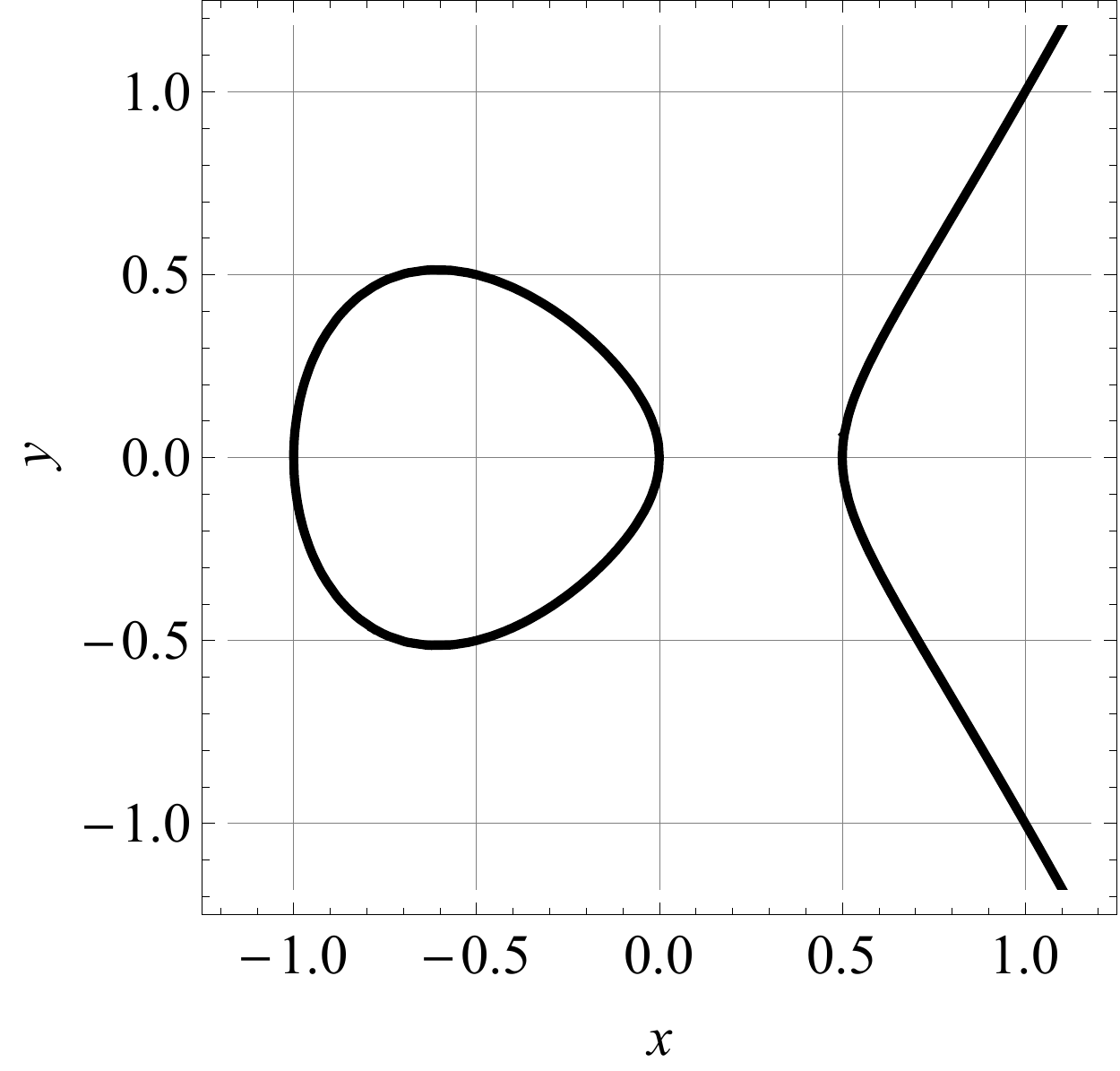}}\;\;\;\;\;\;\;\;
\subfloat[Boundary: $y^2-x^2(x+1)=0$]{\includegraphics[width=6cm]{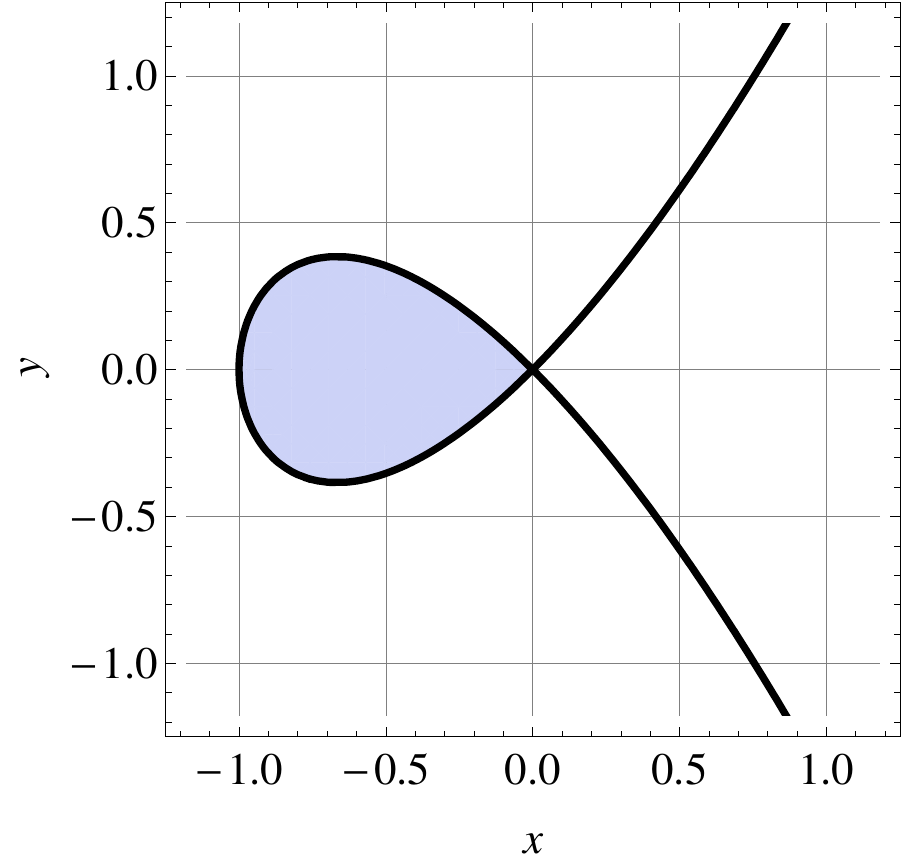}}

\caption{(a) A non-degenerate vs. (b) a degenerate elliptic curve. The former does not provide a valid embedding space for a positive geometry, while the shaded ``tear-drop" is a valid (non-normal) positive geometry.}
\label{fig:elliptic}
\end{figure}

Let us consider a completely generic cubic, which by an appropriate change of variables can always be written as $YYY\cdot C = y^2 - (x - a)(x-b)(x-c)$ for constants $a,b,c$. If the three constants are distinct, then there is no positive geometry associated with this case because the 1-form obtained by taking a residue on the cubic is $dx/y$ which is the standard holomorphic one-form associated with a {\sl non-degenerate} elliptic curve.  We can also observe directly that there are no singularities as $x \to a,b,c$, and that as we go to infinity, we can set $y \to 1/t^3, x \to 1/t^2$ with $t \to 0$, and $dx/y \rightarrow -2dt$ is smooth. The existence of such a form makes the canonical form non-unique, and hence ill-defined. By extension, no positive geometry can have the non-denegerate cubic as a boundary component either.

However, if the cubic degenerates by having two of the roots of the cubic polynomial in $x$ collide, then we {\it do} get a beautiful (non-normal) positive geometry, one which has only one zero-dimensional boundary. Without loss of generality let us put the double-root at the origin and consider the cubic $y^2 - x^2(x + a^2)$. Taking the residue on the cubic, we can parametrize $y^2 - x^2(x+a^2)=0$ as $y=t(t^2 - a^2), x = (t^2 - a^2)$, then $dx/y = dt/(t^2 - a^2)$ has logarithmic singularities at $t = \pm a$. Note that these two points correspond to the same point $y = x = 0$ on the cubic! But the boundary is oriented, so we encounter the same logarithmic singularity point from one side and then the other as we go around. We can cover the whole interior of the ``teardrop" shape for this singular cubic by taking
\begin{equation}
x = u(t^2 - a^2),\;\;\;y = u t (t^2 - a^2)
\end{equation}
which, for $u \in (0,1)$ and $t \in (-a,a)$ maps 1-1 to the teardrop interior, dutifully reflected in the form 
\begin{equation}
\frac{dx dy}{y^2 - x^2(x + a^2)} = \frac{dt}{(a-t)(t+a)} \frac{du}{u(1-u)}
\end{equation}
Note that if we {\it further} take $a \to 0$, we lose the positive geometry as we get a form with a double-pole, much as our example with the parabola in Example~\ref{ex:quadratic}. 

\subsection{Generalized simplices in higher-dimensional projective spaces}
Let us now consider generalized simplices $(\P^m,\A)$ for higher-dimensional projective spaces.  Let $(C,C_{\geq 0})$ be a boundary component of $\A$, which is an irreducible {\sl normal} hypersurface in $\P^m$.  For $(C,C_{\geq 0})$ to be a positive geometry, $C$ must have no nonzero holomorphic forms.  Equivalently, the \defn{geometric genus} of $C$ must be 0.  This is the case if and only if $C$ has degree less than or equal to $m$.  Thus in $\P^3$, the boundaries of a positive geometry are linear, quadratic, or cubic hypersurfaces.

It is easy to generalize Example \ref{ex:circSegment} to simplex-like positive geometries in $\P^m(\R)$: take a
positive geometry bounded by $(m-1)$ hyperplanes $W_i$ and a quadric
$Q$, which has  canonical form
\begin{equation}
\Omega(\A) = \frac{C_0\langle Y d^m Y\rangle}{(Y \cdot W_1) \cdots (Y \cdot W_{m-1}) (YY\cdot Q)}.
\end{equation}
for some constant $C_0$. Note that the $(m-1)$ planes intersect generically on a line, that in turn intersects
the quadric at two points, so as in our two-dimensional example this
positive geometry has two zero-dimensional boundaries.

Let us consider another generalized simplex, this time in $\P^3(\R)$.  We take a three-dimensional region $\A\subset \P^3(\R)$ bounded by a cubic surface and a plane.  If we take a generic cubic surface $C$ and generic plane $W$, then their intersection would be a generic cubic curve in $W$, which as discussed in Section \ref{sec:simplexplane} cannot contain a (normal) positive geometry.

On the other hand, we can make a special choice of cubic surface $\A$ that gives a positive geometry.  A
pretty example is provided by the ``Cayley cubic" (see Figure~\ref{fig:cayley}). If $Y^I =
(x_0,x_1,x_2,x_3)$ are coordinates on $\P^3$, let the cubic $C$ be defined by
\begin{equation}
C\cdot YYY \deff x_0 x_1 x_2 + x_1x_2 x_3 + x_2 x_3 x_0 + x_3 x_0 x_1 = 0
\end{equation}

\begin{figure}
\centering
\includegraphics[width=7cm]{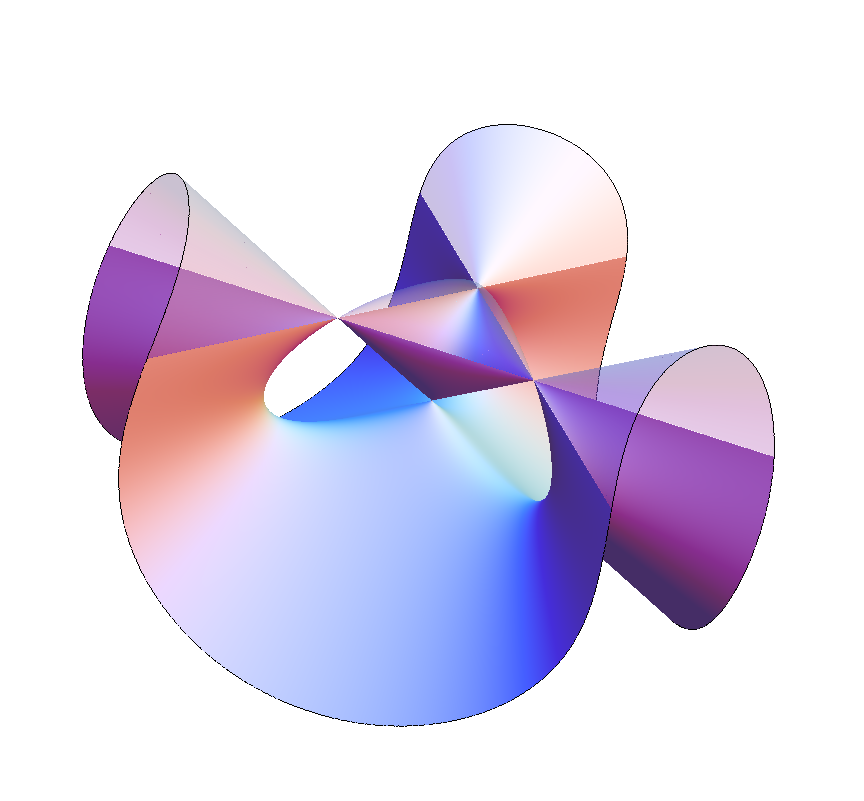}\caption{The Cayley cubic curve. The plane separating the translucent and solid parts of the surface is given by $x_0=0$.}
\label{fig:cayley}
\end{figure}

This cubic has four singular points at $X_0 = (1,0,0,0),\ldots,
X_3=(0,0,0,1)$.  Note that $C$ gives a singular surface, but it still satisfies the normality criterion of a positive geometry.  Let us choose three of the singular points, say $X_1,X_2,X_3$,
and let $W$ be the hyperplane passing through
these three points; we consider the form
\begin{equation}
\Omega(\A) = C_0\frac{\langle Y d^3 Y \rangle}{(YYY\cdot C) \langle Y X_1X_2X_3\rangle}
\end{equation}
where $C_0$ is a constant. A natural choice of variables turns this into a ``dlog" form. Consider
\begin{equation}
x_i = s y_i, \mbox{ for $i=1,2,3$}; \qquad x_0 = -\frac{y_1 y_2 y_3}{y_1 y_2 +
y_2 y_3 + y_3 y_1}
\end{equation}
Then if we group the three $y$'s as coordinates of ${\P}^2 = \{y=(y_1,
y_2, y_3)\}$, we have
\begin{equation}
\Omega(\A) = \frac{\langle y d^2 y \rangle}{2y_1 y_2 y_3} \frac{ds}{(s-1)}
\end{equation}
This is the canonical form of the positive geometry given by the bounded component of the region cut out by $YYY\cdot C \geq 0$ and
$\langle X_1X_2X_3Y \rangle \geq 0$.

We can generalize this construction to ${\P}^m(\R)$, with $Y=(x_0, \cdots, x_{m})$ and a degree $m$ hypersurface
$$
Q_m\cdot Y^m = \sum_{i=0}^{m} x_0 \cdots \hat{x}_i \cdots x_m,
$$
where $Q_m\cdot Y^m\deff Q_{mI_1\dots I_m}Y^{I_1}\cdots Y^{I_m}$
and the 
singular points are $X_0 = (1,0,\cdots, 0), \ldots,$ $X_{m}=(0,\cdots,
0,1)$. Then if we choose $m$ of these points and a linear factor
corresponding to the hyperplane going through them,
\begin{equation}
\Omega(\A) = C_0\frac{\langle Y d^m Y \rangle}{(Y^m\cdot Q_m) \langle Y X_1 \cdots X_m \rangle}
\end{equation}
is the canonical form associated with the bounded component of the positive geometry cut out by $Y^m\cdot Q_m\geq 0,
\langle X_1 \cdots X_m Y \rangle \geq 0$, for some constant $C_0$.

\subsection{Grassmannians}

In this section we briefly review the positroid stratification of the positive Grassmannian, and argue that each cell of the stratification is a simplex-like positive geometry.
\subsubsection{Grassmannians and positroid varieties}\label{sec:posGrassmannian}
Let $G(k,n)$ denote the Grassmannian of $k$-dimensional linear subspaces of $\C^n$.  We recall the \defn{positroid stratification} of the Grassmannian.  Each point in $G(k,n)$ is represented by a $k \times n$ complex matrix $C = (C_1,C_2,\ldots,C_n)$ of full rank, where $C_i \in \C^k$ denote column vectors. Given $C \in G(k,n)$ we define a function $f: \ZZ \to \ZZ$ by the condition that
\be \label{eq:permutation}
C_i \in \sp(C_{i+1},C_{i+2},\ldots,C_{f(i)})
\ee
and $f(i)$ is the minimal index satisfying this property. In particular, if $C_i=0$, then $f(i)=i$. Here, the indices are taken mod $n$. The function $f$ is called an \defn{affine permutation}, or ``decorated permutation", or sometimes just ``permutation" \cite{Postnikov:2006kva,KLS,posGrassmannian}.  Classifying points of $G(k,n)$ according to the affine permutation $f$ gives the positroid stratification 
\be \label{eq:positroidstrata}
G(k,n) = \bigsqcup_f \oPi_f.
\ee
where, for every affine permutation $f$, the set $\oPi_f$ consists of those $C$ matrices satisfying~\eqref{eq:permutation} for every integer $i$.
We let the \defn{positroid variety} $\Pi_f \subset G(k,n)$ be the closure of $\oPi_f$.  Then $\Pi_f$ is an irreducible, normal, complex projective variety \cite{KLS}.
If $k = 1$ then $G(k,n) \cong \P^{n-1}$ and the stratification \eqref{eq:positroidstrata} decomposes $\P^{n-1}$ into coordinate hyperspaces.

\subsubsection{Positive Grassmannians and positroid cells}
Let $G(k,n)(\R)$ denote the real Grassmannian.  Each point in $G(k,n)$ is represented by a $k \times n$ complex matrix of full rank. The \defn{(totally) nonnegative Grassmannian} $G_{\geq 0}(k,n)$ (resp. \defn{(totally) positive Grassmannian} $G_{>0}(k,n)$) consists of those points $C \in G(k,n)(\R)$ all of whose $k \times k$ minors, called Pl\"ucker coordinates, are nonnegative (resp. positive) \cite{Postnikov:2006kva}.  The intersections
\be
\Pi_{f,> 0} \deff G_{\geq 0} \cap \oPi_f, \qquad \Pi_{f, \geq 0} \deff G_{\geq 0} \cap \Pi_f
\ee
are loosely called (open and closed) \defn{positroid cells}.

For any permutation $f$, we have
\be
 (\Pi_f, \Pi_{f,\geq 0}) \mbox{ is a positive geometry.}
\ee
The boundary components of $(\Pi_f, \Pi_{f,\geq 0})$ are certain other positroid cells $(\Pi_g,\Pi_{g,\geq 0})$ of one lower dimension. The canonical form $\Omega(f) \deff \Omega{(\Pi_f, \Pi_{f,\geq 0})}$ was studied in \cite{KLS,posGrassmannian}.  We remark that $\Omega(f)$ has no zeros, so $(\Pi_f,\Pi_{f,\geq 0})$ is simplex-like. 

The canonical form $\Omega(G_{\geq 0}(k,n)) \deff \Omega{(G(k,n),G_{\geq 0}(k,n))}$ of the positive Grassmannian was worked out and discussed in~\cite{posGrassmannian}:
\be
\Omega(G_{\geq 0}(k,n))\deff \frac{\prod_{s=1}^k\lb C d^{n{-}k}C_s\rb}{((n{-}k)!)^k\prod_{i=1}^n(i,i{+}1,\ldots,i{+}k{-}1)}
\ee
where $C\deff(C_1,\ldots,C_k)^T$ is a $k\times n$ matrix representing a point in $G(k,n)$, and the parentheses $(i_1,i_2,\ldots,i_k)$ denotes the $k\times k$ minor of $C$ corresponding to columns $i_1,i_2,\ldots,i_k$ in that order. We also divide by the ``gauge group" $\GL(k)$ since the matrix representation of the Grassmannian is redundant.  The canonical forms $\Omega(f)$ on $\Pi_f$ are obtained by iteratively taking residues of $\Omega(G_{\geq 0}(k,n))$.

The Grassmannian $G(k,n)$ has the structure of a cluster variety \cite{Scott}, as discussed in Section \ref{sec:cluster}.  The cluster coordinates of $G(k,n)$ can be constructed using plabic graphs or on-shell diagrams.  Given a sequence of \defn{cluster coordinates} $(c_0,c_1,\ldots,c_{k(n{-}k)})\in \P^{k(n{-}k)}$ for the Grassmannian $G(k,n)$, the positive Grassmannian is precisely the subset of points representable by positive coordinates. It follows that $G_{\geq}(k,n)$ is $\Delta$-like with the degree-one cluster coordinate morphism $\Phi:(\P^{k(n{-}k)},\Delta^{k(n{-}k)})\rightarrow (G(k,n),G_{\geq 0}(k,n))$.  Note of course that a different degree-one morphism exists for each choice of cluster.

According to Heuristic~\ref{heuristic}, we expect that the canonical form on the positive Grassmannian is simply the push-forward of $\Omega(\Delta^{k(n{-}k)})$. That is,
\be \label{eq:pushGrass}
\Omega(G_{\geq 0}(k,n))=\pm\Phi_*\left(\frac{\lb c\;d^{k(n{-}k)}c\rb}{(k(n{-}k))!\prod_{I=0}^{k(n{-}k)}c_I}\right)
\ee
where the overall sign depends on the ordering of the cluster coordinates.  Equation \eqref{eq:pushGrass} is worked out in \cite{Lam:2015uma}.  It follows in particular that the right hand side of \eqref{eq:pushGrass} is independent of the choice of cluster. 

\subsection{Toric varieties and their positive parts}
\label{sec:toric}
In this section we show that positive parts of projectively normal toric varieties are examples of positive geometries.

\subsubsection{Projective toric varieties}
Let $z = (z_1, z_2,\ldots, z_n) \in (\ZZ^{m+1})^n$ be a collection of integer vectors in $\ZZ^{m+1}$.  We assume that the set is {\it graded}, so:
\be \label{eq:graded}
\mbox{There exists $a \in \Q^{m+1}$ so that $a \cdot z_i = 1$ for all $i$.}  
\ee

We define a (possibly not normal) projective toric variety $X(z) \subset \P^{n-1}$ as 
\be
X(z) &= \overline{\{(X^{z_1}, X^{z_2}, \ldots , X^{z_n}) \mid X \in (\C^\ast)^{m+1}\}} \subset \P^{n-1}.
\ee
where
\be
X^{z_i} \deff X_0^{z_{0i}} X_1 ^{z_{1i}} \cdots X_{m}^{z_{m,i}}
\ee
Equivalently, $X(z)$ is the closure of the image of the monomial map $\theta^z$ 
\be \label{eq:thetaz}
\theta^z: X = (X_0, \ldots,X_{m}) \longmapsto (X^{z_1}, \ldots , X^{z_n}) \in \P^{n-1}.
\ee 

We shall assume that  $z$ spans $\ZZ^{m+1}$ so that $\dim X(z) = m$.  The intersection of $X(z)$ with $\{(C_1,C_2,\ldots,C_n) \mid C_i \in \C^*\} \subset \P^{n-1}$ is a dense complex torus $T \cong (\C^*)^m$ in $X(z)$.  Define the nonnegative part $X(z)_{ \geq 0}$ of $X(z)$ to be the intersection of $X(z)$ with $\Delta^{n-1} = \{(C_1,C_2,\ldots,C_n) \mid C_i \in \R_{\geq 0}\} \subset \P^{n-1}(\R)$.  Similarly define $X(z)_{>0}$.  Equivalently, $X(z)_{> 0}$ is simply the image of $\R_{>0}^{m+1}$ under the monomial map $(\C^\ast)^{m+1} \to \P^{n-1}$.  It is known that $X(z)_{\geq 0}$ is diffeomorphic to the polytope $\A(z) \deff \Conv(z)$, see \cite{Fulton,Sottile}.  We establish a variant of this result in Appendix \ref{app:diffeo}. Note that we do not need to assume that the $z_i$ are vertices of $\A(z)$; some of the points $z_i$ may lie in the interior.

Our main claim is that
\be \label{eq:toricpositive}
(X(z) ,X(z)_{\geq 0}) \mbox{ is a positive geometry}
\ee
whenever $X(z)$ is {\it projectively normal} (which implies normality).  
It holds if and only if we have the following equality of lattice points in $\ZZ^{m+1}$
\be \label{eq:normal}
\Cone(z) \cap \sp_{\ZZ}(z) = \sp_{\ZZ_{\geq 0}}(z).
\ee
If the equality \eqref{eq:normal} does not hold, we can enlarge $z$ by including additional lattice points in $\Cone(z) \cap \sp_{\ZZ}(z)$ until it does.

The torus $T$ acts on $X(z)$ and the torus orbits are in bijection with the faces $F$ of the polytope $\A(z)$.  For each such face $F$, we denote by $X_F$ the corresponding torus orbit closure; then $X_F$ is again a projective toric variety, given by using the points $z_i$ that belong to the face $F$.  If $X(z)$ is projectively normal, then all the $X_F$ are as well.

\subsubsection{The canonical form of a toric variety}


The variety $X(z)$ has a distinguished rational top form $\Omega_{X(z)}$ of top degree.  The rational form $\Omega_{X(z)}$ is uniquely defined by specifying its restriction to the torus
$$
\Omega_{X(z)}|_T \deff \Omega_T \deff \prod_{i=1}^m \frac{dx_i}{x_i},
$$
where $x_i$ are the natural coordinates on $T$, and $\Omega_T$ is the natural holomorphic non-vanishing top form on $T$.  
In Appendix \ref{app:toricform}, we show that when $X(z)$ is projectively normal, the canonical form $\Omega_{X(z)}$ has a simple pole along each facet toric variety $X_F$ and no other poles, and furthermore, for each facet $F$ the residue $\Res_{X_F} \Omega_{X(z)}$ is equal to the canonical form $\Omega_{X_F}$ of the facet.  

This property of $\Omega_{X(z)}$ establishes \eqref{eq:toricpositive}, apart from the uniqueness part of Axiom (P2) which is equivalent to the statement that $X(z)$ has no nonzero holomorphic forms.  This is well known: when $X(z)$ is a smooth toric variety, this follows from the fact that smooth projective rational varieties $V$ have no nonzero holomorphic forms, or equivalently, have geometric genus $\dim H^0(V, \omega_V)$ equal to $0$.  Normal toric varieties have rational singularities so inherit this property from a smooth toric resolution.  We have thus established \eqref{eq:toricpositive}, and  the canonical form $\Omega(X(z),X(z)_{\geq 0})$ is $\Omega_{X(z)}$.

We remark that $\Omega_{X(z)}$ has no zeros, and thus $(X(z),X(z)_{\geq 0})$ is a simplex-like positive geometry.

\begin{example}
Take $n = 4$ and $m = 2$, with
$$
z_1 = (1,0,0), \qquad z_2 = (1,1,0), \qquad z_3 = (1,1,1), \qquad z_4 = (1,0,1).
$$
The polytope $\A(z)$ is a square.  The toric variety $X(z)$ is the closure in $\P^3$ of the set of points $\{(x,xy,xyz,xz) \mid x,y,z \in \C^* \}$, or equivalently of $\{(1,y,yz,z) \mid y,z \in \C^* \}$.  This closure is the quadric surface $C_1 C_3 - C_2 C_4 = 0$.  In fact, $X(z)$ is isomorphic to the $\P^1 \times \P^1$, embedded inside $\P^3$ via the Segre embedding.

The nonnegative part $X(z)_{\geq 0}$ is the closure of the set of points $\{(1,y,yz,z) \mid y,z \in \R_{>0} \}$, and is diffeomorphic to a square.  There are four boundaries, given by $C_i = 0$ for $i = 1,2,3,4$, corresponding to $y \to 0,\infty$ and $z \to 0,\infty$.  For example, when $z \to 0$ we have the boundary component $D = \overline{\{(1,y,0,0) \mid y \in \C^*\}} \subset \P^3$, which is isomorphic to $\P^1$.  In these coordinates, the canonical form is given by
$$
\Omega(X(z),X(z)_{\geq 0}) = \Omega_{X(z)} = \frac{dy}{y} \frac{dz}{z}.
$$
The residue $\Res_{z = 0}\Omega_{X(z)}$ is equal to $dy/y$, which is the canonical form of the boundary component $D \cong \P^1$ above.

\end{example}

The condition \eqref{eq:graded} that $z$ is graded implies that $\theta^z: (\C^*)^{m+1} \to T \subset X(z)$ factors as
\be
(\C^*)^{m+1} \longrightarrow (\C^*)^{m+1}/\C^* \stackrel{\ttheta^z}{\longrightarrow} T,
\ee where the quotient $S = (\C^*)^{m+1}/\C^*$ arises from the action of $t \in \C^*$ given by
\be
t \cdot (X_1,\ldots,X_{m+1}) = (t^{\ta_1} X_1,\ldots, t^{\ta_{m+1}}X_{m+1})
\ee
where $\ta \in \ZZ^{m+1}$ is a scalar multiple of $a \in \Q^{m+1}$ that is integral.  For example, if $z_i = (1,z'_i)$ for $z'_i \in \ZZ^m$ as in Section \ref{sec:push}, then $a = (1,0,\ldots,0)$ and $S$ can be identified with the subtorus $\{(1,X_1,X_2,\ldots,X_m)\} \subset (\C^*)^{m+1}$.  The map $\ttheta^z: S \to T$ is surjective, but may not be injective.  By Example \ref{ex:tori}, we have
\be \label{eq:pushtheta}
\ttheta^z_*(\Omega_S) = \Omega_T.
\ee

%
%
%

\subsection{Cluster varieties and their positive parts}\label{sec:cluster}
We speculate that reasonable cluster algebras give examples of positive geometries.  Let $A$ be a cluster algebra over $\C$ (of geometric type) and let $\oX = \Spec(A)$ be the corresponding affine cluster variety \cite{FZ}; thus the ring of regular functions on $\oX$ is equal to $A$.  We will assume that $\oX$ is a smooth complex manifold, see e.g. \cite{Mul,LS} for some discussion of this.  

The generators of $A$ as a ring are grouped into \defn{clusters} $(x_1,x_2,\ldots,x_n)$, where $n = \dim \oX$.  Each cluster corresponds to a subtorus $T \cong (\C^*)^n$ with an embedding $\iota_T: T \hookrightarrow \oX$.  Different clusters are related by \defn{mutation}:
\be \label{eq:mutation}
x_i x'_i = M + M',
\ee
swapping the coordinate $x_i$ for $x'_i$, where $M, M'$ are monomials in the $x_j, j \neq i$.  It is clear from \eqref{eq:mutation} that if $(x_1,\ldots,x_i, \ldots, x_n)$ are all positive real numbers, then so are $(x_1,\ldots,x'_i, \ldots, x_n)$.  We thus define the \defn{positive part} of $\oX$ to be $\oX_{>0} \deff \iota_T(\R_{>0}^n)$.

Furthermore, we define the canonical form $\Omega_{\oX} \deff \prod_{i=1}^n dx_i/x_i$.  By \eqref{eq:mutation}, we have
\be
x_i dx'_i + x'_i dx_i = dM + dM'
\ee
so wedging both sides with $\prod_{j \neq i} dx_j/x_j$, we deduce that the canonical form $\Omega_{\oX}$ does not depend on the choice of cluster.  In fact, under some mild assumptions, $\Omega_{\oX}$ extends to a holomorphic top-form on $\oX$.

We speculate that there is a compactification $X$ of $\oX$ such that 
\be \label{eq:cluster}
(X, X_{\geq 0} \deff \overline{\oX_{>0}}) \mbox{ is a positive geometry with canonical form } \Omega(X,X_{\geq 0}) = \Omega_{\oX}.
\ee

Furthermore, we expect that all boundary components are again compactifications of cluster varieties. We also expect that the compactification can be chosen so that the canonical form has no zeros.


\subsection{Flag varieties and total positivity}\label{sec:flagTP}
Let $G$ be a reductive complex algebraic group, and let $P \subset G$ be a parabolic subgroup.  The quotient $G/P$ is known as a generalized flag variety.  If $G = \GL(n)$, and $P = B \subset \GL(n)$ is the subgroup of upper triangular matrices, then $G/B$ is the usual flag manifold.  If $G = \GL(n)$ and 
\be
P = \left\{\left(\begin{array}{cc} A&B \\ 0 &C \end{array}\right)\right\} \subset \GL(n)
\ee
with block form where $A, B, C$ are respectively $k \times k$, $k \times (n-k)$, and $(n-k) \times (n-k)$, then $G/P \cong G(k,n)$ is the Grassmannian of $k$-planes in $n$-space.

In \cite{Lusztig}, the totally nonnegative part $(G/P)_{\geq 0} \subset G/P(\R)$ of $G/P$ was defined, assuming that $G(\R)$ is split over the real numbers.   We sketch the definition in the case that $G = \GL(n)$.  An element $g \in \GL(n)$ is called totally positive if all of its minors (of any size) are positive.  Denoting the totally positive part of $\GL(n)$ by $\GL(n)_{>0}$, we then define
$$
(\GL(n)/P)_{\geq 0} \deff \overline{\GL(n)_{>0} \cdot e},
$$
where $\GL(n)_{>0} \cdot e$ denotes the orbit of $\GL(n)_{>0}$ acting on a basepoint $e \in G/P$, which is the point in $G/P$ represented by the identity matrix.  In the case that $\GL(n)/P$ is the Grassmannian $G(k,n)$, we have $(G/P)_{\geq 0} = G(k,n)_{\geq 0}$, though this is not entirely obvious!

In \cite{Rietsch} it was shown that there is a stratification $G/P = \bigcup \oPi_u^w$ such that each of the intersections $\oPi_u^w \cap (G/P)_{\geq 0}$ is homeomorphic to $\R_{>0}^d$ for some $d$.  The closures $\Pi_u^w = \overline{\oPi_u^w}$ are known as \defn{projected Richardson varieties}, and in the special case $G/P \cong G(k,n)$, they reduce to the positroid varieties of Section \ref{sec:posGrassmannian} \cite{KLS,KLS2}.

The statement
\be \label{eq:GP}
(G/P, (G/P)_{\geq 0}) \mbox{ is a positive geometry}
\ee
was essentially established in \cite{KLS2}, but in somewhat different language.  Namely, it was proved in \cite{KLS2} that for each stratum $\Pi_u^w$ (with $G/P$ itself being one such stratum), there is a meromorphic form $\Omega_u^w$ with simple poles along the boundary strata $\{\Pi_{u'}^{w'}\}$ such that $\Res_{\Pi_{u'}^{w'}}\Omega_u^w =\alpha \cdot  \Omega_{u'}^{w'}$ for some scalar $\alpha$.  By identifying $\Omega_u^w$ with the push-forward of the dlog-form under the identification $\R_{>0}^d \cong \oPi_u^w \cap (G/P)_{\geq 0}$, we expect all the scalars $\alpha$ can be computed to be equal to 1.  Note that this also shows that $(\Pi_u^w,\Pi_u^w \cap (G/P)_{\geq 0})$ is itself a positive geometry.

We remark that it is strongly expected that $\oPi_u^w$ (and in particular the open stratum inside $G/P$) is a cluster variety \cite{Lec}.  Thus \eqref{eq:GP} is a special case of \eqref{eq:cluster}.  For example, the cluster structure of $G(k,n)$ is established in \cite{Scott}.

\section{Generalized polytopes}
\label{sec:genpolytopes}

In this section we investigate the much richer class of  \defn{generalized polytopes}, or \defn{polytope-like} geometries, which are positive geometries whose canonical form may have zeros.

\subsection{Projective polytopes}\label{sec:polytope}
The fundamental example is a convex polytope embedded in projective space.
Most of our notation was already established back in Section~\ref{sec:proj_simplex}. In Appendix \ref{app:cones}, we recall basic terminology for polytopes and explain the relation between projective polytopes and cones in a real vector space.

%

\subsubsection{Projective and Euclidean polytopes}
Let $Z_1,Z_2,\ldots,Z_n \in \R^{m+1}$, and denote by $Z$ the $n \times (m+1)$ matrix whose rows are given by the $Z_i$.  Define $\A\deff \A(Z)\deff\A(Z_1,Z_2,\ldots ,Z_n) \subset \P^m(\R)$ to be the convex hull
\be \label{eq:poly}
\A = \Conv(Z) = \Conv(Z_1,\ldots,Z_n) \deff \left\{\sum_{i=1}^n C_i Z_i \in \P^m(\R)  \mid C_i \geq 0, i=1,\ldots ,n\right\}.
\ee
We make the assumption that $Z_1,\ldots,Z_n$ are all vertices of $\A$.
In \eqref{eq:poly}, the vector $\sum_{i=1}^n C_i Z_i \in \R^{m+1}$ is thought of as a point in the projective space $\P^m(\R)$.  The polytope $\A$ is well-defined if and only if $\sum_{i=1}^n C_i Z_i$ is never equal to 0 unless $C_i = 0$ for all $i$.  A basic result, known as ``Gordan's theorem"~\cite{Ziegler}, states that this is equivalent to the condition:
\be \label{eq:Gordan}
\mbox{There exists a (dual) vector $X\in \R^{m+1}$ such that $Z_i\cdot X >0$ for $i = 1,2,\ldots,n$.}
\ee
The polytope $\A$ is called a convex \defn{projective polytope}.

Every projective polytope $(\P^m,\A)$ is a positive geometry.  This follows from the fact that every polytope $\A$ can be triangulated (see Section~\ref{sec:triangulations}) by projective simplices.  By Section \ref{sec:proj_simplex}, we know that every simplex is a positive geometry, so by the arguments in Section \ref{sec:triangulations} we conclude that $(\P^m,\A)$ is a positive geometry.  The canonical form $\Omega(\A)$ of a projective polytope will be discussed in further detail from multiple points of view  in Section \ref{sec:forms}.

It is clear that the polytope $\A$ is unchanged if each $Z_i$ is replaced by a positive multiple of itself.  This gives an action of the little group $\R_{>0}^n$ on $Z$ that fixes $\A$.  To visualize a polytope, it is often convenient to work with \defn{Euclidean polytopes} instead of projective polytopes.  To do so, we use the little group to ``gauge fix" the first component of $Z$ to be equal to 1 (if possible), so that $Z = (1,Z')$ where $Z' \in \R^m$.  The polytope $\A \subset \P^m$ can then be identified with the set
\be
\left\{\sum_{i=1}^n C_i Z'_i
 \subset \mathbb{R}^m \mid C_i \geq 0, i=1,\ldots ,n \text{ and } C_1 +C_2 + \cdots + C_n = 1\right\}
\ee
inside Euclidean space $\R^m$. The $C_i$ variables in this instance can be thought of as center-of-mass weights. Points in projective space for which the first component is zero lie on the $(m{-}1)$-plane at infinity.

The points $Z_1,\ldots,Z_n$ can be collected into a $n \times (m+1)$ matrix $Z$, which can be thought of as a linear map $Z: \R^n \to \R^{m+1}$ or a rational map $Z: \P^{n-1} \to \P^m$.  The polytope $\A$ is then the image $Z(\Delta^{n-1})$ of the standard $(n-1)$-dimensional simplex in $\P^{n-1}(\R)$.

\subsubsection{Cyclic polytopes}\label{sec:cyclic}
We call the point configuration $Z_1,Z_2,\ldots,Z_n$ \defn{positive} if $n \geq m+1$, and all the $(m+1) \times (m+1)$ {\sl ordered} minors of the matrix $Z$ are strictly positive.  Positive $Z$ always satisfy condition \eqref{eq:Gordan}.  In this case, the polytope $\A$ is known as a \defn{cyclic polytope}. For notational convenience, we identify $Z_{i+n} \deff Z_i$, so the vertex index is represented mod $n$.

For even $m$, the facets of the cyclic polytope are
\be
\Conv(Z_{i_1-1},Z_{i_1},\ldots,Z_{i_{m/2}-1},Z_{i_{m/2}})
\ee
for $1\le i_1{-}1<i_1<i_2{-}1<i_2 <\cdots <i_{m/2}{-}1<i_{m/2}\le n{+}1$.

For odd $m$, the facets are 
\be
\Conv(Z_1,Z_{i_1-1},Z_{i_1},\ldots,Z_{i_{(m{-}1)/2}-1},Z_{i_{(m{-}1)/2}})
\ee
for $2 \le i_1{-}1< i_1<i_2{-}1<i_2<\cdots<i_{(m{-}1)/2}{-}1<i_{(m{-}1)/2}\le n$ and
\be
\Conv(Z_{i_1-1},Z_{i_1},\ldots,Z_{i_{(m{-}1)/2}-1},Z_{i_{(m{-}1)/2}},Z_n)
\ee
for $1 \le i_1{-}1< i_1<i_2{-}1<i_2< \cdots<i_{(m{-}1)/2}{-}1<i_{(m{-}1)/2}\le n{-}1$.

This description of the facets is commonly known as \defn{Gale's evenness criterion}~\cite{Ziegler}.

An important example for the physics of scattering amplitudes in planar $\mathcal{N}=4$ super Yang-Mills theory is the $m=4$ cyclic polytope which has boundaries:
\be
\Conv(Z_{i-1},Z_i,Z_{j-1},Z_j)
\ee
for $1\le i{-}1<i<j{-}1<j\le n{+}1$. The physical applications are explained in Section~\ref{sec:scattering}.

\subsubsection{Dual polytopes}
\label{sec:dual}
Let $(\P^m,\A)$ be a convex polytope and let $Y\in\P^m(\R)$ be a point away from any boundary component. We now define the \defn{dual of $\A$ at $Y$}, denoted $\A_Y^*$, which is a convex polytope in the linear dual of $\P^m$ (also denoted $\P^m$). For the moment let us ``de-projectivize" $Y$ so that $Y\in\R^{m{+}1}$. 

Recall that each facet of $\A$ is given by the zero-set (along $\partial\A$) of some dual vector $W\in\R^{m{+}1}$. Before going to projective space, we pick the overall sign of $W$ so that $W\cdot Y>0$ for our $Y$. We say that the facets are \defn{oriented relative to $Y$}. Now assume that the facets of $\A$ are given by $W_1,\ldots, W_r$. Then we define:
\be \label{eq:dualY}
\A_Y^*\deff \Conv(W_1,\ldots, W_r) \deff\left\{ \sum_{j=1}^r C_jW_j\in\P^m\mid C_j\geq 0,j=1,\ldots, r \right\}
\ee
Not that the (relative) signs of the $W_j$'s are crucial, hence so is the position of $Y$ relative to the facets.

In the special case where $Y\in\Int(\A)$, we let $\A^*\deff \A_Y^*$ and refer to this simply as the \defn{dual of $\A$}. An equivalent definition of the dual of $\A$ is:
\be
\A^* = \{W \in \mathbb{P}^m\mid W\cdot Y \geq 0\text{ for all } Y\in \A\}.
\ee
A priori, the inequality $W \cdot Y \geq 0$ may not make sense when $W$ and $Y$ are projective.  As in \eqref{eq:dualY}, we give it precise meaning by working first in $\R^{m+1}$, and then taking the images in $\P^m$ (see also Appendix \ref{app:cones}).

For generic $Y$, we also wish to assign an orientation to $\A_Y^*$ as follows. Suppose we orient the facets $W_1,\ldots, W_r$ relative to the {\sl interior} of $\A$. Let $s$ denote the number of terms $W_j\cdot Y$ that are negative. Then we orient $\A_Y^*$ based on the parity of $s$. In particular, the dual for $Y\in\Int(\A)$ is positively oriented.

It should be obvious that $(\P^m,\A_Y^*)$ is a positive geometry for each $Y$.

An important observation about the dual polytope $\A_Y^*$ is that it has ``opposite" combinatorics to $\A$. In other words, vertices of $\A$ correspond to the facets of $\A_Y^*$ and vice versa.  Since we assumed that $\A$ has $n$ vertices $Z_1,Z_2,\ldots,Z_n$, the dual polytope $\A_Y^*$ has $n$ facets corresponding to $\{W\mid W\cdot Z_i = 0\}$ for $i = 1,2,\ldots,n$. Suppose a facet of $\A$ is adjacent to vertices $Z_{i_1},\ldots,Z_{i_{m}}$, then the dual vertex $W$ satisfies $W\cdot Z_{i_1}=\ldots=W\cdot Z_{i_m}=0$ so that
\be
W_I=\epsilon_{II_1\ldots I_m}Z_{i_1}^{I_1}\ldots Z_{i_m}^{I_m}
\ee 
where we have ordered the vertices so that $Y\cdot W>0$. This is a straightforward generalization of~\eqref{eq:W_simplex}. Sometimes we will also write
\be\label{eq:W_notation}
W=(i_1\ldots i_m)
\ee
to denote the same quantity. Applying this to Section~\ref{sec:cyclic} gives us all the vertices of the polytope dual to a cyclic polytope.



We now state an important fact about dual polytopes. Given a {\sl signed} triangulation $\A=\sum_i\A_i$ of a convex polytope $\A$ by other convex polytopes $\A_i$ (see Sections~\ref{sec:triangulations} and~\ref{sec:Gro}), and a point $Y$ not along any boundary component, we have
\be
{\A=\sum_i\A_i}\;\;\;\;\Rightarrow\;\;\;\;{\A_Y^* =\sum_i\A_{iY}^*}
\ee
In words, we say that
\be\label{eq:commute}
\mbox{Dualization of polytopes ``commutes" with triangulation.}
\ee
This is a crucial geometric phenomenon to which we will return. While we do not provide a direct geometric proof, we will argue its equivalence to the triangulation independence of the canonical form and the existence of a volume interpretion for the form in Section~\ref{sec:dualpolytopeform}.

\subsection{Generalized polytopes on the projective plane}
\label{sec:polytopeplane}
Let us now discuss a class of positive geometries in $\P^2$ which includes Examples \ref{ex:circSegment} and \ref{ex:quadratic}.  Let $C \subset \R^2$ be a closed curve that is piecewise linear or quadratic.  Thus $C$ is the union of curves $C_1,C_2,\ldots,C_r$ where each $C_i$ is either a line segment, or a closed piece of a conic.  We assume that $C$ has no self-intersections, and let $\mathcal{U} \subset \R^2$ be the closed region enclosed by $C$.  We will further assume that $\mathcal{U}$ is a convex set.  Define the degree $d(\mathcal{U})$ of $\mathcal{U}$ to be the sum of the degrees of the $C_i$.  We now argue that if $d \geq 3$, 
\be
(\P^2,\mathcal{U}) \mbox{ is a positive geometry.}
\ee
We will proceed by induction on $d = d(\mathcal{U})$.  For the base case $d = 3$, there are two possibilities: (a) $\mathcal{U}$ is a triangle, or (b) $\mathcal{U}$ is a convex region enclosed by a line and a conic.  For case (a), $\Omega(\mathcal{U})$ was discussed in Section \ref{sec:proj_simplex}.  For case (b), $\Omega(\mathcal{U})$ was
studied in Example~\ref{ex:quadratic}. In both cases, $(\P^2,\mathcal{U})$ is a positive geometry.

Now suppose that $d(\mathcal{U}) = d \geq 4$ and that one of the $C_i$ is a conic.  Let $L$ be the line segment joining the endpoints of $C_i$.  By convexity, $L$ lies completely within $\mathcal{U}$, and thus decomposes $\mathcal{U}$ into the union $\mathcal{U} = \mathcal{U}_1 \cup \mathcal{U}_2$ of two regions where $\mathcal{U}_1$ has $C_i$ and $L$ as its ``sides", while $\mathcal{U}_2$ satisfies the same conditions as $\mathcal{U}$, but has more linear sides. In other words, $\mathcal{U}$ is triangulated by $\mathcal{U}_{1,2}$, and by the discussion in Section~\ref{sec:triangulations}, $\mathcal{U}$ is a pseudo-positive geometry with $\Omega(\mathcal{U})=\Omega(\mathcal{U}_1) + \Omega(\mathcal{U}_2)$. In addition, we argue that $\mathcal{U}$ must be a positive geometry, since all its boundary components (line segments on the projective line) are positive geometries.

If none of the $C_i$ is a conic, then $\mathcal{U}$ is a convex polygon in $\R^2$.  We can slice off a triangle and repeat the same argument.

Let us explicitly work out a simple example defined by two linear boundaries and one quadratic: a ``pizza" slice.

\begin{example}\label{ex:pizza}
Consider a ``pizza" shaped geometry; that is, a sector $\mathcal{\mathcal{T}}(\theta_1,\theta_2)$ of the unit circle between polar angles $\theta_1,\theta_2$, which is bounded by two linear equations $q_1=-x\sin\theta_1+y\cos\theta_1\geq 0$ and $q_2=x\sin\theta_2-y\cos\theta_2\geq 0$, and an arc $f=1-x^2-y^2\geq 0$. Let us assume for simplicity of visualization that $0\le\theta_1<\theta_2\le\pi$. See Figure~\ref{fig:intro_d} for a picture for the case $(\theta_1,\theta_2)=(\pi/6,5\pi/6)$.

The canonical form is given by 
\be\label{eq:pizza}
\Omega(\mathcal{\mathcal{T}}(\theta_1,\theta_2))=\frac{\left[\sin(\theta_2-\theta_1)+(-x\sin\theta_1+y\cos\theta_1)+(x\sin\theta_2-y\cos\theta_2)\right]}{(1-x^2-y^2)(-x\sin\theta_1+y\cos\theta_1)(x\sin\theta_2-y\cos\theta_2)}dxdy
\ee
Let us take the residue along the linear boundary $q_1=0$ via the limit $x\rightarrow y \cot\theta_1$, and confirm that the result is the canonical form on the corresponding boundary component. We get
\be
\Res_{q_1{=}0}\Omega(\mathcal{T}(\theta_1,\theta_2))=\frac{(\sin\theta_1) }{(\sin\theta_1-y)y}dy
\ee
which is the canonical form on the line segment $y\in[0,\sin\theta_1]$ with positive orientation. The upper bound $y<\sin\theta_1$ is simply the vertical height of the boundary $q_1=0$.

Similarly, the residue at $q_2=0$ is given by
\be
\Res_{q_2{=}0}\Omega(\mathcal{T}(\theta_1,\theta_2))=-\frac{(\sin\theta_2)}{(\sin\theta_2-y)y}dy
\ee
which is the canonical form on the line segment $y\in [0,\sin\theta_2]$, again with negative orientation.

The residue along the arc can be computed in a similar manner as for the segment of the disk in Example~\ref{ex:circSegment}, so we leave this as an exercise for the reader.

For $\theta_1=0,\theta_2=\pi$, we are reduced to the northern half disk from Example~\ref{ex:circSegment}, so $\mathcal{T}(0,\pi)=\mathcal{S}(0)$. A quick substitution shows that the canonical forms match as well.
\end{example}

\subsection{A naive positive part of partial flag varieties}\label{sec:flag}
In this section, let us consider a \defn{partial flag variety} $\Fl(n;\k)$ with $k\deff(k_1<k_2<\cdots < k_r)$ where
\be
\Fl(n;\k)\deff \{V = (V_1 \subset V_2 \subset \cdots \subset V_r \subset \C^n) \mid \dim V_i = k_i\}
\ee
is the space of flags, whose components are linear subspaces of specified dimensions $k_1 < k_2 < \cdots < k_r $.  Compared to Section \ref{sec:flagTP}, we have $\Fl(n;\k) = \GL(n)/P$ for an appropriate choice of parabolic subgroup $P$.

In Section \ref{sec:flagTP}, we defined the totally nonnegative part $\Fl(n;\k)_{\geq 0}$.
We now define the \defn{naive nonnegative part} $\Fl(n;\k)_{\tgeq 0}$ of $\Fl(n;\k)$ to be the locus of $V$ where $V_i \in G_{\geq 0}(k_i,n)$ for all $i$.  The $\tgeq 0$ is to remind us that $\Fl(n;\k)_{\tgeq 0}$ may differ from the totally nonnegative part $\Fl(n;\k)_{\geq 0}$. We speculate that
\be
(\Fl(n;\k), \Fl(n;\k)_{\tgeq 0}) \mbox{ is a positive geometry.}
\ee
The naive nonnegative part $\Fl(n;\k)_{\tgeq 0}$ has the following symmetry property that the totally nonnegative part $\Fl(n;\k)_{\geq 0}$ lacks: if all the $k_i$ have the same parity, then the cyclic group $\ZZ/n\ZZ$ acts on $\Fl(n;\k)_{\tgeq 0}$.  To see this, we represent points of $\Fl(n;\k)_{\tgeq 0}$ by $k_r \times n$ full-rank matrices $C$ so that $V_i$ is the span of the first $k_i$ rows.  If $C_1,C_2,\ldots,C_n$ denote the $n$ columns of $C$, then the cyclic group acts by sending $C$ to the $C'$ with columns $\pm C_n, C_1,C_2,\ldots,C_{n-1}$, where the sign is $+$ if the $k_i$ are odd, and $-$ otherwise.

Unlike $\Fl(n;\k)_{\geq 0}$, the naive nonnegative part $\Fl(n;\k)_{\tgeq 0}$ is in general a polytope-like positive geometry: the canonical forms will have zeros.
The case $\Fl(n;1,3)_{\tgeq 0}$ is studied in some detail in \cite{Bai:2015qoa}, and we will discuss its canonical form in Section \ref{sec:1loop}.  

\subsection{$L$-loop Grassmannians}\label{sec:loopgrass}

We now define the \defn{$L$-loop Grassmannian} $G(k,n; \k)$, where $\k\deff (k_1,\ldots ,k_L)$ is a sequence of positive integers. 
A point $V$ in the $L$-loop Grassmannian $G(k,n;\k)$ is a collection of linear subspaces $V_S\subset \C^n$ indexed by $S$, where $S\deff\{s_1,\ldots ,s_l\}$ is any subset of $\{1,2,\ldots,L\}$ for which $k_S\deff k_{s_1}{+}\ldots{+}k_{s_L} \le n{-}k$. Moreover, we require that $\dim V_S = k + k_S$, and $V_S \subset V_{S'}$ whenever $S \subset S'$.  For simplicity we sometimes write 
$V_i=V_{\{i\}}, V_{ij}=V_{\{i,j\}}$, and so on and so forth. In particular, $V_\emptyset\subset V_s$ for any one-element set $S = \{s\}$.

We say that a point $V \in G(k,n;\k)$ is ``generic" if $V_i\cap V_j=V_\emptyset$ for any $i\neq j$. For such points, we have $V_S = \sp(V_{s_1},V_{s_2}, \ldots, V_{s_\ell})$ for any $S = \{s_1,s_2,\ldots,s_\ell\}$, so that $V$ is determined completely by $V_1,V_2,\ldots ,V_L$. The space $G(k,n;\k)$ is naturally a subvariety of a product of Grassmannians, and in particular it is a projective variety.

If $L = 0$, then the $0$-loop Grassmannian reduces to the usual Grassmannian $G(k,n)$.  If $L = 1$, the $1$-loop Grassmannian reduces to the partial flag variety $\Fl(n;k,k+k_1)$.  We may refer to the $0$-loop Grassmannian as the \defn{tree Grassmannian}. The distinction between ``trees" and ``loops" comes from the terminology of scattering amplitudes.  We caution that ``loop Grassmannian" commonly refers to another infinite dimensional space in the mathematical literature, which is also called the affine Grassmannian.

We now define the positive part $G_{> 0}(k,n;\k)$ of $G(k,n;\k)$.   Consider the set of $(k{+}K)\times n$ matrices $M(k+K,n)$, where $K\deff k_1+\ldots+k_L$. We will denote each matrix as follows:
\be\label{eq:loopMatrix}
P\deff 
\begin{pmatrix}
C\\ D_1\\ \ldots \\ D_L
\end{pmatrix}
\ee
where $C$ has $k$ rows, and $D_i$ has $k_i$ rows for $i=1,\ldots,L$.  We say that $P$ is \defn{positive} if for each $S = \{s_1,s_2,\ldots,s_\ell\}$ the matrix formed by taking the rows of $C, D_{s_1}, D_{s_2},\ldots,D_{s_\ell}$ is positive, that is, has positive $(k+k_S) \times (k+k_S)$ minors.
Each positive matrix $P$ gives a point $V \in G(k,n;\k)$, where $V_S$ is the span of the rows of $C, D_{s_1}, D_{s_2},\ldots,D_{s_\ell}$. Two points $P_{1,2}$ are equivalent if they map to the same point $V$.   Each equivalence class defines a point on the (strictly) positive part $G_{>0}(k,n;\k)$.  The nonnegative part $G_{\geq 0}(k,n;\k)$ is the closure of $G_{>0}(k,n;\k)$ in $G(k,n;\k)(\R)$. 


There exists a subgroup $\G(k;\k)$ of $\GL(k+K)$ acting on the left whose orbits in $M(k+K,n)$ are equivalence classes of matrices $P$ defining the same point $V$.  Elements of $\G(k;\k)$ allow row operations within each of $C$, or the $D_i$, and also allows adding rows of $C$ to each $D_i$.  

We caution that not every point $V \in G(k,n;\k)$ is representable by a matrix $P$.  For example, for $G(0,3;1,1)$, the $P$ matrix is a $2 \times 3$ matrix.  Consider a point on the boundary where the two rows of $P$ are scalar multiples of each other, then $V_1 = V_2$ and additional information is required to specify the 2-plane $V_{1,2}$. So there are even points in the boundary of $G_{\geq 0}(k,n;\k)$ that cannot be represented by $P$.

Let us focus our attention on the $L$-loop Grassmannian $G(k,n;\ell^L)$ where $l^L \deff (l,\ldots,l)$ with $l$ appearing $L$ times.  The case $l=2$ is the case of primary physical interest.  The $L$-loop Grassmannian $G(k,n;l^L)$ has an action of the symmetric group $S_L$ on the set $\{1,2,\ldots,L\}$ with $L$ elements.  For a permutation $\sigma \in S_L$,  we have $\sigma(V)_S = V_{\sigma(S)}$.  In addition, when $l$ is even, the action of $S_L$ on $G(k,n;l^L)$ preserves the positive part: permuting the blocks $D_i$ of the matrix $P$ preserves the positivity conditions.
%

The $L$-loop Grassmannian $G(k,n;\k)$ is still very poorly understood in full generality. For instance, a complete stratification extending the positroid stratification of the positive Grassmannian is still unknown. The existence of the canonical form is also unknown. Nonetheless, we have identified the canonical form for some $L=1$ cases, which we discuss in Section~\ref{sec:1loop}, and some $L=2$ cases.

The proven existence of some $L=1$ canonical forms is highly non-trivial. We therefore speculate that 
\begin{equation}
\mbox{$(G(k,n;\k),G_{\geq 0}(k,n;\k))$ is a positive geometry}.
\end{equation}
Note that for even $\ell$, the speculative positive geometry $G(k,n;\ell^L)$ should have an action of $S_L$: the symmetric group acts on the boundary components, and furthermore the canonical form will be invariant under $S_L$.

We may also denote a generic point in the loop Grassmannian as
\be\label{eq:ampspace}
\mathcal{Y}=(Y,Y_1,\ldots,Y_L)
\ee
where $V_S=\sp\{Y,Y_{s_1},\ldots,Y_{s_l}\}$ for any $S$. Or, for $\k=2^L$ we may use notation like $(Y,(AB)_1,\ldots,(AB)_L)$ or $(Y,AB,CD,EF,\ldots)$, which are common in the physics literature.

\begin{example}\label{twoloop}
Consider the space $G(0,4;2^2)$ which is isomorphic to the ``double Grassmannian" $(G(2,4)\times G(2,4))$, and has an action of the symmetric group $S_2$.
Let $(C_1,C_2)\in G(2,4)^2$ denote a point in the space, with both $C_1$ and $C_2$ thought of as $2\times 4$ matrices modded out by $\GL(2)$ from the left. The interior of the positive geometry is given by the following points in matrix form:
\be
(G(2,4)^2)_{> 0}\deff \{(C_1,C_2)\;:\: C_1,C_2\in G_{> 0}(2,4)\text{ and }\det\begin{pmatrix}C_1\\C_2\end{pmatrix}> 0\}
\ee
In other words, both $C_1,C_2$ are in the positive Grassmannian, and their combined $4\times 4$ matrix (i.e. the two rows of $C_1$ stacked on top of the two rows of $C_2$) has positive determinant. 

We can parametrize the interior with eight variables as follows:
\be
\begin{pmatrix}C_1\\C_2\end{pmatrix}=
\begin{pmatrix}1&x_1&0&-w_1\\
0&y_1&1&z_1\\
1&x_2&0&-w_2\\
0&y_2&1&z_2
\end{pmatrix}
\ee
The conditions $C_1,C_2\in G_{> 0}(2,4)$ impose that all eight variables be positive, while the final condition requires
\be
\det\begin{pmatrix}C_1\\C_2\end{pmatrix}=-(x_1-x_2)(z_1-z_2)-(y_1-y_2)(w_1-w_2)>0
\ee
The canonical form can be computed by a triangulation argument given in~\cite{Arkani-Hamed:2013kca}:
\be\label{eq:2loops}
\Omega((G(2,4)^2)_{\geq 0}) = \frac{dx_1dy_1dz_1dw_1dx_2dy_2dz_2dw_2(x_1z_2{+}x_2z_1{+}y_1w_2{+}y_2w_1)}{x_1x_2y_1y_2z_1z_2w_1w_2[(x_1{-}x_2)(z_1{-}z_2){+}(y_1{-}y_2)(w_1{-}w_2)]}
\ee
The 9 poles appearing in the denominator account for the boundaries defined by the 9 inequalities. Note that the form is symmetric under exchanging $1\leftrightarrow 2$, as expected from the action of $S_2$.

Let us illustrate the simple method by which this result is obtained
by looking at a smaller example: a 4-dimensional
boundary. Let us go to the
boundary where $y_{1,2}=w_{1,2}=0$. The geometry is then simply given by
\begin{equation}
x_{1,2}>0, z_{1,2}>0, \; (x_1 - x_2)(z_1 - z_2)<0
\end{equation}
But this can clearly be triangulated (see Section~\ref{sec:triangulations}) in two pieces. We either have
$z_1>z_2>0$ and $x_1<x_2$, or vice-versa. We can trivially triangulate
$z_1>z_2>0$ by saying $z_2 = a, z_1 = a + b$ with $a>0,b>0$ and so the
form is $da\;db/ab  = dz_2dz_1/z_2(z_1 - z_2)$. Thus the full form is
\be
dx_1 dx_2 dz_2 dz_2 &\times& \left(\frac{1}{z_2 (z_1 - z_2)}
\frac{1}{x_1 (x_2 - x_1)} + \frac{1}{z_1 (z_2 - z_1)} \frac{1}{x_2
(x_1 - x_2)} \right) \\
&=& \frac{dx_1 dx_2 dz_1 dz_2 (x_2 z_1 + x_1
z_2)}{x_1 x_2 z_1 z_2 (x_2 - x_1)(z_1 - z_2)}.
\ee
which of course can be obtained from~\eqref{eq:2loops} by taking residues at the corresponding boundaries.
\end{example}

\subsection{Grassmann, loop and flag polytopes}\label{sec:grassmann}
Let us begin by reviewing the construction of \defn{Grassmann polytopes} \cite{Lam:2015uma}, which will motivate the construction of \defn{flag polytopes} and \defn{loop polytopes}.

Let $Z_1,Z_2,\ldots,Z_n \in \R^{k+m}$ be a collection of vertices.  The linear map $Z: \R^n \to \R^{k+m}$ induces a rational map $Z: G(k,n) \to G(k,k+m)$ in the obvious way.  If the map $Z$ is well-defined on $G_{\geq 0}(k,n)$, we define the image 
\be
Z(G_{\geq 0}(k,n)) = \{C \cdot Z \mid C \in G_{\geq 0}(k,n)\}, \qquad \mbox{or more generally } Z(\Pi_{f,\geq 0})
\ee
to be a Grassmann polytope. Note that we allow $Z(\Pi_{f,\geq 0})$ in the definition because we want faces of Grassmann polytopes to also be Grassmann polytopes.

In \cite{Lam:2015uma}, it is shown that the following condition
\be \label{eq:GrassGordan}
\mbox{There exists a $(k{+}m) {\times} k$ real matrix $M$ so that $Z {\cdot}M$ has positive $k {\times} k$ minors.}
\ee
implies that $Z(G_{\geq 0}(k,n))$ is well-defined.  When $k=1$, \eqref{eq:GrassGordan} reduces to \eqref{eq:Gordan}.   In \cite{Lam:2015uma}, \eqref{eq:Gordan} was used to define Grassmann polytopes but recently Galashin has announced that there exist images $Z(G_{\geq 0}(k,n))$ that do not satisfy \eqref{eq:GrassGordan}.  See~\cite{Lam:2015uma,Karp} for a discussion of\eqref{eq:GrassGordan} and other related criteria. 

We now define \defn{flag polytopes} and \defn{loop polytopes} generalizing Grassmann polytopes, and give a condition similar to \eqref{eq:GrassGordan}. We may sometimes refer to Grassmann polytopes also as \defn{tree Grassmann polytopes} and loop polytopes as \defn{loop Grassmann polytopes}.

Let $Z$ be a full rank $n \times (k{+}m)$ real matrix thought of as a linear map $Z: \R^n \to \R^{k+m}$.  Set
$$
X \deff \Fl(n;\k) \qquad \mbox{and} \qquad d \deff k_r = \max_i (k_i)
$$ 
for the former and assume that $k{+}m \geq d$, and 
$$
X\deff G(k,n;\k) \qquad \mbox{and} \qquad d\deff\max\{\dim V_S|\dim V_S \leq k{+}m\}
$$ 
for the latter. 
In the first case, we have a rational map $Z: \Fl(\k;n) \to \Fl(\k;k+m)$ sending $V_i$ to $Z(V_i)$ for $i=1,\ldots,r$.  In the second case we have a rational map $Z: G(k,n;\k) \to G(k,k+m;\k)$  sending $V_S$ to $Z(V_S)$.  We then define the flag polytope, or loop polytope to be the image $Z(X_{\geq 0})$, whenever the map $Z$ is well-defined on $X_{\geq 0}$ (here $X_{\geq 0}$ refers to the naive nonnegative part $\Fl(n;\k)_{\tgeq 0}$ in the case $X = \Fl(n;\k)$).  In the case of $X = G(k,n;0)$, this reduces to the definition of a Grassmann polytope.  We speculate that
\be \label{eq:Grassmannpositive}
Z(X_{\geq 0}) \mbox{ is a positive geometry,}
\ee
where the ambient complex variety is taken to be the Zariski closure of $Z(X_{\geq 0})$.

We now introduce the condition
\be \label{eq:loopGordan}
\mbox{There exists a $(k{+}m){\times}d$ real matrix $M$ such that $Z{\cdot}M$ has positive $d{\times}d$ minors.}\;\;
\ee
generalizing \eqref{eq:GrassGordan}.  We claim that
\be\label{eq:Zimage}
\mbox{Under \eqref{eq:loopGordan}, the image $Z(X_{\geq 0})$ is well-defined in both cases.}  
\ee
Note that any positive $Z$ satisfies \eqref{eq:loopGordan} \cite[Lemma 15.6]{Lam:2015uma}.

Let us now prove \eqref{eq:Zimage}.  In \cite{Lam:2015uma}, it is shown that $Z(G_{\geq 0}(d,n))$ is well-defined if $Z$ satisfies \eqref{eq:loopGordan}.  We shall show that \eqref{eq:loopGordan} implies the same statement for all $1 \leq d' \leq d$.  This will show that $Z(V_s)$ in the $\Fl(\k)$ case (resp. $Z(V_S)$ in the $G(k,n;l^L)$ case) is well-defined for $V \in X_{\geq 0}$, proving \eqref{eq:Zimage}.

We think of $Z$ as a point in $G(k+m,n)$.  The condition \eqref{eq:loopGordan} is equivalent to the condition that there exists $U \in G_{>0}(d,n)$ such that $U \subset Z$, i.e. $Z$ contains a totally positive subspace of dimension $d$.  In \cite[Lemma 15.6]{Lam:2015uma}, it is shown that if $U \in G_{>0}(d,n)$ then it contains $U' \in G_{>0}(d',n)$ for all $d' \leq d$.  It follows that \eqref{eq:loopGordan} implies the same condition for all values $d' \leq d$, completing the proof.

\subsection{Amplituhedra and scattering amplitudes}
\label{sec:scattering}
Following the discussion in Section~\ref{sec:grassmann}, let us further suppose that $Z$ is positive: all the ordered $(k+m) \times (k+m)$ minors are positive. In other words, the rows of $Z$ form the vertices of a cyclic polytope. Then the Grassmann polytope
\be
\A(k,n,m) \deff Z(G_{\geq 0}(k,n))
\ee
is known as the \defn{tree Amplituhedron} \cite{Amplituhedron,Arkani-Hamed:2013kca}. For instance, the tree Amplituhedron for $k = 1$ is a cyclic polytope in $\P^m(\R)$.

Now consider the loop Grassmannian $G(k,n;l^L)$ . The corresponding $L$-loop Grassmann polytope is called the \defn{L-loop Amplituhedron}~\cite{Amplituhedron,Arkani-Hamed:2013kca}:
\be \label{eq:defampli}
\A(k,n,m;l^L)\deff Z(G_{\geq 0}(k,n;l^L))
\ee
We also refer to this space as the \defn{Amplituhedron at $L$-loops} or simply the \defn{Amplituhedron} whenever the \defn{loop level} $L$ is understood. In particular, the $0$-loop Amplituhedron is the tree Amplituhedron.  The following special case of \eqref{eq:Grassmannpositive} is at the heart of the connection between scattering amplitudes and our work.

\begin{conjecture}\label{conj:Amplituhedron}
Grassmann polytopes (both trees and loops), particularly Amplituhedra, are positive geometries. 
\end{conjecture}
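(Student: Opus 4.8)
The plan is to argue by induction on $D \deff \dim Z(X_{\geq 0})$, reducing the conjecture to two tasks: verifying the recursive boundary axiom (P1), and exhibiting a rational top form satisfying (P2). Write $\A \deff Z(X_{\geq 0})$ and let $\overline{\A}$ be the Zariski closure of $\A$ in the relevant Grassmannian $G(k,k+m)$ (a product of such Grassmannians in the loop case). Since $\overline{\A}$ is the closure of the image of a dominant rational map from $X = \Fl(n;\k)$ or $X = G(k,n;\k)$, which is (uni)rational, $\overline{\A}$ is unirational, hence rationally connected, hence carries no nonzero holomorphic top form; after replacing $\overline{\A}$ by its normalization or a suitable resolution we are in the setting of Appendix~\ref{app:assumptions}, and the uniqueness half of (P2) becomes automatic once existence is established.

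For existence I would use the push-forward and triangulation machinery of Sections~\ref{sec:triangulations} and~\ref{sec:morphism}. Assume a \emph{triangulation of $\A$ by BCFW-type cells}: a finite collection of positroid cells $\Pi_{f_i,\geq 0} \subset G_{\geq 0}(k,n)$ (loop positroid cells in the loop case), each of dimension exactly $D$, such that $Z|_{\Pi_{f_i,>0}}$ is an orientation-preserving diffeomorphism onto its image and the images $\{Z(\Pi_{f_i,\geq 0})\}$ triangulate $\A$ in the sense of Section~\ref{sec:triangulations}. By Section~\ref{sec:posGrassmannian} each $(\Pi_{f_i},\Pi_{f_i,\geq 0})$ is a simplex-like positive geometry, and the restriction $Z|_{\Pi_{f_i}} \colon (\Pi_{f_i},\Pi_{f_i,\geq 0}) \to (\overline{Z(\Pi_{f_i})}, Z(\Pi_{f_i,\geq 0}))$ is then a morphism. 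Granting Heuristic~\ref{heuristic} (or, equivalently, using Proposition~\ref{prop:KR} that push-forward commutes with residues together with the inductive hypothesis applied to boundary cells), each push-forward $(Z|_{\Pi_{f_i}})_*\Omega(\Pi_{f_i,\geq 0})$ is a rational top form with the correct residues. Triangulation independence \eqref{eq:triangulationclaim} then shows that
\[
\Omega(\A) \deff \sum_i (Z|_{\Pi_{f_i}})_* \Omega(\Pi_{f_i,\geq 0})
\]
is a well-defined rational top form on $\overline{\A}$ whose spurious poles along internal boundaries of the triangulation cancel and whose physical poles are simple with the correct residues; this is (P2).

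For axiom (P1) I would show that each boundary component $C$ of $\A$ is again a Grassmann or loop polytope of dimension $D-1$, so that the inductive hypothesis applies. The boundary $\partial\A$ is contained in the union of the images of the codimension-one boundary positroid cells of the triangulating $\Pi_{f_i}$; one then identifies each irreducible codimension-one piece of $\partial\A$ with $Z'(X'_{\geq 0})$, where $X'$ is the flag or loop variety attached to the corresponding facet datum and $Z'$ is the linear map $Z$ induces on it. This requires a combinatorial analysis of which boundary cells are not collapsed by $Z$ and of the inequalities cutting out $\A$ --- the analogue for general $k$ (and for loops) of Gale's evenness criterion. In the known low-dimensional cases ($k=1$ projective polytopes, $m=2$) this analysis is already available.

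The principal obstacle is the existence of the BCFW-type triangulation together with the description of the boundaries: for general $(k,n,m)$, and especially at loop level, it is not known that the candidate BCFW cells tile $\A$ with non-overlapping interiors and total union $\A$, nor is the facet structure of $Z(G_{\geq 0}(k,n))$ understood combinatorially for $k \geq 2$. A secondary obstacle is that Heuristic~\ref{heuristic} is itself unproven in this generality, so one would have to verify case by case that $Z$ has degree one on each triangulating cell and maps its boundary cells as morphisms --- exactly the computations carried out in Section~\ref{sec:pushforward} for special cases. An alternative, more direct route --- fixing $\Omega(\A)$ from an ansatz for its poles and zeros and checking all leading residues, as in \cite{Arkani-Hamed:2014dca} --- works for even $m$, where positive convexity holds, but does not obviously extend to odd $m$ or to the genuinely non-Amplituhedron Grassmann polytopes.
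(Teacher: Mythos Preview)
This statement is a \emph{conjecture} in the paper, not a theorem; the paper offers no proof. Immediately after stating it, the authors note that Tarski--Seidenberg guarantees the semialgebraic condition but ``does not prove the existence of the canonical form,'' and in the Outlook (Section~\ref{sec:conclusion}) they describe proving it as the ``most pressing'' open problem.

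Your proposal is therefore not a proof but a proof \emph{strategy}, and you are appropriately candid about this: you flag both the missing BCFW-type triangulation result and the unproven status of Heuristic~\ref{heuristic} as genuine obstacles. Your outline matches the route the paper itself advocates. In Section~\ref{sec:BCFW} the authors sketch exactly this approach --- triangulate $\A$ by images of positroid (or loop) cells $\mathcal{C}_i$, push forward the $\Delta$-like forms, and sum --- and remark ``if our assumptions on BCFW cells hold, then the Amplituhedron must be a positive geometry.'' The Outlook repeats this: ``There is first a purely geometrical problem of showing that the Amplituhedron can be triangulated by images of `cells' \ldots\ With this in hand, we simply need to prove that the images of these cells are themselves positive geometries.''

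One small point: your claim that each boundary component of $\A$ is ``again a Grassmann or loop polytope'' is stronger than what is known or even expected in general --- the boundary components are images $Z(\Pi_{g,\geq 0})$ of lower-dimensional positroid cells, which is how Grassmann polytopes are defined in Section~\ref{sec:grassmann}, but identifying \emph{which} cells arise and that they satisfy the same hypotheses is part of the open problem, not a routine step. Otherwise your assessment of the difficulties is accurate and aligned with the paper's own.
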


It follows from the Tarski-Seidenberg theorem (see Appendix~\ref{app:tarski}) that tree and loop Grassmann polytopes are all semialgebraic sets.  In other words, they are given as a finite union of sets that can be ``cut out" by polynomial equations in the Pl\"ucker coordinates and polynomial inequalities in the Pl\"ucker coordinates.  While the Tarski-Seidenberg theorem guarantees that our geometries are semi-algebraic, it does not provide us with the homogeneous inequalities needed to ``cut out" the geometries. Identifying such inequalities is an outstanding problem, progress on which will be reported in~\cite{winding}. However, this still does not prove the existence of the canonical form.



The $m=4,l=2$ Amplituhedron is the most interesting case for physics, because it appears to provide a complete geometric formulation of all the planar scattering amplitudes in $\mathcal{N}=4$ super Yang-Mills. More precisely,
\be
\text{the $n$-particle N$^k$MHV tree amplitude } &=& \Omega(\A(k,n,4))\\ \text{the integrand of the $n$-particle N$^k$MHV $L$-loop amplitude }&=&\Omega(\A(k,n,4;2^L))
\ee
We will often denote the \defn{physical Amplituhedron} simply as $\A(k,n;L)$, and the physical tree Amplituhedron more simply as $\A(k,n)$. Historically, the scattering amplitudes were first computed using techniques from quantum field theory, and were subsequently recognized as top-degree meromorphic forms with simple poles on the boundary of the Amplituhedron and unit leading residues, thus providing the original motivation for the study of canonical forms. The existence of the scattering amplitudes provides strong evidence for the existence of canonical forms on Amplituhedra.

\section{Canonical forms}
\label{sec:forms}

The main purpose of this section is to establish a list of methods and strategies for computing the canonical form of positive geometries. A summary of the main methods is given as follows:
\begin{itemize}
\item
{\bf Direct construction from poles and zeros}: We propose an ansatz for the canonical form as a rational function and impose appropriate constraints from poles and zeros.

\item
{\bf Triangulations}: We triangulate a generalized polytope by generalized simplices and sum the canonical form of each piece.

\item
{\bf Push-forwards}: We find morphisms from simpler positive geometries to more complicated ones, and apply the push-forward via Heuristic~\ref{heuristic}.

\item
{\bf Integral representations}: We find expressions for the canonical form as a volume integral over a ``dual" geometry, or as a contour integral over a related geometry.

\end{itemize}

\subsection{Direct construction from poles and zeros}\label{sec:matching}
Consider a positive geometry $(X,X_{\geq 0})$ of dimension $m$ for which there exists a degree-one morphism $\Phi:(\P^m,\A)\rightarrow (X,X_{\geq 0})$ for some positive geometry $\A$ in projective space. By Heuristic \ref{heuristic}, it suffices to compute the canonical form on $\A$ before pushing forward the result onto $X_{\geq 0}$ via $\Phi$. Since the map is of degree one, the push-forward is usually trivial.

Suppose $\A$ is defined by homogeneous polynomial inequalities $q_i(Y)\geq 0$ indexed by $i$ for $Y\in\P^m$. Then an ansatz for the canonical form is the following:
\be
\Omega(\A)=\frac{q(Y)\lb Yd^mY\rb}{\prod_i q_i(Y)}
\ee
for some homogeneous polynomial $q(Y)$ in the numerator which must satisfy:
\be\label{eq:degree}
\deg q = \sum_i \deg q_i-m-1
\ee
so that the form is invariant under \defn{local} $\GL(1)$ action $Y\rightarrow \alpha(Y)Y$. The method of \defn{undetermined numerator} is the idea that the numerator can be solved by imposing residue constraints. Note that this method operates under the \defn{assumption} that a solution to the numerator exists, which in most cases is a non-trivial fact.

We illustrate the idea with a few simple examples below.

\begin{example}\label{ex:square}
Consider the quadrilateral $\A\deff \A(Z_1,Z_2,Z_3,Z_4)$ in $\P^2(\R)$ with facets given by the four inequalities $q_1=x\geq 0,\;q_2=2y-x\geq 0,\;q_3=3-x-y\geq 0,\;q_4=2-y\geq 0$. The picture is given in Figure~\ref{fig:intro_b}, and the vertices are 
\be
Z_1^I=(1,0,0),\;\;
Z_2^I=(1,2,1),\;\;
Z_3^I=(1,1,2),\;\;
Z_4^I=(1,0,2)
\ee
with $(1,x,y)\in \P^2(\R)$ as usual.

We will derive the canonical form with the following ansatz.
\be
\Omega(\A)=\frac{(A+Bx+Cy)dxdy}{x(2y-x)(3-x-y)(2-y)}
\ee
for undetermined coefficients $A,B,C$. Note that the numerator must be linear by~\eqref{eq:degree}.

There are six (4 choose 2) double residues. Those corresponding to vertices of the quadrilateral must have residue $\pm 1$ (the sign is chosen based on orientation), while those corresponding to two opposite edges must have residue zero. We list these requirements as follows, where we denote $\Res_{ji}\deff\Res_{q_j=0}\Res_{q_i=0}$:

\be
\Res_{12}&=&\frac{A}{12}=+1\\
\Res_{23}&=&\frac{A+2B+C}{6}=+1\\
\Res_{34}&=&\frac{A+B+2C}{3}=+1\\
\Res_{41}&=&\frac{A+2C}{4}=+1\\
\Res_{13}&=&\frac{A+3C}{6}=0\\
\Res_{24}&=&-\frac{A+4B+2C}{12}=0
\ee
By inspection, the only solution is $(A,B,C)=(12,{-}1,{-}4)$. It follows that
\be\label{eq:square}
\Omega(\A)=\frac{(12-x-4y)dxdy}{x(2y-x)(3-x-y)(2-y)}
\ee
We observe that since there are many more equations than undetermined coefficients, the existence of a solution is non-obvious.

\end{example}

This method becomes complicated and intractable pretty fast. However, in the case of cyclic polytopes, a general solution was identified in~\cite{Arkani-Hamed:2014dca} which we review below.

\subsubsection{Cyclic polytopes}\label{sec:numerator}

We now apply the numerator method to the
cyclic polytope geometry, described in \cite{Arkani-Hamed:2014dca}. Let us illustrate how it works
for the case of a quadrilateral $\A \deff\A(Z_1,Z_2,Z_3,Z_4)$. We know that the form must have poles
when $\ip{Y12}, \ip{Y23}, \ip{Y34}, \ip{Y41} \to 0$; together with weights this tells us that
\begin{equation}
\aOmega(\A) = \frac{L_I Y^I}{2!\ip{Y12}\ip{Y23}\ip{Y34}\ip{Y41}}
\end{equation}
for some $L_I$. We must also require that the codimension 2
singularities of this form only occur at the corners of the
quadrilateral. But for generic $L$, this will not be the case; writing $(ij)$ for the line $\ip{Yij}=0$, we will
also have singularities 
at the
intersection of the lines $(12)$ and $(34)$, and also at the intersection of $(23),(14)$. The
numerator must put a zero on these configurations, and thus we have
that $L$ must be the line that passes through $(12) \cap (34)$ as well
as $(23) \cap (41)$:
\begin{equation}
L_I = \epsilon_{I J K} ((12)\cap(34))^J ((23)\cap(14))^K
\end{equation}
If we expand out $(a b) \cap (c d) \deff Z_a \ip{b c d} - Z_b \ip{a c d} =- Z_c \ip{a b d} + Z_d \ip{a b c}$, we can reproduce the expressions for this area in terms of triangulations. 

Note that we can interpret the form as the area of the dual quadrilateral bounded by the edges $Z_1,Z_2,Z_3,Z_4$ and hence the vertices
$W_1=(12),W_2=(23),W_3=(34),W_4=(41)$. See~\eqref{eq:W_notation} for the notation. By going to the affine space
with Y at infinity as $Y=(1,0,0)$, $W_i = (1,W'_i)$, we find the
familiar expression for the area of a quadrilateral with the vertices
$W_1',W_2',W_3',W_4'$,
\begin{equation}
(W'_3 -
W'_1) \times (W'_4 - W'_2).
\end{equation}
where the $\times$ denotes the Euclidean cross product.

We can continue in this way to determine the form for any polygon. For
instance for a pentagon $\A$, we have the general form
\begin{equation}
\aOmega(\A) = \frac{L_{IJ} Y^I Y^J}{2!\ip{Y12}\ip{Y23}\ip{Y34}\ip{Y45}\ip{Y51}}
\end{equation}
but now the numerator must put a zero on all the 5 bad singularities
where $(12),(34)$ intersect, $(23),(45)$ intersect and so on. Thus
$L_{IJ}$ must be the unique conic that passes through all these five
points. If we let $B^I_i = [(i,i{+}1) \cap (i{+}2,i{+}3)]^I$ be the bad
points, then
\begin{equation}
L_{IJ} = \epsilon_{(I_1 J_1) \cdots (I_5 J_5)(IJ)} (B_1 B_1)^{(I_1
J_1)} \cdots (B_5 B_5)^{(I_5 J_5)}
\end{equation}
where $\epsilon_{(I_1 J_1)(I_2 J_2) \cdots (I_6 J_6)}$ is the unique
tensor that is symmetric in each of the $(IJ)$'s but antisymmetric
when swapping $(I_i J_i) \leftrightarrow (I_j J_j)$.

This construction for the numerator generalizes for all $n$-gons.
Just from the poles $\Omega(\A)$ takes the form

\begin{equation}
\Omega({\cal A}) = \frac{N_{I_1 \cdots I_{n-3}} Y^{I_1} \cdots
Y^{I_{n-3}}}{\langle Y 1 2 \rangle \cdots \langle Y n 1 \rangle}
\end{equation}

Now there are $N=n(n-1)/2 - n = (n^2 - 3n)/2$ ``bad" intersections
$B_{a,b}$ of non-adjacent lines, $B^I_{a,b} = [(a,a+1) \cap
(b,b+1)]^I$. But there is a unique (up to scale) numerator that puts a
zero on these bad intersections:
\begin{equation}
L_{I_1 \cdots I_{n-3}} = \epsilon_{(I^{(1)}_1 \cdots
I^{(1)}_{n-3})(I^{(2)}_{1} \cdots I^{(2)}_{n-3}) \cdots (I^{(N)}_1
\cdots I^{(N)}_{n-3}) (I_1 \cdots I_{n-3})}
\prod_{S=1}^N B_S^{I_1^{(S)}}\cdots B_S^{I_{n{-}3}^{(S)}}
\end{equation}
where we have re-labeled the intersections as $B_S$ for $S=1,\ldots,N$. Note that in order for the $\epsilon$ tensor to exist, it is crucial that $N{+}1$ is the dimension of rank $n{-}3$ symmetrc tensors on $\R^3$.

It is interesting that the polygon lies entirely inside the zero-set
defined by the numerator: the ``bad" singularities are ``outside" the
good ones.

For higher-dimensional polytopes the story is much more interesting as
described in \cite{Arkani-Hamed:2014dca}. Here we content ourselves with presenting one of
the examples which illustrates the novelties that
arise.

Consider the $m=3$ cyclic polytope for $n=5$, with vertices
$Z_1,\ldots,Z_5$. The boundaries of the cyclic polytope in $m=3$
dimensional projective space are of the form $(1,i,i+1)$ and
$(n,j,j+1)$, which here are simply
$(123),(134),(145),(125),(235),(345)$.

\begin{equation}
\frac{N_{IJ} Y^I Y^J}{\langle Y 123 \rangle \langle Y 134 \rangle
\langle Y145 \rangle \langle Y 125 \rangle \langle Y 235 \rangle
\langle Y 345 \rangle}
\end{equation}

The numerator corresponds to a quadric in $\P^3$ which has $4
\times 5/2 = 10$ degrees of freedom, and so can be specified in
principle by vanishing on 9 points.

Naively, however, much more is required of the numerator than vanishing
on points. The only edges of this polytope
correspond to the lines $(i,j)$, but there are six pairs of the above
faces that do not intersect on lines $(i,j)$; we find spurious residues
at $L_1 =(123) \cap (145),L_2 = (123) \cap (345), L_3 = (134) \cap
(125),L_4=(134) \cap (235),L_5=(145) \cap(235),L_6=(125) \cap (345)$.
So the numerator must vanish on these lines; the quadric must contain
the six lines $L_i$. Also the zero-dimensional boundaries must simply
correspond to the $Z_i$, while there are six possible intersections of
the denominator planes that are not of this form, so the numerator
must clearly vanish on these six points $X_{1,\ldots, 6}$ as well. Of
course these 6 ``bad points" all lie on the ``bad lines", so if the
numerator kills the bad lines the bad points are also killed.

But there is a further constraint that was absent for the case of the
polygon. In the polygon story, the form $\Omega$ was guaranteed to
have sensible logarithmic singularities and we only had to kill the
ones in the wrong spots, but even this is not guaranteed for cyclic
polytopes. Upon taking two residues, for generic numerators we can
encounter double (and higher) poles. Suppose we approach the point $Y
\to Z_1$, by first moving $Y$ to the line $(13)$ which is the
intersection of the planes $(123),(134)$. If we put $Y = Z_1 + y Z_3$,
then two of the remaining poles are $\langle Y 1 4 5 \rangle = - y
\langle 1 3 4 5 \rangle$ and $\langle Y 125 \rangle = y \langle 1 2 3 5
\rangle$, and so we get a double-pole $y^2$. In order to avoid this
and have sensible logarithmic singularities, the numerator must vanish
linearly as $Y \to Z_1$; the same is needed as $Y \to Z_3$ and $Y \to
Z_5$. Thus in addition to vanishing on the six lines $L_i$, the
numerator must also vanish at $Y \to Z_1,Z_3,Z_5$.

It is not a-priori obvious that this can be done; however quite
beautifully the geometry is such that the 6 lines $L_i$ break up into
two sets, which mutually intersect on the 9 points $X_i$ and
$Z_{1,3,5}$, with three intersection points lying on each line. The
numerator can thus be specified by vanishing on these points, which
guarantees that it vanishes as needed on the lines.

More intricate versions of the same phenomenon happens for more
general cyclic polytopes: unlike for polygons, apart from simply
killing ``bad points" the zero-set of the numerator must ``kiss" the
positive geometry at codimension 2 and lower surfaces, to guarantee
getting logarithmic singularities. The form for $m=4$ cyclic polytopes
were constructed in this way. We have thus seen what this most
brute-force, direct approach to determining the canonical form by
understanding zeros and poles entails. The method can be powerful in
cases where the geometry is completely understood, though as we have seen
this can be somewhat involved even for polytopes.

\subsubsection{Generalized polytopes on the projective plane} \label{sec:genpoly}
Let us return to the study of positive geometries $\A$ in $\P^2$, in view of the current discussion of canonical forms.
In Sections \ref{sec:simplexplane} and \ref{sec:polytopeplane} we explained that under the assumptions of Appendix \ref{app:assumptions}, the boundary components are smooth curves and are thus either linear or quadratic.


Let us suppose that we allow the boundary components to be singular curves $p(Y) = 0$, as in Section \ref{sec:degenerate}, and furthermore we now allow $p(Y)$ to be of arbitrary degree $d$, and have $r$ double-points.  For a degree $d$ curve with $r$ double-points, the genus is given by $(d-1)(d-2)/2 -
r$. Now recall that curves of non-zero genus admit non-trivial holomorphic top forms, which leads to non-unique canonical forms, thus violating our assumptions in Section~\ref{sec:positive}. We therefore require the curve to have $((d-1)(d-2)/2)$
double-points and hence genus zero.  This means that $p(Y) = 0$ can
be rationally parametrized as $Y^I = Y^I(t)$, or equivalently, that the normalization of the curve $P(Y) = 0$ is isomorphic to $\P^1$. In practice, it is easy
to reverse-engineer the polynomial defining the curve of interest from
a rational parametrization. Working with co-ordinate $Y=(1,x,y)$, a
rational parametrization is of the form $x(t) = P_x(t)/Q(t), y(t) =
P_y(t)/Q(t)$. Then, the resultant $p(x,y) = R(P_x(t) - x Q(t); P_y(t)
- y Q(t))$ gives us the polynomial defining the parametrized curve.
For instance taking $x(t) = (t(t^2 + 1))/(1 + t^4), y(t) = (t (t^2 -
1))/(1 + t^4)$ yields the quartic ``lemniscate" curve $p(x,y) = 4((x^2 +
y^2)^2 - (x^2 - y^2))$, which has three double-points; one at $x=y=0$
and two at infinity.

As for the canonical form, the numerator (see Section \ref{sec:matching}) must be chosen to kill all the undesired residues. Recall that for a $d$-gon, the
numerator has to put zeros on $d(d-1)/2 - d = (d^2 - 3d)/2$ points,
and that there is a unique degree $(d - 3)$ polynomial that passes
through those points, which determines the numerator uniquely up to overall scale.  It is
interesting to consider an example which is the opposite extreme of a
polygon. Consider an irreducible degree $d$ polynomial, with
$(d-1)(d-2)/2 = (d^2 - 3d)/2 + 1$ singular points. To get a positive
geometry, we can kill the residues on all but one of these singular
points, leaving just a single zero-dimensional boundary just as in our
teardrop cubic example of Section \ref{sec:degenerate}. These are the same number $(d^2 - 3d)/2$ of
points we want to kill as in the polygon example, and once again there
is a unique degree $(d-3)$ curve that passes through those points.

\begin{figure}
\centering
\includegraphics[width=8cm]{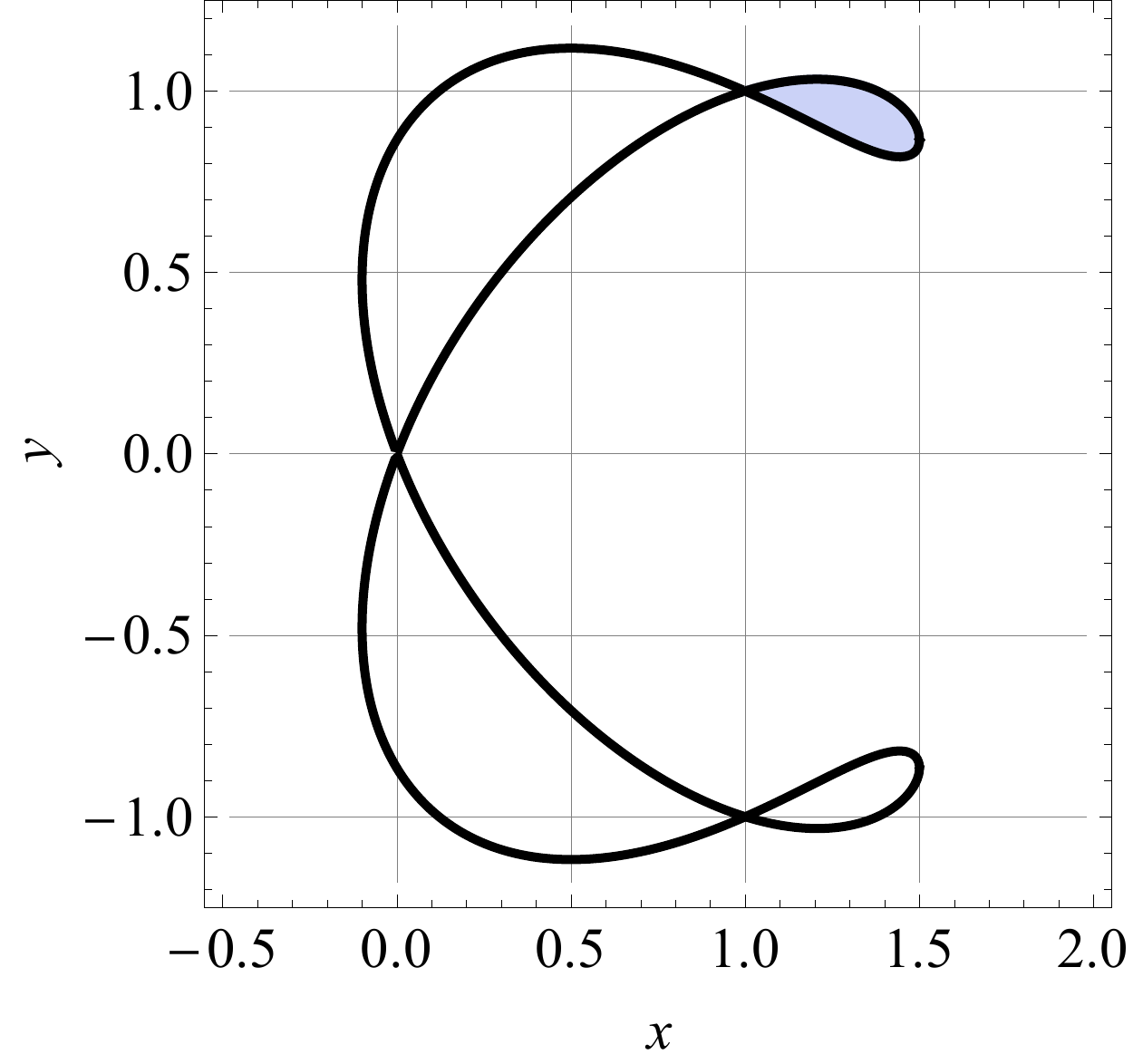}\caption{An ``ampersand" curve with boundary given by a quartic polynomial. The shaded ``teardrop" is a positive geometry.}
\label{fig:ampersand}
\end{figure}

An example is provided by the ``ampersand" geometry (see Figure~\ref{fig:ampersand}) associated with
the quartic curve $P(x,y) = (y^2 - x^2)(x-1)(2x - 3) - 4(x^2 + y^2 -
2x)^2$, which has three singular points at $(0,0),(1,1),(1,-1)$.
If we choose the numerator to be the line that kills e.g. the points
$(0,0)$ and $(1,-1)$, then we get a positive geometry corresponding to
the ``teardrop" in the upper quadrant.

Two more examples: consider a
region bounded by two quadrics $Q_1$ and $Q_2$. The numerator of the
form is $Y\cdot L$ for some line $L$. Now two generic quadrics
intersect at four points $P_1,P_2,P_3,P_4$.  Mirroring the determination of
the form for the case of the quadrilateral, we can choose the line $L$
appearing in the numerator to kill two of $P_{i}$'s, and this will give
us the canonical form associated with the geometry $(Q_{1,2}\cdot YY)\geq 0$.
Similarly, consider a positive geometry defined by a singular cubic
$C$ and a line $W$.  Again we have a numerator of the form  $Y \cdot
L$. The line $W$ intersects the cubic in three points $P_{1},P_2,P_3$. If we pick
the the line $L$ to pass through one of the $P_i$ as well as the
singular
point of the cubic, we get the canonical form associated with the
geometry. These constructions can be extended to higher dimensions,
where (as with the cyclic polytope example) we will generically
encounter numerators whose zeros touch the positive geometry on
co-dimension two (and lower dimensional) boundaries.

\subsection{Triangulations}

Recall from Section~\ref{sec:triangulations} that if a positive geometry is triangulated by a collection of other positive geometries, its canonical form is given by the sum of the canonical forms of the collection. We now apply this method to compute the canonical form of various generalized polytopes.

\subsubsection{Projective polytopes}\label{sec:polytopetriangulation}

Let $\A\deff\A(Z_1,\ldots ,Z_n)$ be a convex projective polytope. The canonical form $\Omega(\A)$ can be obtained from a triangulation of $\A$ (see Section \ref{sec:triangulations} and Appendix \ref{app:cones}).  Let $\Delta_1,\Delta_2,\ldots,\Delta_r$ be a triangulation of $\A$ into simplices. For simplicity let us assume that the simplex interiors are mutually non-overlapping.

The canonical form $\Omega(\A)$ is given by
\be \label{eq:triangulation}
\Omega(\mathcal{A}) = \sum_{i} \Omega(\Delta_i)
\ee
The fact that the simplicial canonical forms {\sl add} is dependent on the assumption that the orientation of the interior of $\Delta_i$ agrees with that of $\A$ for each $i$.  
More generally, for any polytopal subdivision of $\A$ into polytopes $\A_i$ (i.e. a ``triangulation" by polytopes), we have
\be
\Omega(\A)=\sum_i \Omega(\A_i).
\ee

We begin with the simplest case: line segments in $\P^1(\R)$.

\begin{example}
Consider a triangulation of the segment $[a,b]$ from Example~\ref{ex:segment} by a sequence of successively connected segments:
\be
[a,b]=\bigcup_{i=1}^r[c_{i{-}1},c_i]
\ee
where $a=c_0<c_1<\ldots <c_r=b$. It is straightforward to check that
\be
\Omega([a,b])=\frac{(b-a)dx}{(b-x)(x-a)}=\sum_{i=1}^r\frac{(c_i-c_{i{-}1})dx}{(c_i-x)(x-c_{i{-}1})}=\sum_{i=1}^r\Omega([c_{i{-}1},c_i])
\ee
More generally, for the positive geometry $\A=\bigcup_i[a_i,b_i]\subset \P^1$ which is triangulated by finitely many line segments with mutually disjoint interiors, the canonical form is:
\be
\Omega\left(\bigcup_i[a_i,b_i]\right)=\sum_i \Omega([a_i,b_i])
\ee
\end{example}

\begin{example}
Suppose $\A$ is a convex projective polytope, and $Z_*$ is a point in its interior, then $\A$ is triangulated by
\be
\A=\bigcup_\text{facets}\Conv(Z_*,Z_{i_1},Z_{i_2},\ldots,Z_{i_m})
\ee
where we take the union over all choice of indices $i_1,\ldots ,i_m$ for which $\Conv(Z_{i_1},Z_{i_2}\ldots ,Z_{i_m})$ is a facet of the polytope, and we avoid repeated permutations of the same set of indices. For each facet, we order the indices so that $Z_*,Z_{i_1},\ldots ,Z_{i_m}$ is positively oriented. It follows that
\be
\Omega(\A)=\sum_{\text{facets}}[*,i_1,\ldots ,i_m]
\ee

\end{example}

Recalling the facets of cyclic polytopes from Section~\ref{sec:cyclic}, we have the following corollaries.

\begin{example}\label{cyclic1}
The canonical rational function of a cyclic polytope $\A$ for even $m$ can be obtained as follows.
\be
\aOmega(\A)=\sum_{1\le i_1{-}1<i_1< \cdots <i_{m/2}{-}1<i_{m/2}\le n{+}1}[*,i_1{-}1,i_1,\ldots,i_{m/2}{-}1,i_{m/2}]
\ee
For arbitrary $Z_*$, this is called a \defn{CSW triangulation}. For $Z_*=Z_i$ for some $i$, this is called a \defn{BCFW triangulation}.
\end{example}

\begin{example}\label{cyclic2}
The canonical rational function of a cyclic polytope $\A$ for odd $m$ can be obtained as follows.
\be
\aOmega(\A)=\sum_{2 \le i_1{-}1<i_1< \cdots<i_{(m{-}1)/2}{-}1<i_{(m{-}1)/2}\le n}-[*,1,i_1{-}1,i_1,\ldots,i_{(m{-}1)/2}{-}1,i_{(m{-}1)/2}]\;\;\\
+\sum_{1 \le i_1{-}1<i_1< \cdots<i_{(m{-}1)/2}{-}1<i_{(m{-}1)/2}\le n{-}1}[*,i_1{-}1,i_1,\ldots,i_{(m{-}1)/2}{-}1,i_{(m{-}1)/2},n]\;\;
\ee
for any $Z_*$. If we set $Z_*=Z_1$ or $Z_n$, then we get
\be\label{eq:odd_m_1}
\aOmega(\A)=\sum_{2 \le i_1{-}1<i_1< \cdots<i_{(m{-}1)/2}{-}1<i_{(m{-}1)/2}\le n{-}1}[1,i_1{-}1,i_1,\ldots,i_{(m{-}1)/2}{-}1,i_{(m{-}1)/2},n]\;\;
\ee

\end{example}

\subsubsection{Generalized polytopes on the projective plane}

In this section we verify that the canonical form for the ``pizza slice" geometry from Example~\ref{ex:pizza} can be obtained by triangulation.
\begin{example}
Recall the ``pizza slice" geometry $\mathcal{T}(\theta_1,\theta_2)$ from Example~\ref{ex:pizza}. For simplicity, we will assume reflection symmetry about the $y$-axis and let $\mathcal{T}(\theta)\deff\mathcal{T}(\theta,\pi-\theta)$ for some $0\le \theta<\pi/2$. Denote the vertices of the geometry by $Z_i^I\in\P^1(\R)$ for $i=1,2,3$, where
\be
Z_1^I=(1,0,0),\;\; Z_2^I=(1,\cos\theta,\sin\theta),\;\; Z_3^I=(1,{-}\cos\theta,\sin\theta)
\ee

The pizza slice is clearly the union of a segment of the disk (see Example~\ref{ex:circSegment}) and a triangle (see Section \ref{sec:proj_simplex}).
\be
\mathcal{T}(\theta)=\mathcal{S}(\sin\theta)\cup\A(Z_1,Z_2,Z_3)
\ee
It follows that
\be
\Omega(\mathcal{T}(\theta))&=&\Omega(\mathcal{S}(\sin\theta))+\Omega(\A(Z_1,Z_2,Z_3))\\
&=&\frac{(2\cos\theta) dxdy}{(1{-}x^2{-}y^2)(y{-}\sin\theta)}+\frac{(2\sin^2\theta\cos\theta)dxdy}{(\sin\theta{-}y)({-}x\sin\theta{+}y\cos\theta)(x\sin\theta{+}y\cos\theta)}\\
&=&\frac{2\cos\theta(y+\sin\theta)dxdy}{(1-x^2-y^2)({-}x\sin\theta{+}y\cos\theta)(x\sin\theta{+}y\cos\theta)}
\ee
which is equivalent to ~\eqref{eq:pizza} for $\theta_1=\theta$ and $\theta_2=\pi-\theta$, with $y-\sin\theta=0$ as a spurious pole.

\end{example}

\subsubsection{Amplituhedra and BCFW recursion}
\label{sec:BCFW}

Motivated by physical principles of quantum field theory, a recursion relation was discovered for the canonical form of the physical Amplituhedron called \defn{BCFW recursion}~\cite{Britto:2004ap,all loop}. This is a rich subject on its own. While a full explanation of the recursion relation is beyond the scope of this paper, we present a sketch of the idea here.

Let $\A$ be the amplituhedron defined in \eqref{eq:defampli}.  We begin by introducing an extra parameter $z$ by making the shift $Z_n\rightarrow Z_n+zZ_{n{-}1}$, which gives $\Omega(\A)\rightarrow \Omega(\A(z))$. The principle of \defn{locality} suggests that the canonical form can only develop {\sl simple poles} in $z$ (including possibly a simple pole at infinity), which can be seen by studying the structure of Feynman propagators. It follows that
\be
\Omega(\A)=\oint_C \frac{dz}{z}\Omega(\A(z))
\ee
where the contour $C$ is a small counter-clockwise loop around the origin. Applying Cauchy's theorem by expanding the loop to infinity gives
\be\label{eq:BCFWpoles}
\Omega(\A)=-\sum_{i}\Res_{z\rightarrow z_i}\frac{\Omega(\A(z))}{z}+\Omega(\A(\infty))
\ee
where $z_i$ denotes all the poles. The residue at infinity is simply the Amplituhedron with $Z_n$ removed.

The residues at $z_i$, however, are more involved. Based on extensive sample computations, we make the following observations assuming $D=\dim(\A)$:
\begin{itemize}
\item
There exists a $\Delta$-like positive geometry $\mathcal{C}_i$ in the loop Grassmannian $G(k,n;2^L)$ of dimension $D$.
\item
There exists a subset $\A_i$ of $\A$ also of dimension $D$, called a \defn{BCFW cell}.
\item
The map under $Z:\mathcal{C}_i\rightarrow \A_i$ is a degree-one morphism. Since $\mathcal{C}_i$ is $\Delta$-like,  hence so is $\A_i$.
\item
Given $\Delta-$like coordinates $(1,\alpha_{i1},\ldots,\alpha_{iD})\in \P^D(\R)$ on $\mathcal{C}_i$, the residue at $z_i$ is given by the push-forward:
\be\label{eq:bcfw_res}
-\Res_{z\rightarrow z_i}\frac{\Omega(\A(z))}{z}=Z_*\left(\prod_{j=1}^D\frac{d\alpha_{ij}}{\alpha_{ij}}\right)=\Omega(\A_i)
\ee
\end{itemize}

At $L=0$, each set $\mathcal{C}_i$ is a positroid cell of the positive Grassmannian and $\A_i$ is the image under $Z$. For $L>0$ some generalization of this statement is expected to hold.

The precise construction of $\mathcal{C}_i$ is explained in~\cite{Bai:2014cna}. While the map is never explicitly mentioned in the reference, its geometric structure is explained in terms of \defn{momentum twistor diagrams}, which are loop extensions of Postnikov's plabic graphs~\cite{Postnikov:2006kva}. In particular, the $\Delta$-like coordinates can be read off from labels appearing on the graph, while the diagrams at any $k,n,L$ can be constructed from diagrams of lower $k,n$ or $L$, hence the {\sl recursive} nature of BCFW.

We point out that while BCFW cells are $\Delta$-like, they are not necessarily simplex-like. Namely, their canonical forms may have zeros.



From~\eqref{eq:bcfw_res} and the discussion in Section~\ref{sec:triangulations}, it follows that the BCFW cells form a boundary triangulation of the Amplituhedron:
\be
\Omega(\A)=\sum_{i}\Omega(\A_i)
\ee

Furthermore, it appears based on extensive numerical checks that the BCFW cells have mutually disjoint interiors. So they triangulate the Amplituhedron in the first sense defined in Section~\ref{sec:triangulations}. We point out that if our assumptions on BCFW cells hold, then the Amplituhedron must be a positive geometry.

Historically, BCFW recursion was first discovered in the context of quantum field theory as an application of Cauchy's theorem on the deformation parameter $z$. See~\cite{Britto:2004ap}. The poles at $z_i$ correspond to Feynman propagators going on shell (i.e. locality), while the residues at $z_i$ were constructed via the principle of \defn{unitarity} in terms of amplitudes (or loop integrands) of lower $k,n$ or $L$ with modifications on the particle momenta.

Since most positive geometries are not directly connected to field theory scattering amplitudes, we do not expect BCFW recursion to extend to all cases. Nonetheless, it is conceivable that the canonical form can be reconstructed by an application of Cauchy's theorem to a clever shift in the boundary components. A solution to this problem would allow us, in principle, to construct the canonical form of arbitrarily complicated positive geometries from simpler ones.

\subsubsection{The tree Amplituhedron for $m=1,2$}
\label{sec:Amplituhedron12}
There is by now a rather complete understanding of the tree Amplituhedron with $m=1,2$, for any $k$ and $n$. The following results will be presented in detail elsewhere \cite{winding}.  (For $m=1$, see also \cite{KarpWilliams}.)  Here we will simply present (without proof) some simple triangulations of these Amplituhedra, and give their associated canonical forms. We let $\A\deff\A(k,n,m)$ whenever $k,n,m$ are understood.

In the $m=1$ Amplituhedron,  $Y_s^I$ is a $k$-plane in $(k{+}1)$ dimensions with $s=1,\ldots,k$ indexing a basis for the plane and $I=0,\ldots,k$ indexing the vector components in $(k{+}1)$ dimensions. We will triangulate by images of $k-$dimensional cells of $G_{>0}(k,n)$; in other words for each cell we will look at the image $Y_s^I = \sum_{i=1}^n C_{s i}(\alpha_1,\ldots,\alpha_k) Z_i^I$, where $\alpha_1,\ldots, \alpha_k$ are positive $\Delta-$like co-ordinates for that cell. For every collection of $k$ integers $\{i_1, \ldots, i_k\}$ with $1 \leq i_1 < i_2 < \cdots < i_k \leq n-1$, there is a cell where  
\begin{equation}
C^{\{i_1,\ldots,i_k\}}_{s a} = \left\{\begin{array}{cc} 1 &  a = i_{s} \\  \alpha_s &  a=i_{s} + 1 \\ 0 & {\rm otherwise} \end{array} \right\}
\end{equation}
with the positive variables $\alpha_s \geq 0$. 
In other words in this cell we have $Y_s^I = Z^I_{i_s} + \alpha_s Z^I_{i_{s}+1}$. 

Geometrically, in this cell the $k$-plane $Y$ intersects the cyclic polytope of external data in the $k$ 1-dimensional edges $(Z_{i_1},Z_{i_1+1}), \ldots, (Z_{i_k} Z_{i_k+1})$. The claim that these cells triangulate the $m=1$ Amplituhedron is then equivalent to the statement that this Amplituhedron is the set of all $k$-planes which intersect the cyclic polytope in precisely $k$ of its 1-dimensional {\sl consecutive} edges (i.e. an edge between two consecutive vertices).

The motivation for this triangulation will be given (together with associated new characterizations  of the Amplituhedron itself) in \cite{winding}. For now we can at least show that these cells are non-overlapping in $Y$ space, by noting that in this cell, the following sequence of minors
\begin{equation}
\{\langle Y 1 \rangle, \langle Y 2 \rangle, \cdots, \langle Y n \rangle\}
\end{equation}
have precisely $k$ sign flips, with the flips occurring at the locations $(i_s,i_s+1)$ for $s=1,\ldots, k$. Since the sign patterns are different in different cells, we can see that the cells are non-overlapping; the fact that they triangulate the Amplituhedron is more interesting and will be explained at greater length in \cite{winding}. 

The $k$-form associated with this cell is 
\begin{equation}
\Omega^{\{i_1, \ldots, i_k\}} = Z_*\left(\prod_{s=1}^k d\log  \alpha_s\right)= \prod_{s=1}^k d {\rm log}\left(\frac{\langle Y,i_s{+}1 \rangle}{\langle Y i_s \rangle}\right) 
\end{equation}
and the full form is 
\begin{equation}
\Omega(\A) = \sum_{1 \leq i_1 < \cdots < i_k \leq n-1} \Omega^{\{i_1, \ldots, i_k\}}
\end{equation}
It is easy to further simplify this expression since the sums collapse telescopically. For instance for $k=1$ we have 
\begin{equation}
\sum_{1 \leq i_1 \leq n-1} d{\rm log}\left(\frac{\langle Y,i_1{+}1\rangle}{\langle Y i_1 \rangle}\right) = d {\rm log} \left(\frac{\langle Y n \rangle}{\langle Y 1 \rangle}\right)
\end{equation}
Note the cancellation of spurious poles in the sum leading nicely to the final result. The same telescopic cancellation occurs for general $k$, and for even $k$ we are left with the final form 
\begin{eqnarray}
\Omega(\A) = 
\frac{{\rm d}^{k \times (k+1)} Y}{\rm Vol\;GL(k)} \sum_{\substack{2 \leq j_1{-}1<j_1<\cdots \\< j_{k/2}{-}1<j_{k/2}\leq n}} [1,j_1{-}1,j_1,\ldots,j_{k/2}{-}1,j_{k/2}]
\end{eqnarray}
while for odd $k$ we are left with
\be
\Omega(\A) = \frac{{\rm d}^{k \times (k+1)} Y}{{\rm Vol\; GL(}k)}
\sum_{\substack{2\leq j_1{-}1<j_1<\cdots\\<j_{(k{-}1)/2}{-}1<j_{(k{-}1)/2}\leq n{-}1}}
[1,j_1{-}1,j_1,\ldots,j_{(k{-}1)/2}{-}1,j_{(k{-}1)/2},n]\;\;
\ee
where the brackets denote
\be
[j_0,j_1,\ldots, j_{k}]\deff \frac{\lb j_0\ldots j_k\rb}{\lb Yj_0\rb\cdots\lb Yj_k\rb}
\ee
for any indices $j_0,\ldots, j_k$. The brackets satisfy
\be
\prod_{s=1}^k d\log\left(\frac{\lb Y j_{s}\rb}{\lb Y j_{s{-}1}\rb}\right)=
\frac{{\rm d}^{k \times (k+1)} Y}{{\rm Vol\; GL(}k)}
[j_0,j_1,\ldots, j_{k}]
\ee

Note that each bracket can be interpreted as a simplex volume in $\P^k(\R)$ whose vertices are the $Z_{j_0},\ldots,Z_{j_k}$, with $Y$ the hyperplane at infinity.
Note also that the combinatorial structure of the triangulation is identical to triangulation of cyclic polytopes discussed in Section~\ref{sec:polytopetriangulation}. Indeed, when $\lb Y i\rb >0$ for each $i$, the canonical rational function can be interpreted as the volume of the convex cyclic polytope with vertices $Z$. However, these conditions do not hold for $Y$ on the interior of the Amplituhedron, since $Y$ must pass through the interior of the cyclic polytope as it intersects $k$ 1-dimensional consecutive edges.

Starting with $k=2$, this is not a {\it positively convex} geometry (see Section~\ref{sec:convex}): the form has zeros (and poles) on the interior of the Amplituhedron. We can see this easily for e.g. $k=2,n=4$. 
We have a single term with a pole at $\langle Y 2 \rangle \to 0$, so it is indeed a boundary component, but it is trivial to see that $\lb Y2\rb$ can take either sign in the Amplituhedron, so the form has poles and zeros on the interior. 

We now consider the case $m=2$, and triangulate the Amplituhedron with the image of a collection of $2 \times k$ dimensional cells of $G_{>0}(k,n)$. That is, for each cell we look at the image 
$Y_s^I = \sum_{i=1}^nC_{s i}(\alpha_1,\beta_1,\ldots,\alpha_k,\beta_k) Z_i^I$. Similar to $m=1$, the cells are indexed by $\{i_1, \ldots, i_k\}$ with $2 \leq i_1 < i_2 < \cdots < i_k \leq (n-1)$. Now the $C$ matrices are given by 
\begin{equation}
C^{\{i_1,\ldots,i_k\}}_{s i} = \left\{\begin{array}{cc} (-1)^{s -1 }  & i=1 \\ \alpha_s & i=i_{s} \\ \beta_{s} & i=i_s + 1 \\ 0  & {\rm otherwise} \end{array} \right\}
\end{equation}
In other words, in this cell we have $Y_s^I = (-1)^{s - 1} Z_1^I + \alpha_s Z^I_{i_s} + \beta_{s} Z^I_{i_s + 1}$. As for $m=1$, it is easy to see that images of these cells are non-overlapping in $Y$ space for essentially the identical reason; in this cell it is easy to check that the sequence of minors
\begin{equation} 
\{\langle Y 1 2 \rangle, \langle Y 1 3 \rangle, \ldots, \langle Y 1 n \rangle\}
\end{equation}
again has precisely $k$ sign flips, that occur at the locations $(i_s, i_s+1)$ for $s=1,\ldots, k$.

The $2k$-form associated with this cell is 
\begin{eqnarray}
\Omega^{\{i_1, \cdots, i_k\}} &=& Z_*\left(\prod_{s=1}^k d\log\alpha_s \;d\log \beta_s\right) \nonumber \\ 
&=& \prod_{s=1}^k d {\rm log} \left(\frac{\langle Y 1 {i_s} \rangle}{\langle Y{i_s} ,{{i_s}+1} \rangle}\right) d {\rm log} \left(\frac{\langle Y 1,{i_s+1} \rangle}{\langle Y {i_s}, {{i_s}+1} \rangle} \right) \nonumber \\
&=&\frac{{\rm d}^{k(k+2)}Y}{{\rm Vol\;GL}(k)}
[1,i_1,i_1{+}1;\ldots;1,i_k,i_k{+}1]
\end{eqnarray}
where
\be
[p_1,q_1,r_1;\ldots;p_k,q_k,r_k]\deff
\frac{\left[
\lb(Y^{k{-}1})^{s_1}p_1q_1r_1\rb
\cdots
\lb (Y^{k{-}1})^{s_k}p_kq_kr_k\rb
\epsilon_{s_1\cdots s_k}\right]^k}{
2^k\lb Yp_1q_1\rb\lb Yq_1r_1\rb\lb Yp_1r_1\rb\cdots
\lb Yp_kq_k\rb\lb Yq_kr_k\rb\lb Yp_kr_k\rb
}\;\;
\ee
for any indices $p_s,q_s,r_s$ with $s=1,\ldots, k$ and
\be
(Y^{k{-}1})^s \deff  Y_{s_1}\wedge\cdots \wedge Y_{s_{k{-}1}}\epsilon^{ss_1\cdots s_{k{-}1}}
\ee

As usual the full form arises from summing over the form for each piece of the triangulation 
\begin{equation}
\Omega(\A) = \sum_{2 \leq i_1 < \cdots i_k \leq n-1} \Omega^{\{i_1, \cdots, i_k\}}
\end{equation}

In particular, for $k=2$, we have
\be\label{eq:kermit} 
\Omega(\A(2,2,n)) &=&\lb Yd^2Y_1\rb\lb Yd^2Y_2\rb \times\\
&&\sum_{2\le i<j\le n{-}1}\frac{\det
\begin{pmatrix}
\lb Y_1,i{-}1,i,i{+}1 \rb & \lb Y_1,j{-}1,j,j{+}1\rb \\
\lb Y_2,i{-}1,i,i{+}1 \rb & \lb Y_2,j{-}1,j,j{+}1\rb
\end{pmatrix}^2}{2^2\lb Y1i\rb\lb Y1,i{+}1\rb\lb Yi,i{+}1\rb\lb Y1j\rb\lb Y1,j{+}1\rb\lb Yj,j{+}1\rb}\nonumber
\ee

This is called the \defn{Kermit representation}, and the summands $[1,i,i{+}1;1,j,j{+}1]$ are called \defn{Kermit terms}. These are important for 1-loop MHV  scattering amplitudes whose physical Amplituhedron $\A(0,n;L{=}1)$ is isomorphic to the Amplituhedron $\A(2,n,2)$.

Returning to general $k$, of course the form also has spurious poles that cancel between the terms, though unlike the case of $m=1$ it cannot be trivially summed into a simple expression with only physical poles. However there is an entirely different representation of the form, not obviously related to the triangulation of the Amplituhedron, which is (almost) free of all spurious poles. This takes the form 
\be\label{eq:localForm}
& &\Omega(\A) = \frac{{\rm d}^{k(k+2)}Y}{{\rm Vol\;GL}(k)}\times \\
& & \sum_{1\le i_1<i_2<\cdots < i_k\le n} \frac{\langle \left(Y^{k-1}\right)^{s_1} {i_1{-}1},{i_1},{i_1{+}1}\rangle  \cdots \langle \left(Y^{k-1}\right)^{s_k} {i_k{-}1},{i_k},{i_k{+}1}\rangle  \epsilon_{s_1 \cdots s_k} \langle X{i_1}\cdots{i_k} \rangle}{\langle Y X \rangle \prod_{s=1}^k \langle Y{i_s},{{i_s}{+}1} \rangle}\nonumber
\ee
Note the presence of a reference $X^{IJ}$ in this expression, playing an analagous role to a ``triangulation point" in a triangulation of a polygon into triangles $[X,i,i{+}1]$. The final expression is however $X$-independent. Note also that apart from the $\langle Y X \rangle$ pole, all the poles in this expression are physical.

This second ``local" representation of the form $\Omega(\A)$ allows us to exhibit something that looks miraculous from the trianguation expression: $\Omega(\A)$ is positive when $Y$ is inside the Amplituhedron, and so the $m=2$ Amplituhedron is indeed a positively convex geometry (see Section \ref{sec:convex})! Indeed, if we choose $X$ judiciously to be e.g. $X^{IJ} = (Z_{l} Z_{l+1})^{IJ}$ for some $l$, then trivially all the factors in the denominator are positive. Also, $\langle X Z_{i_1} \cdots Z_{i_k} \rangle > 0$  trivially due to the positivity of the $Z$ data. The positivity of the first factor  in the numerator is not obvious; however, it follows immediately from somewhat magical positivity properties of the following ``determinants of minors". For instance for $k=2$ the claim is that as long as the $Z$ data is positive, 
\begin{equation}
{\rm det} \left( \begin{array}{cc} \langle a,{i-1},i,{i+1} \rangle & \langle a,{j-1},j,{j+1} \rangle \\ \langle b,{i-1},i,{i+1} \rangle & \langle b,{j-1},j,{j+1} \rangle \end{array} \right)>0
\end{equation}
for any $a<b$ and $i<j$. Similarly for $k=3$, 
\begin{equation}
{\rm det} \left(\begin{array}{ccc} \langle a,b,{i-1},i,{i+1} \rangle & \langle a,b,{j-1},j,{j+1} \rangle & \langle a,b,{k-1},k,{k+1} \rangle \\ \langle a,c,{i-1},i,{i+1} \rangle & \langle a,c,{j-1},j,{j+1} \rangle & \langle a,c,{k-1},k,{k+1} \rangle \\ 
\langle b,c,{i-1},i,{i+1} \rangle & \langle b,c  ,{j-1} ,j,{j+1} \rangle & \langle b,c,{k-1},k,{k+1} \rangle \end{array} \right)>0 
\end{equation}
for any $a<b<c; i<j<k$, 
with the obvious generalization holding for higher $k$. These identities hold quite non-trivially as a consequence of the positivity of the $Z$ data. 

The existence of this second representation of the canonical form, and especially the way it makes the positivity of the form manifest, is quite striking. The same phenomenon occurs for $k=1$ and any $m$--the canonical forms are always positive inside the polytope, even though the determination of the form obtained by triangulating the polytope does not make this manifest. For polytopes, this property is made manifest by the much more satisfying representation of the form as the volume integral over the dual polytope. The fact that the same properties hold for the Amplituhedron (at least for even $m$)  suggests that we should think of the ``local" expression \eqref{eq:localForm} for the form we have seen for $m=2$ as associated with the ``triangulation" of a ``dual Amplituhedron". We will have more to say about dual positive Grassmannians and Amplituhedra in \cite{dual}.


\subsubsection{A $1$-loop Grassmannian}
\label{sec:1loop}
We now consider the special case of the 1-loop positive Grassmannian $G_{> 0}(1,n;2)$ with $k = 1$ and $L=1$, which is directly relevant for the {\sl physical} 1-loop Amplituhedron $\A(1,n;L{=}1)$.  The space $G(1,n;2)$ is the $(3n-7)$-dimensional space consisting of $3 \times n$ matrices 
\begin{equation}
P = \left( \begin{array}{c} C \\ D \end{array} \right)
\end{equation}
where the last two rows form the $D$ matrix and the first row forms the $C$ matrix. The $P$ matrix has positive $3\times 3$ minors while $C$ has positive components. Alternatively, $G(1,n;2)$ is the space of partial flags $\{(V,W) \mid 0 \subset V \subset W \subset \C^n\}$ of subspaces where $\dim V = 1$ and $\dim W = 3$. The nonnegative part $G_{\geq 0}(1,n;2)$ is given by those matrices $P$ where the row spaces of $C$ and $P$ are in the totally nonnegative Grassmannian.

There is an action of $T_+ = \R_{>0}^n$ on $G_{\geq 0}(1,n;2)$.  The quotient space $G_{\geq 0}(1,n;2)/T_+$ can be identified with the space of $n$ points $(P_1,\ldots,P_n)$ in $\P^2(\R)$ in convex position (with a fixed distinguished line at infinity $L_\infty$ with dual coordinates $Q_\infty=(1,0,0)$), considered modulo affine transformations. The $P_i$ are of course just the columns of $P$. 
For the precise definition we refer the reader to \cite{Bai:2015qoa}, although we warn the reader that the positive geometry is denoted $G_+(1,n;1)$ in the reference. Using the geometry of such convex $n$-gons, a collection of simplex-like cells $\tPi_{f, \geq 0} \subset G_{\geq 0}(1,n;2)$ are constructed in analogy to the positroid cells $\Pi_{f,\geq 0}$ of Section~\ref{sec:grassmann}, where now $f$ varies over a different set of affine permutations.  Note that some of the cells $\tPi_{f,\geq 0}$ are special cases of the totally positive cells $(\Pi_u^w)_{\geq 0}$ of Rietsch and Lusztig described in Section \ref{sec:flagTP}, but some are not.

We sketch how $\tPi_{f,\geq 0}$ can be shown to be a positive geometry.  First, associated to each $\tPi_{f,\geq 0}$ is a class of momentum-twistor diagrams $D_f$ \cite{Bai:2014cna,Bai:2015qoa}.  The edges of any such diagram $D_f$ can be labeled to give a (degree-one) rational parametrization $\R_{>0}^d \cong \tPi_{f, >0}$, and the canonical form $\Omega(\tPi_{f,\geq 0})$ is given by $\prod_i d\alpha_i/\alpha_i$ where $\alpha_i$ are the edge labels. Different diagrams for the same cell provide different parametrizations. Each boundary component $C_{\geq 0}$ of $\tPi_{f, \geq 0}$ can be obtained by removing one of the edges of some momentum-twistor diagram $D_f$ for $\tPi_{f,\geq 0}$, although some boundaries may be visible to only a proper subset of diagrams.  This shows that the residues of $\Omega(\tPi_{f,\geq 0})$ give the forms $\Omega(C_{\geq 0})$.

Unlike the nonnegative Grassmannian $G_{\geq 0}(k,n)$, there are $\binom{n}{3}$ (instead of just one!) simplex-like top cells in $G(1,n;2)$, denoted $\Pi_{\{a,b,c\}}$ where $\{a,b,c\} \subset \{1,2,\ldots,n\}$ is a 3-element subset.  As an example, with $n = 5$, the cells $\Pi_{\{1,2,5\}}, \Pi_{\{2,3,5\}}, \Pi_{\{3,4,5\}}$ correspond to the diagrams shown in Figure~\ref{fig:1loop}.

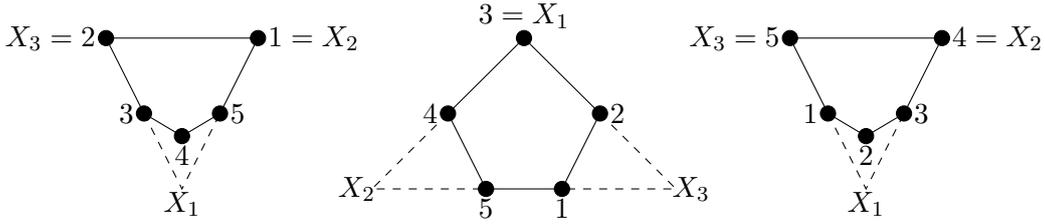
\begin{figure}

\begin{center}

\begin{tikzpicture}

\draw (0,0) -- (-2,0) -- (-1.5,-1)--(-1,-1.3)-- (-0.5,-1)--(0,0);

\draw[fill] (0,0) circle (0.1cm) node[right] {$1 = X_2$};

\draw[fill] (-2,0) circle (0.1cm) node[left] {$X_3 = 2$};

\draw[fill] (-1.5,-1) circle (0.1cm) node[left] {$3$};

\draw[fill] (-1,-1.3) circle (0.1cm) node[below] {$4$};

\draw[fill](-0.5,-1) circle (0.1cm) node[right] {$5$};

\draw[dashed] (-1.5,-1) -- (-1,-2) -- (-0.5,-1);

\node at (-1,-2.2) {$X_1$};



\begin{scope}[shift={(3,-2)}]

\draw (0,0) -- (1,0) -- (1.5,1) -- (0.5,2) -- (-0.5,1) -- (0,0);

\draw[fill] (0,0) circle (0.1cm) node[below] {$5$};

\draw[fill] (1,0) circle (0.1cm) node[below] {$1$};

\draw[fill] (1.5,1) circle (0.1cm) node[right] {$2$};

\draw[fill] (0.5,2) circle (0.1cm) node[above] {$3 = X_1$};

\draw[fill](-0.5,1) circle (0.1cm) node[left] {$4$};

\draw[dashed] (1.5,1) -- (2.5,0) -- (1,0);

\draw[dashed] (-0.5,1) -- (-1.5,0) -- (0,0);

\node at (-1.7,0) {$X_2$};

\node at (2.7,0) {$X_3$};

\end{scope}

\begin{scope}[shift={(9,0)}]

\draw (0,0) -- (-2,0) -- (-1.5,-1)--(-1,-1.3)-- (-0.5,-1)--(0,0);

\draw[fill] (0,0) circle (0.1cm) node[right] {$4 = X_2$};

\draw[fill] (-2,0) circle (0.1cm) node[left] {$X_3 =5$};

\draw[fill] (-1.5,-1) circle (0.1cm) node[left] {$1$};

\draw[fill] (-1,-1.3) circle (0.1cm) node[below] {$2$};

\draw[fill](-0.5,-1) circle (0.1cm) node[right] {$3$};

\draw[dashed] (-1.5,-1) -- (-1,-2) -- (-0.5,-1);

\node at (-1,-2.2) {$X_1$};



\end{scope}

\end{tikzpicture}

\end{center}

\caption{Spaces of pentagons (with vertices labeled $1,2,3,4,5$ for
$P_1,P_2,P_3,P_4,P_5$) that correspond to the cells $\Pi_{\{1,2,5\}},
\Pi_{\{2,3,5\}}, \Pi_{\{3,4,5\}}$ respectively. In the leftmost
picture, the dashed line indicates that the edges $15$ and $23$
intersect on the side of $12$ containing the pentagon interior as shown. Each pentagon is inscribed in a ``big" triangle with vertices denoted $X_1,X_2,X_3$. If we label the solid edge between $i,i{+}1$ as $i$, then the cell $\Pi_{\{a,b,c\}}$ corresponds to the big triangle with (extended) edges $a,b,c$. Finally, note that a generic
pentagon in the plane belongs to exactly one of these three classes.}
\label{fig:1loop}

\end{figure}

A parametrization-independent formula for $\Omega(\Pi_{\{a,b,c\}})$ is given in \cite{Bai:2015qoa}:
\begin{equation}
\Omega(\Pi_{\{a,b,c\}}) = \frac{d^{3n-7}P}{\prod_{i=1}^n (i,i+1,i+2)} \{X_1,X_2,X_3\}
\end{equation}
where the parentheses are defined by $(i,j,k)\deff\det(P_i,P_j,P_k)$ for any (cyclically extended) indices $i,j,k$, and $\{X_1,X_2,X_3\}$ is the area of the triangle with vertices $X_1,X_2,X_3$ (see dotted triangle in the polygon picture above),
\be
\{X_1,X_2,X_3\}=\frac{(X_1,X_2,X_3)}{[X_1][X_2][X_3]}
\ee
where $[X]$ denotes the component of $X$ along $C$ (i.e. $Q_\infty\cdot X$),
and the measure is defined by
\be
d^{3n{-}7}P=\lb Cd^{n{-}1}C\rb\lb CDd^{n{-}3}D_1\rb\lb CDd^{n{-}3}D_2\rb
\ee
with $D_1,D_2$ denoting the two rows of $D$.

Again, unlike the usual nonnegative Grassmannian, the 1-loop nonnegative Grassmannian $G_{\geq 0}(1,n;2)$ is itself a {\it polytope-like} positive geometry.  Indeed, it is shown in \cite{Bai:2015qoa} that $G_{\geq 0}(1,n;2)$ can be triangulated by a collection of the top cells $\Pi_{\{a,b,c\}}$.  Namely, given a usual triangulation $(\{a,b,c\},\{d,e,f\},\ldots)$ of the $n$-gon (necessarily with $n-2$ pieces), the collection $\Pi_{\{a,b,c\}}, \Pi_{\{d,e,f\}},\ldots$ gives a triangulation of $G_{\geq 0}(1,n;2)$, and we have
\begin{equation}
\Omega(G_{\geq 0}(1,n;2)) = \Omega(\Pi_{\{a,b,c\}}) + \Omega(\Pi_{\{d,e,f\}}) + \cdots.
\end{equation}
Thus for example, we have
\be
\Omega(G_{\geq 0}(1,5;2)) = \Omega(\Pi_{\{1,2,5\}})+ \Omega(\Pi_{\{2,3,5\}}) + \Omega( \Pi_{\{3,4,5\}})
\ee
Remarkably, the right hand side sums to the area of the pentagon with vertices $P_1,\ldots,P_5$, which be be seen pictorially.

More generally, a triangulation independent formula for $\Omega(G_{\geq 0}(1,n;2))$ is given in \cite{Bai:2015qoa}:
\begin{equation}
\Omega(G_{\geq 0}(1,n;2)) = \frac{d^{3n{-}7}P}{\prod_{i=1}^n (i,i+1,i+2)}\mathcal{M}(P).
\end{equation}
where $\mathcal{M}(P)$ is the area of the $n$-gon with vertices $P_1,\ldots,P_n$. A simple formula for this area is
\be
\mathcal{M}(P)=\sum_{i=2}^{n{-}1}\frac{(1,i,i{+}1)}{[1][i][i{+}1]}
\ee
Note that $\mathcal{M}(P)$ is positive in $G_{> 0}(1,n;2)$, so that $\Omega(G_{\geq 0}(1,n;2))$ is positively oriented on $G_{> 0}(1,n;2)$.  Thus $G_{\geq 0}(1,n;2)$ is a positively convex geometry in the sense of Section \ref{sec:convex}.

\subsubsection{An example of a Grassmann polytope} \label{sec:Grassmann}


We now consider an example of a Grassmann polytope for $k > 1$ that is combinatorially different to the Amplituhedron.  We set $k=2$, $n =5$, and $m = 2$.  We take $Z$ to be a $5 \times 4$ matrix such that $\ip{1234} < 0$ but $\ip{1235}, \ip{1245}, \ip{1345}, \ip{2345} >0$.  Explicitly, we may gauge fix 
$$
Z=\left(
\begin{array}{cccc}
 a & -b & c & d \\
 1 & 0 & 0 & 0 \\
 0 & 1 & 0 & 0 \\
 0 & 0 & 1 & 0 \\
 0 & 0 & 0 & 1 \\
\end{array}
\right)
$$
where $a,b,c,d>0$.   The matrix 
$$
M=\left(
\begin{array}{cc}
 1 & 0 \\
 0 & 1 \\
 -\alpha  & \beta  \\
 -\delta  & \gamma  \\
\end{array}
\right)
$$
with $0 < \alpha,\beta,\gamma,\delta\ll 1$, chosen to satisfy $\alpha/\beta < \delta/\gamma < a/b$ shows that $Z$ satisfies the condition \eqref{eq:GrassGordan}.
Let $\A = Z(G_{\geq 0}(2,5))$ denote the Grassmann polytope.  Then for $Y \in \A$, the global inequalities $\ip{Y14} \leq 0$, $\ip{Y15} \geq 0$, and $\ip{Y45} \geq 0$ hold: for example, expanding $Y = C \cdot Z$, we get 
$$
\ip{Y14} = (23)\ip{2314} + (25)\ip{2514} + (35)\ip{3514} = (23) \ip{1234} -(25)\ip{1245} -(35)\ip{1345} \leq 0
$$
where $(ab)$ denotes the minor of $C \in G_{\geq 0}(2,5)$ at columns $a,b$.  In contrast the tree Amplituhedron $\A(2,5,2)$ where $\ip{1234} > 0$ satisfies $\ip{Yi,i{+}1} \geq 0$ for all $i$.

Our methods are not able to rigorously prove it, but a triangulation of this Grassmann polytope appears to be given by the images under $Z$ of the four $4$-dimensional positroid cells
\begin{align}\label{eq:GPtriang}
[2][3][4][5]: (*1) = 0, \qquad 
&[1][2][3][5]: (*4) = 0,  \nonumber \\
[1][2][3][4]: (*5) = 0, \qquad
&[12][34][5]: (12) = (34) =0 
\end{align}
or of the two $4$-dimensional positroid cells
\begin{equation}\label{eq:GPtriang2}
[1][3][4][5]: (2*) = 0 , \qquad [1][23][45]: (23)=(45)=0.
\end{equation}
Here, the notation $[12][34][5]$ denotes the $C$ matrices where columns $C_1$ and $C_2$ (resp. $C_3$ and $C_4$) are parallel, but $C_5$ is linearly independent; this cell is cut out by the two equations $(12)=(34)=0$.  Similarly, $[2][3][4][5]$ denotes the $C$ matrices where $C_1 = 0$ and the other columns have no relations; this cell is cut out by $(12)=(13)=(14)=(15)=0$.  

Thus the canonical rational function is given by 
\be
\aOmega(\A)=
 [2,3,4;2,4,5]+[1,2,3;1,3,5] \\
 - [1,2,3;1,3,4]+[5,1,2;5,3,4] 
\ee
where the terms correspond respectively to the cells in the triangulation~\eqref{eq:GPtriang}, and the brackets are Kermit terms (see below~\eqref{eq:kermit}).

Alternatively, the canonical rational function can be given by two terms based on the second triangulation~\eqref{eq:GPtriang2}:
\be
\Omega(\A)= [1,3,4;1,4,5]+[1,2,3;1,4,5]
\ee
This gives a remarkable algebraic equality between the two Kermit representations above, which of course results from the phenomenon of triangulation independence of the form.

Note that the Kermit triangulation of the Amplituhedron $\A(2,5,2)$ differs from \eqref{eq:GPtriang} and \eqref{eq:GPtriang2} by the cell $C = [1][2][3][4]$, whose canonical form is given by the Kermit term $[1,2,3;1,3,4]$. The intuition for this is as follows.  Starting with a positive $Z$ matrix, we may continuously vary the entries so that $\ip{1234}$ changes from positive to negative, but all other maximal minors remain positive.  At the moment when $\ip{1234} = 0$, the image of the cell $C$ in $G(2,4)$ is no longer four-dimensional, but is three-dimensional, i.e., it collapses.  Changing $\ip{1234}$ from positive to negative thus changes whether $C$ is used in a triangulation.  An analogous situation for a quadrilateral is illustrated in Figure \ref{fig:interiorpt}.

Finally, despite not having proven the triangulations~\eqref{eq:GPtriang} and~\eqref{eq:GPtriang2}, our confidence in our claim comes from the fact that the resulting canonical form agrees numerically with the $i\epsilon$ contour representation discussed in Section~\ref{sec:contour1}.

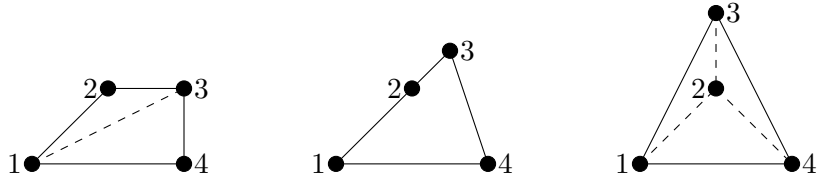
\begin{figure}
\begin{center}
\begin{tikzpicture}
\draw (0,0) -- (1,1) -- (2,1)--(2,0)-- (0,0);
\draw[fill] (0,0) circle (0.1cm) node[left] {$1$};
\draw[fill] (1,1) circle (0.1cm) node[left] {$2$};
\draw[fill] (2,1) circle (0.1cm) node[right] {$3$};
\draw[fill] (2,0) circle (0.1cm) node[right] {$4$};
\draw[dashed] (0,0) -- (2,1);
\begin{scope}[shift={(4,0)}]
\draw[fill] (0,0) circle (0.1cm) node[left] {$1$};
\draw[fill] (1,1) circle (0.1cm) node[left] {$2$};
\draw[fill] (1.5,1.5) circle (0.1cm) node[right] {$3$};
\draw[fill] (2,0) circle (0.1cm) node[right] {$4$};
\draw (0,0) -- (1,1) -- (1.5,1.5)--(2,0)-- (0,0);
\end{scope}
\begin{scope}[shift={(8,0)}]
\draw[fill] (0,0) circle (0.1cm) node[left] {$1$};
\draw[fill] (1,1) circle (0.1cm) node[left] {$2$};
\draw[fill] (1,2) circle (0.1cm) node[right] {$3$};
\draw[fill] (2,0) circle (0.1cm) node[right] {$4$};
\draw (0,0) --  (1,2)--(2,0)-- (0,0);
\draw[dashed] (0,0) -- (1,1) -- (1,2);
\draw[dashed] (1,1) -- (2,0);
\end{scope}
\end{tikzpicture}
\end{center}
\caption{Changing $\lb 123\rb$ from $>0$, to $0$, then to $<0$.  The triangulation $[124] + [234]$ is changed to $[213]+[124] + [234]$.}
\label{fig:interiorpt}
\end{figure}

\subsection{Push-forwards} \label{sec:pushforward}
Recall from Heuristic~\ref{heuristic} that for any morphism $\Phi:(X,X_{\geq 0})\rightarrow (Y,Y_{\geq 0})$, we expect the push-forward to preserve the canonical form:
\be
\Phi_*(\Omega(X,X_{\geq 0}))=\Omega(Y,Y_{\geq 0})
\ee
This procedure is useful for computing the canonical form of the image when the canonical form of the domain is already known. However, finding morphisms of degree $>1$ is a difficult challenge. We demonstrate a few non-trivial examples in this section.

\subsubsection{Projective simplices}
\label{sec:simplexPush}
In this section we consider morphisms $\Phi:(\P^m,\Delta)\rightarrow (X,X_{\geq 0})$ from projective simplices to positive geometries $X_{\geq 0}$. In most cases we will assume that $\Delta =\Delta^m$ is the standard simplex, since they are isomorphic.

We begin with morphisms $\Phi:(\P^m,\Delta^m)\rightarrow (X,X_{\geq 0})$ of degree one, in which case $X_{\geq 0}$ is $\Delta-$like (as defined in Section \ref{sec:standardsimplex}), and its canonical form is given by:
\be
\Omega{(X,X_{\geq 0})}=\Phi_* \left(\prod_{i=1}^m \frac{d\alpha_i}{\alpha_i}\right)
\ee
where $(1,\alpha_1,\ldots ,\alpha_m)\in\P^m$. The simplest $\Delta$-like positive geometry is a projective simplex:

\begin{example}
A projective simplex $\Delta\subset \P^m(\R)$ is isomorphic to the standard simplex $\Delta^m$ by the following map $\Phi:(\P^m,\Delta^m)\rightarrow (\P^m,\Delta)$:
\be
\Phi(\alpha)=\sum_{i=0}^m \alpha_i Z_{i{+}1}
\ee
where $Z_i\in\P^m(\R)$ are the vertices of $\Delta$ for $i=1,\ldots, m{+}1$. As a matter of convention, the projective variables and the vertices are indexed slightly differently. Note that the positive part $\Delta^m$ (i.e. $\alpha_i>0$ for each $i$) is mapped diffeomorphically onto the interior of $\Delta$.

The canonical form on $\Delta$ is therefore
\be
\Omega(\Delta)=\Phi_*\left(\prod_{i=1}^m\frac{d\alpha_i}{\alpha_i}\right)
\ee
where we have made the ``gauge choice" $\alpha_0=1$ as usual. Alternatively, pulling back the form~\eqref{eq:simplexZ} onto $\Delta^m$ gives the form on $\alpha$-space.

 The image of the hyperplane $\{\alpha_i = 0\} \subset \P^m$ intersects $\Delta$ along the facet opposite the vertex $Z_{i+1}$.  Taking the residue of $\Omega(\Delta^m)$ along $\alpha_i = 0$ before pushing forward gives the canonical form of that facet.  We note that the pole for localizing on the facet opposite $Z_1$ is hidden, as explained in Section~\ref{sec:standardsimplex}. 

\end{example}

We now consider {\sl higher degree} morphisms $\Phi:(\P^m,\Delta^m)\rightarrow (X,X_{\geq 0})$. Let us assume for the moment that $X=\P^m$. We now provide a general analytic argument for why the push-forward should have no poles on $X_{>0}$. The behavior of the push-forward near the boundary of $X_{\geq 0}$ is more subtle and will be discussed subsequently on a case-by-case basis.

Suppose the map is given by $\alpha\mapsto\Phi(\alpha)$ and 
let $\beta_0$ be a point in $X_{>0}$. Furthermore, assume if possible that the push-forward has a singularity at $\beta_0$. It follows that the Jacobian $J(\alpha)$ of $\Phi(\alpha)$ must vanish at some point $\alpha_0$ for which $\Phi(\alpha_0)=\beta_0$. Let $\beta=\Phi(\alpha)$ which we expand near the critical point.
\be\label{eq:noPoles}
\beta=\Phi(\alpha_0)+\lambda \sum_i\epsilon_i \frac{\partial \Phi(\alpha_0)}{\partial \alpha_i}+\frac{1}{2}\lambda^2\sum_{i,j}\epsilon_i\epsilon_j\frac{\partial^2\Phi(\alpha_0)}{\partial \alpha_i\partial \alpha_j}+O(\lambda^3)
\ee
where we have set $\alpha=\alpha_{0}+\lambda \epsilon$ with $\lambda$ a small parameter and $\epsilon$ a constant vector. Since the Jacobian vanishes at $\alpha_0$, a generic point $\beta$ in a small neighborhood of $\beta_0$ cannot be approximated by the linear term. However, unless the quadratic term degenerates, $\beta$ can be approximated quadratically by choosing $\epsilon$ so that the first variation vanishes. Namely,
\be
\sum_a \epsilon_i\frac{\partial \Phi(\alpha_0)}{\partial \alpha_i}=0
\ee
It follows that the variation is even in $\lambda$, so there are two roots $\lambda_\pm$ (corresponding to points $\alpha_\pm$, respectively) that approximate $\beta$, with $\lambda_+=-\lambda_-$. Since the Jacobian is clearly linear in $\lambda$ for small variations near $\alpha_0$, therefore $J(\alpha_+)=-J(\alpha_-)+O(\lambda_-^2)$. Since the push-forward is a sum of $1/J(\alpha)\sim 1/\lambda$ over all the roots, the roots corresponding to $\alpha_\pm$ therefore cancel in the limit $\beta\rightarrow \beta_0$, and there is no pole. 

We now show a few examples of higher degree push-forwards, beginning with self-morphisms of the standard simplex.
\begin{example}\label{ex:tori}

Let $\Phi:(\P^m,\Delta^m)\rightarrow(\P^m,\Delta^m)$ be a morphism of the standard simplex with itself, defined by
\be
\Phi(1,\alpha_1,\ldots, \alpha_m)&=&(1,\beta_1,...,\beta_m)\\
\beta_j&=&\prod_{i=1}^m\alpha_i^{a_{ij}}
\ee
where $a_{ij}$ is an invertible integer matrix. We assume the determinant is positive so that the map is orientation preserving. While this map is a self-diffeomorphism of $\Int(\Delta^m)$, it is not necessarily one-to-one on $\P^m$. The push-forward gives
\be \label{eq:monomialpush}
\Phi_*\left(\prod_{i=1}^m\frac{d\alpha_i}{\alpha_i}\right)=\sum_{\text{roots}}\frac{d^m\beta}{\frac{\partial(\beta_1\ldots\beta_m)}{\partial(\alpha_1\ldots\alpha_m)}\prod_{i=1}^m\alpha_i}=\frac{\deg(\Phi)}{\det(a_{ij})}\prod_{j=1}^m\frac{d\beta_j}{\beta_j}
\ee
where $\deg(\Phi)$ is the number of roots, and we have substituted the Jacobian:
\be
\frac{\partial(\beta_1\ldots\beta_m)}{\partial(\alpha_1\ldots\alpha_m)}=\det(a_{ij})\frac{\prod_{j=1}^m\beta_j}{\prod_{i=1}^m\alpha_i}
\ee
It is easy to see that the degree of $\Phi$ must be $|\det(a_{ij})|$: after an integral change of basis, the matrix $(a_{ij})$ can be put into Smith normal form, that is, made diagonal.  For a diagonal matrix $(a_{ij})$ it is clear that the degree of $\Phi$ is simply the product of diagonal entries.  Thus \eqref{eq:monomialpush} verifies Heuristic~\ref{heuristic} in this case.

By contrast, we note that the pull-back along $\Phi$ gives $\Phi^*(\Omega(\Delta^m))=\det(a_{ij})\Omega(\Delta^m)$, which does not preserve leading residues.
\end{example}

The next few examples explore the push-forward in one dimension. They are all applications of Cauchy's theorem in disguise.

\begin{example}

A simple non-trivial example is a quadratic push-forward $\Phi:\Delta^1\rightarrow \Delta^1$ given by $\Phi(1,\alpha)=(1,a \alpha^2+2b\alpha)$ for some real constants $a>0;b\geq 0$. The assumptions suffice to make $\Phi$ a self-morphism of $\Delta^1$. Setting $(1,\beta)=\Phi(1,\alpha)$ we get two roots $\alpha_\pm$ from solving a quadratic equation. The push-forward is therefore
\be
\Phi_*(\dlog \alpha)=\sum_\pm \dlog\left(\alpha_\pm\right)=\sum_\pm \dlog\left(\frac{-b\pm\sqrt{b^2+a\beta}}{a}\right)
\ee
Since we are summing over roots, a standard Galois theory argument implies that the result should be rational.  Indeed, the square-root disappears, and direct computation gives $\Phi_*(\dlog \alpha) = \dlog \beta$.

We can also do the sum without directly solving the quadratic equation. The result should only depend on the sum and product of the roots $x_\pm$, since the result must be a rational function.
\be
\Phi_*(\dlog \alpha)=\sum_\pm \frac{d\beta}{\alpha_\pm \left(\frac{d\beta}{d\alpha}\right)_\pm}=\sum_\pm \frac{1}{2\alpha_\pm(a\alpha_\pm+b)}
\ee
Substituting $\alpha_\pm(a\alpha_\pm+b)=\beta-b\alpha_\pm$, which comes from the original equation, we get
\be
f_*(\dlog \alpha)=\sum_\pm \frac{1}{2(\beta-b\alpha_\pm)}=
\frac{2\beta-b(\alpha_++\alpha_-)}{2(\beta^2-b\beta(\alpha_++\alpha_-)+b^2\alpha_+\alpha_-)}
\ee
We now use the identities $\alpha_++\alpha_-=-2b/a$ and $\alpha_+\alpha_-=-\beta/a$, which give the desired result $\Phi_*(\dlog \alpha) = \dlog \beta$.
\end{example}

\begin{example}\label{ex:1Dpush}
We now go ahead and tackle the same example for a polynomial of arbitrary degree. Suppose $\Phi$ is a self-morphism of $\Delta^1$ given by $\beta=f(\alpha)=\alpha^n+a_{n{-}1}\alpha^{n{-}1}+\ldots +a_1\alpha$.

We first define the holomorphic function
\be
g(\alpha)=\frac{1}{\alpha(f(\alpha)-\beta)}
\ee
which has no pole at infinity since $f(\alpha)$ is at least of degree one. The sum over all the residues of the function is therefore zero by Cauchy's theorem. It follows that
\be
-\frac{1}{\beta}+\sum_i\frac{1}{\alpha_if'(\alpha_i)}=0
\ee
where we sum over all the roots $\alpha_i$ of $f(\alpha)=\beta$. Therefore,
\be
\Phi_*(\dlog \alpha)=\sum_i \frac{d\beta}{\alpha_i f'(\alpha_i)}=d\log\beta
\ee
\end{example}

Finally, we consider a simple but instructive push-forward of {\sl infinite} degree.

\begin{example}\label{ex:pushsegment}
Consider the map $\Phi:(\P^1,[-\pi/2,\pi/2])\rightarrow(\P^1,[-1,1])$ between two closed line segments given by:
\be\label{eq:trigpush}
\Phi(1,\theta)=(1,\sin\theta)
\ee
While this map is not rational, we can nevertheless verify Heuristic~\ref{heuristic} for $\Phi$ by explicit computation. For any point $(1,\sin\theta)$ in the image, there are infinitely many roots of~\eqref{eq:trigpush} given by $\theta_n\deff \theta+2\pi n$ and $\theta_n'\deff -\theta+\pi(2n+1)$ for $n\in\mathbb{Z}$. It is easy to show that both sets of roots contribute the same amount to the push-forward, so we will just sum over $\theta_n$ twice:
\be\label{eq:trigpushseries}
\Phi_*\left(\frac{\pi d\theta}{(\pi/2-\theta)(\theta+\pi/2)}\right)&=&2\sum_{n\in\mathbb{Z}}\frac{\pi dx}{(\pi/2-\theta_n)(\theta_n+\pi/2)\cos\theta_n}=2\frac{dx}{\cos^2\theta}\\
&=&\frac{2dx}{(1-x)(x+1)}
\ee
which of course is the canonical form of $[-1,1]$. The infinite sum can be computed by an application of Cauchy's theorem.

We note here that our definition~\eqref{eq:pushforward} of the push-forward only allows finite degree maps while $\Phi$ is of infinite degree. Nonetheless, it appears that Heuristic~\ref{heuristic} still holds when the push-forward is an {\sl absolutely convergent} series like~\eqref{eq:trigpushseries}. We stress that some push-forwards give conditionally convergent series, such as the morphism $(\P^1,\Delta^1)\rightarrow(\P^1,[0,1])$ given by $(1,x)\rightarrow(1,e^{-x})$, in which case the push-forward is ill-defined since there is no canonical order in which to sum the roots.
\end{example}

\subsubsection{Algebraic moment map and an algebraic analogue of the \\ Duistermaat-Heckman measure}\label{sec:momentmap}
Recall from Section \ref{sec:toric} that toric varieties $X(z)$ are positive geometries.  We now show how the canonical form of a polytope $\A = \Conv(Z)$ can be obtained as the push-forward of the canonical form of a toric variety, establishing an instance of Heuristic \ref{heuristic}.

Associated to the torus action of $T$ on $X(z)$ is a {\it moment map} $\mu:X(z) \to \P^{m}(\R)$
\be \label{eq:moment}
X(z) \ni (C_1: \cdots : C_n) \longmapsto \sum_{i=1}^n |C_i|^2 z_i
\ee
which is an important object in symplectic geometry.  The image $\mu(X(z))$ is the polytope $\A(z) \in \P^m$ with vertices $z_1,\ldots,z_n$.

Now, let us suppose we have a polytope $\A = \Conv(Z)$ with vertices $Z_1,\ldots,Z_n$ such that $Z$ and $z$ have the same ``shape".  Namely, we insist that the determinants
\be \label{eq:samesign}
\langle Z_{i_0} \cdots Z_{i_{m}} \rangle \text{ and } \langle z_{i_0} \cdots z_{i_{m}} \rangle \text{ have the same sign}
\ee
for all $1\le i_0, i_1, \ldots , i_{m}\le n$.  Here, two real numbers have the same sign if they are both positive, both negative or both zero.  In other words, we ask that the vector configurations $Z$ and $z$ have the same \defn{oriented matroid} (see Appendix \ref{sec:matroids}).  We caution that the integer matrix $z$ may not exist if $Z$ is not ``realizable over the rationals".

Some basic terminology concerning oriented matroids is recalled in Appendix \ref{sec:matroids}.
We then have the (rational) linear map $Z: X(z) \to \P^m$
\be \label{eq:algmoment}
X(z) \ni (C_1: \cdots : C_n) \longmapsto \sum_{i=1}^n C_i Z_i.
\ee
When $Z = z$, this map is called the {\it algebraic moment map} \cite{Fulton,Sottile}.  Note that the moment map has image in a real projective space but the algebraic moment map has image in a complex projective space.  We shall show in Section \ref{app:diffeo} that the image $Z(X(z)_{\geq 0})$ of the nonnegative part is the polytope $\A$.

Suppose $ Y \in \P^m$.  Then the inverse image of $Y$ under the map $Z$ is a linear slice $L_Y \subset \P^{n-1}$ of dimension $n-1-m$.  For a typical $Y$, the slice $L_Y$ intersects $X(z)$ in finitely many points and we have the elegant equality
\be \label{eq:solutions}
\#|L_Y \cap X(z)| = m! \cdot \mbox{volume of $\A(z)$}.
\ee
Here the volume of $\A(z)$ is taken with respect to the lattice generated by the vectors $z_1,\ldots,z_n$, so that the unit cube in this lattice has volume 1.
In geometric language, \eqref{eq:solutions} states that the {\it degree} of $X(z)$ is equal to $m!$ times the volume of $\A(z)$.    
When $X(z)$ is a smooth complex projective variety, it is also a symplectic manifold.  In this case it has a real $2m$-form $\omega$ (not meromorphic!), called its {\it symplectic volume}.  The {\it Duistermaat-Heckman measure} is the push-forward $\mu_* \omega$ on $\P^{m}(\R)$, where $\omega$ is thought of as a measure on $X(z)$.  Identifying $\A(z)$ with a polytope inside $\R^m$, a basic result states that $\mu_* \omega$ is equal to the standard Lebesgue measure inside the polytope $\A(z)$, and is zero outside.

We may replace the moment map $\mu$ by the linear map $Z$, and the symplectic volume $\omega$ by our canonical holomorphic form $\Omega({X(z)_{\geq 0}})$ from Section \ref{sec:toric}.  The crucial result is the following.

\begin{theorem}\label{thm:push-forward}
Assume that $z$ and $Z$ have the same oriented matroids and that $z$ is graded.  Then $Z_*(\Omega({X(z)_{\geq 0}})) = \Omega(\A)$ is the canonical form of the polytope $\A = \Conv(Z)$.
\end{theorem}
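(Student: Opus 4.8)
The plan is to show that $Z$ is a morphism of positive geometries $(X(z),X(z)_{\geq 0})\to(\P^m,\A)$ and then to prove $Z_*\Omega_{X(z)}=\Omega(\A)$ by the residue induction sketched below Heuristic~\ref{heuristic}, exploiting that the hypothesis that $z$ and $Z$ share an oriented matroid forces $\A(z)=\Conv(z)$ and $\A=\Conv(Z)$ to have the same face lattice. The morphism property is essentially the content of Appendix~\ref{app:diffeo}: the rational linear map $Z\colon X(z)\to\P^m$ of \eqref{eq:algmoment} restricts to an orientation-preserving diffeomorphism $X(z)_{>0}\isom\Int(\A)$, and $Z$ is rational, so $Z$ is a morphism (its degree as a rational map is $m!\,\Vol(\A(z))$ by \eqref{eq:solutions}, though this value is not needed). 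I would then induct on $m=\dim\A$, the base case $m=0$ being trivial since both spaces are points with $\Omega\equiv1$.

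For the inductive step, recall from Section~\ref{sec:toric} and Appendix~\ref{app:toricform} that the boundary components of $(X(z),X(z)_{\geq 0})$ are the facet toric varieties $(X_F,(X_F)_{\geq 0})$, indexed by the facets $F$ of $\A(z)$, and that $\Res_{X_F}\Omega_{X(z)}=\Omega_{X_F}$. Fix such an $F$, let $F'$ be the corresponding facet of $\A$, and let $H\cong\P^{m-1}$ be the hyperplane spanned by $F'$. Since $Z$ carries $(X_F)_{\geq 0}$ onto $F'$, which is $(m{-}1)$-dimensional, $Z$ does not collapse $X_F$; hence (after a blowup of $X(z)$ if $Z$ is not already a rational map on $X_F$, exactly as in the sketch of Heuristic~\ref{heuristic}) $Z\colon(X_F,(X_F)_{\geq 0})\to(H,F')$ is again a morphism of the type covered by the theorem, now for the subconfigurations $z|_F$ (still graded) and $Z|_{F'}$ (still with the same oriented matroid). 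Moreover $X_F$ is the unique boundary divisor of $X(z)$ sent onto $H$, because distinct facets of the full-dimensional polytope $\A$ lie on distinct hyperplanes. Applying Proposition~\ref{prop:KR} and the inductive hypothesis,
\[
\Res_{H}(Z_*\Omega_{X(z)})=Z_*(\Res_{X_F}\Omega_{X(z)})=Z_*(\Omega_{X_F})=\Omega(F'),
\]
which is the canonical form of the boundary component of $(\P^m,\A)$ along $H$.

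The step I expect to be the main obstacle is proving that $Z_*\Omega_{X(z)}$ has \emph{no} poles other than the facet hyperplanes $H_1,\dots,H_r$ of $\A$. Since the polar divisor of $\Omega_{X(z)}$ is the union of the $X_F$ (which map onto the $H_j$), the form $\Omega_{X(z)}$ is holomorphic on the smooth locus of $X(z)$ away from $\bigcup_jZ^{-1}(H_j)$, so what must be shown is that the trace does not develop spurious poles along the branch locus of $Z$. Over the interior $X(z)_{>0}$ this is exactly the local Morse-theoretic ``critical points cancel in pairs'' argument of Section~\ref{sec:pushforward}; over the remaining points of $\P^m\setminus\bigcup_jH_j$ one runs the same local analysis, reducing through the local ramification form of $Z$ to the elementary fact that $\sum_{\zeta^k=1}g(\zeta u)\,\zeta\,du$ is holomorphic in $w=u^k$ whenever $g$ is holomorphic. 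Controlling this in the presence of singularities of $X(z)$ and of the discriminant of $Z$, where the normal-form reduction needs care, is where I anticipate the real work. Once it is in hand, $Z_*\Omega_{X(z)}$ is a rational top-form on $\P^m$ with simple poles exactly along $H_1,\dots,H_r$ and $\Res_{H_j}(Z_*\Omega_{X(z)})=\Omega(F_j)\neq0$ for each $j$, so by the uniqueness clause (P2) for the positive geometry $(\P^m,\A)$ we conclude $Z_*\Omega_{X(z)}=\Omega(\A)$, completing the induction.
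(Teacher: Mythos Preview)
Your overall architecture---induction on $m$, application of Proposition~\ref{prop:KR} to identify residues along facet hyperplanes, and the recognition that the crux is ruling out spurious poles of $Z_*\Omega_{X(z)}$---matches the paper's proof exactly. Where you and the paper diverge is precisely at the step you flag as the main obstacle.

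The paper does \emph{not} use the Morse-theoretic pairwise-cancellation argument or a local ramification analysis. Instead it proceeds by first taking a toric resolution of singularities $g:X\to X(z)$ (so $X$ is smooth and $\partial X$ is a simple normal crossings divisor), and then eliminating the indeterminacy of $\tilde\pi=\pi\circ g$ by a sequence of blowups $h_r:X_r\to X$ along smooth centers, arranged so that the exceptional locus together with the strict transform of $\partial X$ remains simple normal crossings. The key technical claim is that $h_r^*(\Omega_X)$ has simple poles only along the strict transforms $\tilde D_i$ of the toric divisors and \emph{no} poles along any exceptional divisor. This is checked by a direct local computation: each blowup center sits in exactly $t$ of the $\tilde D_j$ but has codimension $t+s$ with $s\geq 1$ (because the indeterminacy locus of $\pi$ avoids the torus fixed points of $X(z)$), and in local coordinates the pullback picks up a factor $x_{t+s}^{s-1}$ along the exceptional divisor, which is holomorphic since $s\geq 1$. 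With this in hand, $\pi_r:X_r\to\P^m$ is a proper regular map of smooth varieties, so Proposition~\ref{prop:KR} applies cleanly, and the only possible poles of $(\pi_r)_*h_r^*(\Omega_X)$ are the images of the $\tilde D_i$, i.e.\ the facet hyperplanes.

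Your Morse/ramification sketch is a reasonable heuristic over smooth points of $X(z)$ where $Z$ is defined, but it does not by itself address (i) singularities of $X(z)$, (ii) the indeterminacy locus of $Z$ (where the map is not defined at all, so there is no ``local ramification form'' to analyze), or (iii) the hypotheses of Proposition~\ref{prop:KR}, which requires a proper surjective map of complex manifolds. These are exactly the issues that force the resolution-plus-blowup machinery; the paper's local blowup computation is what replaces your cancellation argument, and it is where the actual content lies.
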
 


We remark that the graded condition \eqref{eq:graded} is a mild condition: since $Z$ satisfies the condition \eqref{eq:Gordan}, it follows from Proposition \ref{prop:OM} that $z$ does as well.  Thus the vectors $z_i$ can be scaled by positive integers until the set $\{z_1,\ldots,z_n\}$ is graded~\eqref{eq:graded}.

Note that in Theorem \ref{thm:push-forward} we do not need to assume that $X(z)$ is projectively normal.

Theorem \ref{thm:push-forward} is proved in Appendix \ref{app:push-forwardproof}.  We sketch the main idea. Both $(X(z),X(z)_{\geq 0})$ and $(\P^m,\A)$ are positive geometries.  The condition that $z$ and $Z$ have the same oriented matroid implies that the polytopes $\A(z)$ and $\A$ have the same combinatorial structure.  Thus the axioms (P1) and (P2) for the two positive geometries have the same recursive structure.  We prove the equality $Z_*(\Omega({X(z)_{\geq 0}})) = \Omega(\A)$ by induction, assuming that the equality is already known for all the facets of $\A$.  At the heart of the inductive step is the fact that taking push-forwards and taking residues of meromorphic forms are commuting operations.

\subsubsection{Projective polytopes from Newton polytopes}
\label{sec:push}
In this section we discuss morphisms from $(\P^m,\Delta^m)$ to convex polytopes $(\P^m,\A)$ and their push-forwards. Our main results here overlap with results from our discussion on toric varieties in Section~\ref{sec:momentmap}. 
However, our intention here is to provide a self-contained discussion. Our focus here is also more geometric in nature, emphasizing the fact that any such morphism restricts to a diffeomorphism $\Int(\Delta^m)\rightarrow \Int(\A)$.

Now let $\A \subset \P^m$ be a convex polytope in projective space with vertices $Z_i$ for $i=1,\ldots,n$.  
Let $z_1, z_2,\ldots, z_n \in \ZZ^{m+1}$ be an {\em integer} matrix with the same oriented matroid as $Z_1,\ldots,Z_n$, that is, satisfying \eqref{eq:samesign}.

For simplicity we now assume that $z_i = (1,z'_i)=(1,z'_{1i},z'_{2i},\ldots,z'_{mi})$ (see Section \ref{sec:toric} for how to relax this condition).

Let us define a rational map $\Phi:(\P^m,\Delta^m)\rightarrow (\P^m,\A)$ given by:
\be\label{eq:Phi}
\Phi(X)&=&\sum_i C_i(X) Z_i\\
C_i(X)&\deff & X^{z_i} \deff X_1^{z'_{1i}} X_2 ^{z'_{2i}} \cdots X_{m}^{z'_{m,i}}
\ee
where $(1,X)\in \P^m$. This is called the \defn{Newton polytope map} and the polytope with integer vertices $z_i$ is called the \defn{Newton polytope}.

We make two major claims in this section. The first is the following:
\be\label{eq:claim1}
\mbox{{\bf Claim 1:} The map $\Phi$ is a \defn{morphism} provided that~\eqref{eq:samesign} holds.}
\ee
That is, it restricts to a \defn{diffeomorphism} on $\Int(\Delta^m)\rightarrow \Int(\A)$.

This is a non-trivial fact for which we provide two proofs in Appendix~\ref{app:diffeo}. At the heart of Claim~\eqref{eq:claim1} is that the Jacobian of $\Phi$ is uniformly positive on $\Int(\Delta^m)$:
\be
J(\Phi)=\sum_{1\le i_0< \cdots <i_{m}\le n}C_{i_0} \cdots C_{i_{m}}\lb z_{i_0} \cdots z_{i_{m}}\rb\lb Z_{i_0} \cdots Z_{i_{m}}\rb
\ee
where $J(\Phi)$ denotes the Jacobian of $\Phi$ with respect to $u_i\deff\log(X_i)$, which is clearly positive provided that $z_i$ and $Z_i$ have the same oriented matroid (see \eqref{eq:samesign}). While this is only a {\sl necessary} condition for Claim~\eqref{eq:claim1}, it is nevertheless the key to proving it.
We provide a graphical example of the diffeomorphism for the pentagon in Figure~\ref{fig:pentagon}.

This establishes the way to our second claim, which is an instance of Heuristic~\ref{heuristic}:
\be\label{eq:claim2}
\mbox{{\bf Claim 2:} The canonical form of the polytope is given by the push-forward:}\nonumber\\
\Omega(\A) = \Phi_*\left(\frac{d^m X}{\prod_{a=1}^m X_a}\right)\hskip 1.7in
\ee

Equation \eqref{eq:claim2} follows from Theorem~\ref{thm:push-forward}.  To see this, note that in \eqref{eq:pushtheta}, we have $\Omega_S = \prod_{i=1}^m dX_i/X_i$ and $\Omega_T = \Omega_{{X(z)}}$ as meromorphic forms.  Thus \eqref{eq:claim2} follows from \eqref{eq:pushtheta} and Theorem~\ref{thm:push-forward}.   We provide an alternative analytic argument avoiding toric varieties in Section~\ref{sec:sumOverRoots} by directing manipulating the sum-over-roots procedure in the push-forward computation.


\begin{figure}
\centering
\includegraphics[width=6cm]{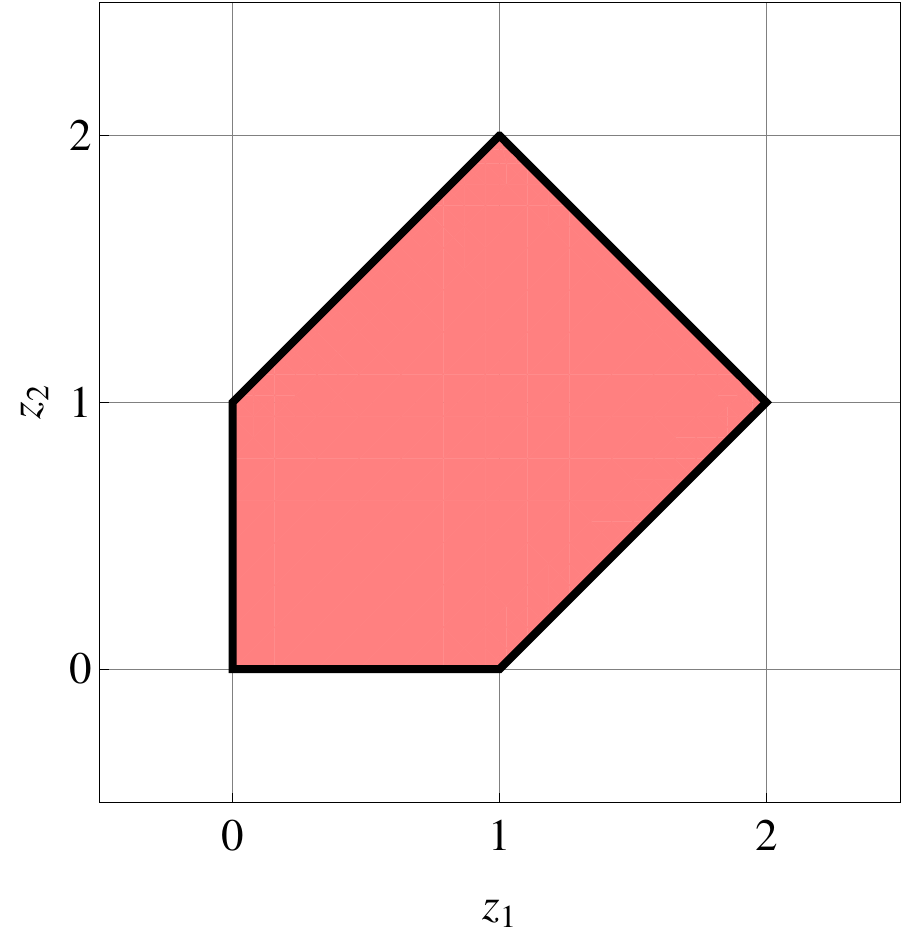}\;\;\;\;\;
\includegraphics[width=6.75cm]{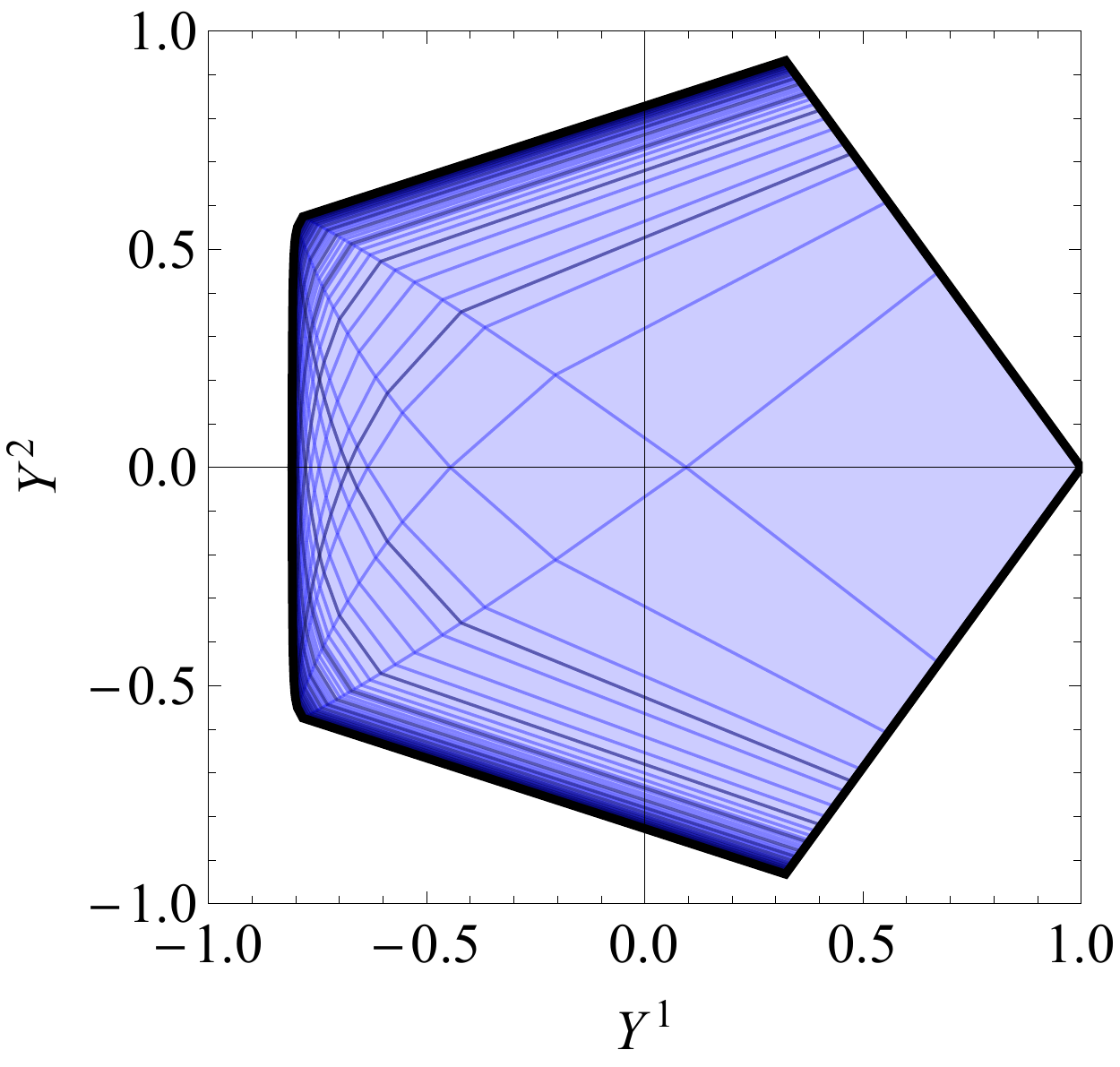}
\caption{A Newton pentagon (left) with vertices $z'=((0,0),(1,0),(2,1),(1,2),(0,1))$ and the image of the corresponding diffeomorphism $\Phi(X)$ (right). The dark lines in the interior of the pentagon (right) denote lines of constant $X_1$ or $X_2$. While the two pentagons are not identical as sets, they do have the same oriented matroid.}
\label{fig:pentagon}
\end{figure}

For computational purposes, a more convenient way to express the (canonical rational function of the) push-forward is the following integral formula:
\be\label{eq:deltapush}
\aOmega(\A)(Y)=\int \frac{d^m X}{\prod_{a=1}^m X_a} \delta^{m}(Y;\Phi(X))
\ee
where for any two points $Y,Y'$ in projective space $\mathbb{P}^m$, the integral
\be\label{eq:delta4}
\delta^m(Y,Y') = \frac{1}{m!}\int\frac{d\rho}{\rho}\delta^{m+1}(Y-\rho Y')
\ee
is the delta function with weights $(-m{-}1,0)$. Here the sum over pre-images appearing in the push-forward is equivalent to the sum over solutions to the constraints imposed by the delta functions. The relevant Jacobian factors are taken care of by the delta functions as well. We treat the delta functions formally as an ``analytic" object without taking absolute values in the Jacobian factor, and we sum over all {\sl complex} roots.

In most cases numerical methods are needed to compute the push-forward, which we have done extensively as verification of Claim~\eqref{eq:claim2}. However, here we show a simple case where the push-forward can be done by hand.

\begin{example}
Let us consider a quadrilateral $\A$ on the projective plane. By a convenient gauge-fixing, we will take the external data and $Y$ to be of the form
\be
Z=\begin{pmatrix}
1 & 0 & 0 \\
0 & 1 & 0 \\
0 & 0 & 1 \\
a & -1 & b
\end{pmatrix},\;\;\;\;
Y = (y,1,x)
\ee

We can trivially compute the canonical rational function by e.g. triangulation as $[123] +
[134]$, and finding
\begin{equation}
\Omega(\A)=\frac{(ab+a x + b y) }{x y (a + y)(b+x)} d x d y
\end{equation}
and we will now reproduce this result from the Newton polytope map \eqref{eq:Phi}, with the Newton polytope being the square with vertices
$(1,0),(1,1),(0,1),(0,0)$, respectively. Thus our map is
\be
Y =\Phi(X)= X_1 Z_1 +
X_1 X_2 Z_2 + X_2 Z_3+Z_4 = \left(\frac{X_1+a}{X_1 X_2-1},1,\frac{X_2+b}{X_1X_2-1}\right)
\ee
which are projective equalities. We must solve the equations
\begin{equation}
(X_1X_2 - 1) y = X_1 + a, \;\;\; (X_1X_2 - 1) x = X_2+ b
\end{equation}
We then have
\begin{equation}
\frac{d^2X}{X_1X_2} = dx dy \times
\frac{(X_1X_2 - 1)^2}{X_1X_2 (x X_1+y X_2-1)}
\end{equation}
It therefore suffices to show that
\begin{equation}
\sum_{{\rm roots}} \frac{(X_1 X_2 - 1)^2}{X_1 X_2 (x X_1+y X_2-1)} = \frac{(ab+a x + b y) }{x y (a + y)(b+x)}
\end{equation}
If we define $\rho = (X_1X_2 - 1)$, then we have $X_1 = \rho
y - a, X_2= \rho x - b$ and so $(\rho + 1) = (\rho y - a)(\rho x -
b)$. So $\rho$ satisfies the quadratic equation
\begin{equation}
\rho^2 - \frac{(1 + a x+by)}{x y}\rho + \frac{a b - 1}{x y} = 0
\end{equation}
which has two roots $\rho_\pm$. Note that 
\begin{equation}
xX_1+yX_2-1 = x (\rho y -
a)+ y (\rho x - b)-1 = 2 x y \left(\rho - \frac{1}{2}\frac{(1 + ax + by)}{x y}\right)
\end{equation}
Comparing with the quadratic equation for $\rho$, we can identify the term corresponding to the sum of the roots $(\rho_++\rho_-)$, giving:
\be
(xX_1+yX_2-1)_\pm =\pm xy(\rho_+ -\rho_-)
\ee

Thus, the sum over roots becomes
\begin{eqnarray} \nonumber
\frac{1}{x y (\rho_+ - \rho_-)} \times \left(\frac{\rho_+^2}{(\rho_+ y -
a)(\rho_+ x - b)} - \frac{\rho_-^2}{(\rho_- y - a)(\rho_- x - b)}\right) \\ \label{eq:pushsquare} =
\frac{1}{x y} \frac{a b (\rho_+ +
\rho_-)-\rho_+ \rho_-(b y + a x)}{(\rho_+ \rho_- y^2 - a(\rho_+ +\rho_-) y + a^2)(\rho_+ \rho_- x^2 -  b (\rho_+ +\rho_-)x + b^2)}
\end{eqnarray}
And reading off
\be
\rho_+ \rho_- = \frac{a b - 1}{x y}, \qquad
\rho_+ +\rho_- =\frac{1 +  a x+by}{x y}
\ee
from the quadratic equation we finally find that \eqref{eq:pushsquare} equals
\begin{equation}
\frac{a b + ax+by}{x y (a + y)(b + x)}
\end{equation}
as expected.
\end{example}

\subsubsection{Recursive properties of the Newton polytope map}
\label{sec:sumOverRoots}

We now derive the push-forward identity~\eqref{eq:claim2} for the Newton polytope by explicitly manipulating the push-forward computation and applying induction on dimension. In fact, we will derive a more general version for lower dimensional polytopes in $\P^m(\R)$. The statement is the following:

Consider the variety $H\subset\P^m$ defined by linear equations $Y\cdot H_1,\ldots,Y\cdot H_s=0$ for some dual vectors $H_1,\ldots, H_s\in\P^m(\R)$ with $Y\in\P^m$. Let $\A$ denote a convex polytope in $H(\R)$ of dimension $D=m{-}s$ with vertices $Z_1,\ldots,Z_n$. Let $\Phi:(\P^D,\Delta^D)\rightarrow(H,\A)$ be a morphism defined by a Newton polytope with vertices $z_i=(1,z_i')\in\R^{D{+}1}$. Furthermore, we assume that $\A$ and the Newton polytope have the same oriented matroid. Namely, there exist constant vectors $K_1,\ldots,K_s\in\P^m(\R)$ such that
\be
\mbox{$\lb K_1\cdots K_s Z_{i_0}\cdots Z_{i_D}\rb$ and $\lb z_{i_0}\cdots z_{i_D}\rb$ have the same sign}
\ee
for any set of indices $1\le i_0,\ldots,i_D\le n$.

The $D=0$ case is trivial. Now assume $D>0$. Let $\B$ denote a facet of $\A$ with vertices $Z_{j_1},\ldots,Z_{j_p}$, and let $B$ denote the corresponding facet of the Newton polytope. We now argue that there exists a change of variables $(X_1,\ldots,X_D)\rightarrow (U_1,\ldots, U_D)$ given by $X_b=\prod_{a=1}^D{U_a}^{\alpha_{ab}}$ for some invertible integer matrix $\alpha_{ab}$ that has the following properties: (Note that we define $\Phi'(U)\deff\Phi(\psi(U))$ with $X\deff\psi(U)$)
\begin{itemize}
\item
The map $U\rightarrow X$ is a self-morphism of $\Delta^D$ which may be of degree $>1$. In particular, it is a self-diffeomorphism of $\Int(\Delta^D)$.
\item
The restriction of $\Phi'$ to $U_D=0$ is a Newton polytope map $\Delta^{D{-}1}\rightarrow\B$ whose Newton polytope has the same shape as $B$.
\end{itemize}

We construct the change of variables as follows. Let us begin by considering the matrix formed by column vectors $(z_{j_1},\ldots,z_{j_D},z_{j})$ where the first $D$ columns are vertices of $B$, while $z_{j}$ is any vertex of the Newton polytope away from the facet. Let $\alpha\deff(\alpha_{ab})$ be the inverse of the matrix, which has only rational components. Finally, let us rescale the rows of $\alpha_{ab}$ (indexed by $a$) by positive integers so that all its components are integral. This provides the change of variables $X\rightarrow U$.

Let us redefine $z_i$ for every $i$ to be the vertex of the Newton polytope with respect to $U$, which again has the same oriented matroid as $\A$. In particular, the matrix of column vectors $(z_{j_1},\ldots, z_{j_D},z_{j})$ is identity. Since every vertex of $B$ is a linear combination of $z_{j_1},\ldots, z_{j_D}$, it must therefore have zero as its last component. Furthermore, since every other vertex of the Newton polytope is a linear combination of $z_{j_1},\ldots, z_{j_D},z_j$ with positive coefficient in $z_j$, its last component must be positive. Geometrically, this means that the $U_D=0$ limit is mapped to the facet $\B$ by $\Phi'$.

Now, let $\Psi:\Delta^{D{-}1}\rightarrow\B$ denote the map $\Phi'$ restricted to $U_D=0$, and let $w_{j_1},\ldots, w_{j_p}\in\R^{D}$ denote the vertices of the Newton polytope of $\Psi$, which are equivalently the vectors $z_{j_1},\ldots, z_{j_p}$, respectively, with the last component removed. It is straightforward to check that the Newton polytope of $\Psi$ has the same oriented matroid as $\B$, so our induction hypothesis can be applied to $\Psi$.

We now argue that the push-forward by $\Phi'$ has a pole at the boundary $\B$ with the expected residue $\Omega(\B)$. Let $Y=\Phi'(U)$. The main strategy is to observe that the roots $(U_1,\ldots,U_{D{-}1})$ are independent of $U_D$ near $\B$.

Indeed, at $Y\in \mathcal{B}$ (i.e. $U_D=0$), the map becomes
\be
Y = C_{j_1}Z_{j_1}+\cdots+C_{j_p}Z_{j_p}
\ee
which gives us a collection of roots $(U_1,\ldots, U_{D{-}1})$ independent of $U_D$. It follows therefore from the induction hypothesis that
\be\label{eq:newton_induction}
\sum_{\text{roots }(U_1,\ldots,U_{D{-}1})}\frac{d^{D{-}1}U}{U_1\cdots U_{D{-}1}}
=\Psi_*\left(\frac{d^{D{-}1}U}{U_1\cdots U_{D{-}1}}\right)=\Omega(\B)
\ee

Moreover, the roots $U_D$ are given by the equation
\be
\lb K_1\ldots K_s Y j_1\ldots j_D\rb \sim U_D^q F(U_1,\ldots,U_{D{-}1})
\ee
for some function $F$ independent of $U_D$ and some exponent $q$. Hence, 
\be\label{eq:newton_single}
\sum_{\text{roots }U_D}\frac{dU_D}{U_D}=d\log\lb K_1\ldots K_s Y j_1\ldots j_D\rb+\cdots
\ee
where the $\cdots$ denotes term proportional to $dU_1,\ldots,dU_{D{-}1}$.

It follows that in the limit $Y\rightarrow \B$, we have
\be
\sum_{\text{roots}}\frac{d^DU}{U_1\cdots U_D}&=&\left(\sum_{\text{roots }(U_1,\ldots,U_{D{-}1})}\frac{d^{D{-}1}U}{U_1\cdots U_{D{-}1}}\right)\left(\sum_{\text{roots }U_D}\frac{dU_D}{U_D}\right)+\cdots\\
&=&\Psi_*\left(\frac{d^{D{-}1}U}{U_1\cdots U_{D{-}1}}\right)d\log\lb K_1\ldots K_s Y j_1\ldots j_D\rb+\cdots\\
&=&\Omega(\B)\;d\log\lb K_1\ldots K_s Y j_1\ldots j_D\rb+\cdots
\ee
where $\cdots$ denotes terms smooth in the limit, and we have applied~\eqref{eq:newton_single} and~\eqref{eq:newton_induction}. This of course is the expected pole and residue.

Furthermore, by the discussion in Section~\ref{sec:simplexPush}, we find that there are no poles on the interior of $\A$.

Finally, we can re-express our result as a push-forward from $X$-space by the following:
\be
\Phi_*\left(\frac{d^DX}{X_1\cdots X_D}\right)=\Phi_*'\left(\frac{d^DU}{U_1\cdots U_D}\right)
\ee
since $\Phi_*' = \Phi_*\circ \psi_*$, and $\psi_*$ pushes the standard simplex form on $U$ to the same form on $X$ (see~\eqref{ex:tori}).

\subsubsection{Newton polytopes from constraints}

We now provide an alternative way of thinking about equation~\eqref{eq:deltapush} as \defn{constraints} on $C$ space. Most of the notation in this section is borrowed from Section~\ref{sec:contour1}.

The map $X\rightarrow C_i(X)$ parametrizes a subset of the projective space $C\in \P^{n{-}1}$. This subset can be equivalently ``cut out" using constraints on the $C_i$ variables.
\be\label{eq:orthogonal}
\prod_{i=1}^nC_i^{w_{pi}}=1
\ee
with one constraint for each value of $p=1,...,n{-}m{-}1$, where $w_{pi}$ is a constant \defn{integer} matrix to be determined. For the constraint to be well-defined projectively, we must have $\sum_i w_{pi}=0$ for each $p$.

Substituting the parametrization $C_i(X)=\prod_{a=0}^m X_a^{z_{ai}}$, we find that
\be
\sum_{i=1}^nz_{ai}w_{pi}=0
\ee
for each $a,p$. 

We can now replace the push-forward formula~\eqref{eq:deltapush} by an integral over all $C_i$ with imposed constraints.
\be\label{eq:constraints}
\aOmega(\A)=\frac{1}{m!}\int \;\frac{d^nC}{\prod_{j=1}^nC_j}\;\left[\prod_{p=1}^{n{-}m{-}1}\delta\left(1-\prod_{i=1}^n C_i^{w_{pi}}\right)\right]\;\delta^{m{+}1}\left(Y-\sum_{i=1}^nC_iZ_i\right)
\ee

Note that each delta function in the square brackets imposes one constraint, and in the absence of constraints we arrive at the familiar cyclic measure on $C$ space. 

It is evident that explicitly summing over the roots of these polynomial equations will become prohibitively more 
difficult for large $n$. But of course there is a standard approach to summing rational functions of roots of polynomial equations, making use of  
the {\sl global residue theorem}~\cite{griffithsharris} which we review in Appendix~\ref{app:GRT}. This method has been applied ubiquitously in the literature on scattering amplitudes, from the earliest understanding of the relations between leading singularities and the connection between the twistor-string and Grassmannian formalisms~\cite{Grassmannian,ArkaniHamed:2009sx,ArkaniHamed:2009dg}, to the recent application to scattering equations \cite{CHY1,CHY2,CHY3,CHY4}.
Appropriately taking care of some minor subtleties, the global residue theorem also works for our problem, naturally connecting the Newton polytope formula to triangulations of the polytope.

We now provide explicit computations for the quadrilateral and the pentagon with constraints.

\begin{example}
For $m=2$, $n=4$, there is only one constraint given by
\be
w_{p}\deff w_{1p}=(Q_1, -Q_2, Q_3,-Q_4)
\ee
for some integers $Q_i>0$, $i=1,2,3,4$. The positivity of the $Q_i$ variables is imposed by the positivity of the Newton polytope.

Now recall that delta functions can be identified with residues in the following manner:
\be
\int dz \;\delta(z)f(z)=\Res_{z\rightarrow 0} \left(\frac{1}{z}f(z)\right)=f(0)
\ee
We will therefore think of the delta function constraints appearing in the square brackets Eq.~\eqref{eq:constraints} as residues, and apply the global residue theorem~\ref{thm:GRT}.

We have
\be\label{eq:constraints4}
\aOmega(\A)=\frac{1}{2}\int\frac{d^4C}{C_1C_2C_3C_4}\left[\frac{C_2^{Q_2}C_4^{Q_4}}{\underline{C_2^{Q_2}C_4^{Q_4}-C_1^{Q_1}C_3^{Q_3}}}\right]\delta^3\left(Y-\sum_{i=1}^4C_iZ_i\right)
\ee
Here we have underlined the pole corresponding to the constraint. The reader should imagine that the three remaining constraints (i.e. $Y=C\cdot Z$) also appear as poles, but for notational convenience we will simply write them as delta functions. For higher $n$, we will have additional constraints, and hence additional underlined polynomials.

Now ~\eqref{eq:constraints4} instructs us to sum over all global residues where the four constraints vanish, which in general is very hard to compute by explicit summation. However, the global residue theorem vastly simplifies the problem. Note that there are eight poles in our integral (i.e. four constraints plus the four cyclic factors $C_i$). In order to apply the theorem, we will need to group the poles into four groups. Let us group the cyclic factors with the underlined constraint, and each of the delta function constraints forms its own group. It follows therefore that
\be
\aOmega(\A)=-\frac{1}{2}\sum_{i=1}^4\int \frac{d^4C}{C_1\ldots\underline{C_i}\ldots C_4}\left[\frac{C_2^{Q_2}C_4^{Q_4}}{C_2^{Q_2}C_4^{Q_4}-C_1^{Q_1}C_3^{Q_3}}\right]\delta^3\left(Y-\sum_{i=1}^4C_iZ_i\right)
\ee
This gives us four residues corresponding to $C_i\rightarrow 0$ for each $i$. Now the positivitiy of $Q_i>0$ comes into play. Clearly, the $C_2,C_4\rightarrow 0$ residues vanish due to the appearance of $C_2^{Q_2}C_4^{Q_4}$ in the numerator. However, for the $C_1,C_3\rightarrow 0$ residues, the square bracket becomes unity, and the result is equivalent to $\Res(1)$, $\Res(3)$, respectively, as defined in Eq.~\ref{eq:ResJ}. It follows that
\be
\aOmega(\A)=\Res(1)+\Res(3)=[234]+[124]
\ee
Alternatively, we could have assumed $Q_i<0$ for each $i$. 
This would have given us another triangulation:
\be
\aOmega(\A)=\Res(2)+\Res(4)=[134]+[123]
\ee

We note that our technique can be applied to convex cyclic Newton polytopes for any $m$, provided that $n=m{+}2$. The constraint matrix is given by a sequence with alternating signs.
\be
w_p=(Q_1,-Q_2,Q_3,-Q_4,...,({-}1)^{m{+}1}Q_{m{+}2})
\ee
where the constants $Q_i$ are either all positive or all negative. If $Q_i>0$, then the canonical rational function becomes
\be
\aOmega(\A)=\sum_{\text{$i$ odd}}\Res(i)
\ee
And for $Q_i<0$, we get
\be
\aOmega(\A)=\sum_{\text{$i$ even}}\Res(i)
\ee
\end{example}

\begin{example}
Let us move on to the $n=5,m=2$ pentagon $\A$ with the Newton polytope having vertices $(0,0),(1,0),(2,1),(1,2),(0,1)$. Then the Newton polytope formula becomes
\begin{equation}
\aOmega(\A) = \int dC_1 dC_2 dC_3 dC_4 dC_5 \frac{1}{C_1 C_2 C_5} \underline{\frac{C_1^2}{C_1^2 C_3 - C_2^2 C_5}}\; \underline{\frac{C_1^2}{C_1^2 C_4 - C_2 C_5^2}} \delta^3(Y - C\cdot Z) 
\end{equation}
Now on the support of the delta function we have a two-dimensional integral, and the underline tells us to sum over all the roots of the two polynomials with $C_1 \neq 0$.  


In order to use the global residue theorem for our pentagon problem, we group the denominator factors into five groups, given by the three delta function constraints and $f_1,f_2$:
\begin{equation}
f_1 = C_2 C_5 (C_1^2 C_3 - C_2^2 C_5), \;\; f_2 = (C_1^2 C_4 - C_2 C_5^2), \;\; g = C_1^3
\end{equation}
where $g$ is the numerator.

The roots of $f_1=0,f_2=0$ certainly include the ``complicated" solutions of the Newton polytope problem, where $C_1 \neq 0$, and at these the residues are well-defined. The only subtlety is that we have other roots, where $(C_2,C_1)=(0,0)$, and where
$(C_5,C_1)=(0,0)$. These zeros are quite degenerate; the Jacobian factor vanishes and the residue is not directly well-defined. 

We can deal with this by slightly deforming the functions; we will take instead 
\begin{equation}
f_1 = C_2 C_5 ((C_1^2 - \epsilon_3^2) C_3 - C_2^2 C_5), \;\; f_2 = ((C_1^2 - \epsilon_4^2) C_4 - C_2 C_5^2), \;\; g = C_1^3
\end{equation}
where we will imagine that both $\epsilon_3,\epsilon_4$ are eventually sent to zero. 
The singular zeros we previously found, with $(C_2,C_1)=(0,0)$ and with $(C_5,C_1)=(0,0)$ will now be split into a number of non-degenerate solutions, and we will be able to use the global residue theorem. 

Let us see where these deformed zeros are located. For $f_1=0$, we can set either $C_2=0,C_5=0$ or $(C_1^2 - \epsilon_3^2) C_3 - C_2^2 C_5=0$. The first two cases are of course trivial; the roots and corresponding residues, as we take $\epsilon_3,\epsilon_4 \to 0$, are 
\begin{eqnarray}
C_2= 0, & C_4 = 0, & {\rm Res} = [135] \\
C_5=0, & C_4 = 0, & {\rm Res} = [123] \\
C_2 = 0, & C_1 = \pm \epsilon_4, & \sum_\pm {\rm Res} = \frac{\epsilon_4^2}{\epsilon_4^2 - \epsilon_3^2} [345] \\
C_5 = 0, & C_1 = \pm \epsilon_4, & \sum_\pm {\rm Res} = \frac{\epsilon_4^2}{\epsilon_4^2 - \epsilon_3^2} [234]
\end{eqnarray}
Note interestingly that the residues for the last two cases depend on the ratio $\epsilon_3/\epsilon_4$ even as $\epsilon_{3,4} \to 0$. 
We next have to look at the solutions to 
\begin{equation}
(C_1^2 - \epsilon_3^2) C_3 - C_2^2 C_5 = 0, \, (C_1^2 - \epsilon_4^2) C_4 - C_2 C_5^2=0
\end{equation}
We are interested in the solutions where $C_1$ is close to zero; which here means that either $C_1$ is close to $\epsilon_3$ or $C_1$ is close to $\epsilon_4$. More formally, we can set $\epsilon_{3,4} = \epsilon E_{3,4}$ and ask what the solutions look like as $\epsilon \to 0$ with $E_{3,4}$ fixed.  It is then easy to see that if $C_1^2 \to \epsilon_3^2$, we must have that $C_5$ is non-zero and $C_2 \to (\epsilon_3^2 - \epsilon_4^2) C_4/C_5^2$, while if $C_1^2 \to \epsilon_4^2$, we must have that $C_2$ is non-zero and $C_5 \to (\epsilon_4^2 - \epsilon_3^2) C_3/C_2^2$. It is trivial to compute the residues in these cases, and we find
\begin{eqnarray}
C_1 \to \pm \epsilon_3, & C_2 \to  (\epsilon_3^2 - \epsilon_4^2) C_4/C_5^2, & \sum_\pm {\rm Res} = \frac{\epsilon_3^2}{\epsilon_4^2 - \epsilon_3^2} [345] \\ 
C_1 \to \pm \epsilon_4, & C_5 \to (\epsilon_4^2 - \epsilon_3^2) C_3/C_2^2, & \sum_\pm {\rm Res} = \frac{\epsilon_4^2}{\epsilon_3^2 - \epsilon_4^2} [234]
\end{eqnarray} 
By the global residue theorem, the sum over all these residues gives us $\aOmega_(\A)$, so we find 
\begin{eqnarray}
\aOmega(\A) &=& [123] + [135] + [345] \left(\frac{\epsilon_4^2}{\epsilon_4^2 - \epsilon_3^2} - \frac{\epsilon_3^2}{\epsilon_4^2 - \epsilon_3^2} \right) + [234] \left(\frac{\epsilon_4^2}{\epsilon_4^2 - \epsilon_3^2} - \frac{\epsilon_4^2}{\epsilon_4^2 - \epsilon_3^2} \right) \\ & = & [123] + [135] + [345] 
\end{eqnarray}
which is a standard triangulation of the pentagon. 
 
It would be interesting to carry out the analog of this analysis for the general Newton polytope expression associated with the canonical form of any polytope. The same resolution of the the singular roots will clearly be needed, and it will be interesting to see how and which natural class of triangulations of the polytope emerges in this way.  

\end{example}

\subsubsection{Generalized polytopes on the projective plane}

We now consider push-forwards onto generalized polytopes on the projective plane, particularly the pizza slice.

\begin{example} 
We construct the canonical form of the pizza slice $\mathcal{T}(\theta_1,\theta_2)$ from Example \ref{ex:pizza} via push-forward.
Let $z_1 = \tan(\theta_1/2)$ and $z_2 = \tan(\theta_2/2)$.
Consider the morphism $\Phi:(\P^1 \times \P^1,[z_1,z_2]\times[0,\infty])\rightarrow \mathcal{T}(\theta_1,\theta_2)$ given by:
\be
(1,x,y)=\Phi(z,t) &\deff& \left (1,\frac{1-z^2}{(1+z^2)},\frac{2z}{(1+z^2)} \right)+t Z_* \\ &=& \left(1,\frac{1-z^2}{(1+z^2)(1+t)},\frac{2z}{(1+z^2)(1+t)}\right)
\ee
where $z,t$ are coordinates on the two $\P^1$'s, and $Z_*^I \deff (1,0,0)$ is the ``tip" of the pizza.  Note that the equations are projective.  As $z$ varies in $[z_1,z_2]$, the point $(\frac{1-z^2}{(1+z^2)},\frac{2z}{(1+z^2)})$ sweeps out the circular arc $(\cos(\theta),\sin(\theta))$ where $\theta$ varies from $\theta_1$ to $\theta_2$.  The variable $t$ acts like a radial coordinate that goes from $0$ (the unit arc) to $\infty$ (the tip).

For a generic point $(x,y)$ there are two roots in $\Phi^{-1}(1,x,y)$, say $(z,t)$ and $(-1/z,-t-2)$.  We compute
\be
&& \Phi_*\left(\frac{(z_2-z_1)}{(z_2-z)(z-z_1)} dz \frac{1}{t} dt\right) \\
&=& \frac{(1+t)^3}{2} \left(\frac{(1+z^2)(z_2-z_1)}{(z_2-z)(z-z_1)}\frac{1}{t} - \frac{(1+z^2)(z_2-z_1)}{(zz_2+1)(1+zz_1)}\frac{1}{(t+2)}\right) dx dy \label{eq:pushz}
\ee 
We have the two identities
\be \label{eq:id1}
\frac{1}{2}\frac{(1+z^2)(z_2-z_1)}{(z_2-z)(z-z_1)} &=& \frac{\sin(\theta_2-\theta_1)+\sin(\theta-\theta_1)+\sin(\theta_2-\theta)}{2\sin(\theta-\theta_1)\sin(\theta_2-\theta)} \\ \label{eq:id2}
-\frac{1}{2}\frac{(1+z^2)(z_2-z_1)}{(zz_2+1)(1+zz_1)}&=& \frac{\sin(\theta_2-\theta_1)-\sin(\theta-\theta_1)-\sin(\theta_2-\theta)}{2\sin(\theta-\theta_1)\sin(\theta_2-\theta)}
\ee
which are related by a shift $ \theta\rightarrow \theta+\pi$.
Substituting into \eqref{eq:pushz} and using $1-x^2-y^2=t(2+t)/(1+t)^2$ gives
\be \label{eq:pizzafinal}
\frac{\left[\sin(\theta_2-\theta_1)+(-x\sin\theta_1+y\cos\theta_1)+(x\sin\theta_2-y\cos\theta_2)\right]}{(1-x^2-y^2)(-x\sin\theta_1+y\cos\theta_1)(x\sin\theta_2-y\cos\theta_2)}dxdy
\ee
which of course is the canonical form $\Omega(\mathcal{T}(\theta_1,\theta_2))$. Note that sine summation formulas were used to get the denominator factors.

The push-forward calculation can also be done with more intuitive angular coordinates, using an infinite degree map similar to Example~\ref{ex:pushsegment}.  We take the map $\Psi:(\P^1 \times \P^1,[\theta_1,\theta_2]\times[0,\infty])\rightarrow \mathcal{T}(\theta_1,\theta_2)$ given by:
\be
(1,x,y)=\Psi(1,\theta,t)\deff (1,\cos\theta,\sin\theta)+t Z_*=\left(1,\frac{\cos\theta}{1+t},\frac{\sin\theta}{1+t}\right)
\ee The variable $\theta$ acts as an angular coordinate between $\theta_1$ and $\theta_2$.  For any point $(x,y)=(\cos\theta,\sin\theta)/(1{+}t)$, there are two sets of roots given by:
\be
(\theta_n,t_n)&= &(\theta+2\pi n, t)\\
(\theta_n',t_n')&=&(\theta+\pi (2n+1),-t{-}2)
\ee
where $n\in\mathbb{Z}$. Summing over all roots in the push-forward gives
\be\label{eq:pushpizza}
\Psi_*\left(\frac{(\theta_2-\theta_1)d\theta}{(\theta_2-\theta)(\theta-\theta_1)}\frac{dt}{t}\right)=\frac{(1+t)^3}{t}\sum_{n\in\mathbb{Z}}\frac{(\theta_2-\theta_1)}{(\theta_2-\theta_n)(\theta_n-\theta_1)}dx dy \\
+\frac{(1+t)^3}{t+2}\sum_{n\in\mathbb{Z}}\frac{(\theta_2-\theta_1)}{(\theta_2-\theta_n')(\theta_n'-\theta_1)}dx dy
\ee
Ignoring the rational $t$ factors, the summations give:
\be \label{eq:sum1}
\sum_{n\in\mathbb{Z}}\frac{(\theta_2-\theta_1)}{(\theta_2-\theta_n)(\theta_n-\theta_1)}&=&\frac{\sin(\theta_2-\theta_1)+\sin(\theta-\theta_1)+\sin(\theta_2-\theta)}{2\sin(\theta-\theta_1)\sin(\theta_2-\theta)}\\ \label{eq:sum2}
\sum_{n\in\mathbb{Z}}\frac{(\theta_2-\theta_1)}{(\theta_2-\theta_n')(\theta_n'-\theta_1)}&=&\frac{\sin(\theta_2-\theta_1)-\sin(\theta-\theta_1)-\sin(\theta_2-\theta)}{2\sin(\theta-\theta_1)\sin(\theta_2-\theta)}
\ee
which are exactly \eqref{eq:id1} and \eqref{eq:id2}.  Thus we again obtain \eqref{eq:pizzafinal} as the canonical form of the pizza slice.  The summation identities \eqref{eq:sum1} and \eqref{eq:sum2} can also be interpreted as the push-forward summation for the infinite degree map $\phi: \theta \mapsto z = \tan(\theta/2)$, and we have $\Psi = \Phi \circ (\phi \times \id)$, where $\id: t \mapsto t$ is the identity map on the $t$ coordinate.

%
\end{example}

\subsubsection{Amplituhedra}\label{sec:Amplituhedron}

In this section, we conjecture that the \defn{Amplituhedron form} (i.e. the canonical form of the Amplituhedron) can be formulated as a push-forward from the standard simplex. While this is nothing more than a direct application of our favorite Heuristic~\ref{heuristic}, we wish to emphasize the important open problem of constructing morphisms needed for the push-forward. In this section, we let $\A\deff\A(k,n,m; l^L)$.

\begin{conjecture}
Given a morphism $\Phi:\Delta^D\rightarrow \A(k,n,m;l^L)$ from positive coordinates to the Amplituhedron, the Amplituhedron form is given by the push-forward
\be
\Omega(\A(k,n,m;l^L))=\Phi_*\left(\prod_{a=1}^D\frac{dX_a}{X_a}\right)
\ee
where $D$ is the dimension of the Amplituhedron.
\end{conjecture}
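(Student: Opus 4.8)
The plan is to recognize this conjecture as the instance of Heuristic~\ref{heuristic} obtained by taking $(X,X_{\geq 0}) = (\P^D, \Delta^D)$, whose canonical form is $\prod_{a=1}^D dX_a/X_a$ by \eqref{eq:dlog_Delta}, and $(Y,Y_{\geq 0})$ the Amplituhedron $\A(k,n,m;l^L)$ sitting inside its Zariski closure. Granting Conjecture~\ref{conj:Amplituhedron}, so that $\A$ is a positive geometry with a well-defined canonical form $\Omega(\A)$ characterized by axiom (P2), it then suffices to show that $\omega := \Phi_*\!\left(\prod_{a=1}^D dX_a/X_a\right)$ satisfies the same characterization: that $\omega$ is a rational top form whose only poles lie along the boundary components of $\A$, and that $\Res_C \omega = \Omega(C, C_{\geq 0})$ for each such boundary component $C$. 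Uniqueness in (P2) then forces $\omega = \Omega(\A)$. If one does not wish to assume Conjecture~\ref{conj:Amplituhedron}, the same argument, run as an induction, simultaneously establishes that $\A$ is a positive geometry and identifies its canonical form.

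The induction is on $D = \dim \A$, and the inductive step uses the principle that push-forward commutes with residue (Proposition~\ref{prop:KR}). The facets of $\Delta^D$ are the hyperplane sections $X_a = 0$; each is mapped by $\Phi$ either onto a boundary component $D'$ of $\A$ or to a lower-dimensional subvariety, in which case it contributes no pole. For a facet $C$ of $\Delta^D$ dominating a boundary component $D'$ of $\A$, one first arranges, after a suitable blowup of the indeterminacy locus of $\Phi$, that $\Phi$ restricts to a morphism of positive geometries $(C, C_{\geq 0}) \to (D', D'_{\geq 0})$; since $C$ is itself a standard simplex of dimension $D{-}1$ up to isomorphism, the inductive hypothesis gives $\Phi_*\!\left(\Res_C \prod_a dX_a/X_a\right) = \Omega(D', D'_{\geq 0})$, and Proposition~\ref{prop:KR} then yields $\Res_{D'} \omega = \Omega(D', D'_{\geq 0})$. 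That $\omega$ has no poles on the interior $\A_{>0}$ is exactly the analytic argument of Section~\ref{sec:simplexPush}: near a would-be interior pole the Jacobian of $\Phi$ vanishes, the preimages pair up with opposite-sign contributions, and these cancel in the push-forward sum. The absence of poles elsewhere on $\A$ follows once $\Phi$ is extended to an honest morphism by blowup, so that $\omega$ extends to a rational form with polar divisor supported on the boundary.

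The main obstacle is twofold, and is precisely the content of the open problems flagged in the text. First, the existence of the morphism $\Phi: \Delta^D \to \A(k,n,m;l^L)$ is not established in general: explicit $\Delta$-like parametrizations are known cell-by-cell from BCFW recursion and momentum-twistor diagrams (Sections~\ref{sec:BCFW} and~\ref{sec:1loop}), but assembling these into a single global morphism $\Phi$, and controlling its degree and indeterminacy locus, is exactly what is missing. Second, even granting $\Phi$, the blowup analysis requires a good understanding of the boundary stratification of the Amplituhedron --- a complete stratification extending the positroid stratification is unknown for $L>0$ --- so one cannot a priori guarantee that $\Phi$ can be resolved to a morphism behaving compatibly with every boundary stratum, nor that the hypotheses of Proposition~\ref{prop:KR} hold along each boundary. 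A reasonable intermediate target is to prove the conjecture in the cases where the triangulation of $\A$ is under control (the $m=1,2$ tree Amplituhedra of Section~\ref{sec:Amplituhedron12}, and the $1$-loop MHV Amplituhedron $\A(2,n,2)$), where $\Phi$ can be assembled from the known BCFW cell parametrizations, the spurious boundaries are seen to be collapsed or to cancel as in Section~\ref{sec:poles}, and the push-forward can be verified directly.
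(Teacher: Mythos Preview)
This statement is a \emph{conjecture} in the paper and is not proved there. The paper's own treatment (Section~\ref{sec:Amplituhedron}) says explicitly that the conjecture ``is nothing more than a direct application of our favorite Heuristic~\ref{heuristic}'', emphasizes that constructing the requisite morphisms $\Phi$ is an open problem, and then gives a few explicit low-$n$ examples (for $k=2$, $m=2$, $n=4,5,6$) verified numerically or by hand. The heuristic argument you sketch --- push-forward commutes with residue (Proposition~\ref{prop:KR}), induct on dimension, cancel interior poles via the bifurcation argument of Section~\ref{sec:simplexPush} --- is exactly the outline the paper gives immediately after stating Heuristic~\ref{heuristic} in Section~\ref{sec:morphism}. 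Your identification of the two obstacles (existence of a global $\Phi$, and lack of a boundary stratification of the loop Amplituhedron needed to control the blowup) is also what the paper flags. So your proposal and the paper are in complete agreement: both present this as a strategy contingent on unresolved inputs, not as a proof.

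One small correction: you suggest that if one does not assume Conjecture~\ref{conj:Amplituhedron}, the same induction ``simultaneously establishes that $\A$ is a positive geometry''. This is not quite right as stated. The inductive step requires knowing that the boundary components $(D',D'_{\geq 0})$ of $\A$ are themselves images of standard simplices under morphisms, so that the inductive hypothesis applies to them; but the facets of $\Delta^D$ need not map to \emph{all} boundary components of $\A$ (a single morphism $\Phi$ typically hits only some of the physical boundaries directly, with others arising only after resolving the indeterminacy locus), and there is no a priori reason the restricted map $C \to D'$ is again a morphism from a standard simplex. The paper's actual strategy for showing $\A$ is a positive geometry (Section~\ref{sec:BCFW}) is different: triangulate $\A$ by BCFW cells, each of which is $\Delta$-like, and invoke \eqref{eq:triangulationclaim}. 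So your bootstrap is more optimistic than what the available tools support.
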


Before providing explicit examples, we reformulate our conjecture in terms of the {\sl canonical rational function} $\aOmega(\A)(\mathcal{Y})$ of the Amplituhedron (see~\eqref{eq:ampspace} for the notation $\mathcal{Y}$), which was first mentioned in~\ref{eq:simplexRationalFunc} for the $k=1$ Amplituhedron, and extended to the loop Amplituhedron in Appendix~\ref{app:projform}. We will also refer to $\aOmega(\A)$ as the \defn{amplitude}, motivated by the physical interpretation discussed in Sec~\ref{sec:Amplituhedron}. Keep in mind, of course, that the physical interpretation only exists for the {\sl physical} Amplituhedron, and for $L>0$ the canonical rational function is strictly speaking the \defn{$L$-loop integrand} of the amplitude.

\begin{conjecture}
Given a morphism $\Phi:\Delta^D\rightarrow \A(k,n,m;l^L)$ from positive coordinates to the Amplituhedron, the amplitude is given by:
\be\label{pushamp}
\aOmega(\A)(\mathcal{Y}) = \int \frac{d^DX}{\prod_{a=1}^{D}X_a}\delta^D(\mathcal{Y}; \Phi(X))
\ee
for any point $\mathcal{Y}\in\mathcal{A}$. For computational purposes, $\mathcal{Y}$ and $\Phi(X)$ are represented as matrices modded out by a left group action, as discussed below~\eqref{eq:loopMatrix}.
\end{conjecture}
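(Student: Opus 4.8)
The plan is to reduce this statement to the preceding conjecture, $\Omega(\A)=\Phi_*\bigl(\prod_{a=1}^D dX_a/X_a\bigr)$, and ultimately to the validity of Heuristic~\ref{heuristic} for the morphism $\Phi$. First I would verify that \eqref{pushamp} is nothing but the canonical-rational-function shadow of the push-forward. By the definition \eqref{eq:delta4} of the weight-$(-D-1,0)$ delta function (in the matrix model of $\A$ described below \eqref{eq:loopMatrix}), the integral $\int \tfrac{d^DX}{\prod_a X_a}\,\delta^D(\mathcal Y;\Phi(X))$ localizes onto the finite fibre $\Phi^{-1}(\mathcal Y)$, and integrating out the delta functions contributes exactly the reciprocal Jacobian of $\Phi$ at each preimage — with no absolute values and all complex roots summed, as stipulated in Section~\ref{sec:push}. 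Matching this against the definition \eqref{eq:pushforward} of $\Phi_*$ applied to $\Omega(\Delta^D)=\prod_a dX_a/X_a$, one reads off that the right side of \eqref{pushamp} equals $\Phi_*(\Omega(\Delta^D))$ divided by the canonical measure on $\A$; hence it equals $\aOmega(\A)(\mathcal Y)$ precisely when $\Phi_*(\Omega(\Delta^D))=\Omega(\A)$. So the two conjectures are equivalent and the real content is the push-forward identity.

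Second, I would establish $\Phi_*(\Omega(\Delta^D))=\Omega(\A)$ by the inductive argument sketched after Heuristic~\ref{heuristic}. Assume (this is Conjecture~\ref{conj:Amplituhedron}) that $\A$ and, recursively, all its boundary components are positive geometries. Induct on $D$; the case $D=0$ is immediate. For $D>0$, the morphism condition together with the analytic argument of Section~\ref{sec:simplexPush} (critical points of $\Phi$ on $\Int(\Delta^D)$ resolve into cancelling pairs) shows $\Phi_*(\Omega(\Delta^D))$ has no poles in $\Int(\A)$. For each facet $\mathcal D$ of $\A$ onto which some facet $C$ of $\Delta^D$ maps — after a blow-up of $\Phi$, if necessary, so that $\Phi$ is defined along $C$ and carries $C_{\geq 0}$ into $\partial\A_{\geq 0}$ — the restriction $(C,C_{\geq 0})\to(\mathcal D,\mathcal D_{\geq 0})$ is again such a morphism, and Proposition~\ref{prop:KR} (push-forward commutes with taking residues) gives
\[
\Res_{\mathcal D}\Phi_*(\Omega(\Delta^D)) \;=\; \Phi_*\bigl(\Res_C\Omega(\Delta^D)\bigr) \;=\; \Phi_*\bigl(\Omega(C,C_{\geq 0})\bigr) \;=\; \Omega(\mathcal D,\mathcal D_{\geq 0}),
\]
the last equality by the inductive hypothesis; facets of $\Delta^D$ whose image $\Phi$ collapses in dimension contribute no pole. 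Thus $\Phi_*(\Omega(\Delta^D))$ satisfies the residue characterization (P2) of $\Omega(\A)$, and by uniqueness they coincide.

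The hard part is everything this argument presupposes. Foremost, one must actually exhibit the morphism $\Phi:(\P^D,\Delta^D)\to(\A,\A_{\geq 0})$: a single global rational parametrization of the Amplituhedron by the standard simplex restricting to an orientation-preserving diffeomorphism on the interior. BCFW recursion (Section~\ref{sec:BCFW}) furnishes $\Delta$-like parametrizations of the individual BCFW cells, but gluing them into one morphism from $\Delta^D$ — and controlling its ramification along the boundary strata well enough to justify the blow-up step above — is precisely the open problem the text flags. Second, Conjecture~\ref{conj:Amplituhedron} itself (that $\A$ and all its boundaries are positive geometries, with forms that exist and have unit leading residues) is unproven in general. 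Third, the push-forward can develop new zeros (cf. the example in Section~\ref{sec:standardsimplex}) and, a priori, higher-order poles near $\partial\A$; confirming the former are harmless and ruling the latter out requires understanding the fibres of $\Phi$ over the boundary. Absent a general proof, the realistic route is to verify \eqref{pushamp} in families where $\Delta$-like coordinates are explicit — the $m=1,2$ tree Amplituhedra of Section~\ref{sec:Amplituhedron12} and the low-loop cases — carrying out the sum over roots directly or via the global residue theorem as in Section~\ref{sec:sumOverRoots}.
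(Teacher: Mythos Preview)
The statement you are addressing is explicitly labeled a \emph{conjecture} in the paper, and the paper offers no proof of it. What the paper does is exactly your first paragraph: it presents \eqref{pushamp} as a reformulation of the preceding push-forward conjecture $\Omega(\A)=\Phi_*\bigl(\prod_a dX_a/X_a\bigr)$, observing that the delta-function integral implements the sum over roots with the reciprocal Jacobian, i.e.\ the push-forward stripped of the standard measure. The paper then abandons generality and instead verifies the formula in explicit low examples (the $k=1$, $m=4$ cyclic polytope via Newton-polytope maps, and the $k=2$, $m=2$ Amplituhedron for $n=4,5,6$ via hand-crafted morphisms, some checked only numerically).

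Your second and third paragraphs go further than the paper does: you outline the inductive residue argument behind Heuristic~\ref{heuristic} and then honestly enumerate the obstructions --- the existence of a single global morphism $\Phi$ (which the paper calls ``an absolutely remarkable fact'' even in the cases where one was found by guesswork), the unproven Conjecture~\ref{conj:Amplituhedron}, and the boundary behavior of $\Phi$. This is a correct assessment; there is no gap in your reasoning so much as a frank acknowledgment that the statement is genuinely open. Your proposal and the paper are aligned in spirit, but you have been more explicit about what a proof would require, whereas the paper simply records the conjecture and supplies supporting evidence.
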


The delta function $\delta^D(\mathcal{Y},\mathcal{Y}')$ is the unique (up to an overall normalization) delta function on $G(k,k{+}m;l^L)$ that is invariant under the group action $\mathcal{G}(k;\k)$ (defined below~\ref{eq:loopMatrix}) on $\mathcal{Y}'$, and scales inversely as the measure~\eqref{eq:measure} under the same group action on $\mathcal{Y}$. The inverted scaling ensures that the canonical form is locally invariant under the action.


\begin{example}
For $k=1$ and $m=4$, the {\sl physical} tree Amplituhedron $\A$ is a convex cyclic polytope with vertices $Z_i\in\P^4(\R)$. Morphisms $\Phi: (\P^4,\Delta^4) \rightarrow (\P^4,\A)$ are given by convex cyclic Newton polytopes from \eqref{eq:Phi}, and the amplitude is given by:
\be
\aOmega(\A)(Y)=\int \frac{d^4X}{X_1X_2X_3X_4}\delta^4(Y;\Phi(X))
\ee
where $Y\in \P^4$ and the delta function is given in~\eqref{eq:delta4} with $m=4$.
\end{example}

We now provide some examples for $k=2,m=2,l=2,L=0$. While we do not have a complete construction of such morphisms for all $n$, we did find a few non-trivial examples for small $n$. We stress that the existence of such morphisms is an absolutely remarkable fact. We also speculate that a complete understanding of morphisms to the tree Amplituhedron may provide insight on extending toric varieties to the Grassmannian, as morphisms to the polytope were given by projective toric varieties.

Let us begin by clarifying the push-forward computation, which for the present case is given by:
\be\label{eq:pushMHV}
\aOmega(\A)=\int\frac{d^4X}{X_1X_2X_3X_4}\delta^{4}(Y;\Phi(X))
\ee
where
\be
\delta^4(Y;Y')=\frac{1}{4}\int \frac{d^{2\times 2}\rho}{(\det{\rho})^2}\delta^{2\times 4}(Y-\rho\cdot Y')
\ee
is the delta function that imposes the constraint $Y=Y'$ for any pair of $2\times 4$ matrices $Y,Y'$, where the tilde indicates that the two sides are only equivalent up to an overall $\GL(2)$ transformation. The $\rho$ is a $2\times 2$ matrix that acts as a $\GL(2)$ transformation on $Y'$ from the left, and the $d^{2\times 2}\rho/(\det \rho)^2$ measure is $\GL(2)$ invariant in $\rho$ both from the left and the right. We see therefore that the delta function has $\GL(2)$ weights $(-4, 0)$ in $(Y,Y')$, as expected.

\begin{example}\label{ex:24}
Consider the Amplituhedron for $(k,m,n,l,L)=(2,2,4,2,0)$, which is simply
\be
\A = \{C \cdot Z \in G(2,4)\mid C\in G_{\geq 0}(2,4)\}
\ee
where $Z$ is a $4\times 4$ matrix with positive determinant. Since $Z$ is non-singular, $\A$ is isomorphic to $G_{\geq 0}(2,4)$. A (degree-one) morphism from $(\P^4,\Delta^4)$ to the Amplituhedron is therefore:
\be
C(X) &=& \begin{pmatrix}
1 & X_1 & 0 & -X_4\\
0 & X_2 & 1 & X_3
\end{pmatrix}\\
\Phi(X)&=&C(X) \cdot Z
\ee
sending the interior $\Int(\Delta^4)$ (where $X_i>0$ for $i=1,2,3,4$) to the interior of the Amplituhedron.

Substituting into~\eqref{eq:pushMHV} gives:
\be
\aOmega(\A)=\frac{1}{4}\frac{\lb 1234\rb^2}{\lb Y 12\rb\lb Y 23\rb\lb Y 34\rb\lb Y 14\rb}
\ee
\end{example}

\begin{example}\label{ex:25}
We now move on to the first non-trivial example beyond the polytope. Consider the same case as Example~\ref{ex:24} but now with $n=5$. The Amplituhedron is
\be
\A = \{C \cdot Z\in G(2,4)\mid C\in G_{\geq 0}(2,5)\}
\ee
with all the maximal ordered minors of $Z$ positive. The interior of this Amplituhedron has no obvious diffeomorphisms with the interior $\Int(\Delta^4)$ of the $4$-simplex. But we have managed to stumble across a few lucky guesses, such as the following:
\be\label{eq:pushMHV5}
C(X) &=& \begin{pmatrix}
1 & X_1 & X_1 X_2 & X_2 & 0\\
0 & X_1 X_3 & X_1 X_2(X_3{+}X_3X_4) & X_2(X_3{+}X_3X_4{+}X_4) & 1
\end{pmatrix}\\
\Phi(X)&=&C(X)\cdot Z
\ee
for $X_1,X_2,X_3,X_4>0$. We have verified numerically that the push-forward given by~\eqref{eq:pushMHV5} gives the correct 5-point 1-loop integrand. A careful analytic argument can be provided to prove that the restriction $\Int(\Delta^4)\isom \Int(\A)$ is indeed a diffeomorphism.
\end{example}

\begin{example}
We also provide an example for $n=6$. The setup is the same as Example \ref{ex:25}, with the $C(X)$ matrix given by
\be
C(X) &=& \begin{pmatrix}
X_1^3X_2^2 & X_3{+}X_1^2X_2^2X_3 & X_1^5 X_2^2 X_3 X_4^2 & X_1^3 X_3 X_4^2 & X_1 & 0\\
-X_1^4 X_2^2 & -X_1 X_3 & 0 & X_1^2X_3X_4^2 & 1{+}X_1^2X_4^2 & X_1^5X_2^2X_4^2
\end{pmatrix}\;\;
\ee
The push-forward was verified numerically. We challenge the reader to prove that this map restricts to a diffeomorphism $\Int(\Delta^4)\isom\Int(\A)$.
\end{example}

\subsection{Integral representations}
We now present integral representations (e.g. volume integrals, contour integrals) of various canonical forms.

\subsubsection{Dual polytopes}
\label{sec:dualpolytopeform}

Consider a convex polytope $(\P^m, \A)$ and a point $Y$ not along any boundary component. We argue that the canonical rational function $\aOmega(\A)$ at $Y$ is given by the volume of the dual polytope $\A_Y^*$ (defined in Section \ref{sec:dual}) under a $Y$-dependent measure. More precisely,
\be \label{eq:volume}
\aOmega(\A)(Y)=\Vol(\A_Y^*) \deff \frac{1}{m!}\int_{W\in \A_Y^*} \frac{\lb W d^m W \rb}{(Y\cdot W)^{m+1}}
\ee
In order for this integral to be well-defined on projective space, the integrand must be invariant under \defn{local}  $\GL(1)$ transformations $W\rightarrow W'=\alpha(W)W$, which we proved in~\eqref{eq:local}. Moreover, observe that by construction of $\A_Y^*$, we have $Y\cdot W>0$ for every point $W\in\Int(\A_Y^*)$, which is important for the integral to converge. However, the overall sign of the integral is dependent on the orientation of the dual. 
We say therefore that the volume is \defn{signed}.

Now let us prove this claim for simplices by explicit computation:

\begin{example}
Let $Y\in\Int(\Delta)$ for some simplex. The volume of the dual simplex $\Delta_Y^*$ with vertices $W_1,\ldots,W_{m{+}1}$ (so that $Y\cdot W_i>0$ for each $i$) can be computed using a Feynman parameter technique. Since the form is locally $\GL(1)$ invariant, we can gauge fix the integration to $W=W_1+\alpha_1W_2+\cdots+\alpha_m W_{m{+}1}$ and integrate over $\alpha_i>0$. This gives
\be\label{eq:simplexVol}
\Vol(\Delta^*_Y)&=&\int \frac{d^m\alpha\lb W_1 \cdots W_{m{+}1}\rb}{((Y \cdot W_1)+\alpha_1(Y\cdot W_2)+ \cdots +\alpha_{m}(Y \cdot W_{m{+}1}))^{m+1}}\\
&=& \frac{\lb W_1 \cdots W_{m{+}1}\rb}{m!(Y\cdot W_1) \cdots (Y \cdot W_{m{+}1})}
\ee
agreeing with \eqref{eq:simplex}.

Note that the result is independent of the sign of the vertices $W_i$. If, for instance, we move $Y$ outside $\Delta$ so that $Y\cdot W_1<0$ but $Y\cdot W_i> 0$ for every $i\neq 1$, then we would have needed to use $-W_1$ in the integration, but the result would still have the same form.
\end{example}

We now argue that the formula holds for an arbitrary convex polytope based on three observations:
\begin{itemize}
\item

``Dualization of polytopes commutes with triangulation"~\eqref{eq:commute}. This means that
\be \label{eq:filliman}
\A=\sum_i\A_{i}\;\;\;\;\Rightarrow\;\;\;\; \A_Y^*=\sum_i\A_{iY}^*
\ee
For triangulation by simplices, this formula is known as {\it Filliman duality} \cite{Fil}.

\item

The signed nature of the volume is crucial, because it implies \defn{triangulation independence}. This means that
\be
{\A_Y^*=\sum_i\A_{iY}^*}\;\;\;\;\Rightarrow\;\;\;\; {\Vol(\A_Y^*)=\sum_i\Vol(\A_{iY}^*)}
\ee

\item
The canonical rational function is triangulation independent (Section \ref{sec:triangulations}):
\be
\A=\sum_i\A_i\;\;\;\;\Rightarrow\;\;\;\; \aOmega(\A)=\sum_i\aOmega(\A_i)
\ee
\end{itemize}

Combining these three statements for a {\sl signed} triangulation by simplices $\A_i =\Delta_i$, it follows that
\be\label{eq:dualArgument}
\Vol(\A_Y^*)=\sum_i\Vol(\Delta_{iY}^*)=\sum_i\aOmega(\Delta_i)=\aOmega(\A)
\ee
which is the desired claim.

We have not given a proof of the first observation \eqref{eq:filliman}.  In Section~\ref{sec:laplace} we will give an independent proof of the volume formula~\eqref{eq:volume}.  Then by the second and third assumptions, we find that $\Vol(\A_Y^*)=\sum_i\Vol(\A_{iY}^*)$ for every $Y$, which implies that $\A_Y^*=\sum_i\A_{i}^*$, thus deriving \eqref{eq:filliman}.  Alternatively, we can also say that the the volume formula combined with the first two assumptions implies the triangulation independence of the canonical rational function.

We now wish to comment on the distinction between ``physical" poles and ``spurious" poles (see Section~\ref{sec:poles}) in the context of the volume formula, first pointed out by Andrew Hodges as an interpretation of spurious poles appearing in BCFW recursion of NMHV tree amplitudes~\cite{hodges}. 

Given a triangulation of $\A$ by polytopes $\A_i$ with mutually non-overlapping interiors, spurious poles appear along boundary components of the triangulating pieces that do not belong to the boundary components of $\A$. From the dual point of view, the duals $\A_{iY}^*$ form an {\sl overlapping} triangulation of $\A_Y^*$, and a spurious pole corresponds to a subset of volume terms $\Vol(\A_{iY}^*)$ going to infinity individually, but whose sum remains finite. Geometrically, the signed volumes overlap and cancel. More details are given in~\cite{hodges}.

However, suppose instead that $Y\in\Int(\A)$ and the duals $\A_{iY}^*$ have non-overlapping interiors, then $\Vol(\A_{iY}^*)\le \Vol(\A_Y^*)$. It follows that all the volume terms are finite on $\Int(\A)$, and there are no spurious poles. Of course, the sum is independent of $X$ since the integral is surface-independent. This is called a \defn{local triangulation}. An example of a local triangulation of cyclic polytopes in $\P^4(\R)$ is given in~\cite{ArkaniHamed:2010gg}.

On a final note, we argue that these simplicial volumes are identical to Feynman parameters~\cite{peskin} appearing in the computation of loop scattering amplitudes. 
A general Feynman parameter formula takes the following form.
\be
\frac{1}{\prod_{i=0}^{m}A_i}=m!\int_{x\in I^{m{+}1}} d^{m{+}1}x\frac{\delta(1-\sum_{i=0}^{m}x_i)}{\left(\sum_{i=0}^{m}x_i A_i\right)^{m{+}1}}
\ee
where $I^{m{+}1}$ is the unit cube given by $0<x_i<1$ for all $i$. Integrating over $0<x_{0}<1$ yields
\be
\frac{1}{\prod_{i=0}^{m}A_i}=m!\int_{\substack{x\in I^m\\0<\sum_{i=1}^{m}x_i<1}}d^mx\frac{1}{\left(A_{0}+\sum_{i=1}^mx_i(A_i-A_{0})\right)^{m{+}1}}
\ee

Now change variables $x_i\rightarrow \alpha_i$ so that
\be
x_i=\frac{\alpha_i}{1+\sum_{j=1}^m\alpha_i}
\ee
and $\alpha_i>0$ for all $i$ is the equivalent region of integration. The Jacobian for the change of measure is
\be
d^mx = \frac{d^m\alpha}{(1+\sum_{i=1}^m\alpha_i)^{m{+}1}}
\ee
Putting everything together, we get
\be
\frac{1}{m!}\frac{1}{\prod_{i=0}^{m}A_i}=\int_{\alpha_i>0}d^m\alpha\frac{1}{(A_{0}+\sum_{i=1}^m\alpha_iA_i)^{m{+}1}}
\ee
We see that the right hand side is very reminiscent of the volume formula~\eqref{eq:volume} for a (dual) simplex. 
The lesson here is that loop integrals can be reinterpreted as polytope volumes, and Feynman parameters are coordinates on the interior of the polytope over which we integrate. 

\subsubsection{Laplace transforms}
\label{sec:laplace}
For a convex projective polytope $\A \subset \P^m(\R)$, we let $\bar \A \subset \R^{m+1}$ be the cone over $\A$, so that $\bar \A \deff \{Y\in \mathbb{R}^{m+1}\mid Y \in \A\}$.  Similarly, one has the cone of the dual $\bar \A_Y^* \subset \R^{m+1}$ in the dual vector space. For simplicity, we will only consider the dual for which $Y\in \Int(\A)$, in which case $\A^*\deff\A_Y^*$.

We argue that the canonical rational function at any point $Y\in\Int(\A)$ is given by a Laplace transform over the cone of the dual:
\be\label{eq:laplace}
 \aOmega(\bar\A)(Y) = \frac{1}{m!}\left(\int_{W \in \bar \A_Y^*} e^{-W \cdot Y} d^{m+1}W \right).
\ee
Such integral formulae have been considered in~\cite{Brion-Vergne}. Extensions to the Grassmannian are discussed in~\cite{ferroVolume}

We now show the equivalence of~\eqref{eq:laplace} with the dual volume~\eqref{eq:volume}. We begin with the Laplace transform, and change variables by writing $W=\rho \widehat{W}$ so that $\widehat{W}$ is a unit vector and $\rho>0$ is the Euclidean norm of $W$. It follows that
\be
d^{m{+}1}W=\frac{1}{m!}\rho^m d\rho\lb \widehat{W}d^m\widehat{W}\rb
\ee
The Laplace transform rational function becomes
\be
\aOmega(\A)=\frac{1}{m!}\int_{\rho>0,\widehat{W}\in\A^*} e^{-\rho\widehat{W}\cdot Y}\frac{\rho^m d\rho}{m!} \lb\widehat{W}d^m\widehat{W}\rb =\frac{1}{m!}\int_{\widehat{W}\in \A^*}\frac{\lb \widehat{W}d^m\widehat{W}\rb}{(\widehat{W}\cdot Y)^{m{+}1}}.
\ee
This result is identical to~\eqref{eq:volume} but ``gauge-fixed" so that $W=\widehat{W}$ has unit norm. Removing the gauge choice and rewriting $\widehat{W}$ as $W$ recovers~\eqref{eq:volume}.

In principle, this completes the argument, but let us do a few important examples.
\begin{example}
Suppose $\Delta$ is a simplex, so $\Delta^*$ is also a simplex, generated by facets $W_1,\ldots,W_{m+1} \in \R^{m+1}$.  Then setting $W = \alpha_0 W_1 + \cdots \alpha_m W_{m+1}$, we compute
\begin{align}
 \aOmega(\bar\Delta)(Y) & =\frac{1}{m!} \left(\int_{W \in \bar \A^*} e^{-W \cdot Y} d^{m+1}W \right)  \\
& = \frac{1}{m!}\langle W_1 W_2 \cdots W_{m+1}\rangle\left(\int_{\R_{>0}^{m+1}} e^{-(\alpha_0 (W_1 \cdot Y) + \alpha_1 (W_2 \cdot Y)  + \cdots + \alpha_{m} (W_{m+1} \cdot Y) )} d^{m{+}1}\alpha \right)  \\
&= \frac{\langle W_1 W_2 \cdots W_{m+1}\rangle}{ m!(Y \cdot W_1)  (Y \cdot W_2) \cdots  (Y\cdot W_{m+1})},
\end{align}
in agreement with \eqref{eq:simplex}.
\end{example}

We recognize the result simply as the volume $\Vol(\Delta^*)$ (note the implicit dependence on $Y$). For a general convex polytope $\A$, we may triangulate its dual $\A^*=\sum_i\Delta_i^*$ by (dual) simplices. Then the cone of the dual is also triangulated by the cones $\overline{\Delta_i}^*$ of the dual simplices. For simplicity let us assume that these dual cones are non-overlapping except possibly at mutual boundaries. We can therefore triangulate the Laplace transform integration and obtain:
\be
\aOmega(\A)(Y)=\sum_i\Vol(\Delta_{i}^*)
\ee
which of course is equivalent to $\Vol(\A^*)$ as expected.

Let us now try a sample computation for a non-simplex polytope.
\begin{example}\label{ex:squarelaplace}
Let $\A \subset \P^2$ be the polytope with vertices:
$$\{(1,0,0),(0,1,0),(0,0,1),(1,1,-1)\}$$
Then the dual cone is
\be
\bar \A^* = \{W = (W_0,W_1,W_2) \in \R_{\geq 0}^3 \mid W_0+ W_1 \geq W_2\}.
\ee
We compute 
\begin{align}
\aOmega(\A) &= \frac{1}{2!}\int_{W \in \bar \A^*} e^{-W \cdot Y} d^{m+1}W  \\
&=\frac{1}{2!} \int_{0}^\infty e^{-W_0Y^0} \int_0^\infty e^{-W_1Y^1} \int_{0}^{W_0 + W_1} e^{-W_2Y^2} dW_2 dW_1 dW_0 \\
&=\frac{1}{2!}\frac{Y^0 +Y^1 + Y^2}{(Y^0+Y^2) (Y^1+Y^2) Y^0 Y^1}.
\end{align}
To verify our calculation, we observe that
\be \label{eq:squaretriangulation}
\frac{1}{2!}\frac{Y^0 +Y^1 + Y^2}{(Y^0+Y^2) (Y^1+Y^2) Y^0 Y^1} = \frac{1}{2!}\left(\frac{1}{Y^0Y^1Y^2} - \frac{1}{(Y^0+Y^2)(Y^1+Y^2)Y^2}\right)
\ee
corresponding to the triangulation $\A = \Delta_1 \cup \Delta_2$ where $\Delta_1$ has three vertices given by $\{(1,0,0),(0,1,0),(0,0,1)\}$ and $\Delta_2$ has vertices $\{(1,0,0),(1,1,-1),(0,1,0)\}$.
\end{example}
%

Let us now argue that the Laplace transform makes manifest all the poles and residues of the canonical form, and thus satisfies the recursive properties of the form. Suppose $\{Y\in\partial\A \mid W_0 \cdot Y = 0\}$ defines one of the facets of the polytope $\A$, which we denote by $\B$, and hence $W_0$ is one of the vertices of $\A^*$. We now show that \eqref{eq:laplace} has a first order pole along the hyperplane $W_0 \cdot Y= 0$ and identify its residue.  Consider the expansion
\be\label{eq:W0}
W=\alpha W_0+\bar{W}
\ee
where $\alpha$ is a scalar and $\bar{W}\in \gamma(W_0)$, where $\gamma(W_0)$ denotes the union of all the facets of the cone $\bar{\A}^*$ \defn{not} adjacent to $W_0$. It is 
straightforward to prove that every point $W$ in the interior of the integration region $\bar{\A^*}$ has a unique expression in the form \eqref{eq:W0} with $(\alpha,\bar{W})\in \mathbb{R}_{>0}\times \gamma(W_0)$. 
Furthermore, some simple algebra shows that $d^{m{+}1}W=d\alpha \lb W_0 d^m\bar{W}\rb/m!$. It follows that
\be
\aOmega(\A)=\frac{1}{m!}\int_0^\infty d\alpha\;e^{-\alpha(W_0\cdot Y)}\int_{\bar{W}\in\gamma(W_0)}\;e^{-\bar{W}\cdot Y}\frac{\lb W_0 d^m\bar{W}\rb}{m!}
\ee
The $\alpha>0$ integral can be trivially computed to reveal the first order pole.
\be
\aOmega(\A)=\frac{1}{m!}\frac{1}{(W_0\cdot Y)}\int_{\bar{W}\in\gamma(W_0)}\;e^{-\bar{W}\cdot Y}\frac{\lb W_0 d^m\bar{W}\rb}{m!}
\ee

If we now take a residue at $(W_0 \cdot Y)\rightarrow 0$, we see that the remaining integral is invariant under \defn{local} shifts $\bar{W}\rightarrow\bar{W}+\beta(\bar{W}) W_0$ along the direction of $W_0$, where $\beta(\bar{W})$ is a scalar dependent on $\bar{W}$. So the integral is performed within the quotient space $\P^m/W_0$, which of course is just the dual of the co-dimension one variety $\{Y \mid Y\cdot W_0=0\}$ containing $\B$. We say that the integral has been \defn{projected} through $W_0$.

We can change the region of integration to $\bar{\B}^*$ which we define to be the cone over the union of all the facets of $\A^*$ adjacent to $W_0$. The change is possible because the surface $\gamma(W_0)\cup\bar{\B}^*$ is closed under the projection. The choice of notation $\bar{\B}^*$ suggests that the region can be interpreted as the cone of the dual of $\B$. Indeed, the facets of $\A^*$ adjacent to $W_0$ correspond to the vertices of $\B$.

It follows that
\be
\lim_{(W_0 \cdot Y) \to 0}\aOmega(\A)(W_0\cdot Y) = \frac{1}{m!}\left(\int_{\bar{W} \in \B^*}\;e^{\bar{W}\cdot Y} d^m\bar{W}\right)
\ee
where $d^m\bar{W}$ is defined to be the measure $\lb W_0 d^m\bar{W}\rb/m!$ on $\P^m/W_0$. We can interpret the residue as a Laplace transform for $\B$, as expected from the recursive nature of the residue.

\subsubsection{Dual Amplituhedra}
\label{sec:dualAmp}
As argued in Section \ref{sec:triangulations}, a signed triangulation of a positive geometry $\A=\sum_i\A_i$ implies $\Omega(\A) = \sum_i \Omega(\A_i)$. This was argued based on the observation that spurious poles cancel among the triangulating terms, thus leaving only the physical poles.

In our discussion of the dual polytope in Section~\ref{sec:dualpolytopeform}, we argued that the volume formulation of the canonical rational function (under certain assumptions) provides an alternative understanding of triangulation independence. We therefore wish to extend these assumptions to the Amplituhedron and conjecture a volume interpretation of the canonical rational function. We restrict our conjecture to even $m$, since the volume formula implies positive convexity (see Section~\ref{sec:convex}), which in most cases does not hold for odd $m$.

Let $\A$ denote an Amplituhedron, and let $X$ denote the irreducible variety in which it is embedded. We conjecture the following:

\begin{conjecture}
\begin{enumerate} 
\item
For each $\mathcal{Y}\in X$ not on a boundary component of $\A$, there exists an irreducible variety $X_{\mathcal{Y}}^*$, called the \defn{dual variety at $\mathcal{Y}$}, with a bijection $(X,\mathcal{B})\isom(X_{\mathcal{Y}}^*,\mathcal{B}_{\mathcal{Y}}^*)$ that maps positive geometries in $X$ to positive geometries in $X_{\mathcal{Y}}^*$. In particular, there exists a \defn{dual Amplituhedron} $\A_{\mathcal{Y}}^*$ for each $\mathcal{Y}$.

\item
Let $\mathcal{B}_{1,2}$ be positive geometries in $X$. Then $\mathcal{B}_1\subset\mathcal{B}_2$ if and only if $\mathcal{B}_{2\mathcal{Y}}^*\subset \mathcal{B}^*_{1\mathcal{Y}}$.

\item
Given a triangulation of $\B$ in $X$, we have
\be
\mathcal{B}=\sum_i\mathcal{B}_i\;\;\;\;
\Rightarrow\;\;\;\;
\mathcal{B}_{\mathcal{Y}}^*=\sum_i\mathcal{B}_{i\mathcal{Y}}^*.
\ee
\item
There exists a $\mathcal{Y}$-dependent measure $d\Vol$ on $X_{\mathcal{Y}}^*(\R)$ so that:
\be
\aOmega(\mathcal{B})(\mathcal{Y}) = \Vol(\mathcal{B}_{\mathcal{Y}}^*)\deff \int_{\mathcal{B}_{\mathcal{Y}}^*}d\Vol
\ee
\end{enumerate}
\end{conjecture}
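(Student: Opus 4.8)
The plan is to build everything by analogy with the polytope case of Sections~\ref{sec:dualpolytopeform} and~\ref{sec:laplace}, where the three ingredients --- existence of the dual, compatibility of dualization with triangulation (Filliman duality~\eqref{eq:filliman}), and the volume/Laplace formula~\eqref{eq:volume} and~\eqref{eq:laplace} --- are essentially an equivalent package. The first and decisive step is to \emph{construct} the dual variety $X_{\mathcal{Y}}^*$ directly from the canonical form. For a convex polytope the dual is cut out by the facet covectors $W_i$, and these $W_i$ are exactly the poles of $\aOmega(\A)$; the ``local'' representation~\eqref{eq:localForm} of the $m=2$ Amplituhedron form already exhibits a numerator built from brackets $\langle (Y^{k-1})^{s}\, i{-}1\, i\, i{+}1\rangle$ playing the role of ``dual vertices'' and a spurious $\langle Y X \rangle$ pole playing the role of a triangulation point. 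So the concrete first step is to guess $X_{\mathcal{Y}}^*$ as the Zariski closure of the locus swept out by these dual-vertex data as one ranges over the cells of a BCFW (or Kermit) triangulation of $\A$, and then to prove this closure is independent of the triangulation --- which is precisely part~(3) of the conjecture in disguise.

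Second, once $X_{\mathcal{Y}}^*$ is in hand, parts~(1) and~(2) should be organised recursively on dimension, exactly as the axioms (P1)--(P2) of a positive geometry are. The bijection $(X,\mathcal{B})\isom(X^*_{\mathcal{Y}},\mathcal{B}^*_{\mathcal{Y}})$ must send a codimension-one boundary component $C$ of $\A$ to a ``dual facet'' of $\A^*_{\mathcal{Y}}$, and the residue relation $\Res_C\Omega(\A)=\Omega(C)$ must dualise to the statement that this dual facet is itself the dual of $C$, projected through the corresponding dual vertex --- the same phenomenon as the Laplace-transform argument of Section~\ref{sec:laplace}, where the residue integral is ``projected through $W_0$''. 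Inclusion reversal~(2) then follows from matching face posets. The honest difficulty is that we do not even know Amplituhedra are positive geometries (Conjecture~\ref{conj:Amplituhedron}), so the recursion has no proven base in general; I would first carry it out where the positive-geometry property \emph{is} established --- $k=1$ cyclic polytopes (classical), $m=1,2$ tree Amplituhedra (Section~\ref{sec:Amplituhedron12}), and the $1$-loop Grassmannian $G_{\geq 0}(1,n;2)$ of Section~\ref{sec:1loop} --- and use these as a testbed.

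Third, for the volume formula~(4) I would mimic Section~\ref{sec:laplace}: produce a $\mathcal{Y}$-dependent measure $d\Vol$ on $X^*_{\mathcal{Y}}(\R)$ so that for the dual of a single BCFW/Kermit cell the integral collapses, by a Feynman-parameter computation, to the single term $\aOmega(\A_i)$, and then invoke triangulation independence of $\aOmega$ together with part~(3) to sum over cells, exactly as in~\eqref{eq:dualArgument}. The measure should be the Amplituhedron analogue of $\langle W d^m W\rangle/(Y\cdot W)^{m+1}$, and a natural candidate is fixed by demanding the $\G(k;\k)$-equivariance weights carried by the delta function $\delta^D(\mathcal{Y};\mathcal{Y}')$ appearing in the push-forward conjecture. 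The main obstacle, and the reason this remains a conjecture, is the very first step: there is no known intrinsic description of $X^*_{\mathcal{Y}}$, and in particular no known ``dual Grassmannian'' playing for $G(k,k{+}m)$ the role that projective space plays as its own dual. Until such an object is pinned down --- presumably by the same circle of ideas that would supply the inequalities cutting out the Amplituhedron~\cite{winding} --- parts~(2)--(4) are contingent on the existence claim~(1), which is itself the crux. I would therefore treat producing even a conjectural candidate for $X^*_{\mathcal{Y}}$, and even only in the $m=2$ case, as the real content to be secured first, with everything else following the polytope template.
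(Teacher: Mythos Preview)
The statement you are addressing is a \emph{conjecture} in the paper, not a theorem; the paper offers no proof and explicitly leaves it open. What the paper does provide, immediately following the conjecture, is the observation that \emph{if} the four items hold, then the chain
\[
\A=\sum_i\A_i \;\Rightarrow\; \A_{\mathcal{Y}}^*=\sum_i\A_{i\mathcal{Y}}^* \;\Rightarrow\; \Vol(\A_{\mathcal{Y}}^*)=\sum_i\Vol(\A_{i\mathcal{Y}}^*) \;\Rightarrow\; \aOmega(\A)=\sum_i\aOmega(\A_i)
\]
gives an alternative explanation of triangulation independence. The paper also reports, in the next subsection, a direct attempt at the naive dual Grassmannian volume for $k=2,m=2,n=4$, which \emph{fails}: it produces the expected rational piece with the wrong sign plus logarithmic corrections. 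So the paper's ``own proof'' is the empty proof, together with evidence that the most obvious construction of $X^*_{\mathcal{Y}}$ does not work.

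Your write-up is therefore appropriate in spirit: you correctly identify that the crux is constructing $X^*_{\mathcal{Y}}$, that items~(2)--(4) should follow the polytope template of Sections~\ref{sec:dualpolytopeform}--\ref{sec:laplace} once~(1) is in hand, and that nothing can be made rigorous absent Conjecture~\ref{conj:Amplituhedron}. Where your proposal goes beyond the paper is in suggesting a specific mechanism --- taking the Zariski closure of the ``dual-vertex'' data read off from the local form~\eqref{eq:localForm} or from BCFW cells, and proving triangulation independence of that closure. That is a reasonable heuristic, but you should be aware that the paper's own Section on dual Grassmannian volumes already shows the naive $d\Vol_{k,m}$ measure over the region $\langle i_1\cdots i_k W\rangle\geq 0$ does \emph{not} reproduce $\aOmega(\A)$ for $k>1$; so whatever $X^*_{\mathcal{Y}}$ and $d\Vol$ turn out to be, they are not the straightforward Grassmannian analogues. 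Your proposal does not engage with this negative result, and any serious attempt would need to explain why the construction you sketch evades it. In short: there is no gap to identify because there is no proof to compare against; your outline is a sensible research plan, but it should be tempered by the explicit counterexample the paper already records.
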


Following the reasoning outlined around~\eqref{eq:dualArgument}, we arrive at the desired conclusion:
\be
\A=\sum_i\A_i\;\;\;\;&&\Rightarrow \;\;\;\;
\A_{\mathcal{Y}}^*=\sum_i\A_{i\mathcal{Y}}^*\;\;\;\;\Rightarrow\\
\Vol(\A_{\mathcal{Y}}^*)=\sum_i\Vol(\A_{i\mathcal{Y}}^*)\;\;\;\;&&\Rightarrow
\;\;\;\;\aOmega(\A)=\sum_i\aOmega(\A_i)
\ee
In words, we say that the triangulation independence of the dual volume explains the triangulation independence of the form. This provides an alternative argument to cancellation of spurious poles.



\subsubsection{Dual Grassmannian volumes}

We now discuss a first attempt at constructing a dual of the (tree) Amplituhedron which extends the dual polytope construction in Section~\ref{sec:dualpolytopeform}, but does not necessarily follow the format of Section~\ref{sec:dualAmp}. 

Let us begin by thinking about projective space ($k=1$ and any $m$) and the most obvious form associated with the space of points $(Y,W)$ where $Y\in\P^m(\R)$ and $W\in\P^m(\R)$ is in the dual:
\begin{equation}
d\Vol_{k=1,m}(Y,W) \deff \frac{d^{m+1} Y}{{\rm Vol\;GL(1)}} \frac{d^{m+1} W}{{\rm Vol\;GL(1)}} \frac{1}{(Y\cdot W)^{m+1}}
\end{equation}
Completely equivalently but slightly more naturally, we can think of the $W_I$ as being $m$-planes $W^{I_1\cdots I_m}$ in $(m{+}1)$ dimensions (Note that our dimension counting is Euclidean.), with $d\Vol_{k=1,m}$ taking essentially the same form:
\begin{equation}
d\Vol_{k=1,m}(Y,W) = \frac{d^{1\times (m+1)} Y}{{\rm Vol\;GL(1)}} \frac{d^{m \times (m+1)} W}{{\rm Vol\;GL(m)}} \frac{1}{\langle Y W \rangle^{m+1}}
\end{equation}
Thus we are taking the full $(m{+}1)$-dimensional space and looking at the space of 1-planes $Y$ and $m$-planes $W$ in it, with the only natural form possible on $(Y,W)$ space. 

Now let $\A$ be a convex polytope in $Y$ space as usual. We would like to integrate the form $d\Vol_{k=1,m}$ over some natural region in $W$ space. An obvious way of motivating the integration region for $W$ is to ensure the form never has singularities when $Y$ is in the polytope's interior. Since we can put $Y = \sum_{i=1}^n C_i Z_i$ with $C_i>0$,  we have $\langle Y W \rangle = \sum_{i=1}^n C_i \langle i W \rangle$, so it is natural to say that the region in $W$ space is defined by $\langle i W \rangle>0$. Indeed, since we know the vertices of the polytope are $Z_i$, and these are the facets of the dual polytope, this gives us the definiton of the dual polytope in the space of hyper-planes. And as we have indeed seen, this leads correctly to the canonical form associated with the polytope:
\begin{equation}
\Omega({\cal A})(Y) = \int_{\langle iW \rangle \geq 0} d\Vol_{k=1,m}(Y,W)
\end{equation}
where for simplicity we have absorbed any combinatorial factor into the measure.

All of this has an obvious extension to the Amplituhedron for general $k$, and even $m$ where we expect the geometry to be positively convex (see Section~\ref{sec:convex}). In the full $(k+m)$-dimensional space where $Z_i$ lives, we consider $k$-planes $Y$ and $m-$planes $W$, and the form on $(Y,W)$ space:
\begin{equation}
d\Vol_{k,m}(Y,W) = \frac{d^{k \times (m+k)} Y}{{\rm \Vol GL(k)}} \frac{d^{m \times (m+k)} W}{{\rm\Vol GL(m)}} \frac{1}{\langle Y W \rangle^{m+k}}
\end{equation}

We conjecture that for $Y\in\Int(\A)$, the canonical form $\Omega(\A)$ is given by an integral over a ``dual Amplituhedron" in exactly the same way. As with the polytope example, we want to ensure this form has no singularities when $Y$ is in the Amplituhedron, and the most obvious way of ensuring this is to define the dual Amplituhedron region by $\langle {i_1} \cdots {i_k} W\rangle \geq 0$,  for $1\leq 1_1< \cdots, < i_k\leq n$. Thus it is natural to conjecture
\begin{equation}
\Omega({\cal A})(Y) = \int_{\langle {i_1} \cdots {i_k} W\rangle\geq 0} d\Vol_{k,m}(Y,W) \;\;\;\;\;\; \text{(incorrect for $k>1$)}
\end{equation}
Where, again, the measure is appropriately normalized. This conjecture turns out to be incorrect for $k>1$, but let us do a simple example so we can discuss the subtleties.

Consider the first non-trivial case where $m=2,k=2,n=4$, so $Y$ and $W$ each corresponds to a 2-plane in 4 dimensions. We can put the $Z$ matrix to be identity. Then the inequalities defining the dual Amplituhedron are $\langle W 1 2 \rangle \geq 0, \langle W 23 \rangle \geq 0,\langle W 34 \rangle \geq 0, \langle W 14 \rangle \geq 0, \langle W 1 3 \rangle \geq 0, \langle W 2 4 \rangle \geq 0$. 
Note crucially that $\langle W 1 3 \rangle, \langle W 2 4 \rangle$ are both {\it positive} (in the interior), whereas for $Y$ in the Amplituhedron interior, $\langle Y 1 3 \rangle, \langle Y 2 4 \rangle $ are both {\it negative}. 
This region in $W$ space can be parametrized by $W^{IJ}=(W_1 W_2)^{IJ}$ with 
\begin{equation}
W =  \left(\begin{array}{cccc} 1 &-\alpha_1& 0& \alpha_4 \\ 0 &\alpha_2& -1& \alpha_3 \end{array}\right)
\end{equation}
where $\alpha_{1,2,3,4}>0$. We can also conveniently put $Y^{IJ} = (Y_1 Y_2)^{IJ}$ in the Amplituhedron by setting 
\begin{equation}
Y = \left(\begin{array}{cccc} 1 & \beta_1 & 0 & -\beta_4 \\0 & \beta_2 & 1 & \beta_3 \end{array} \right)
\end{equation}
for $\beta_{1,2,3,4}>0$. Then $\langle Y W \rangle = (\alpha_2+\beta_2)(\alpha_4+\beta_4) + (\alpha_1+\beta_1)(\alpha_3+\beta_3)$ and we must integrate 
\begin{equation}
\int_{\alpha\in\R_{>0}^4} d^4\alpha \frac{3}{((\alpha_2+\beta_2)(\alpha_4+\beta_4) + (\alpha_1+\beta_1)(\alpha_3+\beta_3))^4}
\end{equation}
The integrals can be performed easily, leading to 
\be
\frac{1}{4}
\left[- \frac{1}{\beta_1\beta_2\beta_3\beta_4} + \frac{1}{\beta_1^2 \beta_3^2} {\rm log} \left(1 + \frac{\beta_1 \beta_3}{\beta_2 \beta_4}\right) + \frac{1}{\beta_4^2 \beta_2^2} {\rm log} \left(1 + \frac{\beta_4 \beta_2}{\beta_1 \beta_3} \right)
\right]\ee
which can be written more invariantly as
\begin{eqnarray}
 &&- \frac{\langle 1 2 3 4 \rangle^2}{4\langle Y 12 \rangle \langle Y 23 \rangle \langle Y 34 \rangle \langle Y 14 \rangle}+ \\ 
 \frac{\langle 1 2 3 4\rangle^2}
{4\langle Y 2 3 \rangle^2 \langle Y 1 4 \rangle^2} &&{\rm log}\left(1 + \frac{\langle Y 2 3\rangle \langle Y 1 4 \rangle}{\langle Y 1 2 \rangle \langle Y 3 4 \rangle}\right)+
 \frac{\langle 1 2 3 4 \rangle^2}
{4\langle Y 1 2 \rangle^2 \langle Y 3 4 \rangle^2} {\rm log}\left(1 + \frac{\langle Y 1 2\rangle \langle Y 3 4 \rangle}{\langle Y 2 3 \rangle \langle Y 1 4 \rangle}\right)\nonumber
\end{eqnarray}

Here we encounter a surprise.  The first term looks like the expected canonical form associated with the $k=2,m=2,n=4$ Amplituhedron, but with the wrong sign.  And we have extra logarithmic corrections, so the form is not rational! In fact it is easy to check that this form has the interesting property of having {\it none} of the usual poles expected of the canonical form! For instance we expect poles when $\langle Y 12 \rangle \to 0$; but in this limit, the pole from the first rational term precisely cancels what we get from expanding the logs. However it is certainly amusing that the expected canonical form can be identified (even if with the wrong sign!) as the ``rational part" of this expression. We plan to return to these questions in~\cite{dual}.

\subsubsection{Wilson loops and surfaces}
\label{sec:wilson}

Continuing our search for the ``dual Amplituhedron", let us now consider another approach.  We wish to construct a simple ``dual integral" formula (not necessarily a volume) yielding the canonical rational function for the $m=2$ tree Amplituhedron.

\begin{figure}
\centering

\subfloat[
The standard measure on the Grassmannian $G(k,k{+}m)$ is formed by connecting the arrows on the left with those on the right in the most natural way. Namely, for each blue node, the $k$ outgoing arrows should be connected with each of the $k$ orange nodes.\newline]
{\label{fig:tensor_a}\includegraphics[width=7cm]{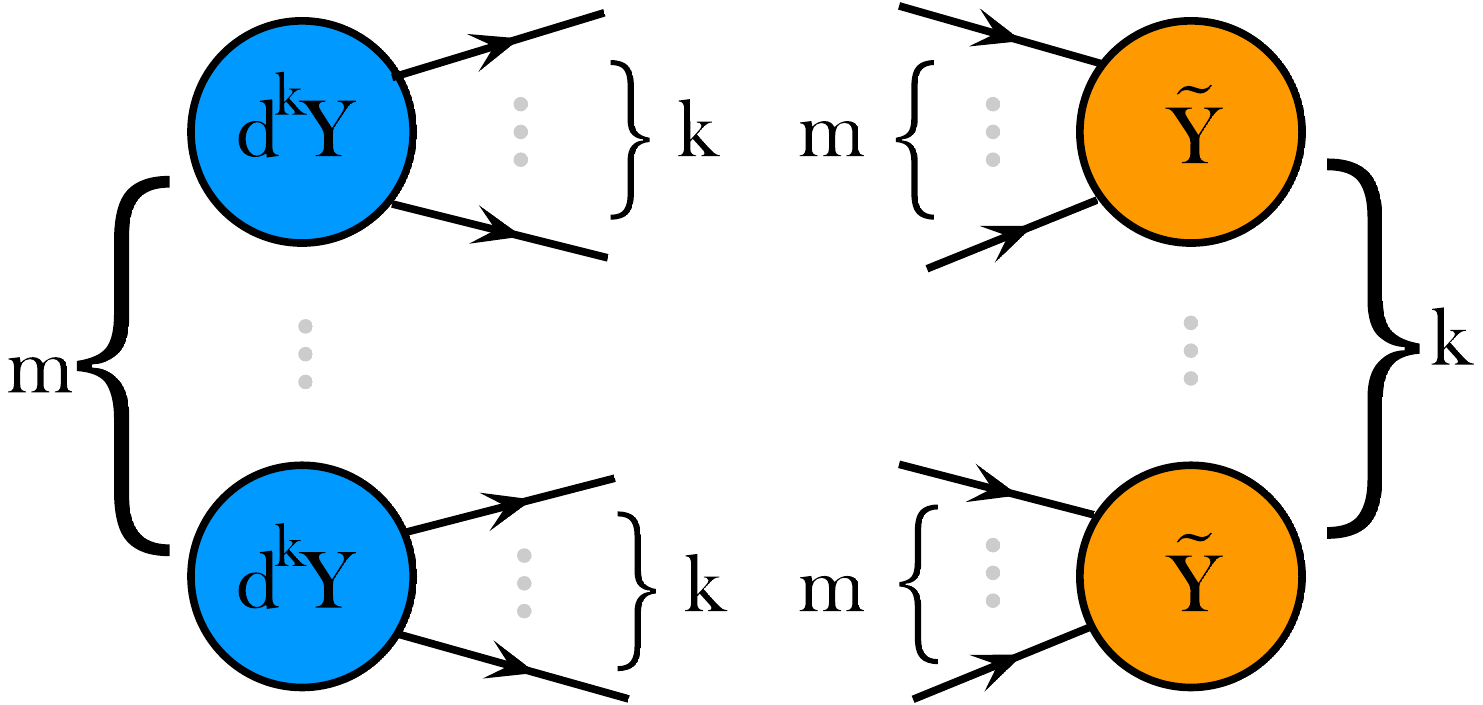}}\\

\subfloat[The measure for $(d\tilde{W}Wd\tilde{W})_{I_1\cdots I_{2k{-}2}}$. The index contractions are defined graphically. In particular, we note that $W^{IJ}$ should have 2 upstairs indices and hence 2 outgoing arrows, while $d\tilde{W}_{I_1\ldots I_{k}}$ should have $k$ incoming arrows.\newline]{\label{fig:tensor_b}\includegraphics[width=9.5cm]{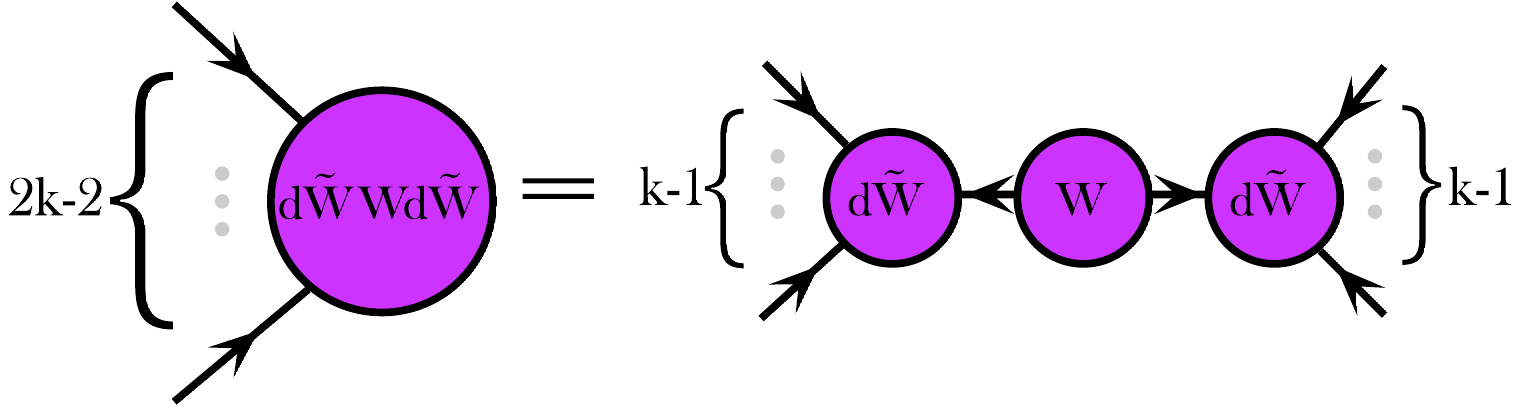}}
\\

\subfloat[Index contraction for the numerator of $\omega_k(W_1,\ldots,W_k;Y)$. For each orange node, the $k$ outgoing arrows should be connected with each of the $k$ purple nodes.]{\label{fig:tensor_c}\includegraphics[width=7.5cm]{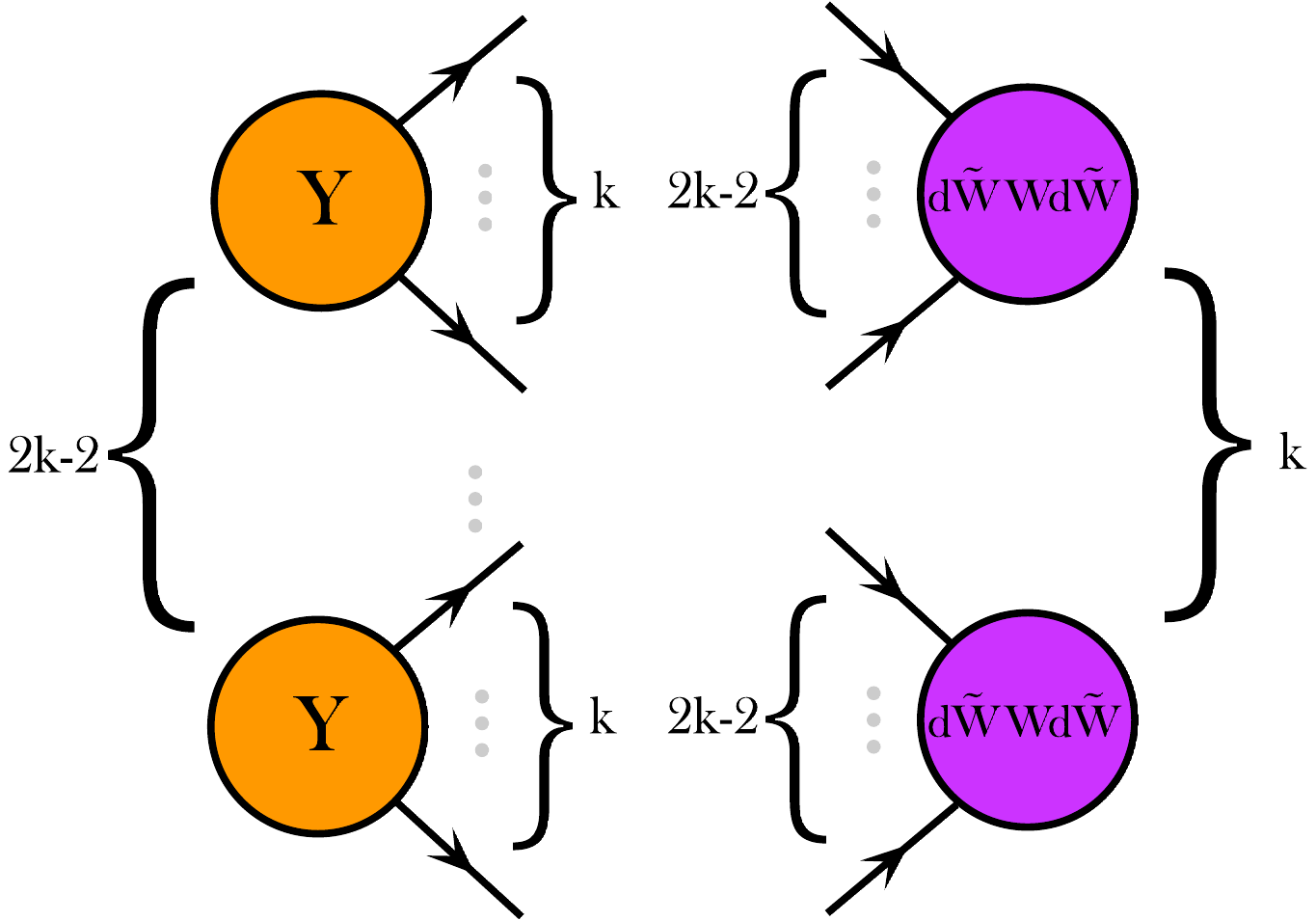}}\;\;\;\;

\caption{Diagrams representing tensor contractions. Each node denotes a tensor, with each outgoing arrow denoting an upstairs index of the tensor, and each incoming arrow denoting a downstairs index.\newline}
\label{fig:tensor}
\end{figure}

Let us begin by describing why the problem is challenging, returning to the
$k=1$ polygon example for $m=2$. Here we have a polygon whose
vertices are $Z_i^I$, and the Amplituhedron (a polygon) is just the
convex hull of these vertices. Now for any $k$, the co-dimension one boundaries
of the $m=2$ Amplituhedron are $W_{i}^{IJ} \deff (Z_i Z_{i+1})^{IJ}$, which are
2-planes in $(2+k)$ dimensions. Here the product $Z_iZ_{i{+}1}$ is a wedge product, so $W_{i}^{IJ}$ is alternating, and projectively can be thought of as points in $\wedge^2\R^4\cong \R^6$ or $\P^5$ when projectivized. Loosely speaking, we would
like the dual Amplituhedron to be the ``convex hull of the $W_i^{IJ}$".
But there is a basic difficulty: while we can add vectors (i.e. 1-planes)
together to get
other vectors, we cannot in general add $k-$planes to get other
$k-$planes. In the special case where $k=1$, the $W_{i}^{IJ} =
\epsilon^{IJK} W_{i K}$ are
dual to points $W_{i K}$, and so can be added. Note of course that the $W_{iK}$ are just vertices of the dual polytope. But for general $k$, if
a natural notion of the
``dual" Amplituhedron is to exist along these lines we must learn how to deal with
adding $k$-planes together.

We will defer a discussion of general strategies for doing this to~\cite{dual}, here we will discuss one approach that can be carried to completion for the special case of $m=2$. We
begin by noting that while
we cannot add arbitrary 2-planes $W_{a}^{IJ}$ to get other 2-planes,
adding two consecutive $W$'s does yield
a 2-plane; to wit:
\begin{equation}
\alpha W_{i-1}^{IJ} + \beta W_{i}^{IJ} = [Z_{i} (-\alpha Z_{i-1} + \beta Z_{i+1})]^{IJ}
\end{equation}
For $\alpha,\beta>0$, this gives us a line from $W_{i{-}1}$ to $W_i$ within the Grassmannian. Thus there is a natural``polygon" $P$ in $G(2,2+k)$
with the vertices $W_i^{IJ}$
joined consecutively by line segments as above. Here we have not specified an interior for the polygon, only its edges. In fact, since the embedding space $G(2,2{+}k)$ has dimension greater than 2 for $k>1$, the polygon does not necessarily have a unique interior. We return to this important point shortly.

Since our canonical form has rank $2 \times k$, it is
natural to expect the ``dual integral" representation to be an integral over a $2 \times k$ dimensional space. Given this
canonical one-dimensional boundary
in $G(2,2+k)$, a simple possibility presents itself.
Consider {\sl any} 2-dimensional surface whose boundary is
the one-dimensional polygon $P$. Now consider any $k$ of these surfaces
$\Sigma_{s}$ for $s=1,\cdots,k$, which may be distinct.
Then we would like to consider a $k$-fold integral over the space $\Sigma\deff \Sigma_1\times\cdots\times\Sigma_k$. This gives us a $2 \times k$ dimensional
integral as desired but appears to
depend on the choice of $\Sigma_s$ for each $s$; our only hope is that the forms are closed (in each of the $k$ components independently)
and thus the integral depends only on the (canonical) polygon $P$ and
not on the particular surface spanning it. As
we will see, the structure of this form is essentially fixed by
demanding that it is consistently
defined on the Grassmannian, and it will indeed turn out to be closed in each component independently.

Let us first recall what fixes the structure of forms on the Grassmannian (see also the discussion in Appendix~\ref{app:projform}). Consider first the Grassmannian $G(k,k{+}m)$ associated with a matrix $Y_s^I$ with $s = 1,\ldots, k$ and $I=1,\cdots, k{+}m$. We can think of $Y$ in a more $\GL(k)$ invariant way as a $k-$fold antisymmetric tensor $Y^{I_1 \cdots I_k} = \epsilon^{s_1 \cdots s_k} Y^{I_1}_{s_1} \cdots Y^{I_k}_{s_k}$. It is also natural to consider the $m$-plane $\tilde{Y}_{J_1 \cdots J_{m}} = \epsilon_{I_1 \cdots I_k J_1 \cdots J_{m}} Y^{I_1 \cdots I_k}$. The Pl\"ucker relations satisfied by $Y$ are then contained in the simple statement $Y_s^K Y_{K J_2 \cdots J_{m}} = 0$.

It will be convenient to introduce a graphical notation for the $\GL(k{+}m)$ indices here. Each node represents a tensor; and for each tensor, an upstairs index is denoted by an arrow outgoing from the node and a downstairs index by an incoming one. Then $Y^{I_1\cdots I_k}$ is a node with $k$ outgoing arrows and $\tilde Y_{I_1\ldots I_{m}}$ is a node with $m$ incoming ones, as shown by the orange nodes in Figure~\ref{fig:tensor}.


Now as discussed in Appendix~\ref{app:projform}, in order for a differential form to be well-defined on the Grassmannian, it must be invariant under {\it local} $\GL(k)$ transformations, $Y_s^I \to L_s^t(Y) Y_t^I$. We repeat the argument here from a graphical point of view. Since $dY \to  L ((L^{-1} d L) Y + dY)$ under this transformation, we must have that the measure is unchanged if we replace any single factor of $dY_s^I$ with any $Y_t^I$. This fixes  the standard measure factor $\langle Y d^m Y \rangle$ in projective space up to scale. Generalizing to the Grassmannian, we are looking for a $k \times m$ form. It is natural to consider the $\GL(k)$ invariant $k$-form $(d^k Y)^{I_1 \cdots I_k}$, defined as minors of the matrix of $dY$'s, or $(d^k Y)^{I_1 \cdots I_k} \deff  \epsilon^{s_1 \cdots s_k} dY^{I_1}_{s_1} \cdots dY^{I_k}_{s_k}$. For local $\GL(k)$ invariance, every leg of $(d^k Y)$ must be contracted with some $\tilde{Y}$, so that the replacement $dY_s^I \to Y_t^I$ vanishes by Pl\"ucker. Thus there is a natural $k \times m$ form on the Grassmannian, whose diagram is a complete graph connecting $m$ factors of $(d^k Y)$ on one side, and $k$ factors of  $\tilde{Y}$ on the other side, as in Figure~\ref{fig:tensor_a}. It is easy to see that in the standard gauge-fixing by $\GL(k)$ where a $k \times k$ block of the matrix representation of $G(k,k{+}m)$ is set to the identity, this form is simply the wedge product of the remaining variables. 
Any top-form on $G(k,k{+}m)$ is expressible as this universal factor multiplied by a $GL(k)$ co-variant function of the $Y$'s with weight $-(k{+}m)$. While the contraction described here is different from the one given in~\eqref{eq:grassStandardMeasure}, they are equivalent, which can be shown by gauge fixing.

We now use the same ideas to determine the structure of the $2k$-form on $\Sigma$. Let us start with the case $k=2$. We are looking for a 4-form that is the product of two 2-forms, on the space of $W_1^{IJ}$ and $W_2^{IJ}$. Here the subscripts $1,2$ index the integration variables, not the vertices $W_i$; the distinction should be clear form context. By the same logic as above, we will build the form out of the building blocks $(d\tilde{W} W  d\tilde{W})_{IJ}$ (see Figure~\ref{fig:tensor_b}), which are invariant under local $\GL(2)$. 
We then find a form with appropriate weights under both the $W_s$ and $Y$ rescaling, given by 
\begin{equation}\label{eq:wilsonForm2}
\omega_{k=2}(W_1,W_2;Y)=\frac{ \left(d \tilde{W}_1 W_1 d \tilde{W}_1 \right)_{IJ} \left(d \tilde{W}_2 W_2 d \tilde{W}_2 \right)_{KL}  Y^{IK} Y^{JL}}{(\tilde{W}_1 \cdot Y)^3 (\tilde{W}_2 \cdot Y)^3}
\end{equation}
We then claim that the canonical rational function for the $m=2,k=2$ Amplituhedron can be expressed as 
\begin{equation}
\aOmega(\A(2,2,n))(Y) = \int_{\Sigma_1 \times \Sigma_2} \omega_{k=2}(W_1,W_2;Y)
\end{equation}

For general $k$ we will have $k$ factors of $(W \cdot Y)^3$ in the denominator, thus to have the correct weight $-(2+k)$ in minors of $Y$, we have to have $2k-2$ factors of $Y$ upstairs. Now the objects $(d \tilde{W}_s W_s d \tilde{W}_i)_{I_1 \cdots I_{2k-2}}$ each have $2k-2$ incoming arrows (as in Figure~\ref{fig:tensor_b}), while every $Y^{I_1 \cdots I_k}$ has $k$ outgoing arrows. We can thus express 
$\omega_{k}(W_1,\cdots,W_k;Y)$ graphically as the complete graph linking the $k$ factors of $(d \tilde{W} W d \tilde{W})$ on one side and the $(2 k -2)$ $Y$'s on the other, as shown in Figure~\ref{fig:tensor_c}. We then claim that 
\begin{equation}
\label{eq:wilsonForm}
\aOmega(\A(k,2,n))(Y) =  \int_{\Sigma} \omega(W_1, \ldots, W_k;Y)
\end{equation}

These expressions indeed reproduce the correct canonical rational function for the $m=2$ Amplituhedron for all $k$. Let us illustrate how this works for the case of $k=2$. A straightforward computation shows that the form $\omega_{k=2}(W_1,W_2;Y)$ is closed in $W_1,W_2$ independently; indeed it is closed even if the $W^{IJ}$ are not constrained by the Pl\"ucker relations, and can be thought of as being general points in $\P^5$. Thus the result of the integral is independent of the surface $\Sigma$, provided that $\partial\Sigma_s=P$ for each $s$. We will thus construct each surface $\Sigma_s$  by triangulating it like the interior of a polygon. We begin by picking an arbitrary reference point $X^{IJ}$, and taking the triangle in $\P^5$ with vertices $X,W_{i{-}1},W_{i}$, which we denote by $[X,W_{i{-}1},W_i]$. It follows that the union $\cup_{i=1}^n[X,W_{i{-}1},W_i]$ forms a surface with boundary $P$, so we will take it to be our definition of $\Sigma_s$ for each $s$.

Suppose we integrate over the triangle pair $(W_1,W_2)\in [X,W_{i{-}1},W_i]\times [X,W_{j{-}1},W_j]$. We can parametrize the triangle pair by
\be
W_1 &=& X+\alpha_1 Z_{i{-}1}Z_i+\beta_1 Z_iZ_{i{+}1}\\
W_2 &=& X+\alpha_2 Z_{j{-}1}Z_j+\beta_2 Z_jZ_{j{+}1}
\ee
where $\alpha_{1,2}>0; \beta_{1,2}>0$.

The only non-trivial part of the computation is working out the numerator index contraction in~\eqref{eq:wilsonForm2}. After a series of index gymnastics, we get for each $i,j$:
\be
\omega_{k=2}=
\frac{1}{2}\frac{
(\mathcal{N}_1(i,j)+\mathcal{N}_2(i,j))d^2\alpha d^2\beta
}{
\left(\lb YX\rb + \alpha_1\lb Yi{-}1,i\rb+\beta_1\lb Yi,i{+}1\rb\right)^3
\left(\lb YX\rb + \alpha_1\lb Yj{-}j,j\rb+\beta_1\lb Yj,j{+}1\rb\right)^3
}\nonumber\\
\ee
where
\be
\mathcal{N}_1(i,j)&\deff &\lb YX \rb\lb X ij\rb\lb Y(i{-}1,i,i{+}1)\cap(j{-}1,j,j{+}1)\rb\\
\mathcal{N}_2(i,j)&\deff &-\lb Y(i{-}1,i,i{+}1)\cap(Xj)\rb\lb Y(j{-}1,j,j{+}1)\cap(Xi)\rb
\ee
where $\lb Y(abc)\cap(def)\rb\deff \lb Y_1 abc\rb\lb Y_2 def\rb-(Y_1\leftrightarrow Y_2)$ for any vectors $a,b,c,d,e,f$.

Integrating over $\alpha_{1,2}>0; \beta_{1,2}>0$ and summing over all $i,j$ gives us
\be
\int_\Sigma \omega_{k=2} = \frac{1}{8}\sum_{i,j=1}^n
\frac{\mathcal{N}_1(i,j)+\mathcal{N}_2(i,j)}{\lb YX\rb^2\lb Yi{-}1,i\rb\lb Yi,i{+}1\rb\lb Yj{-}1,j\rb\lb Yj,j{+}1\rb}
\ee
This is one of several \defn{local} expressions for the canonical rational function $\aOmega(\A(2,2,n))$. That is, there are no spurious singularities, except at $\lb YX\rb\rightarrow 0$. Interestingly, if we keep only one of the numerator terms $\mathcal{N}_{1},\mathcal{N}_{2}$, then the result would still sum to (half) the correct answer. Perhaps some clever manipulation of the integration measure would make this manifest. If we only keep the first numerator, then we recover the local form given in~\eqref{eq:localForm}.

It is possible, through a clever choice of the surface $\Sigma$, to recover the Kermit representation~\eqref{eq:kermit} of the canonical rational function. While the equivalence between the Kermit representation and the local form appears non-trivial as an algebraic statement, it follows easily from the surface-independence of the integral. 

This computation can be extended easily to higher $k$. Furthermore, since the surfaces $\Sigma_1,\ldots, \Sigma_k$ are independent, it is possible to have picked a different $X$ for each surface, giving a local form with arbitrary reference points $X_1,\ldots, X_k$.

\subsubsection{Projective space contours part I}
\label{sec:contour1}
We now turn to a new topic. We show that the rational canonical form of convex projective polytopes (and possibly more general positive geometries) can be represented as a contour integral over a related projective space.

Recall for convex projective polytopes $\A$ that the $n \times (m+1)$-matrix $Z$ can be considered a linear map $\P^{n-1} \to \P^m$, sending the standard simplex $\Delta^{n{-}1}$ to the polytope $\A$. Letting $(C_j)_{j=1}^n$ denote the coordinates on $\P^{n-1}$, we have the equation $Y = Z(C) = \sum_{j=1}^nC_jZ_j$ describing the image $Y \in \P^m$ of a point $C \in \P^{n-1}$ under the map. 

We begin with the simplex canonical form on $(C_j)_{j=1}^n\in \C^{n}$ constrained on the support of the expression $Y=\sum_{j=1}^n C_jZ_j$:
\be
\int\frac{d^nC}{\prod_{j=1}^n C_j}\delta^{m{+}1}\left(Y-\sum_{j=1}^n C_jZ_j\right).
\ee
Typically, the delta function on projective space $\delta^m(Y,Z(C))$ is reduced from rank $m{+}1$ to $m$ by integrating over a $\GL(1)$ factor like $\rho$ in~\eqref{eq:delta4}. Instead, we have absorbed the $d\rho/\rho$ measure into the canonical form on $C$-space, thus giving a rank-$n$ measure on $\C^n$.

We now describe a contour in the $C$-space such that the above integral gives the canonical rational function $\aOmega(\A)$. A naive integral over all $C \in \R^n$ is obviously ill-defined due to the $1/C_j$ singularities. However, with some inspiration from Feynman, we can integrate slightly away from the real line $C_j\rightarrow c_j = C_j + i \epsilon_j$ for some small constants $\epsilon_j>0$, with $c_j\in \mathbb{R}$ being the contour. We will assume that $\sum_{j=1}^n\epsilon_jZ_j=\epsilon Y$ for some $\epsilon>0$, and let $\bar{Y} \deff (1+i\epsilon)Y$. After completing the contour integral, we can take the limit $\epsilon\rightarrow 0$ to recover $\bar{Y}\rightarrow Y$. This is reminiscent of the epsilons appearing in the loop integration of amplitudes. Finally, we assume that the $Z_j$ vertices form a real convex polytope 
and $\bar{Y}$ is a positive linear combination of the $Z_j$'s. Note that $\bar{Y}$ is real and $Y$ is now slightly imaginary due to the $i\epsilon$ shift.

We claim, in the $\epsilon_j \to 0$ limit,
\be \label{eq:contour}
\aOmega(\A)(Y)=\frac{1}{(2\pi i)^{n{-}m{-}1}m!}\int\frac{d^nc}{\prod_{j=1}^n (c_j-i \epsilon_j)}\delta^{m+1}\left(\bar{Y}-\sum_{j=1}^n c_jZ_j\right)
\ee
with integration over $c_j\in\mathbb{R}$ for each $j$. The overall constants have been inserted to achieve the correct normalization.

Before proving this identity in full generality, we give a few examples.

\begin{example}
The simplest example occurs for $n=m{+}1$ where $\A$ is just a simplex in $m$ dimensions. In that case, there is no contour, and we can immediately set $\epsilon_j\rightarrow 0$ for each $j$ and thus $\epsilon\rightarrow 0$. We get
\be
\aOmega(\A)(Y)=\frac{1}{m!}\int \frac{d^{m{+}1}c}{\prod_{j=1}^{m{+}1}c_j}\delta^{m{+}1}\left(Y-\sum_{j=1}^{m{+}1}c_jZ_j\right)
\ee
We can uniquely solve for each $c_j$ on the support of the delta function, which gives
\be
c_j = (-1)^{j{-}1}\frac{\lb Y12 \cdots \hat{j}\cdots (m{+}1)\rb}{\lb 12 \cdots (m{+}1)\rb}
\ee
where the ``hat" denotes omission, and the Jacobian of the delta function is $\lb12\cdots (m{+}1)\rb$. It follows that (see~\eqref{eq:RInv})
\be
\aOmega(\A)(Y)=[1,2,\ldots,m{+}1]
\ee
which is the familiar canonical rational function for a simplex. 

More generally, suppose we integrate over a contour in the original $C$ space that picks up a set of poles at $C_j\rightarrow 0$ for all $j$ except $j=j_0,\ldots,j_{m}$, which we assume to be arranged in ascending order. Then the residue is given by
\be
\frac{1}{m!}\int \left(\prod_{j\in J}\Res_{C_j = 0}\right)\frac{d^nC}{\prod_{j=1}^nC_j}\delta\left(Y-\sum_{j=1}^n C_j Z_j\right)=[j_0,j_1,j_2,\ldots ,j_m]
\ee
where $J=\{1,2,\ldots,n\}-\{j_0,j_1,j_2,\ldots,j_{m}\}$. Suppose the indices contained in $J$ are $k_1,\ldots,k_{n{-}m{-}1}$ in increasing order, then we will denote the residue collectively as $\Res(J)$ or $\Res(k_1, k_2,\ldots,k_{n{-}m{-}1})$. So
\be\label{eq:ResJ}
\Res(J)\deff \Res(k_1,\ldots,k_{n{-}m{-}1}) \deff [j_1,j_2,\ldots,j_m]
\ee
We note that the result may come with a negative sign if the contour is negatively oriented. The $\Res$ operator, however, assumes a positive orientation. This formula will be very convenient for the subsequent examples.
\end{example}

We now move on to higher $n$ examples for the polygon (i.e. $m=2$).

\begin{example}

We now perform the contour integral explicitly at $n=4$ points for the $m=2$ quadrilateral. There are four integrals over $c_i$ constrained by three delta functions. We will integrate out $c_{1,2,3}$ to get rid of the delta functions with a Jacobian factor $\lb 123\rb^3$, which gives us
\be
\frac{1}{2!(2\pi i)}\int_{c_4\in \mathbb{R}}\frac{dc_4/\lb 123\rb^3}{(c_1-i\epsilon_1)(c_2-i\epsilon_2)(c_3-i\epsilon_3)(c_4-i\epsilon_4)}
\ee
where $c_1,c_2$, and $c_3$ depend on $c_4$ through the three equations $\bar{Y}=Z(c)$.

In the large $c_4$ limit, all three dependent variables scale like $O(c_4)$, so the integrand scales like $O(c_4^{-4})$ which has no pole at infinity. We can now integrate over $c_4$ by applying Cauchy's theorem. The key is to first work out the location of the four poles relative to the real line.

We begin with the constraints
\be
\bar{Y}=c_1Z_1+c_2 Z_2+c_3 Z_3+c_4Z_4
\ee
which can be re-expressed in terms of three scalar equations by contracting with $Z_2Z_3$, $Z_3Z_1$, and $Z_1Z_2$, respectively.
\be
c_4\lb 234\rb &=& \lb \bar{Y}23\rb-c_1 \lb 123\rb\\
c_4\lb 134\rb &=& -\lb \bar{Y}31\rb+c_2 \lb 123\rb \\
c_4 \lb 124\rb &=& \lb \bar{Y}12\rb - c_3 \lb 123\rb
\ee
We have written the equation so that each bracket is positive, provided that $Z_i$ is cyclically convex and $Y$ is inside the polygon. Both positivity conditions are crucial, because they prescribe the location of the poles.

\begin{figure}
\centering
\includegraphics[width=8cm]{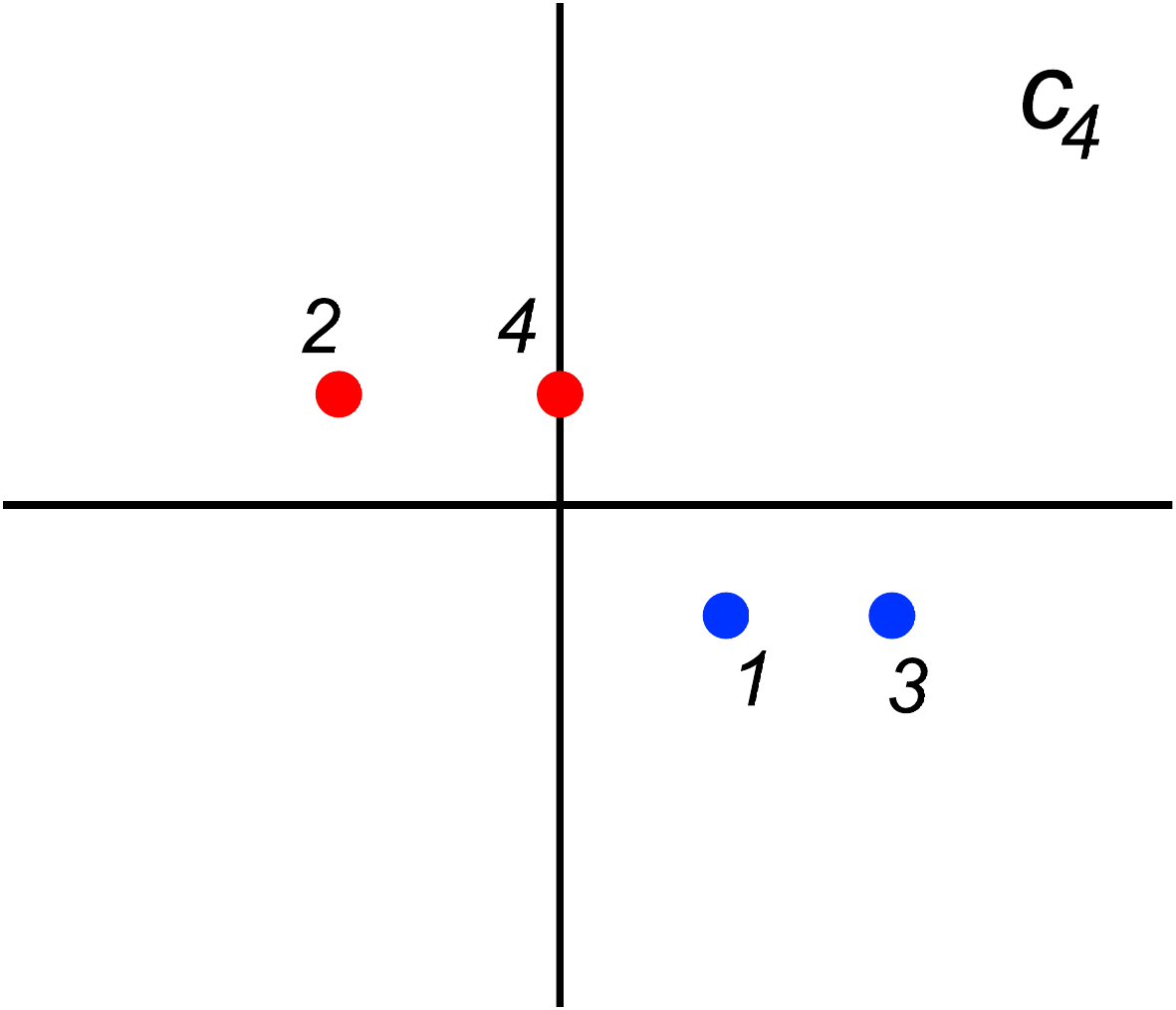}
\caption{The locations of the four poles for the quadrilateral $i\epsilon$ contour, with $c_1,c_2$ and $c_3$ dependent on $c_4$ through $\bar{Y}=c\cdot Z$. A pole is colored red if a counterclockwise contour picks up the residue with a plus sign (e.g. $+\Res(2), +\Res(4)$), while a pole is colored blue if a counterclockwise contour picks up the residue with a minus sign (e.g. $-\Res(1), -\Res(3)$). Of course, the signs are reversed if the contour is clockwise.}
\label{fig:ieps4}
\end{figure}

Now, for any index $i=1,2,3,4$, we let the ``pole at $i$" refer to the pole at $c_i\rightarrow i\epsilon_i$. From the constraints, we find that poles 1 and 3 both provide a negative imaginary part to $c_4$, so they are below the real line, while poles at $2$ and $4$ are above the real line (see Figure~\ref{fig:ieps4}). Since there is no pole at infinity, the integral can be performed by closing the contour below to pick up poles $1,3$, or closing the contour above to pick up poles $2,4$. The former gives us
\be
\aOmega(\mathcal{A}) = \Res(1)+\Res(3) = [2,3,4]+[1,2,4]
\ee
while the latter gives
\be
\aOmega(\mathcal{A}) = \Res(2)+\Res(4) = [1,3,4]+[1,2,3]
\ee
Of course, these are two equivalent triangulations of the quadrilateral. We see therefore that the $i\epsilon$ contour beautifully explains the triangulation independence of the canonical rational function as a consequence of Cauchy's theorem.

\end{example}

\begin{example}
We now compute the contour for a pentagon, where new subtleties emerge. We begin as before by integrating over $c_1,c_2$ and $c_3$ to get rid of the delta functions. We then re-express the delta function constraints as three scalar equations.
\be
c_4\lb 234\rb+c_5\lb 235\rb &=& \lb \bar{Y}23\rb-c_1\lb 123\rb\\
c_4 \lb 134\rb+c_5\lb 135\rb&=&-\lb \bar{Y}31\rb+c_2\lb 123\rb\\
c_4\lb 124\rb+c_5\lb 125\rb&=&\lb \bar{Y}12\rb-c_3\lb 123\rb
\ee
We now integrate over $c_4\in \mathbb{R}$. The locations of the poles are the same as before (see Figure~\ref{fig:ieps4}), and we are free to choose how we close the contour. For sake of example, let us close the contour above so we pick up poles 2 and 4. We now analyze both poles individually.

The pole at 4 induces $c_4\rightarrow i\epsilon_4$. Our constraints therefore become:
\be
c_5\lb 235\rb &=& \lb \bar{Y}23\rb-c_1\lb 123\rb-i\epsilon_4\lb 234\rb\\
c_5\lb 135\rb&=&-\lb \bar{Y}31\rb+c_2\lb 123\rb-i\epsilon_4\lb 134\rb\\
c_5\lb 125\rb&=&\lb \bar{Y}12\rb-c_3\lb 123\rb-i\epsilon_4 \lb 124\rb
\ee
There are now four poles left for $c_5$, corresponding to 1,2,3 and 5. The poles at 1 and 3 are clearly below the real line, as evident in the equations above, while the the pole at 5 is obviously above (see Figure~\ref{fig:ieps5}). The pole at 2, however, is more subtle and deserves closer attention. In the limit $c_2\rightarrow i\epsilon_2$, we find from the second equation above that
\be
c_5\lb 135\rb = -\lb \bar{Y}31\rb+iq
\ee
where $q=\epsilon_2\lb 123\rb-\epsilon_4\lb 134\rb$. If $q>0$, then the pole at 2 is above the real line, and if $q<0$, then the pole is below. So the relative size of the $\epsilon_j$ parameters makes a difference to the computation. However, as we now show, the final result is unaffected by the sign of $q$. Suppose $q>0$, then we can close the $c_5$ contour above and pick up the following poles
\be
\Res(45)+\Res(24)=[123]+[135]
\ee
Alternatively, we can close below and pick up
\be
\Res(34)+\Res(14)=[125]+[235]
\ee
The two results, of course, are identical since there is no pole at infinity. So the pole at 4 produces the following result regardless of how the $c_5$ contour is closed, provided that $q>0$.
\be
c_4\rightarrow i\epsilon_4\;;\; q>0 \;\;\sim\; \aOmega(Z_1,Z_2,Z_3,Z_5)
\ee
If $q<0$, then $\Res(24)$ migrates below the real line, in which case closing the contour above would give
\be
\Res(45) = [123]
\ee
and closing below would give
\be
\Res(34)+\Res(14)-\Res(24)=[125]+[235]-[135]
\ee
which of course are equivalent. The minus sign in front of $\Res(24)$ appears due to the orientation of the contour. 
In summary,
\be
c_4\rightarrow i\epsilon_4\;;\; q<0 \;\;\sim\; \aOmega(Z_1,Z_2,Z_3)
\ee

\begin{figure}
\centering
\includegraphics[width=5cm]{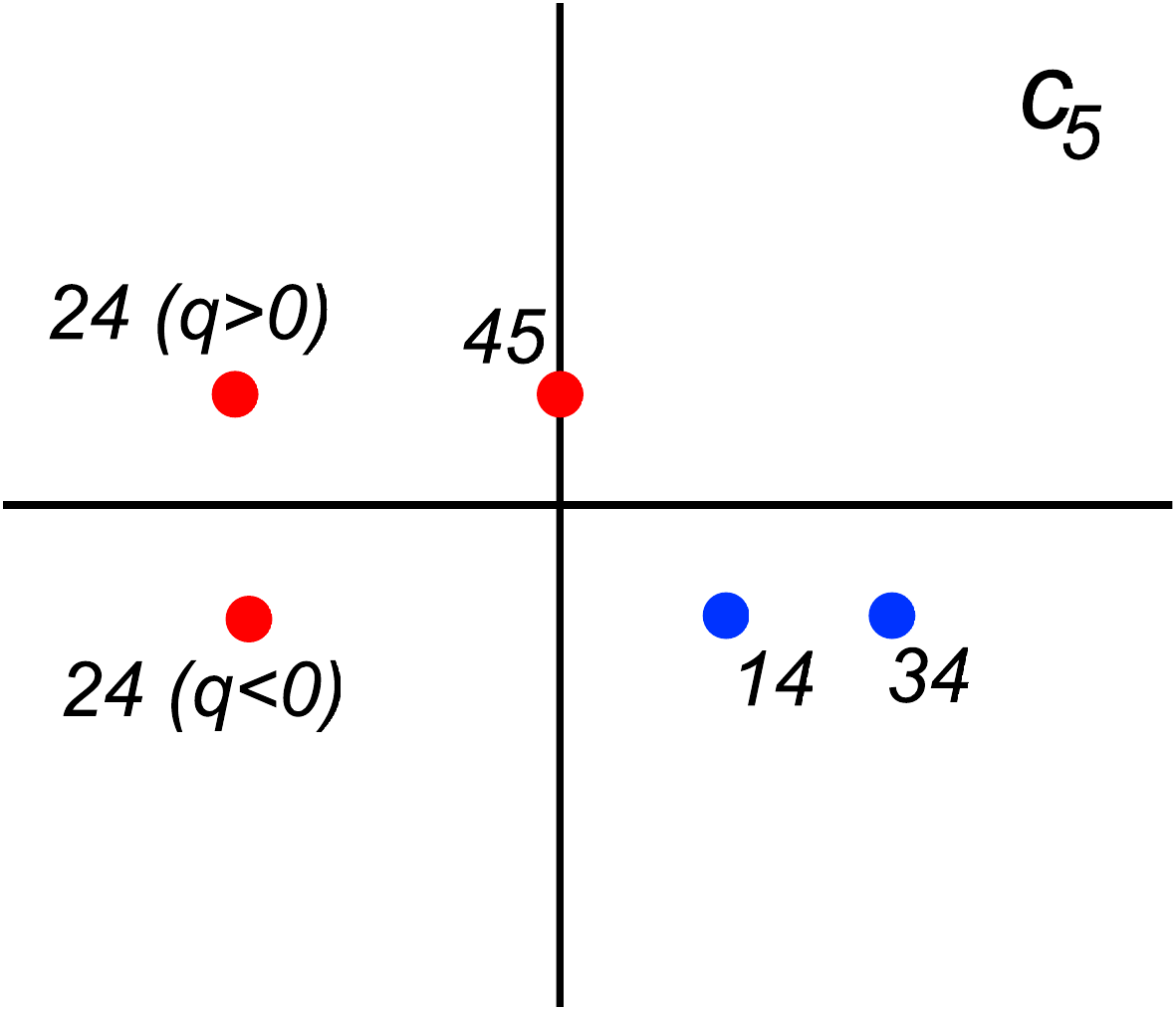}\;\;\;\;\;\;
\includegraphics[width=5cm]{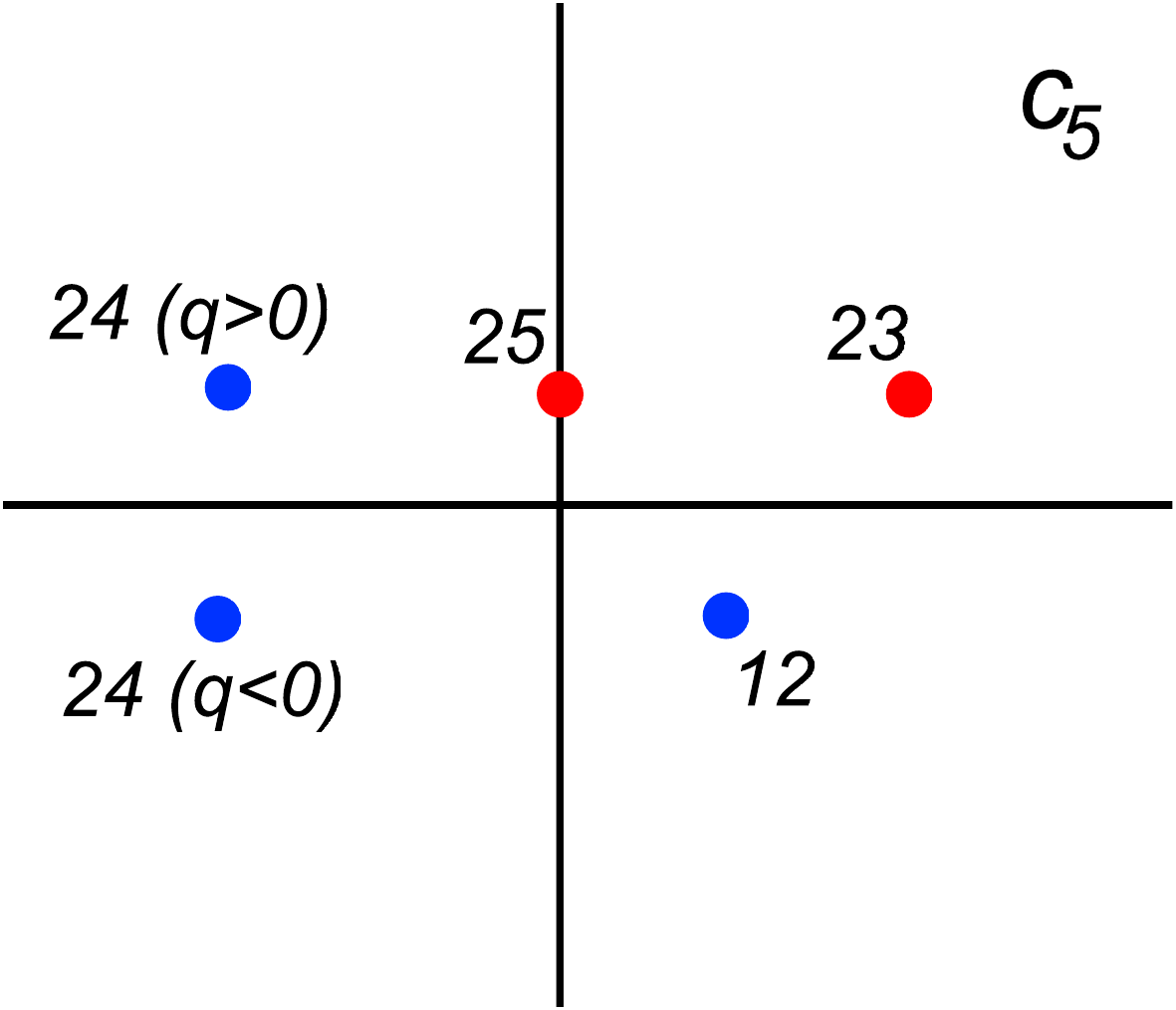}
\caption{The pole locations of $c_5$ under $c_4\rightarrow i\epsilon_4$ (left) and $c_2\rightarrow i\epsilon_2$ (right). Note that in both graphs the position of the 24 pole relative to the real line differs depending on the sign of $q$.  See the caption under Figure~\ref{fig:ieps4} for explanations of the pole coloring.}
\label{fig:ieps5}
\end{figure}

Now let us move on to the pole of $c_4$ at 2 which induces $c_2\rightarrow i\epsilon_2$. Again, there are four poles left for $c_5$, this time corresponding to 1,3,4,5. After re-arranging the constraints using Schouten identities, we find
\be
c_5\lb 123\rb\lb 345\rb &=&  \lb\bar{Y}34\rb\lb 123\rb-\lb 123\rb(c_1\lb 134\rb+i\epsilon\lb 234\rb)\\
c_4\lb 145\rb\lb 123\rb &=&-\lb \bar{Y} 41\rb\lb 123\rb+\lb 123\rb(i\epsilon_2\lb 124\rb+c_3\lb 134\rb)\\
c_5\lb 135\rb&=&\lb\bar{Y}13\rb+i\epsilon_2\lb 123\rb-c_4\lb 134\rb
\ee
Evidently, the poles at 3 and 5 are above the real line, while the pole at 1 is below (see Figure~\ref{fig:ieps5}). The pole at 4, however, is above if $q>0$ and below if $q<0$. 

For $q>0$, closing the $c_5$ contour above gives
\be
\Res(23)-\Res(24)+\Res(25)=[145]-[135]+[134]
\ee
while closing below gives
\be
\Res(12)=[345]
\ee
which are equivalent, so
\be
c_2\rightarrow i\epsilon_2\;;\; q>0 \;\;\sim\; \aOmega(Z_3,Z_4,Z_5)
\ee

For $q<0$, closing the contour above gives
\be
\Res(23)+\Res(25)=[145]+[134]
\ee
while closing below gives
\be
\Res(12)+\Res(24)=[345]+[135]
\ee
which are again equivalent, so
\be
c_2\rightarrow i\epsilon_2\;;\; q>0 \;\;\sim\; \aOmega(Z_1,Z_3,Z_4,Z_5)
\ee

We now sum over the $c_4\rightarrow i\epsilon_4$ and $c_2\rightarrow i\epsilon_2$ contributions. For $q>0$, we get
\be
\aOmega(\A)=\aOmega(Z_1,Z_2,Z_3,Z_5)+\aOmega(Z_3,Z_4,Z_5)
\ee
and for $q<0$, we get
\be
\aOmega(\A)=\aOmega(Z_1,Z_2,Z_3)+\aOmega(Z_1,Z_3,Z_4,Z_5)
\ee
Both results are of course correct. Again, we see triangulation independence as a result of Cauchy's theorem, but with additional subtleties involving the sign of $q$. Furthermore, we could have closed the $c_4$ contour below instead and achieved a different set of triangulations.

\end{example}

We now argue for all $m>0$ and all $n\ge m{+}1$ that the $i\epsilon$ contour integral \eqref{eq:contour} is equivalent to the dual volume formula \eqref{eq:volume}.  Let $S$ denote the right hand side of \eqref{eq:contour}.  We begin by writing the delta function in terms of its Fourier transform with dual variable $W\in\mathbb{R}^{m{+}1}$.
\be
\delta^{m{+}1}\left(\bar{Y}-\sum_{j=1}^nc_jZ_j\right)=\frac{1}{(2\pi)^{m{+}1}}\int d^{m{+}1}W\;e^{i\left(-W \cdot \bar{Y}+\sum_{j=1}^nc_jW \cdot Z_j\right)}
\ee


Then we integrate over each $c_j$ using the following identity:
\be
\int_\R \frac{dc_j}{c_j-i\epsilon_j}e^{i c_j W \cdot Z_j} = (2\pi i)\; \theta(W \cdot Z_j)
\ee
where $\theta(x)$ is the Heaviside step function. It follows that
\be
S = \frac{i^{m{+}1}}{m!}\int d^{m{+}1}W  e^{-iW \cdot \bar{Y}}\prod_{j=1}^n\theta(W \cdot Z_j)
\ee

We also change variables to radial coordinates $\rho \deff |W|$ and $\hat{W}\deff W/|W|$ with $|W|$ denoting the Euclidean norm of $W$. We have $W=\rho \hat{W}$ so that the measure now becomes $d^{m{+}1}W=\rho^md\rho \;\lb\hat{W}d^m\hat{W}\rb/m!$, where $\lb\hat{W}d^m\hat{W}\rb$ is the pull back of $\lb Wd^mW\rb$ to the sphere $\hat{W}\in S^m$. 

Now recall the Fourier transform identity:
\be
\frac{1}{m!}\int_{\R} dx\; x^m\theta(x)e^{-i x y} = \frac{(-i)^{m{+}1}}{(y-i \epsilon')^{m{+}1}} 
\ee
for some small $\epsilon'>0$ which we take to zero in the end. The integral over $\rho>0$ therefore becomes:
\be
\frac{1}{m!}\int_0^\infty d\rho\;\rho^m e^{-i\rho \hat{W}\cdot \bar{Y}} = \frac{(-i)^{m{+}1}}{(\hat{W} \cdot \bar{Y}-i\epsilon')^{m+1}} = 
\frac{(-i)^{m{+}1}}{(\hat{W} \cdot Y)^{m+1}} 
\ee
where we set $\epsilon'\rightarrow 0$ on the right. 

It follows that
\be
 S&=&\frac{1}{m!}\int_{\hat{W}\in S^m} \lb \hat{W}d^m\hat{W}\rb\frac{1}{(\hat{W}\cdot \bar{Y})^{m{+}1}}\prod_{j=1}^n\theta(\hat{W} \cdot Z_j)\\
&=&\frac{1}{m!}\int_{\A^*} \lb Wd^mW\rb\frac{1}{(W \cdot \bar{Y})^{m{+}1}} 
\ee
which is the volume formula \eqref{eq:volume} for $\aOmega(\A)$ in the limit $\bar{Y}\rightarrow Y$.
In the last step, we used that fact that the inequalities $W \cdot Z_j>0$ imposed by the step functions carve out the interior of the dual polytope $\A^* \subset \P^m$. Furthermore, on the last line we removed the ``gauge" choice $\hat{W}\in S^m$, since the resulting integral is gauge invariant under $GL(1)$ action. 

We stress that it is absolutely crucial for $\bar{Y}$ to be on the polytope's interior. Otherwise, the denominator factor $W\cdot \bar{Y}$ can vanish and cause divergent behavior. From the contour point of view, as shown in examples above, the position of $\bar{Y}$ affects the location of poles relative to the contour.

\subsubsection{Projective space contours part II}
We now consider an alternative contour integral that is in essence identical to the previous, but represented in a very different way. The result is the following
\be \label{eq:kernelcontour}
\aOmega(\A)(Y) &=& \frac{1}{(2\pi)^{n{-}m{-}1}m!}\int \frac{d^nB}{\prod_{j=1}^n (C_{j}^0-i B_j)}\delta^{m+1}\left(\sum_{j=1}^nB_jZ_j\right) \\
&=& \frac{1}{(2\pi)^{n{-}m{-}1}m!}\int_{B\in K} \frac{d^{n-m-1} B}{\prod_{j=1}^n (C_{j}^0-i B_j)}
\ee
for any point $C^0 = (C_1^0,C_2^0,\ldots,C_n^0) \in \R^n$ such that $Y = \sum_{j=1}^n C_{j}^0Z_j$ and $C_j^0>0$. In the second equation, $K \subset \R^n$ denotes the $(n{-}m{-}1)$ dimensional kernel of the map $Z: \R^n \to \R^{m+1}$. The measure on $K$ is defined as
\be
d^{n{-}m{-}1}B\deff \int d^nB\;\delta^{m{+}1}\left(\sum_{j=1}^n B_j Z_j\right)
=\frac{dB_{k_1}\cdots dB_{k_{n{-}m{-}1}}}
{\lb j_0\cdots j_m\rb}
\ee
for any partition $\{j_0,\ldots, j_m\}\cup\{k_1,\ldots,k_{n{-}m{-}1}\}$ of the index set $\{1,\ldots,n\}$. On the right, it is understood that $B\cdot Z = 0$.

We now argue that \eqref{eq:kernelcontour} is equivalent to \eqref{eq:contour}. An immediate consequence is that this contour integral is independent of the choice of $C^0$.

We begin with the $i\epsilon$ contour~\eqref{eq:contour}, fix a choice $c^0$ satisfying $\bar{Y}=c^0\cdot Z$, and change integration variables $c_j\rightarrow b_j=c_j^0-c_j$.
\be
\aOmega(\A)(Y)=\frac{1}{(2\pi i)^{n{-}m{-}1}m!}\int\;\frac{d^nb}{\prod_{j=1}^n(c_j^0-b_j-i\epsilon_j)}\delta^{m{+}1}\left(\sum_{j=1}^nb_jZ_j\right)
\ee

For each $b_j$, there is a pole at $b_j=c_j^0-i\epsilon_j$ which lives in the fourth quadrant on the complex plane of $b_j$. Since we are currently integrating over the real line, it is possible to do a clockwise Wick rotation $b_j\rightarrow B_j=-ib_j$ to integrate over the imaginary line (i.e. $B_j\in \mathbb{R}$) without picking up any of the poles. It follows that

\be
\aOmega(\A)(Y)=\frac{1}{(2\pi)^{n{-}m{-}1}m!}\int\;\frac{d^nB}{\prod_{j=1}^n(c_j^0-iB_j-i\epsilon_j)}\delta^{m{+}1}\left(\sum_{j=1}^nB_jZ_j\right)
\ee

We can now set $\epsilon_j\rightarrow 0$ as it no longer affects the contour of integration. In this limit, we recover $c_i^0\rightarrow C_i^0$ and $\bar{Y}\rightarrow Y$, giving $Y=C^0\cdot Z$ and hence the desired result. The formula \eqref{eq:kernelcontour} was first established in \cite[Theorem 5.5]{BT}.

Let us work out one example.

\begin{example}
Let $\A \subset \P^2$ have vertices $\{(1,0,0),(0,1,0),(0,0,1),(1,1,-1)\}$, as in Example \ref{ex:squarelaplace}.  The kernel is $K = \{(x,x,-x,-x):x\in\R\} \subset \R^4$.  For $Y = (Y^0,Y^1,Y^2)$ we pick $C^0 = (Y^0-a,Y^1-a,Y^2+a,a)$ for some $a\in \R$, with the assumption that all entries are positive. Thus
\begin{align}
\aOmega(\A)(Y) & = 
\frac{1}{(2\pi)2!}\int_{-\infty}^\infty \frac{dx}{(Y^0-a-ix)(Y^1-a-ix)(Y^2+a+ix)(a+ix)} \\
&= \frac{1}{2!}\left( \frac{1}{Y^0Y^1Y^2} - \frac{1}{(Y^0+Y^2)(Y^1+Y^2)Y^2}\right)
\end{align}
where we have chosen to close the contour by picking the poles $x = ia$ and $x =i(Y^2+a)$ with positive imaginary part.  This is independent of $a$, as expected, and agrees with \eqref{eq:squaretriangulation}.
\end{example}

It is interesting to contrast these contour representations with the Newton polytope push-forward formula discussed in Section~\ref{sec:push}. For the former, the combinatorial structure of the convex polytope is automatically ``discovered" by the contour integration without prior knowledge. For the latter, the combinatorial structure must be reflected in the choice of the Newton polytope, which may not be possible for polytopes whose combinatorial structure cannot be realized with integral coordinates (e.g. non-rational polytopes~\cite{nonrational}).

\subsubsection{Grassmannian contours}

The generalization of the $i\epsilon$ contour formulation to the Amplituhedron for $k>1$ is an outstanding problem. We now give a sketch of what such a generalization may look like at tree level, if it exists.

We conjecture that the canonical rational function of the tree Amplituhedron, denoted $\A(k,n,m;L{=}0)$, is given by a contour integral of the following form:
\be
\aOmega(\A)=\frac{1}{(2\pi i)^{k(n{-}m{-}k)}(m!)^k}\int_\Gamma\;\frac{d^{k\times n }C}{\prod_{i=1}^n C_{i,i{+}1,...,i{+}k{-}1}}
\delta^{k\times(k{+}m)}\left(Y-C\cdot Z\right)
\ee
where the delta functions impose the usual $Y=C\cdot Z$ constraint while the measure over $C$ is the usual cyclic Grassmannian measure on $G_{\geq 0}(k,n)$. The integral is performed over some contour $\Gamma$ which we have left undefined.

For $n=m{+}k$, there is no contour and we trivially get the expected result. For $n=m{+}k{+}1$, we have naively guessed a contour that appears to work according to numerical computations. The idea is to simply change variables $C\rightarrow c$ so that for every Pl\"ucker coordinate we have $c_{i_1,\ldots,i_k}=C_{i_1,\ldots,i_k}+i\epsilon_{i_1,\ldots i_k}$ for a small constant $\epsilon_{i_1,\ldots,i_k}>0$. We will assume that $\sum_{1\leq i_1<\cdots <i_k\leq n}\epsilon_{i_1\ldots i_k}Z_{i_1}\wedge\ldots\wedge Z_{i_k}=\epsilon Y$ for some $\epsilon>0$ and define $\bar{Y}\deff (1+i\epsilon)Y$ (We are re-scaling Pl\"ucker coordinates here, not components of the matrix representation.) so we can re-express $Y=C\cdot Z$ as $\bar{Y}=c\cdot Z$ as we did for the polytope. Again, we assume that $\bar{Y}$ is real and $Y$ is slightly complex.

We then integrate over the real part of $c$. 
\be
\aOmega(\A)(Y)&=&\frac{1}{(2\pi i)^{k}(m!)^k}\int_\Gamma\;\frac{d^{k\times (m{+}k{+}1) }c}{\prod_{i=1}^{m{+}k{+}1}\left[c_{i,i{+}1,\ldots,i{+}k{-}1}{-}i\epsilon_{i,i{+}1,\ldots,i{+}k{-}1}\right]}\times\nonumber\\
&&\delta^{k\times(k{+}m)}\left(\bar{Y}-c\cdot Z\right)
\ee
After integrating over the delta function constraints, there are effectively only $k$ integrals left to do. Conveniently, the poles appearing in the cyclic measure are linear in each integration variable, and can therefore be integrated by applying Cauchy's theorem.


One obvious attempt for generalizing beyond $n>m{+}k{+}1$ is to impose the exact same $i\epsilon$ deformation. However, after having integrated over the delta functions, the cyclic minors are at least of quadratic order in the integration variables, thus making the integral very challenging to perform (and possibly ill-defined in the small $\epsilon$ limit). It is therefore still a challenge to extend the contour integral picture to the Amplituhedron.

Despite the difficulty of extending the $i\epsilon$ contour, our optimism stems from the well-known observation that the canonical rational function of the {\sl physical} Amplituhedron is given by a sum of {\sl global residues} (see Appendix~\ref{app:GRT}) of the Grassmannian measure constrained on $Y=C\cdot Z$. This was seen in the study of scattering amplitudes. See for instance~\cite{Grassmannian} and references therein. It is therefore reasonable to speculate that a choice of contour would pick up the correct collection of poles whose global residues sum to the expected result, and that different collections of residues that sum to the same result appear as different deformations of the same contour.

We expect these constructions to continue for the $L$-loop Amplituhedron. Namely, we expect that the canonical rational function of the $L$-loop Amplituhedron $\A(k,n,m;l^L)$ to be given by a sum over global residues of the canonical form of the $L$-loop Grassmannian $G(k,k{+}m;l^L)$ constrained by $\mathcal{Y}=Z(\mathcal{C})$ for $\mathcal{C}\in G(k,k{+}m;l^L)$ and $\mathcal{Y}\in\A(k,n,m;l^L)$. A non-trivial example for the 1-loop physical Amplituhedron $\A(1,6;L{=}1)$ is given in~\cite{Bai:2015qoa}. The complete solution to this proposal is still unknown since the canonical form of the loop Grassmannians are mostly unknown.

\section{Integration of canonical forms}
\subsection{Canonical integrals}
In this section we provide a brief outline of an important topic: integration of canonical forms. 

Consider a positive geometry $(X,X_{1,\geq 0})$ with canonical form $\Omega(X,X_{1,\geq 0})$. Naively, integration of the form over $X(\R)$ or $X_{1,\geq 0}$ is divergent due to the logarithmic singularities. However, consider another positive geometry $(X,X_{2,\geq 0})$ which does not intersect the boundary components of $X_{1,\geq 0}$ except possibly at isolated points. We can therefore define the integral of the canonical form of $X_{1,\geq 0}$ over $X_{2,\geq 0}$:
\be
\omega_{X_{2,\geq 0}}(X_{1,\geq 0})=\int_{X_{2,\geq 0}}\Omega(X,X_{1,\geq 0})
\ee
We will refer to such integrals as \defn{canonical integrals}.

In particular, if $X_{1,\geq 0}$ is positively convex, and $X_{2,\geq 0}\subset X_{1,\geq 0}$, and $X_{2,> 0}$ is oriented in the same way as $X_{1,> 0}$ where they intersect, then the integral is positive. 

We observe that $\omega$ is triangulation independent in both arguments. Namely, given a triangulation $X_{2,\geq 0}=\sum_i Y_{i,\geq 0}$, we have
\be
\omega_{X_{2,\geq 0}}(X_{1,\geq 0})=\sum_{i}\omega_{Y_{i,\geq 0}}(X_{1,\geq 0})
\ee
A similar summation formula holds for a triangulation of $X_{1,\geq 0}$.

The simplest example is the familiar polylogarithm functions, which can be reproduced as canonical integrals.
\begin{example}
Let $0<z<1$. Consider the standard simplex $(\P^m, \Delta^m)$ and the simplex $(\P^m,\Delta^m(z))$ with vertices
\be
(1,0,0,0,\ldots,0,1),(1,z,0,0,\ldots, 0,1),(1,z,z,0,&\ldots&,0,1),\ldots, (1,z,z,z,\ldots,z,1),\\
(1,z,z,z,&\ldots&, z,1{-}z)
\ee
where the first row consists of $m$ vectors containing 0,1,2,\ldots, $m{-}1$ components of $z$, respectively, and the second row consists of an extra point. We will refer to $\Delta^m(z)$ as the \defn{polylogarithmic simplex of order $m$}.

Alternatively, the simplex $\Delta^m(z)$ can be constructed recursively in $m$ by beginning with $\Delta^1(z)$ which is simply the line segment $x\in[1-z,1]$ with $(1,x)\in \P^1(\R)$, and observe that
\be\label{eq:polylog_recursive}
\Delta^m(z)=\bigcup_{t\in [0,z]}\{t\}\times\Delta^{m{-}1}(t)
\ee
where $\{t\}\times \Delta^{m{-}1}(t)$ denotes all the points $(1,t,x)\in \P^m(\R)$ with $(1,x)\in\Delta^{m{-}1}(t)$.

We argue by induction on $m\geq 1$ that the integral of the canonical form of $\Delta^m$ over $\Delta^m(z)$ is precisely the polylogarithm $\text{Li}_m(z)$. That is,
\be
\text{Li}_m(z)=\int_{\Delta^m(z)}\Omega(\Delta^m)
\ee
Indeed, for $m=1$, we have
\be
\int_{\Delta^1(z)}\Omega(\Delta^1)\deff \int_{1{-}z}^1 \frac{dt}{t}=-\log(1-z)=\text{Li}_1(z)
\ee
Now suppose $m>1$. Given the recursive construction~\eqref{eq:polylog_recursive}, it is straightforward to show that
\be
\int_{\Delta^m(z)}\Omega(\Delta^m)=\int_0^z \frac{dt}{t}\int_{\Delta^{m{-}1}(t)}\Omega(\Delta^{m{-}1})
\ee
This is the same recursion satisfied by the polylogarithms:
\be
\text{Li}_m(z)=\int_0^z\frac{dt}{t}\text{Li}_{m{-}1}(t)
\ee
which completes our argument.
\end{example}

\subsection{Duality of canonical integrals and the Aomoto form}
Now consider the special case of convex polytopes in projective space. We wish to discuss an important duality of canonical integrals.

Consider convex polytopes $\A_1,\A_2$ in $\P^m(\R)$ satisfying $\A_2\subset \A_1$. It follows that the dual polytopes (see Section~\ref{sec:dual}) satisfy $\A^*_1\subset \A^*_2$. While the polytope pair provides a canonical integral $\omega_{\A_2}(\A_1)$, the dual polytope pair provides a \defn{dual canonical integral} $\omega_{\A^*_1}(\A^*_2)$. We claim that these two integrals are identical:
\be
\omega_{\A_2}(\A_1)=\omega_{\A^*_1}(\A^*_2)
\ee
This is the \defn{duality of canonical integrals}, which can be derived easily by recalling the dual volume formulation of the canonical rational function in Section~\ref{sec:dualpolytopeform}, and arguing that both the canonical integral and its dual can be expressed by the same double integral as follows:
\be
\omega_{\A_2}(\A_1)=\omega_{\A^*_1}(\A^*_2)=\frac{1}{m!}\int_{Y\in\A_2}\int_{W\in \A^*_1}\frac{\lb Yd^mY\rb\lb Wd^mW\rb}{(Y\cdot W)^{m{+}1}}
\ee
where we let $Y\in\P^m(\R)$ and let $W$ denote vectors in the dual projective space.

The double integral appearing on the right is said to be in \defn{Aomoto form}, and was used to express polylogarithmic functions as an integral over a pair of simplices~\cite{aomoto}. For extensions to polytopes, see~\cite{arkani_yuan}.

Furthermore, we expect the duality of canonical integrals to hold for positive geometries in projective space with non-linear boundaries. However, in such case, the canonical rational function must be defined by the volume of the dual region, which for a general positive geometry can be obtained by approximating it by polytopes and taking the limit of their duals. The details are discussed in Section~\ref{sec:beyond}, where it is emphasized that this formulation of the canonical rational function does {\sl not} necessarily match the definition in Section~\ref{sec:positive} except for polytopes.


\section{Positively convex geometries}
\label{sec:convex}
Let $(X,X_{\geq 0})$ be a positive geometry.  The canonical form $\Omega(X,X_{\geq 0})$ may have zeros, and the zero set may intersect $X_{> 0}$.  Furthermore, in some cases the set of poles of $\Omega(X,X_{\geq 0})$ may also intersect $X_{>0}$.  If neither the poles nor the zeros intersect $X_{>0}$, then the form $\Omega(X,X_{\geq 0})$ must be uniformly oriented on each connected component of the interior.  In this case, our sign conventions for $\Res$ (see~\eqref{eq:res}) and orientation inheritance by boundary components (see Section~\ref{app:assumptions1}) ensures that $\Omega(X,X_{\geq 0})$ is {\sl positively} oriented on the interior relative to the orientation of $X_{>0}$, which can be proven easily by induction on dimension. In this case, we say that the canonical form is \defn{positively oriented}, or simply \defn{positive}. 

As we shall explain below, the positivity of the canonical form has some relation to the usual notion of {\sl convexity} of the underlying positive geometry. 
We define a positive geometry $(X,X_{\geq 0})$ to be \defn{positively convex} if its canonical form $\Omega(X,X_{\geq 0})$ is positive.  We remark that if $(X,X_{\geq 0})$ and $(Y,Y_{\geq 0})$ are positively convex geometries then so are $(X,X^-_{\geq 0})$ and $(X \times Y, X_{\geq 0} \times Y_{\geq 0})$.  Also, boundary components of positively convex geometries are again positively convex.

Consider $(\P^2,\A)$ where $\A$ is a convex quadrilateral.  As shown in Section \ref{sec:numerator}, the canonical form $\Omega(\A)$ has four linear poles, and one linear zero.  The poles are the four sides of $\A$, and the zero is the line $L$ that passes through the two intersection points of the opposite sides.  It is an elementary exercise to check that $L$ never intersects $\A$.  Thus the convex quadrilateral $\A$ is a positively convex geometry.  Indeed, we have:
\begin{equation}
\mbox{Every convex projective polytope $(\P^m,\A)$ is a positively convex geometry.}
\end{equation}
This follows immediately from either \eqref{eq:volume} or \eqref{eq:laplace} which give positive integral formulae for $\aOmega(\A)(Y)$ for $Y\in\Int(\A)$.

\begin{figure}
\centering
\includegraphics[width=5cm]{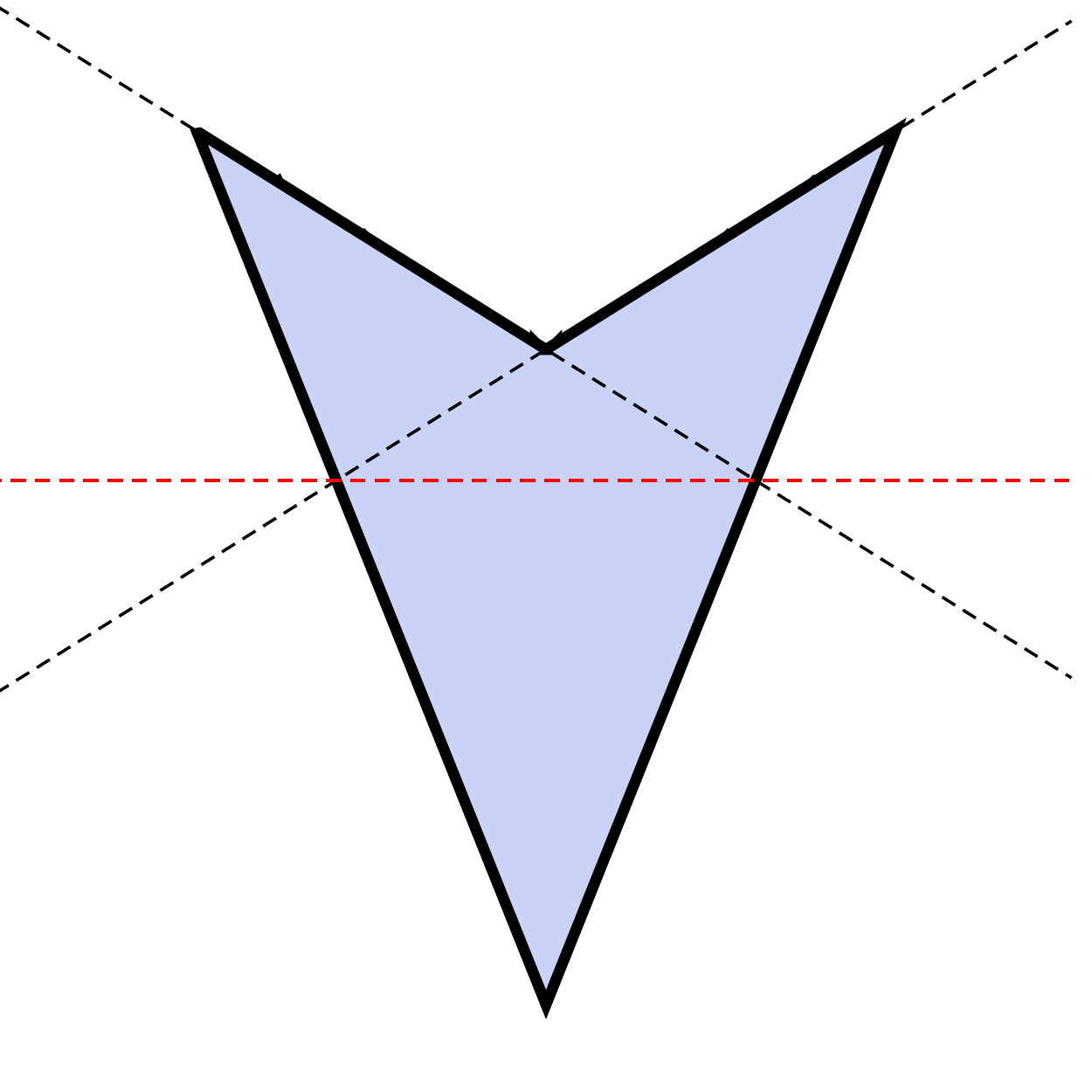}
\caption{A non-convex quadrilateral. The black dashed lines denote the two boundary components that pass through the geometry's interior. The red dashed line $L$ is where the canonical form vanishes.  The form is negative inside the small triangle bounded completely by dashed lines.}
\label{fig:nonconvex}
\end{figure}

On the other hand, suppose instead that $\A'$ is a nonconvex quadrilateral, as shown in Figure~\ref{fig:nonconvex}.  Again $\Omega(\A')$ has four linear poles and one linear zero.  However, in this case, the line $L$ passes through the interior of $\A'$, and indeed two of the poles also pass through the interior.  Thus $\A'$ is not a positively convex geometry.  Nevertheless, note that near most points of the boundary of $\A'$, the form $\Omega(\A')$ is positively oriented.  For a general positively convex geometry, the form is always positively oriented in a neighborhood of every 0-dimensional boundary component.

On the other hand, not every positive geometry (even connected ones) in $\P^2$ that is convex in the usual sense is also positively convex.  For example, consider the following semialgebraic subset of $\P^2(\R)$:
\begin{align*}
\A := &\mbox{\{triangle with vertices $(0,10), (-1,0), (1,0)$\} $\cup$} \\
&\mbox{ \{southern half disk with center $(0,0)$ and radius $1$\}}
\end{align*}
Since both the triangle and the southern half disk are positive geometries, by Section \ref{sec:triangulations}, $\A$ is itself a positive geometry.  One of the poles of $\Omega(\A)$ is along the boundary $C = \{x^2 + y^2 - 1 = 0\}$, and this pole passes through the triangle.  It is then clear that $\Omega(\A)$ is not positive everywhere in the interior of $\A$.  Indeed, the other poles of $\Omega(\A)$ are the two non-horizontal sides $S_1,S_2$ of the triangle, and the numerator of $\Omega(\A)$ is the line $L$ that passes through the intersection points $P_1, P_2$ of $S_1,S_2$ with $C$ above the horizontal axis.  Thus $\Omega(\A)$ is negative exactly within the region bounded by $C$ and $L$.

Let us now discuss some other examples of positively convex geometries.  Recall from Section \ref{sec:one} that a positive geometry in dimension one is a disjoint union of intervals, say $\A = \bigcup_{i=1}^r [a_i,b_i]$ where $a_1 < b_1 < a_2 < b_2 < \cdots < a_r < b_r$.  The canonical form is
$$
\Omega(\A) = \sum_i \left(\frac{1}{x-a_i} - \frac{1}{x-b_i}\right) dx.
$$
Given $x_0 \in (a_j,b_j)$, the positivity of $\Omega(\A)$ is equivalent to the inequality
\begin{align*}
&\sum_i \left(\frac{1}{x_0-a_i} - \frac{1}{x_0-b_i}\right)  \\
&=  \left(\frac{1}{x_0-a_j} - \frac{1}{x_0-b_{j-1}}\right) + \left(\frac{1}{x_0-a_{j-1}} - \frac{1}{x_0-b_{j-2}}\right) + \cdots + \frac{1}{x_0-a_1} \\
&+ \left(-\frac{1}{x_0-b_j} + \frac{1}{x_0-a_{j+1}}\right) + \left(-\frac{1}{x_0-b_{j+1}} + \frac{1}{x_0-a_{j+2}}\right)+ \cdots - \frac{1}{x_0-b_r} > 0
\end{align*}
where each bracketed term is positive.

We do not know of a simplex-like positive geometry that is not positively convex.  Indeed all the the positive geometries in Section \ref{sec:gensimplices} are positively convex.  For the simplices and generalized simplices in projective spaces that we construct this can be seen directly from the canonical form.  Alternatively, it is easy to see that the boundaries never intersect the positive part.  For the positive parts of Grassmannians, toric varieties, flag varieties and cluster varieties, one may argue as follows.  For each such positive geometry $(X,X_{\geq 0})$ of dimension $D$, there is a complex torus $T = (\C^*)^D \subset X$ such that the positive part $X_{>0}$ can be identified with the positive part $\R_{>0}^D \subset (\C^*)^D$, and furthermore the canonical form $\Omega(X_{\geq 0})$ can be identified with the standard holomorphic and non-vanishing form $\Omega_T$ on $T$.  If $x_1,\ldots,x_D$ are coordinates on $T$, then $\Omega_T = \prod_i d\log x_i$.  Since $\Omega(X_{\geq 0})$ is holomorphic and non-vanishing on $X_{>0}$, it follows that $(X,X_{\geq 0})$ is positively convex.  Note that for toric varieties, there is a unique choice of torus $T$, while for Grassmannians, flag varieties, and cluster varieties, there are many choices for the torus $T$.

Finally, we make the following conjecture:
\begin{conjecture}\label{conj:posconvex}
The Amplituhedron $\A(k,n,m;2^L)$ is a positively convex geometry for even $m$.
\end{conjecture}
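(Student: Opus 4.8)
The plan is to establish the two properties that together constitute positive convexity: (i) the poles of $\Omega(\A)$ never meet the interior $\Int(\A)$, and (ii) $\Omega(\A)$ has no zeros on $\Int(\A)$; together these force $\Omega(\A)$ to be uniformly, hence positively, oriented there. Property (i) should reduce to a purely structural fact: every codimension-one boundary component of $\A(k,n,m;2^L)$ is cut out by a single ``physical'' bracket inequality, and that inequality holds on all of $\A$. Granting this, (i) is immediate, since by Axiom (P2) the poles of $\Omega(\A)$ lie exactly along its boundary components. For $m=2$ trees this is the statement $\ip{Y\,i,i{+}1}\ge 0$ used in Section~\ref{sec:Amplituhedron12}; the general even-$m$ and loop versions are precisely the sign and winding characterizations of Amplituhedra announced in~\cite{winding}, which I would import as the first ingredient. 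A parallel induction on dimension, using that boundary components of positively convex geometries are again positively convex, handles the recursive residue structure, so the real content of the conjecture rests on (ii).

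For property (ii) at tree level I would pursue the dual-volume program sketched in Section~\ref{sec:dualAmp}: construct a genuine positive geometry $\A^*_{\mathcal{Y}}$, the dual Amplituhedron, and a $\mathcal{Y}$-dependent positive measure $d\Vol$ for which $\aOmega(\A)(\mathcal{Y})=\int_{\A^*_{\mathcal{Y}}}d\Vol$; positivity of $\aOmega(\A)$ on $\Int(\A)$ then follows at once from positivity of the measure. The Wilson-loop / spanning-surface construction of Section~\ref{sec:wilson} already realizes this for $m=2$, and the plan is to extend it to all even $m$ by replacing the canonical one-cycle $P\subset G(2,2{+}k)$ by an appropriate top-dimensional ``dual cycle'' built from the facets $W$ of $\A$ inside $G(m,m{+}k)$, and $\omega_k$ by the unique closed $mk$-form of the correct $\GL(k)$-weight (the complete-graph contraction of Figure~\ref{fig:tensor_c}). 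A more computational alternative is to generalize the manifestly positive ``local'' representation~\eqref{eq:localForm} from $m=2$ to even $m$: write $\aOmega(\A)(Y)$ as a sum over $k$-tuples of facets of ratios of a ``determinant of nested minors'' to a product of physical brackets, plus one auxiliary pole $\ip{YX}$ that cancels in the sum; choosing $X$ at a vertex of $\A$ makes the denominators positive on $\Int(\A)$, and numerator positivity reduces to the ``determinant of minors'' identities displayed below~\eqref{eq:localForm}. Proving those for general $k,m$ — via Lindström--Gessel--Viennot lattice-path arguments, or by induction through the plabic/cluster structure of $G_{>0}(k,n)$ — is a self-contained combinatorial subproblem.

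For the loop case $\A(k,n,m;2^L)$ I would use the fibration $\A(k,n,m;2^L)\to\A(k,n,m;0)$ whose fibre over $Y$ is the ``mutual positivity'' region of $L$ planes in $(m{+}k)$-space, together with the compatible factorization of the canonical form; the aim is to deduce positive convexity of the total space from that of the base (the tree case) and of the fibres, using that products of positively convex geometries are positively convex and that residues and push-forwards interact well with such fibrations. Equivalently, one extends the dual-volume representation to loops exactly as conjectured in Section~\ref{sec:dualAmp}.

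The main obstacle is step (ii) in full generality. No construction of the dual Amplituhedron is known for $k>1$: the naive attempt of the ``Dual Grassmannian volumes'' subsection yields the expected rational term with the \emph{wrong sign} together with genuinely non-rational, logarithmic corrections, so finding the right dual cycle and measure — and showing the integrand on it is actually positive — is the crux. Likewise, an explicit local form for $m=4$ has not been written down, and the ``determinant of minors'' positivity statements are unproven beyond small $k$. A secondary difficulty is the loop step: the geometry of the loop Amplituhedron, including a complete boundary stratification, is not yet under control, so even the fibration argument would require substantial new input.
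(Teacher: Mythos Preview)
The statement is a \emph{conjecture}; the paper offers no proof. What follows the conjecture in the paper is only a remark that the $m=2$ tree case is established via the local representation~\eqref{eq:localForm}, and that the full conjecture would follow from the (unknown) existence of a dual Amplituhedron. Your proposal is therefore not competing with a proof in the paper but outlining a research program, and you are explicit about this.

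That program coincides almost exactly with what the paper itself suggests. Your route to property~(ii) via a dual-volume integral is precisely the content of Section~\ref{sec:dualAmp} and the sentence immediately after Conjecture~\ref{conj:posconvex}; your proposal to extend the Wilson-loop construction of Section~\ref{sec:wilson} to higher even $m$ is the natural continuation the paper leaves open; and your alternative via a manifestly positive local form generalizing~\eqref{eq:localForm}, together with the ``determinant of minors'' positivity, is exactly how the paper establishes the $m=2$ case in Section~\ref{sec:Amplituhedron12}. Your diagnosis of the obstacles---the wrong-sign rational part and logarithmic terms in the naive dual Grassmannian integral, the absence of a local form for $m=4$, and the lack of a boundary stratification at loop level---also matches the paper's own assessment. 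There is nothing further to compare: you and the paper agree on the strategies and on why none of them is yet complete.

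One small sharpening of your property~(i): it is not quite that ``by Axiom~(P2) the poles lie exactly along boundary components'' and hence avoid the interior; the poles lie along the full Zariski closures $C_i$ of those components, and one must check that these hypersurfaces do not re-enter $\Int(\A)$. For even $m$ this is indeed the global inequality $\ip{Y\,i,i{+}1}\ge 0$ (and its loop analogues from~\cite{winding}) that you cite, but for odd $m$ it fails---as the paper notes for $m=1$, $\ip{Y2}$ takes both signs inside the Amplituhedron---so the parity restriction enters already at step~(i), not only at step~(ii).
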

The cases $m = 1,2$ (for $L=0$) are discussed in Section \ref{sec:Amplituhedron12}. Conjecture \ref{conj:posconvex} would follow immediately from the existence of a dual Amplituhedron (see Section~\ref{sec:dualAmp}). Note that the simple example of a Grassmann polytope in Section \ref{sec:Grassmann} is {\it not} positively convex, a fact which we numerically verified.

As we remarked in Section \ref{sec:1loop}, the $k=1$, $L=1$ positive loop Grassmannian $G_{> 0}(1,n;2)$ is also positively convex. 

\section{Beyond ``rational" positive geometries}
\label{sec:beyond}
We now consider some simple observations suggesting that our notion of positive geometry should likely be thought of as a \defn{rational} type of more general positive geometries. As we stressed in the introduction, with our current definitions the disk is {\sl not} a proper positive geometry; it does not have zero-dimensional boundaries and its associated canonical form vanishes. But of course we should be able to think of the disk as a polygon in the limit of an infinite number of vertices. In fact, let us consider more generally the limit of the polygon form as the polygon boundary is taken
to approximate a conic $Q_{IJ} Y^I Y^J = 0$. Without doing any
computations, it is easy to guess the structure of the answer. In the
smooth limit, the form should only have singularities when $Y$ sits on
the conic, i.e. when $YY\cdot Q \to 0$. But then by projective weights,
the form must look like
\begin{equation}
\Omega'(\A) = C_0 \times \frac{\langle Y d Y d Y \rangle (\det{Q})^{3/4}}{(YY\cdot Q)^{3/2}}
\end{equation}
for some constant $C_0$. Here we added an apostrophe since $\Omega'(\A)$ is not the canonical from defined in Section~\ref{sec:positive}. We will shortly calculate the form and
determine that the constant is $C_0= \pi$. Note that the disappearance of
the ``zero-dimensional boundaries" in the limit is associated with
another novelty: the form is not rational, but has branch-cut
singularities. This clearly indicates that the notion of positive
geometry we are working with needs to be extended in some way; what we
have been seeing are the ``rational parts" of the forms, which indeed
vanishes for the circle, but there is a more general structure for the
forms associated with more interesting analytic structures.

Let us now compute the form for the region bounded by a conic, which without loss of
generality we can take to be the unit disk $\mathcal{D}^2$. We can take the external data
to have the form

\begin{equation}
Z(\theta) = \left(\begin{array}{c} 1 \\ \cos \theta \\ \sin
\theta \end{array} \right)
\end{equation}
for $0\leq \theta<2\pi$.

Let us first do the computation in the most obvious way using approximation by triangulation. Consider a polygon $\A_N$ with vertices $Z(\theta_i)$ where $\theta_i=i \delta\theta$ for $i=0,\ldots, N{-}1$ and $\delta\theta\deff 2\pi/N$. Let $Z_*$ be an arbitrary vertex not on the circle. Then the canonical rational function for the polygon can be triangulated by
\be\label{eq:polygonApprox}
\aOmega(\A_N) = \sum_{i=0}^{N{-}1}[Z_*,Z(\theta_i),Z(\theta_{i{+}1})]
\ee
where the $i$'s are labeled mod $N$.

Now consider the canonical form for one of these triangles with corners
$Z_*, Z(\theta), Z(\theta + \delta \theta)$ for some small $\delta\theta$:
\begin{equation}
\frac{1}{2}\frac{\langle Z_* Z(\theta) Z(\theta + \delta \theta)
\rangle^2}{\langle Y Z_* Z(\theta) \rangle \langle Y Z_* Z(\theta +
\delta \theta) \rangle \langle Y Z(\theta) Z(\theta + \delta \theta)
\rangle} = \frac{1}{2}\delta \theta \frac{\langle Z_* Z(\theta) 
\dot{Z}(\theta) \rangle^2}{\langle Y Z(\theta)  \dot{Z}(\theta)
\rangle \langle Y Z_* Z(\theta) \rangle^2}
+O(\delta\theta^2)
\end{equation}
where the dot denotes differentiation in $\theta$. 

Thus, in the limit of large $N$,~\eqref{eq:polygonApprox} becomes:
\be
\aOmega'(\mathcal{D}^2)=\lim_{N\rightarrow \infty}\aOmega(\A_N) = \frac{1}{2}\int_0^{2\pi} d \theta \frac{\langle Z_* Z \dot{Z}\rangle^2}{\langle
Y Z_* Z \rangle^2 \langle Y Z \dot{Z} \rangle}
\ee

Note that very nicely this expression is invariant under any co-ordinate change in $\theta$. This tells us that the
result only depends on the shape of the circle, not on the particular
way in which it is approximated by a polygon. (Of course this is
trivially expected from the expression of the form as the area of the
dual polygon which approximates the dual conic, and we will perform
the computation in that way momentarily).

Let us put $Z_*=(1,0,0)$ and $Y=(1,x,y)$. We are left with
\begin{equation}
\aOmega'(\mathcal{D}^2) = \frac{1}{2}\int_0^{2\pi} d \theta \frac{1}{(1 - x \cos \theta - y \sin \theta)(y \cos\theta - x \sin \theta)^2}
\end{equation}
Note that for real $(x,y)$, the factor $(y \cos \theta - x \sin \theta)$ inevitably has a zero on the integration contour. This is easy to understand: the factor $\langle Y Z_* Z \rangle$ inevitably has a zero when $Y,Z_*,Z$ are collinear, which for any $Y$ on the disk's interior must occur for some $Z$ on the circle. Of course since the final form is independent of $Z_*$, this singularity is spurious. In order to conveniently deal with this, we will give $y$ a small imaginary part; by a rotation we can put the real part of $y$ to zero, so $y=-i\epsilon$. If we further put $z = e^{i \theta}$, we are left with a contour integral around the origin
\begin{equation}
\aOmega'(\mathcal{D}^2)=-i \oint dz \frac{4 z^2}{(x z^2 - 2 z +x)(x (z^2 - 1) -\epsilon(1 + z^2))^2}
\end{equation}
This integral can be trivially performed by residues. The poles associated with the second factor in the denominator are both pushed away from the circle, assuming that $Y$ lies in the circle (so $|x|<1$). 
Note that the product of the roots of the first factor is $1$, so one of the poles of the first factor is inside the circle and the other is outside. There is thus only a single pole inside the circle, and we obtain 
\begin{equation}
\aOmega'(\mathcal{D}^2)=\frac{\pi}{(1 - x^2-y^2)^{3/2}}
\end{equation}
where we have put the $y$ component back in.


We can of course also obtain the same result as a volume integral~\eqref{eq:volume} over the dual of the unit disk $\mathcal{D}^{2*}$ which is the unit disk in the dual space. Again, by putting $Y=(1,x,0)$ for $|x|<1$, and the integration variable $W=(1,r\cos\theta,r\sin\theta)$, this gives us the integral:
\begin{equation}
\Vol(\mathcal{D}^{2*})=\int_0^{2\pi} \int_0^1\frac{r dr d \theta}{(1 + r x {\rm cos} \theta)^3} = \frac{\pi}{(1 - x^2)^{3/2}} 
\end{equation}
which can be performed by first integrating over $r$, then doing the angular integral by residues via $z=e^{i \theta}$. We can re-express this in projective terms to precisely get 
\begin{equation}
\aOmega'(\mathcal{D}^2) = \pi \frac{ (\det Q)^{3/4}}{( YY\cdot Q )^{3/2}}
\end{equation}
as anticipated.

Let us next look at an obvious extension to the ``pizza slice" geometry $\mathcal{T}(\theta_1,\theta_2)$ from Example~\ref{ex:pizza}.  Note that the the integrand is exactly the same as in our original triangulation above, but we will instead be integrating over the arc $A(\theta_1,\theta_2)$ from $z_{1} = e^{i \theta_1}$ to $z_2=e^{i\theta_2}$, representing the intersetion of the pizza slice boundary with the circle. The reference $Z_*$ is still the same. We will not need the regulator so we will put $\epsilon \to 0$. 
Now, the integrand is
\begin{equation}
I(z) \deff\frac{1}{x^3}\frac{4 iz^2}{(z-r_+)(z-r_-)(z^2 - 1)^2}, \, \text{ with } r_{\pm} = \frac{1 \pm \sqrt{1 - x^2}}{x}
\end{equation}
Note that it has non-vanishing residues at $z=r_{\pm}$, but the residues at $z=\pm 1$ vanish. Instead we have double-poles at $z=\pm 1$. It is therefore natural to split this integrand into two parts: one part that only has simple poles with the correct residues at $z=r_{\pm}$, and another piece that does not have any simple poles at all, that matches the rest of the singularities. Thus we can write 
\begin{equation}
I = I^{{\rm log}} + I^{{\rm rational}}
\end{equation}
with
\begin{equation}
I^{{\rm log}}\deff \frac{i}{2(1 - x^2)^{3/2}} \left(\frac{1}{z - r_+} - \frac{1}{z - r_-}\right), \, \, 
I^{{\rm rational}} \deff -\frac{ 	i}{x^2(1 - x^2)} \frac{(x + 2 z + x z^2)}{(1 - z^2)^2}
\end{equation} 
The term $I^{{\rm log}}(z)$ has logarithmic singularities and (naturally) integrates to give a logarithm, and we get 
\begin{equation}
\frac{i}{2(1 - x^2)^{3/2}} {\rm log}\left(\frac{(z_2 - r_+)(z_1 - r_-)}{(z_1 - r_+)(z_2- r_-)}\right)
\end{equation}

We can express this more elegantly in projective terms as 
\begin{equation}
\aOmega^{{\rm log}}(\mathcal{T}(\theta_1,\theta_2)) = \frac{i ({\rm det} \, Q)^{3/4}}{2(YY\cdot Q )^{3/2}} \times {\rm log} [L_+,L_-,P_1,P_2]
\end{equation}

Here $L_{\pm}$ are the corners of the pizza slice on the conic $Q$, and $P_{1,2}$ are the two points on the conic that are on the tangent lines passing through $Y$. (Note that when $Y$ is outside the conic these tangent points are real, while when $Y$ is inside the conic they are complex). We can think of these as four points on the $\P^1$ defined by the conic itself, and $[a,b,c,d]$ is the cross-ratio of these four points on the conic. 



Now let us look at $I^{{\rm rational}}(z)$; since by construction it has vanishing residues, it must be expressible as the derivative of a rational function; indeed $I^{{\rm rational}} = \frac{-i}{x^2 (1-x^2)} \partial_z  \left(\frac{(1 + x z)}{(1 - z^2)}\right)$. Hence this term integrates to a rational function of $Y$, which of course reproduces our usual canonical form associated with the pizza-slice geometry, written in projective terms as 
\begin{equation}
\aOmega^{{\rm rational}}(\mathcal{T}(\theta_1,\theta_2)) = \frac{Y \cdot L}{(YY\cdot Q) (X \cdot W_1)(X \cdot W_2)}
\end{equation}
where $L$ is the line that kills the two ``bad" singularities as we have seen a number of times in our earlier discussion, and $W_{1,2}$ define the two linear boundaries of the pizza.

We can draw a few conclusions from these simple computations. To repeat, by the definitions of this paper, the circle should be assigned a vanishing canonical form, and this is intuitively reasonable, since it does not have any zero-dimensional boundaries so we cannot have an associated form with logarithmic singularities on all boundaries. But clearly the circle is a limiting case of a series of positive geometries, and the limit evades this logic in a nice way: the form simply is not rational, but has a branch cut singularity. Moving on to the pizza slice, our definitions {\it do} call this a positive geometry and give it an associated rational form, but (as with the circle) we clearly would not get the same result as we would from the continuous limit of a polygon. Here we observed that the pizza slice form naturally splits into two pieces: a ``rational" part, which precisely matches the form naturally associated with the positive geometry in our sense, plus a new ``transcendental" piece. 

The integrals associated with the circle and pizza slices  were simple enough to tempt us into direct computations, but the way in which the results split into rational and transcendental pieces was somewhat mysterious. Is this split canonical? And why did the ``rational part" of this computation magically reproduce our usual definition of the (rational) canonical form? 

We will answer these questions by doing the computation again, more conceptually,  not just for the case of a circle but for any positive geometry $\A$ bounded by a parametrized curve and two  lines emanating from a point $Z_*$. Suppose we have some 
polynomially parametrized curve $Z(t)$, where $t$ is some reparametrization of $\theta$. We will then consider the integral 
\begin{equation}
\aOmega(\A)=\int_{t_1}^{t_2} d t F(t),  \, {\rm with} \, F(t) = \frac{\langle Z_* Z \dot{Z} \rangle^2}{2\langle Y Z \dot{Z} \rangle \langle Y Z_* Z \rangle^2}
\end{equation}
Let us first examine the singularities of a general function 
\begin{equation}
F(t) = \frac{R(t)}{Q(t) P(t)^2} 
\end{equation} 
where $R(t),Q(t),P(t)$ are generic polynomials. We will assume that the degrees are such that $F$ decays at least as fast as $t^{-2}$ at large $t$.  
Let $q_i$ be the roots of $Q(t)$, and $p_\alpha$ be the roots of $P(t)$. Clearly, $F(t)$ has double-poles as $t \to p_\alpha$. Looking at the singular pieces as $t \to p_\alpha$ we have 
\begin{equation}
F(t \to p_\alpha) = \frac{s_\alpha}{(t - p_\alpha)^2} + \frac{r_\alpha}{t - p_\alpha} + \cdots
\end{equation}
with
\begin{equation}
s_\alpha = \frac{R(p_\alpha)}{Q(p_\alpha) \dot{P}(p_\alpha)^2}, \;\;\;\; r_\alpha = s_\alpha \left(\frac{\dot{R}}{R}(p_\alpha) - \frac{\dot{Q}}{Q}(p_\alpha) -  \frac{\ddot{P}}{\dot{P}}(p_\alpha)\right)
\end{equation}
Similarly we have simple poles for $F(t)$ as $t \to q_i$, 
\begin{equation}
F(t \to q_i) = \frac{r_i}{t - q_i} + \cdots,  \;\;\;\; {\rm with} \, r_i = \frac{R(q_i)}{\dot{Q}(q_i) P(q_i)^2}
\end{equation}

Thus it is natural to express $F(t)$ as a sum of two pieces, $F^{dp}(t)$ with only double-poles, and $F^{sp}(t)$ with only simple poles, 
\begin{equation}
F(t) = F^{dp}(t) + F^{sp}(t)
\end{equation}
where 
\begin{equation}
F^{dp}(t) = \sum_\alpha \frac{s_\alpha}{(t - p_\alpha)^2}, \, F^{sp} = \sum_\alpha \frac{r_\alpha}{t - p_\alpha} + \sum_i \frac{r_i}{t - q_i}
\end{equation}

When we integrate $F(t)$, this decomposition canonically divides the result into a ``rational" and ``logarithmic" part just as we saw in our pizza-slice example: 
\begin{equation}
\int_{t_1}^{t_2} dt F(t) = \aOmega^{{\rm rational}}(\A) + \aOmega^{{\rm log}}(\A)
\end{equation}
where 
\begin{eqnarray}
\aOmega^{{\rm rational}}(\A) &\deff & \sum_\alpha s_\alpha \left(\frac{1}{t_1 - p_\alpha} - \frac{1}{t_2 - p_\alpha}\right) \nonumber \\ \aOmega^{{\rm log}}(\A) &\deff & \sum_\alpha r_\alpha {\rm log} \left(\frac{t_2 - p_\alpha}{t_1 - p_\alpha}\right) + \sum_i r_i  {\rm log} \left(\frac{t_2 - q_i}{t_1 - q_i}\right)
\end{eqnarray}

Let us now specialize to the particular case of our $F(t)$, and look first at the rational part. We find
\begin{equation}
\aOmega^{{\rm rational}}(\A) = \sum_{ p_\alpha } \frac{\langle Z_* Z \dot{Z} \rangle^2}{2\langle Y Z \dot{Z} \rangle \langle Y Z_* \dot{Z} \rangle^2} \left(\frac{1}{t_1 - p_\alpha} - \frac{1}{t_2 - p_\alpha} \right)
\end{equation}
where the $p_\alpha$ are roots of $\langle Y Z_* Z(t) \rangle = 0$. Remarkably, we can recognize this sum over roots as a push-forward onto $\A$! Note first that since $\langle Y Z_* Z \rangle = 0$, we can 
write $Y = Z_* + u Z$ for some scalar $u$, and thus 
\begin{equation}
\aOmega^{{\rm rational}}(\A) = \sum_{p_\alpha} \frac{1}{2\langle Z_* Z \dot{Z} \rangle} \frac{1}{u^2} \left(\frac{1}{t_1 - p_\alpha} - \frac{1}{t_2 - p_\alpha} \right)
\end{equation}
Now, clearly, the map 
\begin{equation}
(t,u)\rightarrow \Phi(t,u) \deff Z_* + u Z(t) 
\end{equation}
is a morphism from $[t_1,t_2]\times [0,\infty]\subset \P(\R)^2$ to $\A$. The canonical form on $(t,u)$ space is 
\begin{equation}
\Omega([t_1,t_2]\times [0,\infty]) = dt \left(\frac{1}{t - t_2} - \frac{1}{t - t_1} \right) \times \frac{du}{u}
\end{equation}
Since $\langle Y d Y d Y \rangle = 2u du dt \langle Z_* Z \dot{Z} \rangle$, the push-forward by $\Phi$ gives us that 
\begin{equation}
\Phi_*\left(\Omega([t_1,t_2]\times[0,\infty ])\right)
= \langle Y d Y dY \rangle \sum_{p_\alpha} \frac{1}{2\langle Z_* Z \dot{Z} \rangle} \frac{1}{u^2} \left(\frac{1}{t_1 - p_\alpha} - \frac{1}{t_2 - p_\alpha} \right)
\end{equation}
which we immediately recognize as $\Omega^\text{rational}(\A)$. Hence, by Heuristic~\ref{heuristic}:
\be
\Omega(\A)=\Omega^{\text{rational}}(\A)
\ee

Let us next look at the logarithmic part of the integral. Here too there is a small surprise: the residues $r_\alpha$ as $t \to p_\alpha$ vanish:
\begin{equation}
r_\alpha = s_\alpha \left(2 \frac{\langle Z_* Z \ddot{Z} \rangle}{\langle Z_* Z \dot{Z} \rangle} - \frac{\langle Y Z \ddot{Z} \rangle}{\langle Y Z \dot{Z} \rangle} - \frac{\langle Y Z_* \ddot{Z}\rangle}{\langle Y Z_* \dot{Z} \rangle}\right) = 0, \, {\rm when} \, \langle Y Z_* Z \rangle = 0
\end{equation}
which can easily be seen by putting $Y = Z_* + u Z$. The logarithmic term then simply becomes 
\begin{equation}
\aOmega^{{\rm log}}(\A) = \sum_{q_i } \frac{\langle Z_* Z \dot{Z} \rangle^2}{2\langle Y Z \ddot{Z} \rangle\langle Y Z_* Z \rangle^2 } {\rm log}\left(\frac{t_2 - q_i}{t_1 - q_i}\right)
\end{equation}
where $q_i$ are the roots of $ \langle Y Z(t) \dot{Z}(t) \rangle = 0$. Note that since $(Z(t) \dot{Z}(t))$ is the line tangent to the parametrized curve at $t$, the roots of $\langle Y Z \dot{Z}\rangle =0$ are exactly the points on the curve whose tangents pass through $Y$. Also note that the prefactors of the logarithms are in general
algebraic functions of the variables defining the parametrized curve
of the geometry.

These observations clearly suggest that while the notion of ``positive geometry" we have introduced in this paper is perfectly well-defined and interesting in its own right, it should be generalized in an appropriate way to cover the case of possibly transcendental canonical forms for a more complete picture.

\section{Outlook}
\label{sec:conclusion}
We have initiated a systematic investigation of (pseudo-)positive geometries and their associated canonical forms. These concepts arose in the context of the connection between scattering amplitudes and the positive Grassmannian/Amplituhedron, but as we have seen they are also natural mathematical ideas worthy of study in their own right. Even our preliminary investigations have exposed many new venues of exploration. 

At the most basic level, we would like a complete understanding of positive geometries in projective spaces, Grassmannians, and toric, cluster and flag varieties. A complete classification in projective space seems within reach. In this case, amongst other things we need to demand that the boundary components have geometric genus zero, and that the boundary components of the boundary components have geometric genus zero, and so on.

Perhaps most pressingly we would like to prove that the Amplituhedron itself is a positive geometry! A proof of this fact seems within reach using familiar facts about the Amplituhedron. There is first a purely geometrical problem of showing that the Amplituhedron can be triangulated by images of ``cells" of the positive loop Grassmannian under the Amplituhedron map $Z$. With this in hand, we simply need to prove that the images of these cells are themselves positive geometries. At this moment, ``cells" refers to positroid cells at tree level; at loop level, we suspect that a generalization exists, which we showed for $k=1,L=1,m=2$.

We have also defined the notion of a ``positively convex" geometry, where the canonical form has no zeros or poles on the interior. As we have stressed, this is a highly non-trivial property of the canonical form for Amplituhedra of even $m$, which is not manifest term-by-term in the computation of the form using triangulations (e.g. BCFW). Finding more examples of positively convex geometries in more general settings should shed more light on these fascinating spaces. It would be particularly interesting to classify positively convex geometries in the context of general Grassmann polytopes; these are likely to exist given that cyclic polytopes are a special case of general convex polytopes.

The most important question is to give a natural and intrinsic way of defining the canonical form. Two directions here seem particularly promising--Is there a generalization of the Newton-polytope ``push-forward" method from polytopes to general positive geometries? The examples we have for $m=2$ Amplituhedra are encouraging but we do not yet have a general picture. Similarly, does the ``volume of dual polytope" or ``contour integral over auxiliary geometry" ideas extend beyond polytopes to general positive geometries?  We again gave analogs of these constructions for some Amplituhedra but do not understand the general structure. It would again be particularly interesting to look for such a description for general Grassmann polytopes.

Finally, we have a number of indications suggesting that our notion of
positive geometry should be thought of as a special case of
``rational" positive geometry, which must be extended in some way to
cases where the canonical form is not rational. We encountered
transcendental forms in two places--in looking at the most obvious
``dual Grassmannian integral" representation for the very simplest
Amplituhedron forms, and in the ``continuous" limit of polygons. There are also
other settings where such an extension seems called for--for instance,
there exist deformations of the positive Grassmannian/Amplituhedron
representation of scattering amplitudes to continuous helicities for
the external particles, that are natural from the point of view of
integrable systems~\cite{spectral}, which modify scattering forms away from being
rational. Demanding the existence of rational canonical forms on
positive geometries, with logarithmic singularities recursively
defined by matching lower-degree canonical forms on lower dimensional
boundaries, has led us to the simultaneously rigid and rich set of
objects we have studied in this paper. So, what replaces this
criterion for forms which are not rational? Finding the natural home
in which to answer this question, and identifying the associated
generalization beyond  rational positive geometries, is a fascinating
goal for future work.

\section*{Acknowledgments}
We would like to thank Michel Brion, Livia Ferro, Peter Goddard, Song He, Nils Kanning, Allen Knutson, Tomasz Lukowski, Alexander Postnikov, David Speyer, Hugh Thomas, Jaroslav Trnka, Lauren Williams and Ellis Yuan for stimulating discussions. NA-H was supported by the DOE under grant DOE DE-SC0009988. YB was supported by NSERC PGS D and Department of Physics, Princeton University. TL was supported by NSF grant DMS-1464693 and a Simons Fellowship.

\appendix

\section{Assumptions on positive geometries}
\label{app:assumptions}
In this section we discuss several technical assumptions needed for positive geometries. We will let $(X,X_{\geq 0})$ denote an arbitrary positive geometry throughout.
\subsection{Assumptions on $X_{\geq 0}$ and definition of boundary components}
\label{app:assumptions1}
Let $X_{>0}$ be the interior of the subspace $X_{\geq 0} \subset X(\R)$.  Then $X_{>0}$ is also a semialgebraic set.  We assume that $X_{>0}$ is an open oriented real submanifold of $X(\R)$ of dimension $D$, and that $X_{\geq 0}$ is the closure of $X_{>0}$.  In particular, $X(\R)$ must be a dimension $D$ real algebraic variety. 
If $X_{>0}$ has multiple connected components, one may have many choices of orientations.

Let $\partial X_{\geq 0} \deff X_{\geq 0} \setminus X_{>0}$.  Let $\partial X = \overline{ \partial X_{\geq 0}}$ denote the \defn{Zariski closure} in $X$, defined to be
\be
\partial X = \{x \in X \mid p(x) = 0 \text{ if $p(y) = 0$ for all $y \in \partial X_{\geq 0}$} \}
\ee
where $p$ denotes a homogeneous polynomial. In other words, if a polynomial vanishes on $\partial X_{\geq 0}$, then it also vanishes on $\partial X$, and $\partial X$ is the largest set with this property.  The set $\partial X$ is a closed algebraic subset of $X$.  We let $C_1, C_2, \ldots, C_r$ denote the codimension one (that is, dimension $D{-}1$) irreducible components of $\partial X$.  Define $C_{i, \geq 0}$ to be the closure of the interior of $C_i \cap \partial X_{\geq 0}$ as a subset of the real algebraic variety $C_i(\R)$.

The \defn{boundary components} of $(X, X_{\geq 0})$ are $(C_i, C_{i, \geq 0})$ for $i = 1,2,\ldots,r$.  It is clear that $C_{i, \geq 0}$ is a semialgebraic set.  Axiom (P1) requires that $C_{i,\geq 0}$ has dimension $D{-}1$.

We now define an orientation on $C_{i,\geq 0}$. Let $U\subset \R^{D{-}1}\times\R$ be an open set so that $(x,z)\in U\cap (\R^{D{-}1}\times \R_{\geq 0})$ is a local chart for $X_{\geq 0}$ with $z=0$ mapped to the boundary. We assume that Euclidean charts are oriented in the standard way.
\begin{itemize}
\item
For $D=1$: If the chart is orientation preserving, then we assign $+1$ orientation to the point $C_{i,\geq 0}$; otherwise, we assign $-1$.

\item
For $D>1$: We can always choose a chart that is orientation preserving. The boundary therefore inherits locally the standard orientation of the first $D{-}1$ components.

\end{itemize}

We remark that ``reversing the orientation of $X_{>0}$" maps positive geometries to positive geometries.

\subsection{Assumptions on $X$}\label{app:assumptions2}

We must impose geometric conditions on the complex variety $X$ for Axiom (P2) to make sense.  For the applications we have in mind, $X$ may be singular.  We will usually make the assumption that $X$ is a normal variety, so in particular the singular locus is of codimension two or more.  A rational $D$-form $\omega$ on $X$ is defined to be a rational $D$-form on the smooth locus $X^\circ \subset X$.  If $C \subset X$ is a codimension one subvariety, then $X^\circ \cap C$ is open and dense in $C$, so the residue $\Res_C \Omega$ (see Appendix~\ref{app:residue}) makes sense as a rational $(D-1)$-form on $C$.

Axiom (P2) guarantees the following additional property: $X$ must not have nonzero holomorphic $D$-forms, for otherwise there is no possibility for $\Omega(X,X_{\geq 0})$ to be unique.  If $X$ satisfies this property, then the uniqueness assumption in Axiom (P2) is immediate: If two forms $\Omega_1(X,X_{\geq 0})$ and $\Omega_2(X,X_{\geq 0})$ satisfy the stated condition then their difference will have no residues and is thus a holomorphic $D$-form on $X$, which must vanish everywhere.  In practice, $X$ is typically also a rational variety: $X$ has an (Zariski) open subset isomorphic to an open subset of affine space.

While we will implicitly assume that $X$ is a normal variety throughout this work, there do exist examples where $X$ is non-normal but deserves to be called a positive geometry. Indeed, some of the conjectural positive geometries in this work may require loosening some of our technical assumptions.  We discuss some explicit examples in Sections \ref{sec:degenerate} and \ref{sec:genpoly}.

\subsection{The residue operator}\label{app:residue}

The residue operator $\Res$ is defined in the following way. Let $\omega$ be a meromorphic form on $X$. Suppose $C$ is an irreducible subvariety of $X$ and $z$ is a holomorphic coordinate whose zero set $z=0$ locally parametrizes $C$. Let us denote the other holomorphic coordinates collectively as $u$. Then a \defn{simple pole} of $\omega$ at $C$ is a singularity of the form
\be
\omega(u,z)= \omega'(u)\wedge \frac{dz}{z}+\cdots
\ee
where the $\cdots$ denotes terms smooth in the small $z$ limit, and $\omega'(u)$ is a non-zero meromorphic form defined locally on the boundary component. We define
\be\label{eq:res}
\Res_C\omega\deff \omega' \text{       (locally)}
\ee
If there is no such simple pole, then we define the residue to be zero.

\section{Near-equivalence of three notions of signed triangulation}
\label{app:proofTri}
We prove \eqref{eq:signed} by induction on $D = \dim X $.
The implications \eqref{eq:signed} clearly hold when $D = 0$.  Suppose now $D > 0$.  Let $\{X_{i,\geq 0}\}$ be an interior triangulation of the empty set.  Let $C \subset X$ be an irreducible subvariety of codimension one.  Let $(C,C_{i,\geq 0})$ be the boundary component of $(X,X_{i,\geq 0})$ along $C$, where we set $C_{i,\geq 0} = \emptyset$ if such a boundary component does not exist.  We claim that $\{C_{i,\geq 0}\}$ is an interior triangulation of the empty set.    Let $y \in \bigcup_i C_{i,\geq 0}$, and assume that $y$ does not lie on the boundary of any of the $\{C_{i,\geq 0}\}$.  A dense subset of such points (in $C(\R)$) will in addition not lie on any of the boundary components $C' \neq C$ of any of the $\{X_{i,\geq 0}\}$, and we assume $y$ is chosen to lie in this dense subset.  By assumption, \eqref{eq:signtriang} holds for all points $x \in  U \setminus C$, where $U$ is an open ball in $X(\R)$ containing $y$.  By shrinking $U$ if necessary, we may assume that $U \cap C$ is an open disk, and we fix orientations of both $U$ and $U\cap C$.  The submanifold $U \cap C$ divides $U$ into $U^+$ and $U^-$, which can be labeled so that  
\begin{align*}
&\{i \mid y \in C_{i,>0} \text{ and $C_{i,>0}$ is positively oriented at $y$}\} \\
= &\{i \mid U^+ \subset X_{i,>0} \text{ and $X_{i, >0}$ is positively oriented on $U^+$}\} \\
\cup & \{i \mid U^- \subset X_{i,>0} \text{ and $X_{i, >0}$ is negatively oriented on $U^-$}\}
\end{align*}
and
\begin{align*}
&\{i \mid y \in C_{i,>0} \text{ and $C_{i,>0}$ is negatively oriented at $y$}\} \\
= &\{i \mid U^- \subset X_{i,>0} \text{ and $X_{i, >0}$ is positively oriented on $U^-$}\} 
\\
\cup & \{i \mid U^+ \subset X_{i,>0} \text{ and $X_{i, >0}$ is negatively oriented on $U^+$}\}.
\end{align*}
We are making use of the assumptions on the local behavior of boundary components from Appendix \ref{app:assumptions1}.  The equality 
\begin{align*}
&\{i \mid y \in C_{i,>0} \text{ and $C_{i,>0}$ is positively oriented at $y$}\} \\
= &\{i \mid y \in C_{i,>0} \text{ and $C_{i,>0}$ is negatively oriented at $y$}\}
\end{align*}
then follows from \eqref{eq:signtriang} applied to points in $U^+$ and points in $U^-$ respectively.  It follows that $\{C_{i,\geq 0}\}$ is an interior triangulation of the empty set.  By the inductive hypothesis we have that $\{C_{i,\geq 0}\}$ is a boundary triangulation of the empty set.  Since this holds for all $C$, we conclude that $\{X_{i,\geq 0}\}$ is a boundary triangulation of the empty set. 
%
%

Let $\Omega = \sum_{i=1}^t \Omega(X_{i,\geq 0})$.  
Suppose first that $\{X_{i,\geq 0}\}$ is a boundary triangulation of the empty set.  
Let $C$ be an irreducible subvariety of $X$ of codimension one. Taking the residue of $\Omega$ at $C$, we obtain
\be
\Res_C \Omega = \sum_{i=1}^t\Res_C\Omega(X_{i,\geq 0})=\sum_{i=1}^t\Omega(C_{i,\geq 0}).
\ee
By the inductive hypothesis we have $\sum_{i=1}^t\Omega(C_{i,\geq 0}) = 0$, so we conclude that all residues of $\Omega$ are 0, and thus $\Omega= 0$ and $\{X_{i,\geq 0}\}$ is a canonical form triangulation of the empty set.  Conversely, suppose that $\{X_{i,\geq 0}\}$ is a canonical form triangulation of the empty set.  Then $\Omega = 0$, so $\Res_C \Omega = 0$ for any irreducible subvariety $C \subset X$ of codimension one.  Thus $\{C_{i,\geq 0}\}$ form a canonical form triangulation of the empty set, and by the inductive hypothesis they also form a boundary triangulation of the empty set.  We conclude that $\{X_{i,\geq 0}\}$ is a boundary triangulation of the empty set. 

This completes the proof of \eqref{eq:signed}.

\section{Rational differential forms on projective spaces and Grassmannians}\label{app:projform}

\subsection{Forms on projective spaces}

Let $\eta$ be a rational $m$-form on $\C^{m+1}$, which we write as
\be
\eta = \sum_{I=0}^{m} f_I(Y) dY^0 \wedge \cdots \wedge\widehat{dY^I} \wedge\cdots \wedge dY^{m}
\ee
where $\widehat{dY^I}$ denotes omission.  Here $f_I(Y)$ is a rational function.  Then $\eta$ is the pullback of a rational $m$-form on $\P^m$ if and only if 
\begin{enumerate}
\item[(1)]
$\eta$ is homogeneous of degree 0, that is, $f_I(Y)$ is homogeneous of degree $-m$, and
\item[(2)]
$\langle\eta, \E \rangle= 0$, where $\E = \sum_{I=0}^{m} Y^I \partial_I$ is the radial vector field, also called the Euler vector field.
\end{enumerate}

If we explicitly solve (2), then we learn that 
\be
\eta =m!\cdot g(Y) \sum_{I=0}^{m} (-1)^{I} Y^I dY^0 \wedge \cdots \wedge \widehat{dY^I}\wedge \cdots \wedge dY^{m}
\ee
for some rational function $g(Y)$ of degree $-m{-}1$.  We have the elegant formula  
\be \sum_{I=0}^{m} (-1)^{I} Y^I dY^0 \wedge \cdots \widehat{dY^I} \cdots \wedge dY^{m} =
\frac{1}{m!} \langle Y d^m Y\rangle.
\ee
Thus an arbitrary form on $\P^m$ can be written as
\be
\eta = g(Y) \omega
\ee
for some function $g(Y)$ of degree $-m{-}1$, where we have defined $\omega\deff \lb Yd^mY\rb$

The factor $\omega$ is called the \defn{standard measure} on projective space, and if $\eta$ is the canonical form of some pseudo-positive geometry $(\P^m,\A)$, then $g(Y)$ is called the \defn{canonical rational function} of $(\P^m,\A)$ and is usually denoted $\aOmega(\A)$.

An alternative but equivalent description of a form $\eta$ on projective space is that $\eta$ must be invariant under \defn{local} $\GL(1)$ action on $Y$. That is, for any scalar $\alpha(Y)$ possibly dependent on $Y$, the form must be invariant under $Y\rightarrow Y'=\alpha(Y)Y$. The word ``local" refers to the dependence of $\alpha$ on $Y$. In contrast, if $\alpha(Y)$ is constant, then it would be called a \defn{global} transformation.

We argue that the factor $\omega$ is \defn{locally covariant}. That is, it scales by some power of $\alpha(Y)$, and the exponent of the scaling is its \defn{$\GL(1)$ weight}. Indeed, $dY'=Yd\alpha+\alpha dY$, hence 
\be\label{eq:local}
\lb Y'd^mY'\rb=\alpha^{m{+}1}\lb Yd^m Y\rb
\ee
where contraction with $Y'$ annihilates the $d\alpha$ term.

It follows that $\eta$ is \defn{locally invariant} under $\GL(1)$ (i.e. covariant with weight $0$). Local invariance is needed for $\eta$ to be well-defined on projective space.

Now suppose that $\theta$ is a rational $(m+1)$-form on $\C^{m+1}$ which is homogeneous of degree 0.  To obtain a form on projective space from $\theta$, we define
$
\eta = \langle \theta, \E \rangle.
$
It is not hard to see that $\eta$ is a $m$-form that satisfies (1), (2) from above.  Indeed, since $Y^I$ and $dY^I$ have degree 1, the vector field $\partial_I$ has degree $-1$, so that $\eta$ is homogeneous of degree 0.  Part (2) follows immediately from the fact that differential forms are alternating functions of vector fields, or from the calculation that
\be
\langle dY^0 \wedge \cdots \wedge dY^{m}, \E \rangle = \sum_{I=0}^{m} (-1)^{I} Y^I dY^0 \wedge \cdots \widehat{dY^I} \cdots \wedge dY^{m} = \frac{1}{m!} \langle Y d^m Y\rangle.
\ee
Thus we can canonically pass between $(m{+}1)$-forms, homogeneous of degree 0, on $\C^{m+1}$ and $m$-forms on $\P^m$ by replacing $d^{m+1}Y$ by $\langle Y d^m Y\rangle/m!$.  An alternative way of thinking of contracting against $\E$ is to divide by the measure of $\GL(1) = \C^*$.  In other words, we have the equality
\be
\frac{d^{m+1}Y}{\rm Vol\; GL(1)} = \omega
\ee



\subsection{Forms on Grassmannians}
We now extend our discussion to the Grassmannian $G(k,k{+}m)$.

Let $M(k,k{+}m)$ denote the space of $k \times (k{+}m)$ matrices.  For $Y \in M(k,k{+}m)$, we write $Y_1,Y_2,\ldots,Y_k$ for the rows of $Y$, and $Y_s^I$ for the component at row $s$ and column $I$.  We have an action of $\GL(k)$ on $M(k,k{+}m)$ with quotient $G(k,k{+}m)$.  For $s,t = 1,2,\ldots,k$ define a vector field
$$
\E_{st} = \sum_{I=1}^{k{+}m} Y_{s}^I \frac{\partial}{\partial Y_t^I}.
$$
This is the infinitesimal vector field corresponding to the action of $\exp(te_{st}) \in \GL(k)$, where $e_{st}$ is the $0,1$-matrix with a 1 in row $s$ and column $t$. 

A $km$-form $\eta = \eta(Y)$ on $M(k,k{+}m)$ is the pullback of a form on $G(k,k{+}m)$ if it satisfies the two conditions
\begin{enumerate}
\item[(1)] It invariant under $\GL(k)$.  That is, for any $g \in \GL(k)$, we have $g^* \eta = \eta$.
\item[(2)] We have $\langle \eta, \E_{st} \rangle = 0$ for any $s,t$.
\end{enumerate}
Condition (1) is equivalent to the condition that $\L_{\E_{st}} \eta = 0$ for any $s,t$, where $\L$ denotes the Lie-derivative.  By Cartan's formula, we have $\L_{\E_{st}} \eta = \langle d\eta, \E_{st} \rangle + d(\langle \eta, \E_{st} \rangle)$.  By (2), this is equivalent to the condition
$$
\langle d\eta, \E_{st} \rangle = 0.
$$

Let $d^{k \times (k{+}m)}Y$ denote the natural top form on $M(k,k{+}m)$. If we contract $d^{k \times (k{+}m)}Y$ against all the vector fields $\E_{st}$, we get up to a scalar the $km$-form
\be\label{eq:grassStandardMeasure}
\omega \deff \langle Y_1 Y_2 \cdots Y_k d^{m}Y_1\rangle \langle Y_1 Y_2 \cdots Y_k d^{m}Y_2\rangle \cdots \langle Y_1 Y_2 \cdots Y_k d^{m}Y_k\rangle.
\ee
While this form makes the $\GL(k{+}m)$ symmetry manifest, the $\GL(k)$ symmetry is hidden. Nonetheless, both are guaranteed by our derivation. An alternative representation of the measure making both manifest is given in Section~\ref{sec:wilson}.

An alternative way to think of this form is as
$$
\frac{d^{k\times (k{+}m)}Y}{\rm Vol\;GL(k)}=\omega
$$

As for projective space, we argue that $\omega$ is \defn{locally covariant} under $\GL(k)$ action. Consider an action of the form $Y\rightarrow L(Y)Y$, where $L(Y)\in \GL(k)$ acts on the rows of $Y$. The differential of $Y$ transforms as $dY\rightarrow L(L^{{-}1}dL\;Y+dY)$. Thus, in order to achieve local covariance, $\omega$ must vanish whenever any $dY_s^I$ is replaced by any $Y_{s'}^{I'}$, which is evidently the case. It follows therefore that $\eta$ is also locally invariant under $\GL(k)$, which is necessary for $\eta$ to be well-defined on the Grassmannian.

Furthermore, it is easy to see directly that $\langle \omega, \E_{st} \rangle = 0$ holds: to compute $\langle \omega, \E_{st} \rangle$, we replace the $t$-th factor $\langle Y_1 Y_2 \cdots Y_k d^{m}Y_t\rangle$ by $\langle Y_1 Y_2 \cdots Y_k Y_s d^{m{-}1}Y_t\rangle$, which is clearly 0.

Finally, any form $\eta$ on $G(k,k{+}m)$ can be written in the form
\be
\eta = g(Y)\omega
\ee
for some function $g(Y)$ of degree $-m{-}k$ in the Pl\"ucker coordinates of $Y$. Similar to case for the projective space, $\omega$ is called the \defn{standard measure} on the Grassmannian; and if $\eta$ is the canonical form for some positive geometry $(G(k,k{+}m),\A)$, then $g(Y)$ is called the \defn{canonical rational function} of $\A$ and is usually denoted $\aOmega(\A)$.

\subsection{Forms on $L$-loop Grassmannians}

The preceding discussion can be generalized to $L$-loop Grassmannians. Rather than fleshing out all the details, we simply write down the equivalent of the \defn{standard measure} $\omega$. The \defn{canonical rational function} is defined in a similar fashion. For our purposes, the main positive geometry of interest in the $L$-loop Grassmannian is the $L$-loop Amplituhedron $\A$, whose canonincal rational function $\aOmega(\A)$ is also called the \defn{amplitude}.

Let $\mathcal{Y}=(Y,Y_1,\ldots,Y_L)\in G(k,k+m;\k)$ (see~\eqref{eq:ampspace}). The standard measure $\omega$ on $G(k,k+m;\k)$ is given by:
\be\label{eq:measure}
\omega=
\prod_{s=1}^k\lb Yd^mY_s\rb\cdot \prod_{i=1}^L\prod_{s=1}^{k_i}\lb YY_id^{m{-}k_i}Y_{is} \rb
\ee
where $Y_s$ denotes the rows of $Y$ for $s=1,\ldots k$, and $Y_{is}$ denotes the rows of $Y_i$ for $s=1,\ldots, k_i$ and $i=1,\ldots, L$.

We leave it to the reader to argue that the standard measure is \defn{locally covariant} under the group $\G(k;\k)$ defined in Section \ref{sec:loopgrass}.

\section{Cones and projective polytopes}
\label{app:cones}

Let $V = \R^{m+1}$.  A subset $C \subset V$ is called a convex cone if it is closed under addition and scalar multiplication by $\R_{\geq 0}$.  Thus a convex cone always contains the 0-vector.  We say that a cone $C$ is \defn{pointed} if it does not contain a line.  Alternatively, $C$ is pointed if whenever vectors $v$ and $-v$ both lie in $C$ then we have $v = 0$.

A \defn{polyhedral cone} is the nonnegative real span 
\be
\Cone(Z) \deff \sp_{\R_{\geq 0}}(Z_1,Z_2,\ldots,Z_n) = \left\{\sum_i a_i Z_i \mid a_i \in \R_{\geq 0}\right\}
\ee
of finitely many vectors $Z_1,\ldots,Z_n$.  A rational polyhedral cone $C$ is one such that the generators $Z_i$ can be chosen to have integer coordinates.  The dimension $\dim C$ of $C$ is equal to the dimension of the vector space that $C$ spans.

From now on, let $C$ be a pointed polyhedral cone. 
We will often assume that the vectors $Z_i$ are not redundant, that is, none of them can be removed while still spanning $C$.  In this case, we call the $Z_i$ the generators, or edges, of $C$.  The interior of the cone $C = \Cone(Z)$ is then given by
\be
\Int(C) = \sp_{\R_{> 0}}(Z_1,Z_2,\ldots,Z_n) = \left\{\sum_i a_i Z_i \mid a_i \in \R_{> 0}\right\}
\ee

A face $F$ of $C$ is the intersection $F = C \cap H$ with a linear hyperplane $H \subset V$ such that $C$ lies completely on one side of $H$.  Thus if $H$ is given by $v \cdot \alpha = 0$ for some $\alpha \in \R^{m+1}$, then we must have $C \cdot \alpha \geq 0$ or $C \cdot \alpha \leq 0$.  Every face of $C$ is itself a rational pointed polyhedral cone with generators given by those $Z_i$ that lie on $H$.  Every face $F'$ of $F$ is itself a face of $C$.  A face $F$ is called a \defn{facet} of $C$ if $\dim F = \dim C - 1$.  The 0-vector is the unique face of $C$ of dimension $0$.  We do not allow $C$ to be a face of itself.

A cone $C$ is called \defn{simplicial} if it has $\dim C$ generators.  A \defn{triangulation} of a polyhedral cone $C$ is a collection $C_1,C_2,\ldots,C_r$ of simplicial cones, all with the same dimension as $C$, such that
\begin{enumerate}
\item
We have $C = \bigcup_i C_i$.
\item
The intersection $F = C_i \cap C_j$ of any two cones is a face of both cones $C_i$ and $C_j$. 
\end{enumerate}
We sometimes, but not always, require the generators of $C_i$ to be a subset of the generators of $C$.

\medskip

A \defn{projective polytope} is the image $\A = \Conv(Z)$ of a pointed polyhedral cone $C = \Cone(Z)$ (after removing the origin) in projective space $\P^m$.  The faces (resp. facets) of $\A$ are defined to be the images of the faces (resp. facets) of $C$.  The dimension of a face of $\A$ is one less than the corresponding face of $C$.  By definition, the empty face is a face of $\A$.  The interior $\Int(\A)$ is the image in $\P^m$ of the interior of the cone $C$.

Projective polytopes are the basic objects that we consider in Section \ref{sec:polytope}.  In some cases we will use inequalities to define projective polytopes, and it may not be clear that these inequalities are well-defined on projective space.  
However, these inequalities are well-defined when lifted to cones and $\R^{m+1}$.  Finally, we remark that in Section \ref{sec:polytope}, a projective polytope $\A$ usually refers to a polytope with an orientation, giving the interior of $\A$ the structure of an oriented manifold.  This orientation is always suppressed in the notation.  Facets $F$ of a projective polytope $\A$ acquire a natural orientation from the orientation of $\A$.

A triangulation of a projective polytope is simply the image of a triangulation of the corresponding cone.

\section{Monomial parametrizations of polytopes} \label{app:diffeo}

%
%
%

We shall use the language of oriented matroids, reviewed in Appendix \ref{sec:matroids}.

To simplify the notation, we set $r = m+1$ in this section.  It is convenient to work with cones instead of polytopes.  Let $z_1,\ldots,z_n \in \R^r$ and $Z_1,\ldots,Z_n \in \R^r$ be vectors spanning $\R^r$, and assume that the cone $C = \Cone(Z)$ spanned by $Z_i$ is pointed.  Assume that $z$ and $Z$ define the same oriented matroid $\M$, that is, 
\be \label{eq:chirotope}
\langle z_{i_1} \cdots z_{i_r} \rangle \text{ and } \langle Z_{i_1} \cdots Z_{i_r} \rangle \text{ have the same sign.}
\ee
Let $a \cdot b$ denote the standard inner product of vectors $a,b\in \R^r$.  Consider the map $\phi: \R^r \to \R^r$ given by
\be
\phi(u) = \sum_{i=1}^n e^{ z_i \cdot u} Z_i.
\ee
\begin{theorem}\label{thm:diffeo}
The map $\phi$ is a diffeomorphism of $\R^r$ onto the interior of the cone $\Cone(Z)$ spanned by $Z_1,Z_2,\ldots,Z_n$.
\end{theorem}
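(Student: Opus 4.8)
The plan is to exhibit $\phi$ as the gradient map of a strictly convex function with the right asymptotic behavior, and then deduce that it is a diffeomorphism onto the interior of $\Cone(Z)$. First I would introduce the function $F: \R^r \to \R$ defined by $F(u) = \sum_{i=1}^n e^{z_i \cdot u}$ (a sum of exponentials of linear functionals), and observe that $\nabla F(u) = \sum_{i=1}^n e^{z_i \cdot u} z_i$. This is the analogue of $\phi$ with $z_i$ in place of $Z_i$; I will handle the mismatch between $z$ and $Z$ afterward using the oriented matroid hypothesis \eqref{eq:chirotope}. The Hessian of $F$ is $\mathrm{Hess}\,F(u) = \sum_i e^{z_i \cdot u} z_i z_i^T$, which is positive definite because the $z_i$ span $\R^r$ and the coefficients $e^{z_i \cdot u}$ are strictly positive; hence $F$ is strictly convex and $\nabla F$ is injective. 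The key classical input is that a strictly convex $C^1$ function whose gradient image is an open convex set $U$ has $\nabla F$ a diffeomorphism onto $U$; so it remains to identify the image. Standard Legendre-duality considerations identify $\overline{\nabla F(\R^r)}$ with $\Cone(z)$: the recession directions of $F$ are exactly the $w$ with $z_i \cdot w \le 0$ for all $i$, i.e. the dual cone $\Cone(z)^\vee$ with a sign, and the image of $\nabla F$ is the relative interior of the polar, which is $\Int(\Cone(z))$. (I would cite the relevant convexity fact; it is essentially the content of the theory of the moment map / Atiyah--Guillemin--Sternberg in the toric setting, and is also elementary.)

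Next I would transfer this from $z$ to $Z$. The honest map is $\phi(u) = \sum_i e^{z_i \cdot u} Z_i$, which is \emph{not} a gradient map in general, so strict convexity is not directly available; this is where the oriented matroid hypothesis enters, and I expect it to be the main obstacle. The strategy is to compute the Jacobian of $\phi$ directly: writing $c_i(u) = e^{z_i \cdot u}$, we have $D\phi(u) = \sum_i c_i(u)\, Z_i z_i^T$, and by the Cauchy--Binet formula its determinant is
\[
\det D\phi(u) = \sum_{1 \le i_1 < \cdots < i_r \le n} c_{i_1}(u) \cdots c_{i_r}(u)\, \langle Z_{i_1} \cdots Z_{i_r} \rangle \, \langle z_{i_1} \cdots z_{i_r} \rangle,
\]
which is a sum of terms each of which is (positive)$\times$(product of two determinants of the same sign), hence $\ge 0$, and strictly positive since $z$ (and $Z$) span and some maximal minor is nonzero. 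So $\phi$ is a local diffeomorphism everywhere. This is the same Jacobian positivity already recorded in Section~\ref{sec:push} in the affine-chart language, and it is exactly \eqref{eq:chirotope} that makes it work.

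It then remains to upgrade the local diffeomorphism to a global one onto $\Int(\Cone(Z))$. I would argue as follows: (i) $\phi$ is a proper map from $\R^r$ into $\Int(\Cone(Z))$ — properness follows because along any sequence $u_k$ with $|u_k| \to \infty$, after passing to a subsequence $u_k/|u_k| \to w \ne 0$, at least one $z_i \cdot w > 0$ (as the $z_i$ positively span, the cone they generate is all of $\R^r$... more carefully: since $\Cone(z)$ is pointed with nonempty interior dual, for each nonzero $w$ either some $z_i\cdot w>0$ forcing $|\phi(u_k)|\to\infty$, or all $z_i\cdot w\le 0$ in which case $\phi(u_k)$ escapes to a boundary ray of the cone) — so $\phi$ extends continuously to a proper map and its image is closed in $\Int(\Cone(Z))$; (ii) the image is open since $\phi$ is a local diffeomorphism; (iii) $\Int(\Cone(Z))$ is connected, so the image is everything; (iv) properness plus local-diffeomorphism gives that $\phi$ is a covering map onto $\Int(\Cone(Z))$, which is simply connected, hence $\phi$ is injective. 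Combining, $\phi$ is a diffeomorphism onto $\Int(\Cone(Z))$, which is the claim. The delicate point in (i) is controlling exactly which boundary strata of $\Cone(Z)$ are approached, and ruling out that $\phi$ stays in a bounded region or limits into the relative interior of a proper face; here I would again lean on the oriented-matroid identification of faces of $\Cone(Z)$ with faces of $\Cone(z)$ and a careful limit argument, or alternatively reduce to the gradient case by the substitution trick of Section~\ref{sec:sumOverRoots} (monomial change of variables adapted to a facet) and induct on dimension. I expect this face-by-face asymptotic bookkeeping to be the part requiring the most care.
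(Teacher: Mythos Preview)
Your approach is correct and genuinely different from both of the paper's proofs. The opening Legendre-transform discussion is a detour (it only handles $z=Z$, as you note), but the core argument---Jacobian positivity via Cauchy--Binet, then properness, then the covering-map trick over the simply connected target $\Int(\Cone(Z))$---goes through. The properness step does require the oriented-matroid hypothesis in exactly the way you anticipate: if $|u_k|\to\infty$ with $u_k/|u_k|\to w$ and all $z_i\cdot w\le 0$, the set $S=\{i:z_i\cdot w=0\}$ is the zero set of a signed covector of $\M(z)=\M(Z)$, so there is $W$ with $Z_i\cdot W=0$ for $i\in S$ and $Z_i\cdot W<0$ otherwise; then $W\cdot\phi(u_k)\to 0$ while $W\cdot Y<0$ strictly on $\Int(\Cone(Z))$, forcing $\phi(u_k)$ out of any compact subset. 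That is the entire ``face-by-face bookkeeping,'' and it is not as delicate as you feared.

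By contrast, the paper's first proof shares your Jacobian computation but handles injectivity and surjectivity separately and by hand: injectivity by restricting $\phi$ to an arbitrary line $t\mapsto at+b$, using the covector correspondence to find $A$ with $\mathrm{sign}(Z_i\cdot A)=\mathrm{sign}(z_i\cdot a)$, and observing that $t\mapsto A\cdot\phi(at+b)$ is strictly monotone; surjectivity by an induction on $r$ that shows the image hits neighborhoods of each facet and then propagates inward via a ``connected line-slices'' convexity argument. The paper's second proof inducts instead on $n$, deforming the coefficient of one $Z_i$ from $0$ to $1$ and tracking that the unique positive root cannot bifurcate or escape. Your topological route is cleaner once properness is in hand; the paper's line-restriction injectivity is more elementary (no covering-space theory) and its surjectivity argument, while longer, gives finer control over how $\phi$ approaches each boundary stratum.
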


Before proving Theorem \ref{thm:diffeo}, we first deduce a number of corollaries.  Suppose that the $z_i$ are integer vectors.  Let $\tPhi: (\C^\ast)^r \to \C^r$ be the rational map given by
\be
\tPhi(X) = \sum_{i=1}^n X^{z_i} Z_i.
\ee

\begin{corollary}\label{cor:tPhi}
The restriction of $\tPhi$ to $\R_{>0}^r$ is a diffeomorphism of $\R_{>0}^r$ with $\Int(C) \subset \R^r \subset \C^r$.
\end{corollary}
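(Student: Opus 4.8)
The plan is to deduce Corollary~\ref{cor:tPhi} directly from Theorem~\ref{thm:diffeo} by a simple change of variables. The key observation is that when the exponent vectors $z_i$ are integral, the map $X \mapsto X^{z_i}$ restricted to the positive orthant $\R_{>0}^r$ factors through the coordinatewise logarithm. Concretely, I would introduce the diffeomorphism $\log: \R_{>0}^r \isom \R^r$ sending $X = (X_1,\ldots,X_r)$ to $u = (\log X_1, \ldots, \log X_r)$, whose inverse is $u \mapsto (e^{u_1}, \ldots, e^{u_r})$. Under this substitution, for each $i$ we have $X^{z_i} = \prod_{a=1}^r X_a^{z_{ai}} = \prod_{a=1}^r e^{z_{ai} u_a} = e^{z_i \cdot u}$, using that the $z_i$ have integer (hence real) entries so the powers are unambiguous on $\R_{>0}^r$. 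Therefore $\tPhi|_{\R_{>0}^r} = \phi \circ \log$, where $\phi$ is the map of Theorem~\ref{thm:diffeo}.

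From here the corollary is immediate: $\log: \R_{>0}^r \to \R^r$ is a diffeomorphism, and by Theorem~\ref{thm:diffeo} (whose hypotheses are exactly those assumed here, namely that $z$ and $Z$ span $\R^r$, that $\Cone(Z)$ is pointed, and that $z$ and $Z$ realize the same oriented matroid via the sign condition~\eqref{eq:chirotope}) the map $\phi: \R^r \to \Int(C)$ is a diffeomorphism. A composition of two diffeomorphisms is a diffeomorphism, so $\tPhi|_{\R_{>0}^r} = \phi \circ \log$ is a diffeomorphism of $\R_{>0}^r$ onto $\Int(C)$. The inclusion $\Int(C) \subset \R^r \subset \C^r$ is just to record the target: $\tPhi$ a priori maps into $\C^r$, but on the positive orthant each monomial $X^{z_i}$ is a positive real number and each $Z_i \in \R^r$, so the image lies in the real subspace, and in fact in $\Int(C)$ by Theorem~\ref{thm:diffeo}.

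There is essentially no obstacle here — the content is entirely in Theorem~\ref{thm:diffeo}, and the corollary is a bookkeeping step. The only point deserving a word of care is the claim that $X \mapsto X^{z_i}$ is well-defined and smooth on $\R_{>0}^r$ even when some components of $z_i$ are negative; this holds precisely because we restrict to strictly positive $X_a$, so $X_a^{z_{ai}}$ makes sense for any integer exponent (indeed for any real exponent), and smoothness follows since $\log$ and $\exp$ are smooth. One should also note that $\tPhi$ is only a \emph{rational} map on $(\C^\ast)^r$ (the monomials with negative exponents have poles on the coordinate hyperplanes), which is why the statement is phrased as a restriction to $\R_{>0}^r$ rather than a global assertion; but on $\R_{>0}^r$ no poles are encountered, so there is nothing further to check.
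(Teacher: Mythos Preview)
Your proof is correct and follows exactly the same approach as the paper: both observe that $\tPhi|_{\R_{>0}^r} = \phi \circ \log$ via the coordinatewise logarithm and then invoke Theorem~\ref{thm:diffeo}. The paper's proof is a one-line version of what you wrote.
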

\begin{proof}
The map $\tPhi|_{\R_{>0}^r}$ is the composition of the map $\phi$ with the map $(x_1,\ldots,x_r) \mapsto (\log x_1,\ldots,\log x_r)$.
\end{proof}

The set of vectors $z = \{z_1,z_2,\ldots,z_n\}$ is called \defn{graded} if there exists a vector $a \in \Q^{r}$ such that $a \cdot z_i = 1$ for all $i$.  If $z_i = (1,z'_i)$ where $z'_i \in \ZZ^{r-1}$, then clearly $z$ is graded.  Now define a rational map $\Phi: (\C^\ast)^r \to \P^{r-1}$ by
\be
\Phi(X) = \sum_{i=1}^n X^{z_i} Z_i.
\ee
\begin{corollary}\label{cor:diffeo}
The restriction of $\Phi$ to $\R_{>0}^{r-1}$ is a diffeomorphism of $\R_{>0}^{r-1}$ with $\Int(\A) \subset \P^{r-1}$.
\end{corollary}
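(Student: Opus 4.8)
The plan is to obtain Corollary~\ref{cor:diffeo} from Corollary~\ref{cor:tPhi} by quotienting out the grading direction. Fix a grading vector $a \in \Q^r$, so $a \cdot z_i = 1$ for every $i$; in particular $a \neq 0$. For $t \in \R_{>0}$ and $X \in \R_{>0}^r$ write $t^a \cdot X \deff (t^{a_1}X_1,\ldots,t^{a_r}X_r)$; this is a smooth action of $\R_{>0}$ on $\R_{>0}^r$. Since $z$ is graded, $(t^a\cdot X)^{z_i} = t^{a\cdot z_i}X^{z_i} = t\,X^{z_i}$, and therefore
\be
\tPhi(t^a\cdot X) \;=\; t\,\tPhi(X),
\ee
so $\tPhi$ intertwines the grading action on the source with the radial scaling action $v \mapsto tv$ on the target $\R^r$, which preserves the cone $\Int(C)$.

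First I would record that both $\R_{>0}$-actions are free and proper with nice quotients. After the diffeomorphism $\log : \R_{>0}^r \isom \R^r$ the grading action becomes translation along the line $\R a$; since $a \neq 0$ this is free and proper, the quotient is a smooth manifold diffeomorphic to $\R^r/\R a \cong \R^{r-1}$, hence to $\R_{>0}^{r-1}$ — and the grading vector singles out the splitting implicit in the statement (for instance in the normalized case $z_i=(1,z_i')$ one has $a=(1,0,\ldots,0)$ and the quotient is identified with the slice $X_1=1$, recovering $\Phi$ as a map of $\R_{>0}^{r-1}$). On the other side, the radial action on $\Int(C)$ is free (because $0 \notin \Int(C)$) and proper, with quotient the image $\Int(\A) \subset \P^{r-1}$ of $\Int(C)$ under $v\mapsto[v]$, and the projectivization $\pi:\Int(C)\to\Int(\A)$ is a surjective submersion; here pointedness of $C$ is what guarantees that $\pi$ is a global quotient map onto an open subset of $\P^{r-1}$ rather than merely a local chart.

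Next, by Corollary~\ref{cor:tPhi} the restriction $\tPhi|_{\R_{>0}^r}:\R_{>0}^r \isom \Int(C)$ is an $\R_{>0}$-equivariant diffeomorphism. An equivariant diffeomorphism between free and proper $\R_{>0}$-spaces descends to a diffeomorphism of quotients, so it induces a diffeomorphism $\R_{>0}^{r-1} \cong \R_{>0}^r/\R_{>0} \isom \Int(C)/\R_{>0} = \Int(\A)$. Finally I would check that this descended map is exactly $\Phi|_{\R_{>0}^{r-1}}$: by construction $\Phi(X) = [\tPhi(X)]$, hence $\Phi|_{\R_{>0}^r} = \pi\circ(\tPhi|_{\R_{>0}^r})$ is constant on the grading orbits and therefore factors through the quotient precisely as the map just built. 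This completes the argument.

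There is no essential difficulty here; the only points requiring care are the bookkeeping that identifies $\R_{>0}^r/\R_{>0}$ (for the grading action) with $\R_{>0}^{r-1}$ compatibly with the coordinates in which $\Phi$ is presented as a map on $\R_{>0}^{r-1}$, and the verification that $\pi$ is a genuine global quotient onto $\Int(\A)$ — both of which rest respectively on $a\neq 0$ and on $C$ being pointed, so they follow from hypotheses already in force.
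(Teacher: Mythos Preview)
Your proof is correct and follows the same strategy as the paper: both deduce the result from Corollary~\ref{cor:tPhi} by quotienting out the grading direction. The paper carries this out concretely by first changing coordinates so that $z_i=(1,z'_i)$ and then taking the slice $t=1$ inside $\R_{>0}^r$, whereas you phrase it abstractly in terms of the free $\R_{>0}$-action $X\mapsto t^a\cdot X$ and equivariance of $\tPhi$; your parenthetical about the normalized case $a=(1,0,\ldots,0)$ and the slice $X_1=1$ is exactly the paper's argument.
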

\begin{proof}
Since the set $z$ is graded, we can change coordinates so that $z_i = (a,z'_i)$ for $z'_i \in \ZZ^{r-1}$.  By replacing $X_r$ with $X_r^a$, we may assume that $a = 1$, so that $z_i = (1,z'_i)$.

Use coordinates $(t,X,t)$ on $\C \times \C^{r-1} \cong \C^{r}$.  Consider the map $\tPhi:\C^r \to \C^r$ given by
\be
\tPhi(t,X) = \sum_{i=1}^n  t^1 X^{z'_i} \, Z_i = t \left( \sum_{i=1}^n X^{z'_i} Z_i \right).
\ee
Now restrict $\tPhi$ to $\R_{>0}^r$ and apply Corollary \ref{cor:tPhi}.  For any $t \in \R_{>0}$, the image of $\tPhi(t,X)$ in $\P^{r-1}$ is equal to $\Phi(X)$.  Fixing $t = 1$, we obtain the claimed diffeomorphism.
\end{proof}

We now provide two distinct proofs of \ref{thm:diffeo}.

\begin{proof}[First proof of Theorem \ref{thm:diffeo}]
The basic structure of our argument is similar to that in \cite[p.84--85]{Fulton}, but the details are significantly more complicated.  

The case $r =1$ is straightforward.  We assume that $r > 1$, and inductively that the result for $r-1$ is known.

Let us show that the map $\phi$ is a local isomorphism.  We compute the Jacobian of the map.  This is given by the determinant of the $r \times r$ matrix
\be
A(u) = \sum_{i=1}^n e^{\ip{z_i,u}} z_i^T Z_i 
\ee
where $z_i^T$ is the transpose of $z_i$.  For a fixed value of $u \in \R^r$, the vectors $e^{{z_1 \cdot u}} z_1,e^{{z_2 \cdot u}} z_2$, $\ldots,e^{{z_n \cdot u}} z_n$ have the same oriented matroid $\M$.  By the Cauchy-Binet identity the determinant $J(u) = \det(A(u))$ is a sum of the products $\ip{e^{{z_1 \cdot u}}z_{i_1} \cdots e^{{z_n \cdot u}}z_{i_r}} \ip{Z_{i_1} \cdots Z_{i_r}}$, over all $1 \leq i_1 < i_2 < \cdots < i_r \leq n$.  By our assumption \eqref{eq:chirotope}, we obtain $J(u) > 0$.  Thus $\phi$ is a local diffeomorphism.

Let us next show that $\phi$ is one-to-one.  It is enough to show that $\phi$ is one-to-one when restricted to any line $L = \{at+b \mid t \in \R \}$ in $\R^r$.  We have
\be
\phi(at+b) = \sum_{i=1}^n e^{t \,  z_i \cdot a  +  z_i \cdot  b} Z_i = \sum_{i=1}^n \beta_i e^{t \, z_i \cdot a } Z_i,
\ee
where $\beta_i = e^{\ip{z_i,b}} \in \R_{>0}$.  The signs of the vector $({z_1 \cdot a},{z_2 \cdot a}, \ldots, {z_n \cdot a})$ is a signed covector of the oriented matroid $\M(z)$.  It follows that there is a vector $A \in \R^r$ so that $({Z_1 \cdot A},{Z_2 \cdot A}, \ldots, {Z_n \cdot A})$ has the same signs (see Proposition \ref{prop:OM}).  The one-variable function $f(t) := \phi(at+b) \cdot A = \sum_{i=1}^n \beta_i e^{t \, {z_i \cdot a}} {Z_i \cdot A}$ has positive derivative, and is thus injective.  It follows that $\phi$ itself is one-to-one.

Let us show that the image of $\phi$ contains points arbitrarily close to any point on a facet of $C$.  Let $F$ be a facet of $C$, and after renaming let $Z_1,Z_2,\ldots,Z_s$ be the $Z_i$ lying in this facet.  Since $Z$ and $z$ have the same oriented matroid, we see that $z_1,z_2,\ldots,z_s$ are exactly the $z_i$ lying on a facet of $\Cone(z)$.  Thus there exists $a \in \R^r$ such that $z_1 \cdot a = z_2 \cdot a = \cdots = z_s \cdot a = 0$ and $z_i \cdot a < 0$ for $i > s$.


Let $H = \R^{r-1} \subset \R^r$ be a linear hyperplane not containing $a$.  Thus $H$ maps isomorphically onto the quotient $\R^r/\R.a$.  Then for any $b \in H$, we have that 
\be \label{eq:limit}
\phi(at + b) = \sum_{i=1}^s e^{z_i \cdot b}Z_i + \sum_{j=s+1}^n \beta_j e^{t\, z_j \cdot a} Z_j.
\ee
We have $\lim_{t \to \infty} \phi(at+b) = \sum_{i=1}^s e^{{z_i \cdot b}}Z_i$, and since $Z_1,\ldots,Z_s$ span a pointed cone in $\R^{r-1}$, by the inductive hypothesis we know that the map $\phi': H \to F$ given by
\be
\phi_F(b) = \sum_{i=1}^s e^{{z_i \cdot b}}Z_i
\ee
is a diffeomorphism onto the interior of $F$.  It follows that the image of $\phi$ contains points arbitrarily close to any point on a facet of $C$.  

We now argue that for each $p_0 \in \Int(F)$, the image of $\phi$ contains $B_{p_0} \cap \Int(C)$, for some open ball $B_{p_0}$ centered at $p_0$.  Let $b_0 =\phi_F^{-1}(p_0) \in H$.  There exists a neighborhood $U$ of $b_0$, two positive real numbers $\delta > \alpha > 0$, and a vector $Z^* \in \sp(Z_{s+1},\ldots,Z_n)$ such that for any ray $I = [t_0,\infty) \subset \R$ with $t_0$ sufficiently large, we have that for $t \in I$ and $b \in U$, 
\be \label{eq:estimate}
|\phi(at+b) - (\phi_F(b) + e^{-\alpha t} Z^*)| < e^{-\delta t_0}.
\ee
Here, $\alpha = \min_{j \geq s+1}(-{z_j \cdot a})$, and $Z^* = \sum_{j  \mid \alpha = -{z_j \cdot a} } \beta_j Z_j$.

Let $\eta: (\Int(F) + \R_{>0} Z^*) \to \R \times H$ be the map $\eta(p +  \gamma Z^*) = (-(\log \gamma)/\alpha,\phi_F^{-1}(p))$.  The composition $\phi' = \eta \circ \phi$ thus sends $I \times U$ to $\R \times H$.  If $\phi'(t,b) = (s,c)$, then \eqref{eq:estimate} implies that $|t-s| < C \cdot t_0$ and $|b-c| < e^{-\delta t_0}$, for some constant $C$.  Furthermore, the injectivity of $\phi$ implies the injectivity of $\phi'$.  Applying the following claim to appropriate closed (topological) balls inside $I \times U$, we see that the image of $\phi'$ contains $[t',\infty) \times U'$ for some $t' > t$ and $U' \subset U$ a neighborhood containing $b_0$.  

\medskip

\noindent {\bf Claim.} Suppose $ \bar B^r \subset \R^r$ is a closed ball and $f: \bar B^r \to \R^r$ is an injective continuous function, smooth when restricted to either $S^{r-1}$ or $B^r$.  Then $f(S^{r-1})$ separates $\R^r$ into two components, and $f(B^r)$ is a homemorphism onto the bounded component.

\noindent {\it Proof of claim.} The first claim is the Jordan--Brouwer separation theorem.  By the invariance of domain theorem, $f|_{B^r}$ maps $B^r$ homeomorphically to an open subset of $\R^r$.  It is clear that the boundary of $f(B^r)$ is equal to $f(S^{r-1})$.  Thus $f(B^r)$ must equal the bounded component of $\R^r \setminus f(S^{r-1})$.  \qed


\medskip

We have thus shown that the image of $\phi$ contains a neighborhood of every point in $\Int(F)$ (intersected with $\Int(C)$).

Finally, we show that $\phi$ is surjective onto the interior $\Int(C)$.  Let $Z_i$ be an edge of $C$.  We shall show a weak form of convexity: 

\medskip

\noindent {\bf Claim.} The image of $\phi$ intersected with any line parallel to $Z_i$ is either connected or empty.  

\noindent {\it Proof of claim.} Consider a linear projection $\pi: \R^r \to \R^{r-1}$ with kernel equal to the span of $Z_i$, where $Z_i$ is one of the edges of $C$.  The images $\pi(Z_1),\ldots,\pi(Z_{i-1}),\pi(Z_{i+1}),\ldots,\pi(Z_n) \in \R^{r-1}$ has oriented matroid $\M/i$, the contraction of $\M$ by $i$.  Since $Z_i$ is an edge of $C$, we have that $\pi(C)$ is a pointed cone.

Let $\kappa: \R^r \to \R$ be the map $\kappa(u) = {z_i \cdot u}$.  Let $H = {\rm ker} \; \kappa$ be the kernel of $\kappa$.  For $t \in \R$, consider the function $g_t: H \cong \R^{r-1} \to \R^{r-1}$ given by
\be
g_t(h) = \pi \circ \phi (h + tz_i) = \sum_{j\neq i} e^{{z_j\cdot(h + tz_i)}} \pi(Z_i) = \sum_{j\neq i}  e^{{z'_j\cdot h}} (\alpha_i\pi(Z_i))
\ee
where $\alpha_i = e^{z_j\cdot (tz_i)} > 0$, and $z'_j \in \R^{r-1} \cong \R^r/\R.z_i$ are the images of $z_j$ under the natural quotient map.  But the oriented matroid of $z'_1,\ldots,z'_{i-1},z'_{i+1},\ldots,z'_n$ is equal to $\M/i$ as well, so by the same result for $r-1$ we deduce that $g_t$ maps $\R^{r-1}$ bijectively onto the interior of $\pi(C)$. 

For a point $p \in \Int(\pi(C))$, the restriction of $\kappa$ to $(\pi \circ \phi)^{-1}(p)$ is therefore a bijection, and we deduce that $(\pi \circ \phi)^{-1}(p)$ is diffeomorphic to $\R$.  As a consequence, for any $p \in \R^{r-1}$, the set $\pi((\pi \circ \phi)^{-1}(p))$ is either connected or empty.  Let $S = \Image(\phi)$.  Then $\pi((\pi \circ \phi)^{-1}(p))$ is the intersection of $S$ with a line parallel to $Z_i$.  Varying $p$, we obtain the claim. \qed

For an edge $Z_i$ of $C$ and a point $q \in \R^r$, let $L_i(q) = \{q + t Z_i \mid t \in \R\}$ be the line through $q$ parallel to $Z_i$.  Let $q$ be in the image of $\phi$.  We now show that $(q - C) \cap \Int(C) \subset \Image(\phi)$.  Consider the line $L_i(q)$. Let us first suppose that $L$ intersects $C$ at a point $p \in \Int(F)$ in the interior of a facet $F$.  Since $\Image(\phi)$ contains all points in some neighborhood of $p$, and $\Image(\phi) \cap L$ is connected, we see that $q + tZ_i \in \Image(\phi)$ for $t < 0$.  Now suppose that $L$ intersects $C$ in the interior of a face $F'$.  First suppose $F'$ has one dimension less than that of a facet.  Since $\Image(\phi)$ contains an open neighborhood of $q$, we can find some edge $Z_j$ of $C$ and some $\epsilon$ such that $q + \epsilon Z_j$ and $q- \epsilon Z_j$ both lie in $\Image(\phi)$, but now $L_i(q+\epsilon Z_j)$ and $L_i(q - \epsilon Z_j)$ both intersect $C$ in the interior of facets.  By an induction on the codimension of the face $F'$, we obtain that $(q - C) \cap \Int(C) \subset \Image(\phi)$, for any $q \in \Image(\phi)$.

Since $\Cone(z)$ is pointed, there exists a vector $u \in \R^r$ such that ${z_i\cdot u} > 0$ for all $i$.  It follows that for any $M > 0$, there exists a point $q \in \Image(\phi)$ such that $q = \sum_{i=1}^s \alpha_i Z_i$ where $\alpha_i > M$ for all $i$.  It follows that $\Image(\phi) = \Int(C)$.
\end{proof}

\begin{proof}[Second proof of Theorem \ref{thm:diffeo}]
We will prove Corollary~\ref{cor:tPhi}, which is equivalent. Let $\tilde{\Phi}_{>0}$ denote the restriction $\tilde{\Phi}|_{\R^r_{>0}}:\R_{>0}^r\rightarrow \Int(C)$. Note that the basic ideas behind this theorem were first presented in the proof of Khovanskii's theorem~\cite{fewnomials}.

Let $Y\in \Int(C)$. We need to argue that the system of equations $Y=\tilde{\Phi}_{>0}$ consists of exactly one positive root (i.e. a root for which each component is positive). 

The idea is to apply induction on the number of vertices $n$, starting at $n=r$ which is essentially trivial. For higher $n$, we smoothly deform the polytope to a lower vertex polytope, and argue that the positive roots do not bifurcate throughout the deformation. The result therefore follows from the induction hypothesis.

For convenience, let us define the function $\Psi(X)\deff \sum_{i=1}^n(C_i(X)-C_{0i})Z_i$, where $Y=\sum_{i=1}^nC_{0i}Z_i$ for some constants $C_{0i}>0$. It therefore suffices to show that the system of equations $\Psi(X)=0$ contains exactly one positive root.

We first observe that by setting any one of the variables $X_a$ to zero, we necessarily land on a boundary component of the polytope. It follows that no real solution $X$ exists that intersects $\partial \R_{>0}^r$. We will use this fact later on.

For $n=r$, the $Z_i^I$ matrix is invertible, so it suffices to show that the system $C_i(X)=C_{0i}$ for $i=1,\ldots ,n$ has exactly one positive root. Taking the log of both sides, we find
\be
\sum_{a=1}^{r}z_{ai}\log X_a=\log C_{0i}
\ee
This is a linear system of equations with \defn{invertible} matrix $z_{ai}$, and therefore has a unique solution. Positivity of the solution is clear.

For $n>r$, let us introduce a new variable $t$ and consider the system of equations
\be
\Psi(t;X) = t (C_1(X)-C_{01}) Z_1 + \sum_{i=2}^n (C_i(X)-C_{0i}^*) Z_i
\ee
We want to show that $\Psi(t;X)=0$ has exactly one positive root at $t=1$. At $t=0$, the induction hypothesis gives us exactly one positive root. Also, as we argued earlier, this system does not have roots in $\partial \R_{>0}^r$ (i.e. As $t$ evolves from $0$ to $1$, roots will not move through the boundary of the positive region.) Therefore it suffices to show that the system does not bifurcate in the positive region as $t$ evolves. Indeed, the toric Jacobian (i.e. the Jacobian with respect to $u$) is non-vanishing.
\be
J(t;u) &=& \det_{aI}\left(X_a \frac{\partial \Psi^I(t;X)}{\partial X_a}\right) = \det\left(\sum_{i=1}^nC'_i z_{i}^TZ_i
\right)\\
&=&\sum_{1\le i_1<\ldots <i_r\le n}C'_{i_1}\ldots C'_{i_r}\lb z_{i_1}\ldots z_{i_r}\rb\lb Z_{i_1}\ldots Z_{i_r}\rb
\ee
where $C'_1 = t C_1$ and $C'_i = C_i$ for $i> 1$. This Jacobian computation was essentially done in the first proof, but now with an extra factor of $t$.

It follows that the system does not bifurcate as $t$ evolves from $0$ to $1$, which completes our argument.
\end{proof}

\section{The global residue theorem}\label{app:GRT}

We provide a very brief review of the \defn{global residue theorem} (or GRT)~\cite{griffithsharris}, which is an extension of Cauchy's theorem to multiple variables. Applications of the GRT in amplitude physics is discussed extensively in~\cite{Grassmannian}. We remind the reader of the statement and provide an intuitive reason it should be true that differs from standard approaches to its proof, which also naturally connects to the ``push-forward" formula for canonical forms.  

Let $z\in \C^n$ denote complex variables, and let $g(z)$ and $f_a(z)$ for $a=1,...,n$ be polynomials. Consider the roots of the system of equations
\be
\{f_a(z)=0\}
\ee
which we assume to be a finite set of isolated points $z_i$. Furthermore, let us define the holomorphic top form
\be
\omega\deff \frac{g(z)d^nz}{\prod_{a=1}^nf_a(z)}
\ee
We define the \defn{global residue} of $\omega$ at $z_i$ to be
\be\label{eq:globalresidue}
\Res_{z_i}\omega = \frac{g(z_i)}{\det\left(\frac{\partial f_a}{\partial z_b}\right)(z_i)}
\ee
where the determinant is applied on the matrix indexed by $(a,b)$.

The global residue theorem states the following:

\begin{theorem}\label{thm:GRT}(global residue theorem) The sum of all the global residues of $\omega$ vanishes provided that $\left(\sum_{a=1}^n\deg f_a\right)-\deg g>n$. Namely,
\be\label{eq:GRT}
\sum_{i}\Res_{z_i}\omega=0
\ee
\end{theorem}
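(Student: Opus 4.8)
The plan is to realize the left-hand side of \eqref{eq:GRT} as an iterated residue of a push-forward, and then to show that push-forward vanishes identically because the degree hypothesis forces it to be a rational top-form on $\P^n$ with too few poles. Write $f = (f_1,\dots,f_n)\colon \C^n \to \C^n$, let $w=(w_1,\dots,w_n)$ be coordinates on the target, and set $J(z) = \det(\partial f_a/\partial z_b)$. Since $f^{*}(w_a) = f_a$ and $f^{*}(dw_1\wedge\cdots\wedge dw_n) = J(z)\,dz_1\wedge\cdots\wedge dz_n$, we may rewrite
\be
\omega \;=\; \frac{g(z)}{J(z)}\; f^{*}\!\left(\frac{dw_1\cdots dw_n}{w_1\cdots w_n}\right),
\ee
so that (with $f^{-1}(w)$ finite for generic $w$, which is automatic here: since $f^{-1}(0)$ is finite and nonempty, the fibre-dimension theorem forces $f$ to be dominant and generically finite) the push-forward \eqref{eq:pushforward} is
\be
f_{*}\omega \;=\; h(w)\,\Omega(\Delta^n),\qquad h(w)\;\deff\; \sum_{z\in f^{-1}(w)}\frac{g(z)}{J(z)},
\ee
where we have used that $dw_1\cdots dw_n/(w_1\cdots w_n)$ is exactly the canonical form $\Omega(\Delta^n)$ of the standard simplex in the chart $w_0=1$.

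First I would carry out the local residue computation: at a simple root $z_i$ (so $J(z_i)\neq 0$) the functions $f_1,\dots,f_n$ form a local coordinate system, and taking residues of $\omega$ successively along $f_1=0,\dots,f_n=0$ yields $g(z_i)/J(z_i) = \Res_{z_i}\omega$ as defined in \eqref{eq:globalresidue}. Applying Proposition \ref{prop:KR} (``push-forward commutes with residue'') iteratively along the flag $\{w_1=0\}\supset\{w_1=w_2=0\}\supset\cdots\supset\{0\}$ then shows that the iterated residue of $f_{*}\omega$ at the origin equals $\sum_{z\in f^{-1}(0)}\Res_z\omega$, the left-hand side of \eqref{eq:GRT}; it also equals (up to sign) $h(0)$. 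Hence it suffices to prove $f_{*}\omega\equiv 0$. (Replacing $f_a$ by $f_a-c_a$ changes neither the degrees nor the hypothesis, so vanishing of $h$ at a single point and vanishing identically are equivalent here.)

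To prove $f_{*}\omega\equiv 0$ I would establish two facts about it as a rational top-form on $\P^n$ in the variable $w$. (i) Its only poles in the affine chart are along the hyperplanes $\{w_a=0\}$, and they are at worst simple: the push-forward of a form with polar divisor $D$ has poles only along $f(D)$ — poles along the ramification locus cancel, by the mechanism used throughout Section \ref{sec:pushforward} — and here $f(\{f_a=0\})\subset\{w_a=0\}$, while $f$ is generically unramified over each $\{w_a=0\}$ (assuming, as the residue symbol \eqref{eq:globalresidue} implicitly does, that the $f_a$ cut out a reduced complete intersection). (ii) It is regular along the hyperplane $\{w_0=0\}$ at infinity: homogenizing $\omega$ in the source shows $\omega \propto G(x)\,x_0^{(\sum_a\deg f_a)-\deg g-n-1}\,\langle x\,d^n x\rangle / \prod_a F_a(x)$, whose $x_0$-exponent is $\ge 0$ precisely under the hypothesis $\sum_a\deg f_a-\deg g>n$, so $\omega$ has no pole along the source hyperplane at infinity; this regularity transports to $f_{*}\omega$ along the target hyperplane at infinity. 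Given (i) and (ii), write $f_{*}\omega = h(w)\,\Omega(\Delta^n) = H(w)\langle w\,d^n w\rangle/(n!\,w_0^{\delta+1}w_1\cdots w_n)$ with $H$ the homogenization of the polynomial $h$ and $\delta=\deg h$; regularity along $\{w_0=0\}$ forces $w_0^{\delta+1}\mid H$, impossible unless $H\equiv 0$ since $\deg H=\delta$. Thus $f_{*}\omega=0$ and $\sum_i\Res_{z_i}\omega=0$.

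I expect the main obstacle to be the global analysis underlying fact (ii): the compactified map $\widetilde f\colon \P^n\to\P^n$ in the source and target $w$-spaces is in general only a rational map, with indeterminacy along the source hyperplane at infinity, so rigorously transporting regularity to $f_{*}\omega$ requires either resolving $\widetilde f$ by a blow-up $\pi\colon X\to\P^n$ and checking that $\pi^{*}\omega$ acquires no poles along exceptional divisors (again controlled by the degree bound), or replacing the geometric argument by a direct \v{C}ech / local-duality computation on $\P^n$ showing that a rational top-form with poles confined to at most $n$ coordinate hyperplanes is pinned down by finitely many Laurent coefficients that the degree hypothesis annihilates. The remaining ingredients — the local residue identification and the vanishing of a form on $\P^n$ with too few poles — are routine given Appendices \ref{app:residue} and \ref{app:projform}.
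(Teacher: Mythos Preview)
Your proposal is correct in outline and reaches the same conclusion, but the packaging differs from the paper's own argument in an interesting way. Both arguments hinge on the same core ingredient: the pairwise--cancellation mechanism near critical values (the computation around \eqref{eq:noPoles}) that forces the root sum $\sum g/J$ to be free of poles. Where they diverge is in the second step. You fix $f,g$ and regard $h(w)=\sum_{z\in f^{-1}(w)}g(z)/J(z)$ as a polynomial in the target variable $w$, then kill it by a projective degree count on $\P^n_w$: the degree hypothesis says $\omega$ is regular along $\{x_0=0\}$ in the source, so after resolution $f_*\omega$ acquires no pole along $\{w_0=0\}$, and a form on $\P^n$ with simple poles along only $n$ of the $n{+}1$ coordinate hyperplanes must vanish. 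The paper instead regards $\sigma(\omega)=h(0)$ as a function of the \emph{coefficients} of $f_a,g$; Galois symmetry plus the same cancellation mechanism make it a polynomial in those coefficients, and then a homogenizing parameter $t$ (morally your $x_0$) is introduced so that the roots scale like $t$ and each residue like $t^{-(\sum_a\deg f_a-\deg g-n)}$, whence the polynomial-in-$t$ quantity $\sigma(\omega^{(t)})$ must vanish. Your translation $f_a\mapsto f_a-w_a$ is a specific one--parameter family inside the paper's variation of all coefficients, so the two arguments are close cousins; yours is more geometric and connects directly to the push-forward language of Section~\ref{sec:pushforward}, while the paper's avoids any compactification or blow-up and is the more elementary of the two. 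The obstacle you flag for fact~(ii) is real but benign: since $\omega$ is genuinely regular along $\{x_0=0\}$ under the degree hypothesis, and blowing up a smooth center only introduces \emph{zeros} (not poles) in the pullback of a regular top form, no exceptional divisor of the resolution can contribute a pole over $\{w_0=0\}$; the only polar components are the strict transforms of $\{F_a=0\}$, whose images lie in $\{w_a=0\}$.
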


For $n=1$, this is simply the statement that the sum of all residues of a holomorphic function on the complex plane is zero, provided that there is no pole at infinity.

Very often, there may be many (irreducible) polynomial factors $F_1,...,F_N$ for $N\ge n$ that appear in the denominator of $\omega$:
\be
\omega=\frac{g(z)d^nz}{\prod_{A=1}^NF_A(z)}
\ee
In that case, we can group the the polynomials into $n$ index subsets $S_1,...,S_n$ that form a partition of the index set $\{1,...,N\}$, then let $f_a=\prod_{A\in S_a}F_A(z)$ and apply the theorem. The global residue theorem thereby involves summing over all the roots of any system of $n$ polynomials $F_{A_1}=...=F_{A_n}=0$ for which $A_a\in S_a$ for every $a=1,...,n$. It is easy to show that the global residue at any root $z_i$ of $n$ such polynomials looks like
\be
\Res_{z_i}\omega=\frac{g(z_i)}{\left(\det\left(\frac{\partial F_{A_a}}{\partial z_b}\right)\prod_{A\not\in\{A_1,...,A_n\}}F_A\right)(z_i)}
\ee

The intuition behind the theorem is as follows. Let us denote by $\sigma(\omega)$ the sum over all the global residues. Since $\sigma(\omega)$ is symmetric in the roots of the system of polynomial equations, it must therefore be a \defn{rational function} of the coefficients of the polynomials $g(z)$ and $f_a(z)$, by a standard argument in Galois theory.

Furthermore, by applying the ``pole cancellation" argument surrounding~\eqref{eq:noPoles}, it follows that $\sigma(\omega)$ has no poles anywhere. Indeed, the poles, if they exist, must appear when the Jacobian of $f_a$ vanishes at a root $z_i$. But similar to our argument surrounding~\eqref{eq:noPoles}, the roots bifurcate in the neighborhood of $z_i$, and the pairwise contribution cancel at leading order, thus giving no pole at the root $z_i$. The $\sigma(\omega)$ must therefore be \defn{polynomial} in the coefficients. 

Now suppose we introduce an extra parameter $t$ alongside the existing parameters $z_i$ so that each polynomial $f_a$ and $g$ becomes homogeneous, and we do so without changing the degree of the polynomials (i.e. the highest order term in each polynomial is unaffected). Now we imaging taking $t\rightarrow \infty$ in $\sigma(\omega)$. As $t$ becomes large, the roots $z_i$ also become large like $O(t)$. Each global residue~\eqref{eq:globalresidue} must therefore scale like $O(t^{-(\sum_a\deg f_a-\deg g-n)})$. It follows that
\be
\sigma(\omega)\sim O(t^{-(\sum_a\deg f_a-\deg g-n)})
\ee
which by assumption is a strictly negative power. It follows that $\sigma(\omega)$ must vanish, which completes our argument.

\section{The canonical form of a toric variety}
\label{app:toricform}
We follow the notation in \cite{Fulton}.  Fix a lattice $N \cong \ZZ^m$ and its dual lattice $M = N^\vee$.  Let $\sigma$ be a \defn{pointed rational polyhedral cone} in the vector space $N_\Q \cong N \otimes_\ZZ \Q$.  Thus $\sigma$ is the cone generated by elements in $N$, and $\sigma$ does not contain any line in $N_\Q$.  The dual cone $\sigma^\vee \subset M_\Q$ is given by
\be
\sigma^\vee = \{u \in M_\Q \mid \ip{u,v} \geq 0 \text{ for all } v \in \sigma\}.
\ee
For example, if $\sigma = \{0\}$, then $\sigma^\vee = M_\Q$.  The cone $\sigma^\vee$ is pointed if and only if $\sigma$ is $d$-dimensional.  

The semigroup $S_\sigma$ is the set $\sigma^\vee \cap M$, endowed with the multiplication inherited from $M$.  The semigroup algebra $\C[S_\sigma]$ is the commutative ring generated by symbols $\chi^u$ for $u \in S_\sigma$, and multiplication given by $\chi^u \cdot \chi^{u'} = \chi^{u+u'}$.  The element $\chi^0$ is the identity element of $\C[S_\sigma]$.  By definition, the spectrum $\Spec(\C[S_\sigma])$ is a (normal) \defn{affine toric variety} $A_\sigma$.

A \defn{fan} $\Delta$ in $N$ is a finite set of pointed rational polyhedral cones $\sigma$ in $N_\Q$ such that 
\begin{enumerate} \item Each face of a cone in $\Delta$ is also in $\Delta$;
\item
The intersection of two cones in $\Delta$ is a face of each.
\end{enumerate}
By gluing the affine toric varieties $\{A_\sigma \mid \sigma \in \Delta\}$, one obtains the \defn{normal toric variety} $X(\Delta)$ associated to the fan $\Delta$.

The toric variety $X(\Delta)$ contains a dense algebraic torus:
$$
T = (\C^\ast)^m = \Spec(\C[x_1^{\pm 1}, \ldots, x_m^{\pm 1}])
$$
We define a rational top-degree form $\Omega_\Delta$ on $X(\Delta)$ as the push-forward of the canonical form \be
\Omega_T = \frac{dx_1}{x_1} \wedge \cdots \wedge \frac{dx_m}{x_m}
\ee
on $T$.  If we change coordinates on $T$, the canonical form $\Omega_\Delta$ will change by $\pm 1$.  To fix a choice of sign, we fix an orientation on $M_\Q$.  If $\sigma^\vee$ is a full-dimensional cone in $M_\Q$ and $\tau^\vee \subset \sigma^\vee$ is a face of $\sigma^\vee$, then the orientation on $M_\Q$ naturally induces an orientation on the linear span $\sp(\tau^\vee)$.

The \defn{toric divisors} $D_i$ of $X_\Delta$ are indexed by the edges or one-dimensional cones $\tau_i$ of the fan $\Delta$ (\cite[Section 3.3]{Fulton}).  The divisor $D_i$ is the toric variety $X(\Delta')$ where $\Delta' \in N_\Q/\sp(\tau_i)$ is the image of the fan $\Delta$.

\begin{proposition}\label{prop:toricform}
The canonical form $\Omega_\Delta$ has a simple pole along each toric divisor and no other poles.  The residue $\Res_{D} \Omega_\Delta$ along a toric divisor is equal to the canonical form of the divisor $D$.
\end{proposition}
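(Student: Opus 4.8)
The plan is to reduce everything to a local computation on each affine toric chart $A_\sigma$ of $X(\Delta)$, using that the toric divisors $D_i$ are the only codimension-one torus-invariant subvarieties and that $\Omega_\Delta$ is torus-invariant by construction. First I would recall the orbit-cone correspondence: the torus orbits of $X(\Delta)$ are indexed by cones $\tau \in \Delta$, with the orbit closures $V(\tau)$ being themselves toric varieties for the fan $\mathrm{Star}(\tau)$ in $N_\Q/\sp(\tau)$; the codimension-one orbit closures are exactly the $D_i = V(\tau_i)$ for the edges $\tau_i$. Since $\Omega_\Delta|_T = \Omega_T = \prod_{i=1}^m dx_i/x_i$ is holomorphic and nonvanishing on the dense torus $T$, any pole of $\Omega_\Delta$ must lie in $X(\Delta)\setminus T$, which is the union of the $D_i$. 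So the content of the proposition is (a) the pole along each $D_i$ is simple, and (b) the residue is the canonical form $\Omega_{D_i}$ of $D_i$.

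The key step is the local normal form near a generic point of $D_i$. I would pick the ray $\tau_i$ generated by a primitive vector $v_i \in N$, extend $v_i$ to a basis $v_1 = v_i, v_2,\ldots,v_m$ of $N$ (possible since $v_i$ is primitive), and take the dual basis $u_1,\ldots,u_m$ of $M$; this is consistent up to sign with the chosen orientation of $M_\Q$. Working in the affine chart $A_\sigma$ for a cone $\sigma$ containing $\tau_i$ as an edge — or even just in a small two-dimensional cone — the coordinate $y_1 = \chi^{u_1}$ satisfies: $D_i$ is cut out by $y_1 = 0$, and $\Omega_T = \dfrac{dy_1}{y_1}\wedge \dfrac{dy_2}{y_2}\wedge\cdots\wedge\dfrac{dy_m}{y_m}$ in these coordinates (this is just the change-of-variables formula for monomial coordinate changes on the torus, which is unimodular and so changes $\Omega_T$ only by sign, fixed by the orientation convention). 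Hence $\Omega_\Delta$ visibly has a simple pole along $y_1 = 0$, and by the definition of $\Res$ in Appendix~\ref{app:residue},
\[
\Res_{D_i}\Omega_\Delta \;=\; \frac{dy_2}{y_2}\wedge\cdots\wedge\frac{dy_m}{y_m}.
\]
Now $y_2,\ldots,y_m$ restricted to $D_i$ are exactly the torus coordinates on the dense torus $T_{D_i} = T/\,\text{(image of }\tau_i\text{)}\cong (\C^*)^{m-1}$ of $D_i$, because $u_2,\ldots,u_m$ descend to a basis of $M \cap \tau_i^\perp = (N/\sp(\tau_i))^\vee$. Therefore the right-hand side is precisely $\Omega_{T_{D_i}}$, which by definition of the canonical form on the toric variety $D_i$ is $\Omega_{D_i}$ restricted to its torus, hence equals $\Omega_{D_i}$ as a rational form. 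I would also note that no pole can appear on the intersection $D_i \cap D_j$ or on deeper strata beyond those already accounted for, since the residue computation above is valid at every codimension-one point of $\partial X(\Delta)$ and a rational top form on a normal variety is determined by its behavior in codimension one.

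The main obstacle I anticipate is bookkeeping the signs and the orientation conventions: the passage from $\Omega_T$ in one set of monomial coordinates to another, and the induced orientation on $\sp(\tau^\vee)$ from the chosen orientation on $M_\Q$, must be tracked carefully so that $\Res_{D_i}\Omega_\Delta$ comes out equal to $+\Omega_{D_i}$ rather than $-\Omega_{D_i}$ — this is exactly what the orientation choices in the statement are designed to pin down, and matching them up is the only genuinely delicate point. A secondary technical point is to handle the case where $X(\Delta)$ is singular: there $\Omega_\Delta$ is a rational form defined on the smooth locus, and one must check that a generic point of each $D_i$ lies in the smooth locus of $X(\Delta)$ (true: the singular locus has codimension $\geq 2$) so that the residue is well-defined as a rational $(m{-}1)$-form on $D_i$; alternatively one passes to a smooth toric resolution, where the computation above applies verbatim, and pushes the identity back down. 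Neither of these is conceptually hard, but they are where the care is needed.
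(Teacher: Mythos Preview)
Your proposal is correct and follows essentially the same approach as the paper: both reduce to the affine chart where the ray $\tau_i$ is generated by a basis vector, so that $A_\tau \cong (\C^*)^{m-1}\times\C$ with $D_i = \{y_1=0\}$, and then read off $\Res_{y_1=0}\bigl(\prod_j dy_j/y_j\bigr)=\prod_{j\geq 2} dy_j/y_j=\Omega_{D_i}$. Your version is more explicit about the lattice bookkeeping, the orientation/sign issue, and the normality caveat, but the core argument is identical.
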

\begin{proof}
The toric variety $X(\Delta)$ is the union of the dense torus $T \subset X(\Delta)$ and its toric divisors.  Since $\Omega_{\Delta}|_T = \Omega_T$ has no poles in $T$, the only poles are along the toric divisors.

Let the toric divisor $D$ correspond to the edge $\tau$ of $\Delta$.  The rational forms $\Omega_\Delta$ and $\Res_{D} \Omega_\Delta$ are determined by their restriction to any open subset (for example, the dense torus).  We may thus replace $X(\Delta)$ by an open affine toric subvariety that contains a dense subset of $D$.  So we can and will assume that $\Delta$ consists of the two cones $\{0\}$ and $\tau$, so $X(\Delta) = A_\tau$.  After a change of basis of $N$, we may assume that $\tau = \sp(e_m)$ is the span of the last basis vector.  Thus $A_\tau \cong (\C^\ast)^{m-1} \times \C$ and $D  \cong (\C^\ast)^{m-1} \times \{0\}  \subset A_\tau$.  We have
\begin{align}
\Res_{x_m = 0}\left(\frac{dx_1}{x_1} \wedge \cdots \wedge \frac{dx_m}{x_m}\right) &= \frac{dx_1}{x_1} \wedge \cdots \wedge \frac{dx_{m-1}}{x_{m-1}} = \Omega_D 
\end{align}
the canonical form of the toric divisor $D$.
\end{proof}

\section{Canonical form of a polytope via toric varieties}
\label{app:push-forwardproof}
The aim of this section is to prove Theorem \ref{thm:push-forward}.

We shall use the following result roughly stating that taking residues and pushing forward commute.
\begin{proposition}[{\cite[Proposition 2.5]{polar}}]\label{prop:KR}
Suppose that $f: M \to N$ is a proper and surjective map of complex manifolds of the same dimension.  Let $\eta$ be a meromorphic form with only first order poles on a smooth hypersurface $V$.  Suppose that $f(V)$ is a smooth hypersurface in $N$.  Then $f_*\eta$ has first order poles on $f(V)$ and
\be
\Res_{f(V)} f_* \eta = f_* \Res_{V} \eta.
\ee
\end{proposition}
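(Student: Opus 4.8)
\textbf{Proof proposal for Proposition \ref{prop:KR} (residue commutes with push-forward).}

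The plan is to reduce the statement to a purely local computation near a generic point of $f(V)$, where the push-forward is a finite sum over the sheets of the covering, and then to interchange the residue operation with that finite sum. First I would use the fact that both $f_*\eta$ and $\Res_{f(V)} f_*\eta$, as well as $f_*\Res_V \eta$, are meromorphic (resp.\ rational) forms on $N$, hence determined by their restriction to any dense open subset. So it suffices to prove the identity over the complement $N^\circ \subset N$ of a proper analytic subset, which I will choose to avoid: (i) the branch locus of $f$, (ii) the singular loci of $V$ and $f(V)$, (iii) the image $f(Z)$ of any other polar divisor $Z$ of $\eta$, and (iv) the set where $f(V)$ fails to be reduced of the expected dimension or where $f$ fails to restrict to a covering $V^\circ \to f(V)^\circ$. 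Over $N^\circ$, with $M^\circ = f^{-1}(N^\circ)$, the map $f: M^\circ \to N^\circ$ is an unramified covering of some degree $d$, and $V \cap M^\circ$ maps to $f(V) \cap N^\circ$ as a (possibly disconnected) covering as well.

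Next I would set up adapted local coordinates. Fix a generic point $b \in f(V) \cap N^\circ$ and choose local holomorphic coordinates $(w, v_2, \ldots, v_n)$ on $N$ near $b$ so that $f(V)$ is $\{w = 0\}$. Over a small polydisk $V_b \ni b$, the preimage $f^{-1}(V_b)$ is a disjoint union of open sets $U_1, \ldots, U_d$ with local inverses $\psi_i = f|_{U_i}^{-1}$. The sheets split into two groups: those $U_i$ meeting $V$ (say $i \in A$), on which $\psi_i^* w$ vanishes to first order along $V$ because $f(V)$ is reduced and $f$ is unramified there, so one may take a local coordinate $z_i$ on $U_i$ with $z_i = (\psi_i^* w)$ cutting out $V$; and those $U_i$ disjoint from $V$ (for $i \notin A$), on which $\psi_i^*\eta$ is holomorphic near the fibre over $b$ — here I use the exclusion of $f(Z)$ from $N^\circ$, which guarantees $\eta$ has no other poles passing through these sheets. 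Writing $\eta$ near each $U_i$, $i \in A$, as $\eta = \eta_i' \wedge \frac{dz_i}{z_i} + (\text{holomorphic})$ with $\Res_V \eta|_{U_i} = \eta_i'$, one computes $\psi_i^*\eta = \psi_i^*(\eta_i') \wedge \frac{dw}{w} + (\text{holomorphic in } w)$, since $\psi_i^* z_i = w$ exactly and $\psi_i^* \frac{dz_i}{z_i} = \frac{dw}{w}$. Summing, $f_*\eta = \sum_{i} \psi_i^*\eta = \big(\sum_{i \in A} \psi_i^*\eta_i'\big) \wedge \frac{dw}{w} + (\text{holomorphic in } w)$, which exhibits the first-order pole of $f_*\eta$ along $f(V) = \{w=0\}$ and shows $\Res_{f(V)} f_*\eta = \sum_{i \in A} \psi_i^*\eta_i'|_{w=0}$. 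On the other hand, $f_*\Res_V \eta$ is by definition the sum over the sheets of $V^\circ \to f(V)^\circ$ of the pullbacks of $\Res_V \eta = \eta_i'$, and these sheets are exactly indexed by $i \in A$; restricting to $\{w=0\}$ matches the previous expression. Hence the two sides agree over $N^\circ$, and therefore everywhere.

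The main obstacle I anticipate is the careful bookkeeping around the hypotheses: properness of $f$ is what makes $f_*$ well-defined (the fibre is finite and the sum in the definition of push-forward is finite), but I must make sure the chosen generic open set $N^\circ$ is genuinely dense — in particular that $f(V)$, which is a priori only a constructible set, is assumed to be a smooth hypersurface (this is hypothesis in the proposition, so I may invoke it) and that $f$ restricted to $V$ is dominant onto $f(V)$, so that a dense open subset of $f(V)$ is evenly covered. A secondary subtlety is ensuring that the "holomorphic in $w$" remainders on the non-$A$ sheets really are holomorphic along the whole fibre over a generic $b \in f(V)$; this is where excluding $f(Z)$ for every other polar divisor $Z$ of $\eta$ from $N^\circ$ is essential, and one should note there are only finitely many such $Z$ so their images are contained in a proper analytic subset. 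Once these genericity reductions are in place, the local sheet-by-sheet computation is routine, and I would cite \cite{Gri} or \cite{griffithsharris} for the standard local normal form of a meromorphic form with first-order poles along a smooth hypersurface.
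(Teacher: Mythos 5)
A preliminary remark: the paper does not prove Proposition \ref{prop:KR} at all — it quotes it from Khesin--Rosly \cite[Prop.~2.5]{polar} — so there is no internal proof to compare against, and your argument has to stand on its own. The part of it that works is the expected one: away from the critical points of $f$, the sheet-by-sheet computation ($z_i = w\circ f$ is a first-order local equation for $V$ on a sheet through $V$, $\psi_i^*(dz_i/z_i)=dw/w$, holomorphic contributions from the sheets missing $V$), and the identification of the sheets of $f|_V$ with the index set $A$, are all correct.

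The genuine gap is in the very first reduction: you need a generic point $b\in f(V)$ that avoids the branch locus of $f$, and such a point need not exist, because $f(V)$ can be \emph{entirely contained} in the set of critical values — typically because $V$ itself sits inside the critical locus. Concretely, take $f:\C^2\to\C^2$, $f(x,y)=(x,y^2)$, $V=\{y=0\}$, $\eta = dx\wedge dy/y$: here $f(V)=\{w_2=0\}$ \emph{is} the branch locus, so your $N^\circ\cap f(V)$ is empty and the argument never starts, even though the conclusion holds ($f_*\eta = dx\wedge dw_2/w_2$, with residue equal to $f_*\Res_V\eta$). This is not a marginal case but the typical one for this paper: push-forwards of dlog forms under monomial maps (e.g. $x\mapsto x^p$ in Example \ref{ex:tori}, the Newton-polytope and algebraic moment maps, the resolutions in Appendix I) are ramified precisely along the polar divisors. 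To close the gap you must analyze fiber points $a\in V$ where $f$ is ramified transversally to $V$: choose $b$ generic so that $f|_V$ is an unramified covering near the fiber and $w\circ f$ vanishes along $V$ to constant order $k$ at $a$; then in suitable coordinates $w\circ f = \tilde z^k$, each of the $k$ local sheets over $b$ pulls $dz/z$ back to $\tfrac1k\,dw/w$, and summing the $k$ sheets yields a simple pole whose residue is exactly one copy of the pullback of $\Res_V\eta$ under the local inverse of $f|_V$ (the extra fractional-power terms either carry positive powers of $w^{1/k}$ or are killed upon wedging with $dw$, and their symmetrized sum is single-valued). You also need that the push-forward of a holomorphic form remains holomorphic at branch points not on $V$ (a trace/symmetric-function or Riemann-extension argument), instead of discarding those points by genericity. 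With these local models added your proof becomes complete; as written it only establishes the case in which $f$ is unramified over a generic point of $f(V)$.
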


Let $X':= X(z)$ and let $\Omega' :=\pi_*(\Omega_{X'})$.  We shall check that $\Omega'$ satisfies the recursive defining property of the canonical form $\Omega(\A)$.  The statement is trivial when $m = 0$.  We proceed by induction on $m$.

Let $g: X \to X'$ be a toric resolution of singularities.  Thus $X$ is a smooth complete toric variety, and the map $g$ has the following properties:
\begin{enumerate}
\item
The map $g$ restricts to an isomorphism $g: T \cong T'$ from the dense torus $T \subset X$ to the dense torus $T' \subset X'$.
\item
We have $g_*(\Omega_{X}) = \omega_{X'}$ and $g^{-1}(\Omega_{X'}) = \Omega_X$.
\item
The boundary divisor $X \setminus T = \bigcup D \subset X$ is a normal crossings divisor.
\item
For every torus orbit closure $V$ of $X$, the image $g(V)$ is a torus orbit closure of $X'$.  Thus if $D \subset X$ is an irreducible toric divisor, then either $\dim g(D) < \dim D$, or $g(D) = D'$, where $D'$ is some toric divisor of $X$.
\end{enumerate}

\def\tpi{\tilde \pi}

The rational map $\pi:\P^{n-1} \to \P^m$ has indeterminacy locus $\ker(Z) \subset \P^{n-1}$ consisting of lines in $\R^n$ sent to 0 by the linear map $Z: \R^n \to \R^{m+1}$.  Let $\tpi = \pi \circ g: X \to \P^m$.  Then $\pi$ has indeterminacy locus $L = \{x \in X \mid g(x) \in \ker(Z)\}$.  By {\it elimination of indeterminacy}, there exists a sequence of blowups $f_i:X_{i} \to X_{i-1}$ along smooth centers $C_{i-1} \subset X_{i-1}$, for $i = 1,2,\ldots,r$, satisfying:
\begin{enumerate}
\item $X_0 = X$,
\item the composition $h_i: = f_{1} \circ f_{2} \circ \cdots \circ f_i: X_{i} \to X_0$ restricts to an isomorphism over $X \setminus L$,
\item there is a regular map $\pi_r: X_r \to \P^m$ such that $\pi_r = \pi \circ h_r$ as rational maps.
\end{enumerate}

All the varieties $X_i$ are smooth, and the maps $f_i$ are all proper and birational.  The map $\pi_r$ is proper and surjective.  

Let $\partial X$ denote the toric boundary of $X$.  It is the union of the toric divisors of $X$.  We shall make the technical assumption that 
\begin{align} \label{eq:tech}
\begin{split}
&\mbox{the union of the exceptional divisor of $h_i$ with $h_i^{-1}(\partial X)$} \\ 
&\mbox{is a simple normal crossings divisor.}
\end{split}
\end{align}  See \cite[p.142, Main Theorem II]{Hironaka}.

The rational form $\Omega_X$ has simple poles along each irreducible toric divisor $D \subset X$ (see Appendix \ref{sec:toric}).  We now investigate the poles of $h_r^*(\Omega_X)$.

\medskip 

\noindent {\bf Claim.} The rational form $h_r^*(\Omega_X)$ has simple poles along the strict transforms $\tilde D \deff \overline{h_r^{-1}(D \cap U)}$ of each irreducible toric divisor $D$, and no other poles.
\begin{proof}[Proof of Claim]
We shall prove the statement for the form $h_i^*(\Omega_X)$ on $X_i$, proceeding by induction on $i$.

The indeterminacy locus $L \subset X$ is closed.  We first check that $L$ does not contain any torus fixed point of $X$.  It is enough to show that $\pi: X' \to \P^m$ is defined at the torus fixed points of $X'$.  This is clear: a torus fixed point of $X'$ is sent by $\pi$ to the point $Z_k \in \P^m$, where $\sp_{\R_{\geq 0}}(Z_k)$ is an edge of the cone $\Cone(Z)$.  (Since $z$ is graded, we must have $z_k \neq 0$, and thus $Z_k \neq 0$ as well.)

Let $U = X \setminus L$.  Then for each irreducible toric divisor of $D$, we have that $U \cap D$ is open and dense in $D$; since $\Omega_X$ has a simple pole along $D$, it is clear that the strict transform $h_i^*(\Omega_X)$ 
has a simple pole along the strict transform $\tilde D$.  

Let $E_j \subset X_j$ denote the (irreducible) exceptional divisor of the map $f_j: X_j \to X_{j-1}$.  We denote by $\tilde E_j \subset X_i$ the strict transform of $E_j$ under $f_{j+1} \circ \cdots \circ f_i$.  Since $h_i$ is an isomorphism away from $\bigcup_{j \leq i} \tilde E_j$, it suffices to show that $h_r^*(\Omega_X)$ has no pole along each $\tilde E_j$.  Since the pullback of a holomorphic form is holomorphic, we are reduced to showing that $h_i^*(\Omega_X)$ has no pole along $E_i$.

We now consider the blowup of $X_{i-1}$ along the smooth center $C_{i-1}$.  Suppose $C_{i-1}$ lies in the divisors $\tilde D_1,\tilde D_2,\ldots,\tilde D_t$ of $X_{i-1}$ (and not in any of the other $\tilde D_j$).  By our assumption \eqref{eq:tech}, the $\tilde D_j$ have normal crossings, so the intersection $R = \tilde D_1 \cap \tilde D_2 \cap \cdots \cap \tilde D_t$ has codimension $t$.  The intersection $U \cap (D_1 \cap \cdots \cap D_t)$ is dense in $(D_1 \cap \cdots \cap D_t)$ (since torus fixed points of $X$ do not lie in $L$).  Since $h_{i-1}(C_{i-1}) \cap U = \emptyset$, we have that $C_{i-1}$ has codimension at least one in $R$.

%

Thus ${\rm codim}(C_{i-1}) = t+s$ where $s \geq 1$.  Let $p \in C_{i-1}$.  In a neighborhood $V$ of $p$ in $X_{i-1}$, we may find local coordinates $x_1,x_2,\ldots,x_t,x_{t+1},\ldots,x_{t+s},y_1,y_2,\ldots,y_\ell$, so that $C_{i-1}$ is cut out by the ideal generated by $x_1,x_2,\ldots,x_{t+s}$, and $\tilde D_j$ is cut out by $x_j$ for $1,2,\ldots,t$.  We may thus write (using the inductive hypothesis)
\be
h_{i-1}^*(\Omega_X) = \frac{q}{x_1x_2 \cdots x_t} dx_1 \wedge \cdots \wedge dx_{t+s} \wedge dy_1 \wedge \cdots \wedge dy_\ell
\ee
where $q$ is a meromorphic function whose poles do not pass through $p$.

The blowup $B$ of $V$ along $C_{i-1} \cap V$ has local coordinates $w_1,\ldots,w_{t+s-1},x_{t+s},y_1,\ldots,y_\ell$, where $x_i = w_i x_{t+s}$ for $i = 1,2,\ldots,t+s-1$.  The exceptional divisor is given by $x_{t+s} = 0$ in these coordinates.  The pullback of $h^*_{i-1}(\Omega_X)$ to $B$ is given by
\begin{align}
&\frac{q}{w_1w_2 \cdots w_t x_{t+s}^t} d(w_1 x_{t+s}) \wedge d(w_2 x_{t+s}) \wedge \cdots \wedge d(w_{t+s-1} x_{t+s}) \wedge dx_{t+s} \wedge dy_1 \wedge \cdots \wedge dy_\ell \\ &= \frac{ x_{t+s}^{s-1}\; q}{w_1w_2 \cdots w_t} dw_1  \wedge dw_2 \wedge \cdots \wedge dw_{t+s-1} \wedge dx_{t+s} \wedge dy_1 \wedge \cdots \wedge dy_\ell.
\end{align}
Since $s -1 \geq 0$, the blowup $B$ has no pole (but sometimes a zero) along the exceptional divisor.  This completes the proof of the claim.
\end{proof}

Let $D_i$ be an irreducible toric divisor of $X$ and $\tilde D_i \subset X_r$ its strict transform.  Suppose that $g(D_i)$ is a divisor in $X'$.  Then $\pi(D_i)$ is the linear subspace of $\P^m$ spanned by some face $F$ of the polytope $\A$.  Apply Proposition \ref{prop:KR} with the choice $f = \pi_r$ and with $M, N, V$ being appropriate open subsets of $X_r$, $\P^m$, and $\tilde D_i$ respectively.
We compute
\begin{align*}
\Res_{\pi(D_i)} \, \Omega' &=
\Res_{\pi_r(\tilde D_i)} (\pi_r)_*(h_r^{-1}(\Omega_X)) 
\\
&= (\pi_r)_* \Res_{\tilde D_i}(h_r^{-1}(\Omega_X)) & \mbox{by Proposition \ref{prop:KR},}\\
&= \tpi_* \Res_{D_i}\Omega_X & \mbox{$h_r$ is an isomorphism over a dense subset of $D_i$,} \\
&= \tpi_* \Omega_{D_i} & \mbox{by Proposition \ref{prop:toricform},} \\
&= \pi_* \Omega_{g(D_i)} \\
&= \Omega(F) & \mbox{by the induction hypothesis.}
\end{align*}
We have used the fact that $h_r$ restricts to an isomorphism over a dense subset of $D_i$.

Now, if $D_i$ is such that ${\rm codim } \;g(D_i) \geq 2$, then $\tilde D_i$ clearly cannot contribute to poles of $\Omega' = (\pi_r)_*(h_r^{-1}(\Omega_X))$.  The claim above states that the only poles of $h_r^{-1}(\omega_X)$ are along the $\tilde D_i$, so we conclude that the only poles of $\Omega'$ are along linear subspaces spanned by the faces of $\A$.  We conclude that $\Omega' = \Omega(\A)$.

\section{Oriented matroids}
\label{sec:matroids}
A basic reference for oriented matroids is \cite{matroids}.
\def\sign{{\rm sign}}
Let $v_1,v_2,\ldots,v_n \in \R^k$ be a collection of vectors.  The \defn{oriented matroid} $\M = \M(v)$ keeps track of the (positive) linear dependencies between this set of vectors.  If $\alpha \in \R$, we let $\sign(\alpha) = +, 0, -$ depending on whether $\alpha > 0, \alpha = 0$, or $\alpha < 0$.

\medskip
\noindent {\it Chirotope.} The \defn{chirotope} of $\M$ is given by the function $\chi: \{1,2,\ldots,n\}^{k} \to \{+,0,-\}$ specified by
\be
\chi(i_1,i_2,\ldots,i_k) := \sign \ip{v_{i_1} v_{i_2} \cdots v_{i_k}}.
\ee

\medskip
\noindent {\it Signed vectors.} Define the space of linear dependencies $D(v) \subset \R^n$ by
\be
D(v) := \{(a_1,a_2,\ldots,a_n) \in \R^n \mid \sum_{i=1}^n a_iv_i =0\}.
\ee
The set of \defn{signed vectors} of the oriented matroid $\M$ is the set $\sign(D(v))$.

\medskip
\noindent {\it Signed covectors.} Define the space of value vectors $V(v) \subset \R^n$ by
\be
V(v) := \{(c \cdot v_1, \ldots, c \cdot v_n) \in \R^n \mid c \in \R^k\}.
\ee
The set of \defn{signed covectors} of the oriented matroid $\M$ is the set $\sign(V(v))$.

\medskip

\begin{proposition}\label{prop:OM}
Any one of (a) the chirotope, (b) the signed vectors, and (c) the signed covectors, of an oriented matroid $\M$ determines the other two.
\end{proposition}

Given the oriented matroid $\M = \M(v)$ and an index $i \in \{1,2,\ldots,n\}$, we can consider the \defn{deletion} $\M\backslash i$, which is the oriented matroid associated to $v_1,v_2,\ldots,\widehat{v_i},\ldots,v_n$.  We also have the \defn{contraction} $\M/i$ which is the oriented matroid associated to $\bar v_1,\bar v_2,\ldots,\widehat{v_i},\ldots,\bar v_n$, where $\bar v_j$ is the projection of $v_j \in V$ to a linear hyperplane $H \subset V$ which does not contain $v_i$.  A basic result is that $\M \backslash i $ and $\M/i$ depend only on $\M$ and not on the vectors $v_1,\ldots,v_n$.
\section{The Tarski-Seidenberg theorem}
\label{app:tarski}
A {\it semialgebraic} subset of $\R^n$ is a finite union of sets that are cut out by finitely many polynomial equations $p(x) = 0$ and finitely many polynomial inequalities $q(x) > 0$.  Let $\pi: \R^n \to \R^{n-1}$ be the projection map that forgets the last coordinate.  The Tarski-Seidenberg theorem states that if $S \subset \R^n$ is a semialgebraic set, then so is $\pi(S)$.  We now explain why this implies that the Amplituhedron, and more generally, any Grassmann polytope is semialgebraic.

Let $f: \R^n \to \R^m$ be a rational map.  That is 
$$
f(x) = \left(\frac{p_1(x)}{q_1(x)},\frac{p_2(x)}{q_2(x)},\ldots,\frac{p_m(x)}{q_m(x)}\right)
$$
for polynomials $p_i,q_i$ in $n$ variables.  Suppose $S \subset \R^n$ is a semialgebraic set such that $f$ is well defined on $S$ (so $q_i(x)$ is non-zero everywhere in $S$).  We show that $f(S)$ is semialgebraic.  Write $(x,y)$ for a typical point in $\R^n \times \R^m$.  Define $\Gamma_S = \{(x,y)\} \subset \R^n \times \R^m$ by the $m$ polynomial equations
$$
y_1 q_1(x) -p_1(x) = 0, \ldots, y_m q_m(x) - p_m(x) = 0
$$
and the condition that $x \in S$.  It is clear that $\Gamma_S$ is semialgebraic, since the condition $x \in S$ is semialgebraic.  Also $f(S)$ is equal to $\pi_m(\Gamma_S)$ where $\pi_m :\R^n \times \R^m \to \R^m$ is the projection to the last $m$ factors.  Thus applying the Tarski-Seidenberg theorem $n$ times, we conclude that $f(S)$ is semialgebraic.

Now suppose $S \subset \P^n$ is semialgebraic.  Thus $S$ is a finite union of sets that are cut out by finitely many homogeneous polynomial equations $p(x) = 0$ and finitely many homogeneous polynomial inequalities $q(x) > 0$.  Note that to make sense of the inequalities, we first define a semialgebraic cone in $\R^{n+1}$, then we project down to $\P^n$.  Cover $\P^n$ with $n+1$ charts $U_0,\ldots,U_n$ each isomorphic to $\R^n$.  Then $S \cap U_i$ is semialgebraic in $\R^n$.  Now let $g: \P^n \to \P^m$ be a rational map, and suppose $g$ is well-defined on $S$.  Thus
$$
g(x) = \left(p_0(x),\ldots,p_m(x)\right)
$$
for homogeneous polynomials $p_i$ in $n+1$ variables all of the same degree.  Cover $\P^m$ with charts $V_0,\ldots,V_m$, so for example $V_0 = (1,y_1,y_2,\ldots,y_m)$.  Then for any $i = 0,1,\ldots,n$ and any $j = 0,1,\ldots,m$, the rational function $g$ restricts to a rational function $f_{ij}: U_i \to V_j$.  For example $f_{00}(x) = (p'_1(x)/p'_0(x),\ldots,p'_m(x)/p'_0(x))$ where $p'(x_1,\ldots,x_n) = p(1,x_1,\ldots,x_n)$.  By the above argument, $f_{ij}(U_i \cap S)$ is semialgebraic in $V_j$.  Since the definition of semialgebraic allows finite unions, we deduce that $g(S)$ is semialgebraic in $\P^m$.

Now let $G_{\geq 0}(k,n) \subset G(k,n)$ be the nonnegative Grassmannian.  Embed $G(k,n)$ inside $\P^d$ where $d = \binom{n}{k}-1$.  By definition $G_{\geq 0}(k,n)$ is a semialgebraic set inside $\P^d$.  Let $Z: \R^{n} \to \R^{k+m}$ be a linear map, which we assume to be of full rank.  This induces a map $Z_G: G(k,n) \to G(k,k+m)$, and also a map $Z_\P: \P^d \to \P^c$, where $c = \binom{k+m}{k}-1$.  The map $Z_\P$ is in fact linear, with coefficients given by minors of the matrix $Z$.  In particular, $Z_\P$ is a rational map in the above sense, and thus $Z_\P(G_{\geq 0}(k,n))$ is a semialgebraic set.

\newpage

\newpage{\pagestyle{empty}\cleardoublepage}

\end{document}